\documentclass[12pt,reqno]{amsart}
\usepackage{amsthm,amsfonts,amssymb,euscript,xcolor}

\newtheorem{proposition}{Proposition}
\newtheorem{theorem}{Theorem}
\newtheorem{remark}{Remark}

\newtheorem{corollary}[proposition]{Corollary}
\usepackage[margin=1in,dvips]{geometry}
\usepackage{bbm}
\usepackage{graphics}
\usepackage{enumerate}

\def\ub {\underline{u}}
\def\th {\theta}
\def\Lb {\underline{L}}
\def\Hb {\underline{H}}
\def\chib {\underline{\chi}}
\def\chih {\hat{\chi}}
\def\chibh {\hat{\underline{\chi}}}
\def\omegab {\underline{\omega}}
\def\etab {\underline{\eta}}
\def\betab {\underline{\beta}}
\def\alphab {\underline{\alpha}}

\def\hot{\widehat{\otimes}}

\def\rhoc{\check{\rho}}
\def\sigmac{\check{\sigma}}
\def\Psic{\check{\Psi}}
\def\kappab{\underline{\kappa}}
\def\mub{\underline{\mu}}

\def\a {\alpha}
\def\b {\beta}
\def\ab {\alphab}
\def\bb {\betab}
\def\nab {\nabla}
\def\f {\frac}

\renewcommand{\div}{\mbox{div }}
\newcommand{\curl}{\mbox{curl }}
\newcommand{\trchb}{\mbox{tr} \chib}
\def\trch{\mbox{tr}\chi}

\newcommand{\eps}{{\epsilon} \mkern-8mu /\,}
\newcommand{\Ls}{{\mathcal L} \mkern-10mu /\,}

\title[Nonlinear Interaction of Impulsive Gravitational Waves]{Nonlinear Interaction of Impulsive Gravitational Waves for the Vacuum Einstein Equations}

\author{Jonathan Luk}
\address{Department of Mathematics, Stanford University, CA 94305, USA}
\email{jluk@stanford.edu}

\author{Igor Rodnianski}
\address{Department of Mathematics, Princeton University, Princeton NJ 08544, USA}
\email{irod@math.princeton.edu}

\begin{document}

\begin{abstract}
In this paper, we study the problem of the nonlinear interaction of impulsive gravitational waves for the Einstein vacuum equations. The problem is studied in the context of a characteristic initial value problem with data given on two null hypersurfaces and containing curvature delta singularities. We establish an existence and uniqueness result for the spacetime arising from such data and show that the resulting spacetime represents the interaction of two impulsive gravitational waves germinating from the initial singularities. In the spacetime, the curvature delta singularities propagate along 3-dimensional null hypersurfaces intersecting to the future of the data. To the past of the intersection, the spacetime can be thought of as containing two independent, non-interacting impulsive gravitational waves and the intersection represents the first instance of their nonlinear interaction. Our analysis extends to the region past their first interaction and shows that the spacetime still remains smooth away from the continuing propagating individual waves. The construction of these spacetimes are motivated in part by the celebrated explicit solutions of Khan-Penrose and Szekeres. The approach of this paper can be applied to an even larger class of characteristic data and in particular implies an extension of the theorem on formation of trapped surfaces by Christodoulou and Klainerman-Rodnianski, allowing non-trivial data on the initial incoming hypersurface.
\end{abstract}

\maketitle

\tableofcontents

\section{Introduction}

\subsection{Impulsive Gravitational Waves}

In this paper, we study spacetime solutions $(\mathcal M, g)$ to the vacuum Einstein equations
\begin{equation}\label{Einstein}
R_{\mu\nu}=0
\end{equation}
representing a nonlinear interaction of two impulsive gravitational waves. Informally, an impulsive gravitational spacetime is a vacuum spacetime which contains a null hypersurface supporting a curvature delta singularity. Explicit solutions with such properties have been constructed by Penrose \cite{Penrose72}, and its origin can be traced back to the cylindrical waves of Einstein-Rosen \cite{EinsteinRosen} and the plane waves of Brinkmann \cite{Brinkmann}.

Impulsive gravitational waves have been first studied within the class of \emph{pp}-waves that was discovered by Brinkmann \cite{Brinkmann}, for which the metric takes the form 
$$g=-2d\ub dr+H(\ub,X,Y)d\ub^2+dX^2+dY^2,$$
and (\ref{Einstein}) implies that 
\begin{equation}\label{H}
\frac{\partial^2 H}{\partial X^2}+\frac{\partial^2 H}{\partial Y^2}=0.
\end{equation}
These include the special case of sandwich waves, where $H$ is compactly supported in $\ub$. Originally, impulsive gravitational waves have been thought of as a limiting case of the \emph{pp}-wave with the function $H$ admitting a delta singularity in the variable $\ub$. Precisely, explicit impulsive gravitational spacetimes were discovered and studied by Penrose \cite{Penrose72} who gave the metric in the following double null coordinate form:
\begin{equation}\label{Penrosesol}
g=-2dud\ub+(1-\ub\Theta(\ub))dx^2+(1+\ub\Theta(\ub))dy^2,
\end{equation}
where $\Theta$ is the Heaviside step function. In the Brinkmann coordinate system, the metric has the \emph{pp}-wave form and an obvious delta singularity:
\begin{equation}\label{Penrosesol2}
g=-2d\ub dr-\delta(\ub)(X^2-Y^2)d\ub^2+dX^2+dY^2,
\end{equation}
where $\delta(\ub)$ is the Dirac delta. Despite the presence of the delta singularity for the metric in the Brinkmann coordinate system, the corresponding spacetime is Lipschitz and it is only the Riemann curvature tensor (specifically, the only non-trivial $\alpha$ component\footnote{See \eqref{curv.def} for the definition of $\alpha$. In this specific example, these are the $R(\frac{\partial}{\partial\ub},\frac{\partial}{\partial X},\frac{\partial}{\partial\ub},\frac{\partial}{\partial X})$, $R(\frac{\partial}{\partial\ub},\frac{\partial}{\partial X},\frac{\partial}{\partial\ub},\frac{\partial}{\partial Y})$ and $R(\frac{\partial}{\partial\ub},\frac{\partial}{\partial Y},\frac{\partial}{\partial\ub},\frac{\partial}{\partial X})$ components.} of it) that has a delta function supported on the plane null hypersurface $\{\ub=0\}$. This spacetime turns out to possess remarkable global geometric properties \cite{Penrose65}. In particular, it exhibits strong focusing properties and is an example of a non-globally hyperbolic spacetime.

In a previous paper, we initiated a comprehensive study of impulsive gravitational spacetimes in the context of the characteristic initial value problem. We were able to construct a large class of spacetimes which can be thought of as representing impulsive gravitational waves parametrized by the data given on an outgoing and an incoming hypersurface such that the curvature on the outgoing hypersurface has a delta singularity supported on a 2-dimensional slice. Our construction in particular provides the first instance of an impulsive gravitational wave of compact extent and does not require any symmetry assumptions.

\subsection{Collision of Impulsive Gravitational Waves}

Returning to the explicit examples, one of the interesting features of \emph{plane} gravitational waves is that they enjoy a principle of linear superposition provided that the direction and polarization of the waves are fixed. This is not the case when one tries to combine two plane gravitational waves propagating in different directions. Nonetheless, explicit solutions to the vacuum Einstein equations modelling the interaction of two plane sandwich waves have been constructed by Szekeres \cite{Szekeres1}. Khan-Penrose \cite{KhanPenrose} later discovered an explicit solution representing the collision of two plane impulsive gravitational waves. Further analysis of the Khan-Penrose solution was carried out by Szekeres \cite{Szekeres2}.

\begin{figure}[htbp]
\begin{center}
 
\input{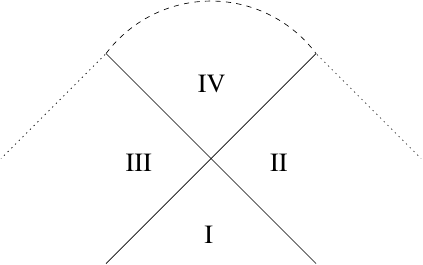_t}
 
\caption{The Khan-Penrose Solution}
\end{center}
\end{figure}

The Khan-Penrose solution can be represented by Figure 1. The null hypersurfaces $\{u=0\}$ and $\{\ub=0\}$ have delta singularities in the Riemann curvature tensor. In region I, where $u<0$ and $\ub<0$, the metric is flat and takes the form 
$$g=-2dud\ub+dx^2+dy^2.$$
In region II, where $u<0$ and $\ub>0$, the metric is also flat and takes the form
$$g=-2dud\ub+(1-\ub)dx^2+(1+\ub)dy^2.$$
Across the null hypersurface $\{\ub=0\}$ between regions I and II, the curvature has a delta singularity. In fact, when $u <0$, the Khan-Penrose solution coincides with the Penrose solution (\ref{Penrosesol}) of one impulsive gravitational wave. The region III, where $u>0$ and $\ub<0$, is symmetric to region II, and the metric takes the form
$$g=-2dud\ub+(1-u)dx^2+(1+u)dy^2.$$
The intersection of the null hypersurfaces $u=0$ and $\ub=0$ represents the interaction of the impulsive gravitational waves. Thus region IV, where $u>0$ and $\ub>0$, is interpreted as the region after the interaction. Here, the metric takes the form
\begin{equation*}
\begin{split}
g=&-\frac{2(1-u^2-\ub^2)^{\frac 32}}{\sqrt{(1-u^2)(1-\ub^2)}(u\ub+\sqrt{(1-u^2)(1-\ub^2)})^2}du d\ub\\
&+(1-u^2-\ub^2)\left(\frac{1-u\sqrt{1-\ub^2}-\ub\sqrt{1-u^2}}{1+u\sqrt{1-\ub^2}+\ub\sqrt{1-u^2}}dx^2+\frac{1+u\sqrt{1-\ub^2}+\ub\sqrt{1-u^2}}{1-u\sqrt{1-\ub^2}-\ub\sqrt{1-u^2}}dy^2\right).
\end{split}
\end{equation*}
Even the spacetime is flat and plane symmetric in regions I, II and III, the curvature is nonzero and the plane symmetry is destroyed in region IV, signaling that the two plane impulsive gravitational waves have undergone a nonlinear interaction. Nevertheless, the metric is smooth when $u>0$, $\ub>0$ and $u^2+\ub^2 <1$. Towards $u^2+\ub^2=1$, the spacetime has a spacelike singularity.

As seen from (\ref{Penrosesol2}), the Penrose solution of one impulsive gravitational wave in particular belongs to the class of \emph{linearly polarized} \emph{pp}-waves, which takes the general form
$$g=-2d\ub dr-H(\ub)(\cos\alpha(X^2-Y^2)+2\sin\alpha XY)d\ub^2+dX^2+dY^2.$$
The constant $\alpha$ is defined to be the polarization of the wave. Thus the Khan-Penrose solution represents the interaction of two linearly polarized impulsive gravitational waves with \emph{aligned} polarization. The Khan-Penrose construction was later generalized by Nutku-Halil \cite{NutkuHalil} who wrote down explicit solutions modelling the interaction of two plane impulsive gravitational waves with non-aligned polarization. These spacetimes have the same singularity structure as that of Khan-Penrose. 

Further examples of interacting \emph{plane} impulsive gravitational waves were constructed via solving the characteristic initial value problem with data prescribed on the boundary of region IV. This was undertaken by Szekeres \cite{Szekeres2} and Yurtsever \cite{Yurtsever88} for the case of aligned polarization via the Riemann method. The general case of non-aligned polarization has been studied in a series of papers of Hauser-Ernst \cite{HE1}, \cite{HE2}, \cite{HE3} by reducing it to the matrix homogeneous Hilbert problem. The construction of even more general \emph{plane} distributional solutions for the vacuum Einstein equations that include colliding impulsive gravitational waves was carried out in \cite{LeSm}, \cite{LeSte2}.

We refer the readers to \cite{Gr}, \cite{GrPo}, \cite{BaHo}, \cite{Bicak} and the references therein for further description and more examples of spacetimes with colliding impulsive gravitational waves.

The solutions of Khan-Penrose, Szekeres and Nutku-Halil as well as the Hauser-Ernst solutions are all constructed within the class of plane symmetry. This imposes the assumptions that the wavefronts are flat and that the waves are of infinite extent. It has been speculated that the singular structure of the Khan-Penrose solution is an artifact of plane symmetry \cite{Tipler}. Concerning the assumption of plane wavefronts, Szekeres \cite{Szekeres2} wrote

\begin{quote}
The eventual singular behavior is just another aspect of Penrose's result that plane gravitational waves act as a perfect astigmatic lens. It is certainly false for waves with curved fronts, but such waves may still act as imperfect lenses providing a certain degree of focusing and amplification for each other... Clearly a better understanding of the interaction of gravitational waves with more realistic wavefronts is a problem of considerable importance.
\end{quote}
A partial remedy has been suggested by Yurtsever \cite{Yurtsever}, who did a heuristic study of ``almost plane waves'' and their interactions, allowing waves of large but finite extent. Our present paper considers the interaction of impulsive gravitational waves with finite extent and with wavefronts having arbitrary curvature. Locally, this in particular includes the case that the wavefronts are flat. Nevertheless, even in this special case, we do not require either of the waves to be linearly polarized.

\subsection{Interaction of Coherent Structures}

The nonlinear interaction of gravitational waves in general relativity can be viewed in the wider context of nonlinear interaction of coherent structures such as solitons, vortices, etc. in evolutionary gauge theories, nonlinear wave and dispersive equations. The completely integrable models KdV \cite{GGKM}, 1-dimensional cubic Schr\"{o}dinger equation \cite{ZS} and Sine-Gordon equation \cite{AKNS} not only admit individual solitary waves, but also exact solutions representing their superposition. In the past, these solutions have an asymptotic form of individual propagating solitary waves. For the period after nonlinear interaction, which can be described explicitly and typically results in a phase shift, a new superposition of new individual propagating solitary waves emerges in the distant future. These solutions are analyzed by means of the inverse scattering method. For the non-integrable models, our knowledge is much more limited and only partial results are available. In those cases, most of the results concerned perturbative interaction of coherent structures in the regimes which are either close to integrable or corresponding to interactions with high relative velocity or in which one of the objects is significantly larger than the other one. In this context, we should mention the work of Stuart on the dynamics of abelian Higgs vortices \cite{Stuart1} and the Yang-Mills-Higgs equation \cite{Stuart2} and the recent breakthrough work of Martel-Merle on the nonlinear solitary interaction for the generalized KdV equation \cite{MM1}, \cite{MM2}.

Returning to the present work, one of the main challenges in treating the interaction of impulsive gravitational waves is their singular nature, i.e., not only do we want to describe precisely how gravitational waves affect each other during the interaction, but we also need to contend with the fact that each impulsive gravitational wave separately is a singular object. We should note that partially because of this challenge, no results of this kind are available even for semilinear, let alone quasilinear, model problems. On the other hand, model problems may not be even suitable for studying the phenomena discovered in this work since it is precisely the special structure of the Einstein equations that plays a crucial role in our analysis and its conclusions.

\subsection{Previous Work on Impulsive Gravitational Spacetimes}

In a previous paper \cite{LR}, we studied the (characteristic) initial value problem for spacetimes representing a single propagating impulsive gravitational wave. Corresponding to such spacetimes, we considered data that have a curvature delta singularity supported on an embedded 2-sphere $S_{0,\ub_s}$ on an outgoing null hypersurface, and is smooth on an incoming null hypersurface. We showed that such data give rise to a unique impulsive gravitational spacetime satisfying the vacuum Einstein equations. Moreover, the curvature has a delta singularity supported on a null hypersurface emanating from the initial singularity on $S_{0,\ub_s}$ and the spacetime metric remains smooth away from this null hypersurface (see Figure 2).

\begin{figure}[htbp]\label{propagationfig}
\begin{center}
 
\input{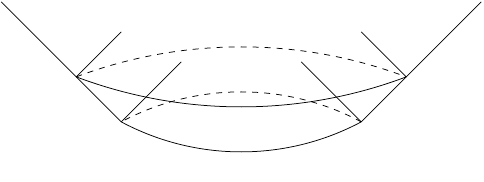_t}
 
\caption{Propagation of One Impulsive Gravitational Wave}
\end{center}
\end{figure}

\subsection{Description of Results in this Paper}

In this paper, we begin the study of the (characteristic) initial value problem for spacetimes which represent the nonlinear interaction of two impulsive gravitational waves. For such a problem, the initial data have delta function singularities supported on embedded 2-spheres $S_{0,\ub_s}$ and $S_{u_s,0}$ on the initial null hypersurfaces $H_0$ and $\Hb_0$ respectively see Figure 3). According to the results that were obtained in \cite{LR}, before the interaction of the two impulsive gravitational waves, i.e., for $u<u_s$ or $\ub<\ub_s$, a unique solution to the vacuum Einstein equations exists, and the singularity is supported on the null hypersurfaces emanating from the initial singularities.

\begin{figure}[htbp]
\begin{center}
 
\input{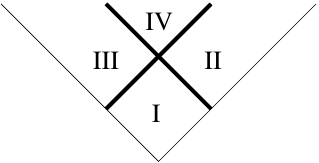_t}
 
\caption{Nonlinear Interaction of Impulsive Gravitational Waves}
\end{center}
\end{figure}

Our focus here will be to understand the spacetime ``beyond'' the first interaction (region IV in Figure 3). We will show that the resulting spacetime will be a solution to the vacuum Einstein equations with delta function singularities in the curvature on the corresponding null hypersurfaces germinating from the initial singularities. Surprisingly, the spacetime remains smooth locally in region IV after the interaction of the impulsive gravitational waves. Our main result for the collision of impulsive gravitational waves is described by the following theorem:

\begin{theorem}\label{cgiwthm}
Suppose the following hold for the initial data set:
\begin{itemize}
\item The data on $H_0$ are smooth except across a two sphere $S_{0,\ub_s}$, where the traceless part of the second fundamental form of $H_0$ has a jump discontinuity.
\item The data on $\Hb_0$ are smooth except across a two sphere $S_{u_s,0}$, where the traceless part of the second fundamental form of $\Hb_0$ has a jump discontinuity.
\end{itemize}
Then
\begin{enumerate}[(a)]
\item For such initial data and $\epsilon$ sufficiently small, there exists a unique spacetime $(\mathcal M,g)$ endowed with a double null foliation $u$, $\ub$ that solves the characteristic initial value problem for the vacuum Einstein equations in the region $0\leq u\leq u_*$, $0\leq \ub\leq\ub_*$, whenever $u_*\leq\epsilon$ or $\ub_*\leq\epsilon$.
\item Let $\Hb_{\ub_s}$ (resp. $H_{u_s}$) be the incoming (resp. outgoing) null hypersurface emanating from $S_{0,\ub_s}$ (resp. $S_{u_s,0}$). Then the curvature components $\alpha_{AB}=R(e_A,e_4,e_B,e_4)$ and $\alphab_{AB}=R(e_A,e_3,e_B,e_3)$ are measures with singular atoms supported on $\Hb_{\ub_s}$ and $H_{u_s}$ respectively. 
\item All other components of the curvature tensor can be defined in $L^2$. Moreover, the solution is smooth away from $\Hb_{\ub_s}\cup H_{u_s}$.
\end{enumerate}
\end{theorem}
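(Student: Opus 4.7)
The plan is to work in a double null foliation with coordinates $(u,\ub)$ and null frame $(e_A,e_3,e_4)$, and to prove (a)--(c) through a combined system of transport equations for the Ricci coefficients and renormalized energy estimates for the curvature, arranged so that neither $\a$ nor $\ab$ is ever placed in $L^2$. The starting observation, inherited from \cite{LR}, is that a jump in $\chih$ across $S_{0,\ub_s}\subset H_0$ propagates, by the structure equation schematically written
\[
\nab_4 \chih + \tfrac12\,\trch\,\chih = -\a,
\]
to a delta atom in $\a$ supported on the outgoing null hypersurface $\Hb_{\ub_s}$, with the symmetric statement for $\chibh$ and $\ab$ along $H_{u_s}$. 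Since $\a,\ab\notin L^2$, the energies are built only from $(\b,\rho,\sigma,\bb)$, and the principal structural trick is to use the renormalized quantities
\[
\rhoc = \rho - \tfrac12\,\chih\cdot\chibh, \qquad \sigmac = \sigma + \tfrac12\,\chih\wedge\chibh,
\]
for which the null Bianchi equations close at top order on $(\b,\rhoc,\sigmac,\bb)$ without invoking the singular components $\a,\ab$.

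Assuming by the symmetry of (a) that $\ub_*\leq\epsilon$, I would install a short-pulse hierarchy in $\epsilon$ assigning to each Ricci coefficient and each (renormalized) curvature component the same weight as in the one-wave analysis of \cite{LR}. A bootstrap is then run: assumed $L^\infty_uL^\infty_{\ub}$ and $L^\infty L^2$ bounds on the Ricci coefficients, with $\chih$ allowed to jump in $\ub$ and $\chibh$ to jump in $u$ (a BV class in the transverse null variable, full regularity tangentially), are fed into the $\nab_3$ and $\nab_4$ transport equations to obtain sharp control; these bounds are then inserted into the renormalized energy identity for $(\b,\rhoc,\sigmac,\bb)$ differentiated twice angularly to improve the curvature norms, which in turn close the hierarchy via the Codazzi and null-structure equations.

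The essential new difficulty compared with \cite{LR} lies in the interaction region $u>u_s$, $\ub>\ub_s$, where $\chih$ and $\chibh$ are simultaneously discontinuous. The saving observation is that each factor jumps in only one null variable, so the pointwise product $\chih\cdot\chibh$ appearing in $\rhoc$ and $\sigmac$ remains in $L^\infty$ and is tangentially regular; moreover, the jumps $[\chih]_{\ub=\ub_s}$ and $[\chibh]_{u=u_s}$ satisfy homogeneous linear transport equations along their own null generators, so their singular supports stay exactly on $\Hb_{\ub_s}$ and $H_{u_s}$ and do not leak into region IV. The main obstacle of the whole argument is organizing the hierarchy so that these products and the discontinuities they carry do not introduce uncontrolled contributions into the renormalized Bianchi energy; once this is done, existence and uniqueness in (a) follow from a standard iteration or last-slice construction built on top of the a priori bounds.

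Part (b) is then obtained by integrating $\nab_4\chih + \tfrac12\,\trch\,\chih = -\a$ in $\ub$ across $\ub=\ub_s$: the mass of the atom in $\a$ at $\Hb_{\ub_s}$ is identified with the transported jump $[\chih]$, and analogously for $\ab$ on $H_{u_s}$. Part (c) is immediate from the $L^2$ bounds on $(\b,\rhoc,\sigmac,\bb)$ together with the $L^\infty$ control of $\chih\cdot\chibh$, which delivers $\rho,\sigma\in L^2$; smoothness away from $\Hb_{\ub_s}\cup H_{u_s}$ follows because on each of the four open regions into which these hypersurfaces partition the spacetime, the transport and Bianchi systems have smooth coefficients and sources, so any additional regularity of the data away from $S_{0,\ub_s}$ and $S_{u_s,0}$ is preserved by the flow.
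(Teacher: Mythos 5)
Your proposal is substantially on the paper's track: same double null gauge, same renormalization $\rhoc = \rho - \tfrac12\chih\cdot\chibh$, $\sigmac = \sigma + \tfrac12\chih\wedge\chibh$ designed precisely so that $\alpha,\alphab$ never enter the energy identity, the same mechanism for (b) by viewing $\alpha = -\nab_4\chih - \trch\chih - 2\omega\chih$ (note: the transport coefficient is $\trch$, not $\tfrac12\trch$) as defining $\alpha$ with a singular atom equal to the transported jump $[\chih]$, and the same ``preserved regularity in each open quadrant'' argument for (c). Two differences are worth flagging.

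First, you run the Ricci coefficient bootstrap in an $L^\infty$/BV framework in the transverse null variable, whereas the paper proves a stronger general theorem (Theorem \ref{rdthmv1}/\ref{rdthmv2}) under only $L^2$ control in the transverse null direction, using the mixed norms $\|\nab^i\psi_H\|_{L^2_{\ub}L^\infty_u L^p(S)}$ and $\|\nab^i\psi_{\Hb}\|_{L^2_u L^\infty_{\ub}L^p(S)}$, and then derives Theorem \ref{cgiwthm} as a special case. For the specific BV data of Theorem \ref{cgiwthm} your stronger bounds should close more easily, so this is a valid simplification; but the paper's formulation buys the more general existence result and, as a byproduct, the formation-of-trapped-surfaces application where $\chih$ is pointwise large. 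When you do work in the BV class you should make explicit that the a priori $L^\infty_{\ub}$ control of $\chih$ comes from the $\nab_3\chih$ equation (whose source is $\nab\eta$, $\rhoc$, $\trch\chibh$, etc.) rather than from $\nab_4\chih$ (whose source is the singular $\alpha$); your text is silent on which transport direction each Ricci coefficient is estimated in, and this choice is not free.

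Second, and this is the genuine gap: part (a) asserts existence on $\{0\le u\le u_*\}\cap\{0\le\ub\le\ub_*\}$ with only one of $u_*,\ub_*$ required $\le\epsilon$. Your ``short-pulse hierarchy in $\epsilon$'' gains a small constant from integration in the short $\ub$ direction, but says nothing about integration over a $u$-interval of arbitrary finite length, which does not produce a small constant. The paper handles this by grouping the Ricci coefficients according to which transport equation they obey and observing that, for the quantities controlled by $\nab_3$ integration over the long interval, the error terms with no smallness constant are in fact linear in the unknown so that Gronwall closes them; the same observation is then repeated in the energy estimates, where the error $\|\chib\,\beta\,\beta\|_{L^1_uL^1_{\ub}L^1(S)}$, which carries no $\epsilon$, is handled because $\chib$ can be estimated independently of curvature and the term is thus linear in the energy. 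Without this Gronwall/linearity structure your bootstrap does not close in the long $u$-interval. Relatedly, the error terms of the schematic form $\psi_H\beta\beta$, in which all three factors are only in $L^2$ along the same null direction, cannot be controlled at all by the mixed norms; the paper's argument crucially relies on verifying by inspection of the Bianchi equations that no such term arises. Your statement that ``the renormalized Bianchi equations close at top order on $(\b,\rhoc,\sigmac,\bb)$ without invoking the singular components'' is true but does not by itself justify this further and essential structural fact.
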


\begin{remark}
The norms that we use allow us to choose $u_s<\epsilon$ and $\ub_s<\epsilon$ so that the solution indeed represents the collision of two impulsive gravitational waves. See the statement of Theorem \ref{rdthmv1}.
\end{remark}

Our approach relies on an extension of the renormalized energy estimates introduced in \cite{LR}. As in \cite{LR}, our concern is not just the existence of weak solutions admitting two colliding impulsive gravitational waves, but also their uniqueness. The uniqueness property follows from the a priori estimates developed in this paper and leads to strong solutions of the vacuum Einstein equations.

Parts (b) and (c) of Theorem \ref{cgiwthm} can be interpreted as results on the propagation of singularity that is conormal with respect to a pair of transversally intersecting characteristic hypersurfaces. Similar problems have been studied for general hyperbolic equations with a much \emph{weaker} singularity such that classical well-posedness theorems can be applied \cite{Beals}, \cite{Alinhac1}. In the case of second order equations, it is known that no new singularities appear after the interaction of the weak conormal singularities. In general, however, a third order semilinear hyperbolic equation can be constructed so that new singularities form after the interaction of two weak conormal singularities \cite{RauchReed}. In this paper, we address stronger conormal singularities such that in general, even for \emph{semilinear} hyperbolic systems, only the local propagation of \emph{one} conormal singularity has been proved \cite{Metivier}. For conormal singularities of this strength, no general theorem is known to address the interaction of propagating singularities even for semilinear, let alone quasilinear, equations. By contrast, in this work, the special structure of the Einstein equations in the double null foliation gauge has been heavily exploited to show that even for the stronger conormal singularities that we consider, the spacetime remains smooth after their interaction.

In this paper, as in \cite{LR}, we prove a more general theorem on the existence and uniqueness of solutions to the vacuum Einstein equations that in particular implies Theorem \ref{cgiwthm}(a). In addition to allowing non-regular characteristic initial data on both $H_0$ and $\Hb_0$, our main existence theorem extends the results in \cite{LR} in two other ways. First, we consider the characteristic initial value problem with initial data such that the traceless parts of the null second fundamental forms and their angular derivatives are only in $L^2$ in the null directions as opposed to being in $L^\infty$ in the previous work. Second, in \cite{LR}, the constructed spacetime lies in the range of the double null coordinates corresponding to $\{0\leq u\leq \epsilon\}\cap\{0\leq \ub\leq \epsilon\}$. In this paper, using some ideas in \cite{L}, we extend the domain of existence and uniqueness to a region that is not symmetric in $u$ and $\ub$, i.e., in $(\{0\leq u\leq \epsilon\}\cap\{0\leq \ub\leq I_1\})\cup (\{0\leq \ub\leq \epsilon\}\cap\{0\leq u\leq I_2\})$, where $I_1$ and $I_2$ are finite but otherwise arbitrarily large (see Figure 4). We refer the readers to Sections \ref{thmfirstver} and \ref{maintheorem} for precise formulations of the existence and uniqueness theorem.

\begin{figure}[htbp]
\begin{center}
 
\input{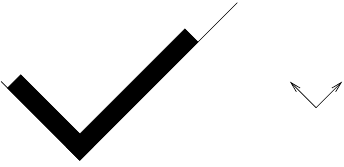_t}
 
\caption{Region of Existence}
\end{center}
\end{figure}

One of the unexpected consequences of our approach in this paper is that we can also apply it to the problem on the formation of trapped surfaces. The work of Christodoulou \cite{Chr} was a major breakthrough in solving the problem of the evolutionary formation of a trapped surface and this was later extended and simplified in \cite{KlRo}, \cite{KlRo1}. In all of those works, characteristic initial data were prescribed on $H_0\cap\{0\leq \ub\leq\epsilon\}$ and $\Hb_0$ with sufficient conditions for data on $H_0\cap\{0\leq \ub\leq\epsilon\}$ formulated in such a way as to guarantee the appearance of a trapped surface in the causal future of $H_0\cap\{0\leq \ub\leq\epsilon\}$ and $\Hb_0$ (see Figure 5).

\begin{figure}[htbp]
\begin{center}
 
\input{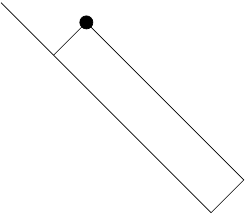_t}
 
\caption{Formation of a Trapped Surface}
\end{center}
\end{figure}

The sufficient condition on $H_0\cap\{0\leq \ub\leq\epsilon\}$ required that certain geometric quantities are large with respect to $\epsilon$ and thus lead to the problem of constructing a semi-global large data solution to the Einstein equations. In all those works, to control the dynamics of the Einstein equations, the largeness of geometric quantities associated to $H_0\cap\{0\leq \ub\leq\epsilon\}$ was offset by requiring the data on $\Hb_0$ to be the trivial Minkowski data.

Our new approach allows us to eliminate the requirement that the data on $\Hb_0$ have to be trivial. It can be replaced by a condition that the data on $\Hb_0$ are merely ``not too large'' and still guarantee the formation of a trapped surface in the causal future of $H_0\cap\{0\leq \ub\leq\epsilon\}$ and $\Hb_0$. We refer the readers to Section \ref{sectrapped} for a more precise formulation of the theorem on the formation of trapped surfaces.

\subsection{A Toy Model}

One of the most challenging aspects of the vacuum Einstein equations is its quasilinear and tensorial nature. Nonetheless, it may be instructive to examine a related phenomenon in a toy model of a scalar semilinear wave equation satisfying the null condition in $\mathbb R^{3+1}$
\begin{equation}\label{wavenull}
\Box \phi = -(\partial_t\phi)^2+\sum_{i\leq 3}(\partial_{x_i}\phi)^2,
\end{equation}
(or more generally a system
$\Box \Phi = Q(\Phi,\Phi)$, where $\Phi: \mathbb R^{3+1} \to \mathbb R^n$ and $Q(\Phi,\Phi)$ is a null form)
with the characteristic initial data
\begin{equation*}
\begin{split}
f(\ub,\theta)=&\partial_{\ub}\phi(\ub,u=0,\theta),\\
g(u,\theta)=&\partial_{u}\phi(\ub=0,u,\theta),
\end{split}
\end{equation*}
prescribed on the light cones $H_0=\{\ub:=t+r=0\}$ and $\Hb_0=\{u:=t-r+2=0\}$ respectively and
$$h(\theta)=\phi(\ub=0,u=0,\theta)$$
prescribed on the initial 2-sphere defined by $\{\ub=0, u=0\}$.

For this toy model, the analogue of the problem addressed in Theorem \ref{cgiwthm} is the local existence and uniqueness result for \eqref{wavenull} in the region $\{0\leq \ub\leq \epsilon\}\cup\{0\leq u\leq 1\}$ for the data 
$$f = f_1+\mathbbm 1_{\{\ub-\frac{\epsilon}{2}\geq 0\}} f_2 $$ 
and 
$$g = g_1+\mathbbm 1_{\{u-\frac{1}{2}\geq 0\}} g_2,$$ 
where $f_1$, $f_2$, $g_1$, $g_2$, $h$ are smooth functions and $\mathbbm 1$ is the indicator function. For these data, $\partial_{\ub} f$ and $\partial_u g$ have delta singularities supported on the 2-spheres $\{u=0\}\cap\{\ub=\frac{\epsilon}{2}\}$ and $\{\ub=0\}\cap\{u=\frac{1}{2}\}$ respectively. It turns out that the corresponding solution is smooth away from the set $\{\ub=\frac{\epsilon}{2}\}\cup\{u=\frac{1}{2}\}$, but yet $\partial_{\ub}\phi$ (resp. $\partial_u\phi$) remains discontinuous across $\{\ub=\frac{\epsilon}{2}\}$ (resp. $\{u=\frac{1}{2}\}$)\footnote{assuming, of course, that the initial data $f_2$ (resp. $g_2$) is non-zero for $\ub=\frac{\epsilon}{2}$ (resp. $u=\frac{1}{2}$).}.

Theorem \ref{cgiwthm} is embedded in a more general local existence and uniqueness result (stated precisely in Theorem \ref{rdthmv1} below). Its analogue for the above toy model is the local existence for \eqref{wavenull} with the data $f$, $g$ and $h$ only satisfying
$$\sum_{i\leq 4}||\Omega^i f||_{L^2(H_0(0,\epsilon))}\leq C,$$
$$\sum_{i\leq 4}||\Omega^i g||_{L^2(\Hb_0(0,1))}\leq C,$$
and
$$\sum_{i\leq 4}||\Omega^i h||_{L^2(S_{0,0})}\leq C,$$
where $\Omega\in\{x_1\partial_{x_2}-x_2\partial_{x_1}, x_2\partial_{x_3}-x_3\partial_{x_2}, x_3\partial_{x_1}-x_1\partial{x_3}\}$. The corresponding solution exists in the region $\{0\leq \ub\leq \epsilon\}\cup\{0\leq u\leq 1\}$ and obeys the following estimates:
$$\sup_{0\leq u\leq 1}\sum_{i\leq 3}||\Omega^i \partial_{\ub}\phi||_{L^2(H_u)}\leq C',$$
$$\sup_{0\leq\ub\leq \epsilon}\sum_{i\leq 3}||\Omega^i \partial_u\phi||_{L^2(\Hb_{\ub})}\leq C',$$
$$\sup_{0\leq u\leq 1}\sum_{i\leq 4}||\Omega^i\phi||_{L^2(H_u)}+\sup_{0\leq\ub\leq \epsilon}\sum_{i\leq 4}||\Omega^i \phi||_{L^2(\Hb_{\ub})}\leq C'.$$

Even though this model hardly reflects the difficulties of the nonlinear structure of the vacuum Einstein equations, such local existence, uniqueness and propagation of singularity results to our knowledge are not known for this type of equations but follow from the methods\footnote{In particular, we show that in order to guarantee the existence of the solution, it suffices to commute the equation \eqref{wavenull} only with angular derivatives $\Omega$.} used in this paper.

\subsection{First Version of the Theorem}\label{thmfirstver}

Our general approach is based on energy estimates and transport equations in the double null foliation gauge. This gauge was used in our previous work \cite{LR}. The general approach in the double null gauge has been carried out in \cite{KN}, \cite{Chr} and \cite{KlRo}.

The spacetime in question will be foliated by families of outgoing and incoming null hypersurfaces $H_u$ and $\Hb_{\ub}$ respectively. Their intersection is assumed to be a 2-sphere denoted by $S_{u,\ub}$. Define a null frame $\{e_1,e_2,e_3,e_4\}$, where $e_3$ and $e_4$ are null, as indicated in Figure 6, and $e_1$, $e_2$ are vector fields tangent to the two spheres $S_{u,\ub}$. $e_4$ is tangent to $H_u$ and $e_3$ is tangent to $\Hb_{\ub}$.

\begin{figure}[htbp]
\begin{center}
 
\input{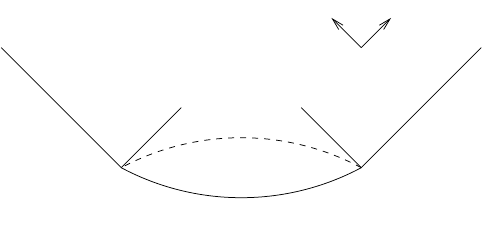_t}
 
\caption{The Null Frame}
\end{center}
\end{figure}
Decompose the Riemann curvature tensor with respect to the null frame $\{e_1,e_2,e_3,e_4\}$:
\begin{equation}\label{curv.def}
\begin{split}
\a_{AB}&=R(e_A, e_4, e_B, e_4),\quad \, \,\,   \ab_{AB}=R(e_A, e_3, e_B, e_3),\\
\b_A&= \frac 1 2 R(e_A,  e_4, e_3, e_4) ,\quad \bb_A =\frac 1 2 R(e_A,  e_3,  e_3, e_4),\\
\rho&=\frac 1 4 R(e_4,e_3, e_4,  e_3),\quad \sigma=\frac 1 4  \,^*R(e_4,e_3, e_4,  e_3),
\end{split}
\end{equation}
where $^*R$ denotes the Hodge dual of $R$.
In the context of the interaction of impulsive gravitational waves, the $\alpha$ and $\alphab$ components of curvature can only be understood as measures. In the main theorem below, we do not require $\alpha$ and $\alphab$ to even be defined.

\noindent Define also the following Ricci coefficients with respect to the null frame:
\begin{equation*}
\begin{split}
&\chi_{AB}=g(D_A e_4,e_B),\, \,\, \quad \chib_{AB}=g(D_A e_3,e_B),\\
&\eta_A=-\frac 12 g(D_3 e_A,e_4),\quad \etab_A=-\frac 12 g(D_4 e_A,e_3),\\
&\omega=-\frac 14 g(D_4 e_3,e_4),\quad\,\,\, \omegab=-\frac 14 g(D_3 e_4,e_3),\\
&\zeta_A=\frac 1 2 g(D_A e_4,e_3).
\end{split}
\end{equation*}
Let $\chih$ (resp. $\chibh$) be the traceless part of $\chi$ (resp. $\chib$). For the problem of the interaction of impulsive gravitational waves, we prescribe initial data on $H_0$ (resp. $\Hb_0$) such that $\chih$ (resp. $\chibh$) has a jump discontinuity across $S_{0,\ub_s}$ (resp. $S_{u_s,0}$) but smooth otherwise.

As mentioned before, we prove a theorem concerning existence and uniqueness of spacetimes for a larger class of initial data than that for the interacting impulsive gravitational waves. The following is the main theorem in this paper on existence and uniqueness of solutions to the vacuum Einstein equations.
\begin{theorem}\label{rdthmv1}
Let $\th^A$ be transported coordinates on the 2-spheres\footnote{see definition in Section \ref{coordinates}} $S_{u,\ub}$ and $\gamma$ be the spacetime metric restricted to $S_{u,\ub}$. Prescribe data such that\footnote{for $2\Omega^{-2}=-g(L',\Lb')$, where $L'$ and $\Lb'$ are defined to be null geodesic vector fields (see Section \ref{secdnf}).} $\Omega=1$. Suppose, in every coordinate patch on $H_0$ and $\Hb_0$,
$$\det\gamma \geq c,$$
$$\sum_{i\leq 4} |(\frac{\partial}{\partial\th})^i\gamma_{AB}|+\sum_{i\leq 3} |(\frac{\partial}{\partial\th})^i\zeta_A|\leq C.$$
On $H_0$,
$$\sum_{i\leq 3}\int_0^{I_1} |(\frac{\partial}{\partial\th})^i\chih_{AB}|^2 d\ub +\sum_{i\leq 3} |(\frac{\partial}{\partial\th})^i\trch|\leq C,$$
and on $\Hb_0$,
$$\sum_{i\leq 3}\int_0^{I_2} |(\frac{\partial}{\partial\th})^i\chibh_{AB}|^2 du+\sum_{i\leq 3} |(\frac{\partial}{\partial\th})^i\trchb|\leq C.$$
Then for $\epsilon$ sufficiently small depending only on $c$, $C$, $I_1$ and $I_2$, there exists a unique spacetime solution $(\mathcal M,g)$ that solves the characteristic initial value problem for the vacuum Einstein equations in the region\footnote{The variables $u$ and $\ub$ will be defined to be null, i.e., the region $\{0\leq u\leq \epsilon\}\cap\{0\leq \ub\leq I_1\}$ is given geometrically as the spacetime region to the future of the initial data and bounded by the hypersurfaces emanating from the initial spheres $S_{\epsilon,0}$ and $S_{0,I_1}$.} $(\{0\leq u\leq \epsilon\}\cap\{0\leq \ub\leq I_1\})\cup (\{0\leq \ub\leq \epsilon\}\cap\{0\leq u\leq I_2\})$. Associated to the spacetime a double null coordinate system $(u,\ub,\th^1,\th^2)$ exists, relative to which the spacetime is in particular Lipschitz and retains higher regularity in the angular directions.
\end{theorem}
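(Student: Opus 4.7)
The plan is to prove Theorem \ref{rdthmv1} by a bootstrap argument in the double null gauge, combining renormalized energy estimates for curvature (as developed in \cite{LR}) with transport equation estimates for the Ricci coefficients. The first step is to set up a hierarchy of norms carefully adapted to the fact that $\hat\chi$ is merely $L^2$ in the $\ub$ direction on $H_0$ (and $\hat{\underline\chi}$ is merely $L^2$ in $u$ on $\Hb_0$). Concretely, I would control $\hat\chi$, $\hat{\underline\chi}$ only in norms of the form $L^2_\ub L^\infty_u L^2(S)$ and $L^\infty_\ub L^2_u L^2(S)$ (with up to three angular derivatives), while asking for pointwise/$L^\infty$ bounds on the remaining Ricci coefficients $\text{tr}\chi,\text{tr}\underline\chi,\eta,\underline\eta,\omega,\underline\omega$. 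For curvature, I would control $\beta,\rho,\sigma,\underline\beta$ in $L^2$ on the null hypersurfaces, with the understanding that $\alpha$ and $\underline\alpha$ are never used directly. The bootstrap assumptions are made in a slab of the form $\{0\le u\le \epsilon\}\cap\{0\le \ub\le \ub_*\}$.

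The core of the proof is to improve these bootstrap bounds. First, using the null structure equations as transport equations along $e_3$ or $e_4$, each Ricci coefficient is estimated in terms of the curvature components and lower-order Ricci coefficients; the signature/small-parameter counting ensures that one factor of $\epsilon^{1/2}$ (or a small factor in $\ub_*$, depending on which direction of integration is used) can be harvested whenever $\hat\chi$ or $\hat{\underline\chi}$ appears nonlinearly. Second, curvature is controlled by the renormalized Bianchi pair: instead of the usual equations for $\alpha,\underline\alpha$, one works with $\check\rho=\rho-\tfrac12\hat\chi\cdot\hat{\underline\chi}$, $\check\sigma$, and the corresponding renormalized $\beta,\underline\beta$ equations so that the dangerous top-order $\nabla\alpha$, $\nabla\underline\alpha$ terms are removed at the cost of nonlinear error terms which are controllable by the Ricci coefficient norms. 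Applying the Bel--Robinson-type energy identity to these renormalized quantities gives $L^2$ control on $\beta,\check\rho,\check\sigma,\underline\beta$ on both $H_u$ and $\Hb_{\ub}$. Combining both steps closes the bootstrap provided $\epsilon$ is small enough; the smallness is only required in one null direction, not both.

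Once the estimates close on the symmetric slab $\{0\le u\le \epsilon\}\cap\{0\le \ub\le \epsilon\}$, I extend to the asymmetric region by iterating in the long direction, following the method of \cite{L}. The key observation is that the above bootstrap is asymmetric in $u$ and $\ub$: in the slab $\{0\le u\le \epsilon\}$ only the $u$-length needs to be small for the argument to close, while $\ub$ is allowed to range over any fixed finite interval, because all integrations of $\hat\chi$ and $\hat{\underline\chi}$ that produce the critical smallness can be arranged to occur in $u$. Thus one obtains existence in $\{0\le u\le \epsilon\}\cap\{0\le \ub\le I_1\}$ by partitioning $[0,I_1]$ into finitely many sub-intervals in $\ub$, propagating the estimates from one to the next while exploiting that the induced data on the intermediate $\Hb_{\ub_k}$ inherit the required regularity from the a priori estimates; the symmetric argument gives the region $\{0\le\ub\le\epsilon\}\cap\{0\le u\le I_2\}$. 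Uniqueness is obtained by subtracting two solutions, viewing the differences of metric/Ricci coefficients/curvature as satisfying a linear system with coefficients controlled by the a priori estimates, and running the same renormalized energy method one order lower.

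The hard part is the renormalized curvature estimate at the top order. The naive energy identity produces boundary and bulk terms that a priori would need $\alpha$ or $\underline\alpha$ (which are only measures), and the renormalization trades these for nonlinear products like $\nabla\hat\chi\cdot\nabla\hat{\underline\chi}$ and $\nabla^2\hat\chi\cdot\hat{\underline\chi}$; showing that each such term can be absorbed using only the $L^2_\ub$ (resp.\ $L^2_u$) control on $\hat\chi$ (resp.\ $\hat{\underline\chi}$), together with H\"older in the transverse null direction to supply the smallness factor, is the delicate point. Getting the anisotropic norm bookkeeping to balance against the three angular derivatives demanded by the Ricci coefficient transport equations, and verifying that no single estimate requires more than $L^2$ integrability of $\hat\chi,\hat{\underline\chi}$ in the singular direction, is what makes the argument non-routine and what permits the conclusion to cover the impulsive-wave setting of Theorem \ref{cgiwthm}.
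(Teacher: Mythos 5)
Your core plan---double null bootstrap, mixed $L^2$-in-one-null-direction norms for $\hat\chi,\hat{\underline\chi}$, renormalized curvature components $\check\rho,\check\sigma$ to excise $\alpha,\underline\alpha$ from both the Bianchi pair and the error terms, energy estimates closed against transport estimates---matches the paper's strategy. But the route you take to the asymmetric region $\{0\le u\le\epsilon\}\cap\{0\le\ub\le I_1\}$ diverges from the paper, and there are two further points that deserve attention.

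First, the paper does not partition the long direction and iterate. It runs the bootstrap \emph{directly} on the full asymmetric slab, and when an error term is integrated over the long null direction and hence cannot produce an $\epsilon^{1/2}$ factor, it observes (following the structure identified in \cite{L}) that such a term is always \emph{linear} in the quantity being estimated, so Gronwall's inequality closes the estimate in one pass. Concretely: once $\Gamma_1\in\{\text{tr}\underline\chi,\hat{\underline\chi},\eta,\omega\}$ is controlled using the short direction, $\nabla_3\underline\eta=\underline\beta+(\eta+\underline\eta)(\text{tr}\underline\chi+\psi_{\underline H})$ is sublinear/linear in $\underline\eta$ and is Gronwalled; the same device reappears in the $\nabla_3\hat\chi$, $\nabla_3\mu$, $\nabla_3\underline\mu$, $\nabla_3\kappa$ transport estimates and in the $\beta$-energy estimate, where the only error term lacking smallness, schematically $\hat{\underline\chi}\,\beta\,\beta$, is again linear in $\|\beta\|_{L^2(H_u)}^2$ after Cauchy--Schwarz because $\hat{\underline\chi}$ has a \emph{curvature-independent} bound. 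Your partitioning scheme could in principle be made to work, but you would need to track carefully that the bootstrap constants do not degrade over $I_1/\epsilon$ sub-intervals (and verify that the induced data on each intermediate $\underline H_{\ub_k}$ land in the same class), neither of which you address. The one-shot Gronwall argument avoids this bookkeeping entirely and is the mechanism the paper actually relies on.

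Second, two more localized issues. (a) You place $\omega,\underline\omega$ with the $L^\infty$-controlled Ricci coefficients; the paper places $\omega$ with $\hat\chi$ (controlled only in $L^2_{\ub}L^\infty_u$) and $\underline\omega$ with $\hat{\underline\chi}$ (controlled only in $L^2_u L^\infty_{\ub}$). This is not cosmetic: it is dictated by signature ($\text{sgn}\,\omega=+1$, $\text{sgn}\,\underline\omega=-1$) and is needed for the null structure error terms---in particular $\omega\,\underline\omega$ in $\nabla_3\omega$---to be absorbable. (b) Your claim that a smallness factor can be ``harvested whenever $\hat\chi$ or $\hat{\underline\chi}$ appears nonlinearly'' fails precisely in the $\nabla_4\text{tr}\chi$ equation with source $|\hat\chi|^2$: integrating in the short $\ub$ direction gives $\|\hat\chi\|_{L^2_{\ub}}^2$, which is $O(1)$, not small. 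The paper handles this by the ordered argument for $\Gamma_3=\{\hat\chi,\omega\}$ and then $\Gamma_4=\text{tr}\chi$: first close on $\hat\chi$ with a bound independent of the bootstrap constant, then use that sharp bound (rather than a small factor) when integrating the $\text{tr}\chi$ equation. If you present the proposal as written, this gap will surface immediately at the $\text{tr}\chi$ step.
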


Due to the symmetry in $u$ and $\ub$, it suffices to prove the Theorem in $0\leq u\leq I$, $0\leq \ub\leq \epsilon$. In the sequel, we will focus on the proof in this region. The other case can be treated similarly. A more precise formulation of the theorem can be found in Section \ref{maintheorem}. 

In this paper, local existence and uniqueness is proved under the assumption that the spacetime is merely $W^{1,2}$. In terms of differentiability, this is even one derivative weaker than the recently resolved $L^2$ curvature conjecture (\cite{L21}, \cite{L22}, \cite{L23}, \cite{L24}, \cite{L25}). Of course the $W^{1,2}$ assumption refers to the worst possible behavior observed in our data and our result heavily relies on the structure of the Einstein equations which allows us to efficiently exploit the better behavior of the other components.

Theorem \ref{rdthmv1} in particular shows the existence and uniqueness of solutions for the initial data of nonlinearly interacting impulsive gravitational waves. An additional argument, based on the estimates in the proof of Theorem \ref{rdthmv1}, will be carried out to show the regularity of the spacetime with colliding impulsive gravitational waves, i.e., parts (b) and (c) in Theorem \ref{cgiwthm}.

Theorem \ref{rdthmv1} also forms the basis for the theorem on the formation of trapped surfaces (Theorem \ref{trappedsurface}).\footnote{In fact, one of the motivations for formulating Theorem \ref{rdthmv1} for a finite but arbitrarily long $u$ region is for proving Theorem \ref{trappedsurface}.} In particular, Theorem \ref{rdthmv1} extends the existence theorem of Christodoulou \cite{Chr} to data that is not necessarily small on $\Hb_0$ while allowing the data to be large on $H_0$. Moreover, the estimates obtained in Theorem \ref{rdthmv1} show that for a large class of data on $\Hb_0$ that is not necessarily close to Minkowski space, there exists an open set of initial data on $H_0$ such that a trapped surface is formed in evolution.

\subsection{Strategy of the Proof}

Without symmetry assumptions, all known proofs of existence and uniqueness of spacetimes satisfying the Einstein equations are based on $L^2$-type estimates for the curvature tensor and its derivatives or the metric components and their derivatives. One of our main challenges in \cite{LR} and this paper is that for an impulsive gravitational wave the curvature tensor can only be defined as a measure and is not in $L^2$. 

Let $\Psi$ denote the curvature components and $\Gamma$ denote the Ricci coefficients. In \cite{LR} where we studied the propagation of one impulsive gravitational wave, the curvature component $\alpha$ is non-$L^2$-integrable. Nevertheless, we showed that the $L^2$-type energy estimates for the components of the Riemann curvature tensor
\begin{equation}\label{EEsch}
\int_{H_u} \Psi^2+\int_{\Hb_{\ub}} \Psi^2\leq \int_{H_0} \Psi^2+\int_{\Hb_0} \Psi^2 +\int_0^{\ub} \int_0^u\int_{S_{u
',\ub'}} \Gamma\Psi\Psi du' d\ub'.
\end{equation}
coupled together with the null transport equations for the Ricci coefficients
$$\nab_3\Gamma=\Psi+\Gamma\Gamma,\quad\nab_4\Gamma=\Psi+\Gamma\Gamma$$
can be renormalized and closed avoiding the singular curvature component $\alpha$.

In this paper, we consider spacetimes with two interacting impulsive gravitational waves and therefore both curvature components $\alpha$ and $\alphab$ are not $L^2$-integrable. We thus need to extend the renormalization in \cite{LR} and to close the energy estimates circumventing both $\alpha$ and $\alphab$.

In the remainder of this subsection, we will explain the main ideas for proving a priori estimates. Note that since we are working at a very low level of regularity, a priori estimates alone do not imply the existence and uniqueness of solutions. An additional argument to go from a priori estimates to existence and uniqueness was carried out in \cite{LR} in which we studied the convergence of a sequence of smooth solutions of the vacuum Einstein equations to the non-regular solution. A direct but tedious modification of that argument can be carried out in the context of this paper, giving the desired existence and uniqueness result. We, however, will be content to prove a priori estimates in this paper and refer the readers to \cite{LR} for more details.

After we explain the ideas for proving the a priori estimates, we will then return to sketch the ideas in the proofs of the regularity for colliding impulsive gravitational waves (Theorem \ref{cgiwthm}(b),(c)) and the formation of trapped surfaces.

\subsubsection{Renormalized Energy Estimates}

In \cite{LR}, we introduced the renormalized energy estimates for the vacuum Einstein equations. This allowed us to avoid any information of $\alpha$ while deriving the a priori estimates. In this paper, since in addition to an incoming impulsive gravitational wave there is an outgoing impulsive gravitational wave, both $\alpha$ and $\alphab$ are non-$L^2$-integrable. We thus need to renormalize the curvature components in a way that avoids both $\alpha$ and $\alphab$.

To this end, we view the vacuum Einstein equations as a coupled system for the Ricci coefficients $\Gamma$ and the curvature components $\Psi$, which is traditionally treated by a combination of estimates for the transport equations for $\Gamma$ coupled with the energy estimates for curvature. The renormalization used in this paper replaces the full set of curvature components $\Psi$ with the new quantities
\begin{eqnarray*}
\begin{cases}
&\Psic=\Psi+\Gamma\Gamma\quad \mbox{for} \quad \Psi=\beta,\rho,\sigma,\betab,\\
&\Psic=0 \quad\mbox{otherwise}.
\end{cases}
\end{eqnarray*}
We also replace the full set of transport equations for $\Gamma$ with a subset which does not involve the prohibited curvature components $\alpha$, $\alphab$ (or rather, involves only the renormalized components $\Psic$). Similarly, we consider a subset of Bianchi equations. We then show that the reduced system can still be closed by a combination of transport-energy type estimates. 

To illustrate the renormalization, we first prove the energy estimates for $\beta$ on $H_u$ and for $(\rho,\sigma)$ on $\Hb_{\ub}$ by considering the following set of Bianchi equations:
$$\nabla_4\rho=\div\beta - \frac 12 \chibh\cdot\alpha+\Gamma\Psi,$$
$$\nab_4\sigma=-\div ^*\beta+\frac 12\chibh\wedge\alpha+\Gamma\Psi,$$
$$\nab_3\beta=\nab\rho+\nab^*\sigma+\Gamma\Psi,$$
where $\Psi$ denotes the regular curvature components. However, the curvature component $\alpha$ still appears in the nonlinear terms in these equations. In order to deal with this problem, we consider the equations for the \emph{renormalized} curvature components $\rhoc=\rho-\frac 12 \chih\cdot\chibh$ and $\sigmac=\sigma+\frac 12 \chibh\wedge\chih$ instead.
Using the equation
$$\nabla_4\chih=-\alpha+\Gamma\Gamma,$$
we notice that the equations can be rewritten as
$$\nabla_4\rhoc=\div\beta+\Gamma\Psic+\Gamma\nab\Gamma+\Gamma\Gamma\Gamma,$$
$$\nab_4\sigmac=-\div ^*\beta+\Gamma\Psic+\Gamma\nab\Gamma+\Gamma\Gamma\Gamma,$$
$$\nab_3\beta=\nab\rhoc+\nab^*\sigmac+\Gamma\Psic+\Gamma\nab\Gamma+\Gamma\Gamma\Gamma.$$
We now have a set of renormalized Bianchi equations that does not contain $\alpha$. Using these equations, we derive the renormalized energy estimate
$$\int_{H_u} \Psic^2+\int_{\Hb_{\ub}} \Psic^2\leq \int_{H_0} \Psic^2+\int_{\Hb_0} \Psic^2 +\int_0^{\ub} \int_0^u\int_{S_{u
',\ub'}} \left(\Gamma\Psic\Psic+\Gamma\nab\Gamma\Psic+\Gamma\Gamma\Gamma\Psic\right) du' d\ub',$$
in which $\alpha$ does not appear in the error term.

It turns out that the same renormalization $\rhoc$ and $\sigmac$ that was used to avoid $\alpha$ also can also be applied to circumvent $\alphab$. For example, $\alphab$ enters as source terms in the following Bianchi equations,
$$\nabla_3\rho=-\div\betab - \frac 12 \chih\cdot\alphab+\Gamma\Psi,$$
$$\nab_3\sigma=-\div ^*\betab-\frac 12 \chih\wedge\alphab+\Gamma\Psi.$$
Using the equation
$$\nab_3\chibh=-\alphab+\Gamma\Gamma,$$
we see that $\alphab$ does not appear in the equations for $\nab_3\rhoc$ and $\nab_3\sigmac$.

As a consequence, we obtain a set of $L^2$ curvature estimates which do not explicitly couple to the singular curvature components $\alpha$ and $\alphab$. However, we say explicitly that a priori it is not obvious for the Ricci coefficients $\Gamma$ appearing in the nonlinear error for the energy estimates to be bounded independent of $\alpha$ and $\alphab$.

\subsubsection{Mixed Norm Estimates for the Ricci Coefficients}

In order to close the estimates, it is necessary to obtain control of the Ricci coefficients via the transport equations
\begin{equation}\label{schematictransport}
\nab_3\Gamma=\Psic+\Gamma\Gamma,\quad\nab_4\Gamma=\Psic+\Gamma\Gamma.
\end{equation}
In \cite{LR}, we showed that $\Gamma$ can be estimated in $L^\infty$ by considering a subset of the transport equations that do not involve the singular curvature component $\alpha$ (and involve only the renormalized curvature components $\Psic$).

In the setting of this paper, in addition to proving bounds on $\Gamma$ without any information on both singular curvature components $\alpha$ and $\alphab$, an extra challenge is that unlike in \cite{LR}, not all Ricci coefficients are bounded in the initial data. In fact, for the class of initial data considered in this paper, $\chih$ (resp. $\chibh$) is only assumed to be in $L^2_{\ub}H^3(S)$ (resp. $L^2_u H^3(S)$), where $H^3(S)$ refers to the $L^2$ norm of the third angular derivatives on the 2-spheres. Therefore, (\ref{schematictransport}) at best implies that $\chih$ (resp. $\chibh$) can be estimated in $L^2_{\ub}L^\infty_uL^\infty(S)$ (resp. $L^2_{u}L^\infty_{\ub}L^\infty(S)$), where the $L^\infty$ norms on the sphere and along the $u$ (resp. $\ub$) direction are taken first, before the $L^2$ norm in $\ub$ (resp. $u$) is taken.

Because of the weaker assumption on the Ricci coefficients in the initial data, we only prove estimates for the Ricci coefficients in mixed norms. In fact, we prove different mixed norm bounds for different Ricci coefficients. Using a schematic notation $\psi\in\{\trch,\trchb,\eta,\etab\}$, $\psi_H\in\{\chih,\omega\}$ and $\psi_{\Hb}\in\{\chibh,\omegab\}$, we only control $\psi$ in $L^\infty_uL^\infty_{\ub} L^\infty(S)$, $\psi_H$ in $L^2_{\ub}L^\infty_u L^\infty(S)$ and $\psi_{\Hb}$ in $L^2_u L^\infty_{\ub}L^\infty(S)$.

It is a remarkable fact that the Einstein equations possess a structure such that these mixed norm bounds are sufficient to close all the estimates for the Ricci coefficients using the transport equations, as well as the energy estimates for the curvature components.

As an example, in order to estimate $\psi$ in $L^\infty_uL^\infty_{\ub} L^\infty(S)$, we use the transport equation
$$\nab_3\psi=\betab+\rhoc+\nab\psi+(\psi+\psi_{\Hb})(\psi+\psi_{\Hb}).$$ 
Notice that the term $\psi_H$ does not appear as the source of this equation. Therefore, with the control of the Ricci coefficients in the mixed norms, all terms on the right hand side can be bounded after integrating in the $e_3$ (i.e. $u$) direction to obtain the desired bound for $\psi$.

On the other hand, the transport equation for $\psi_H$ contains both $\psi_H$ and $\psi_{\Hb}$ in the inhomogeneous term:
$$\nab_3\psi_H=\psi_H\psi_{\Hb}+...$$
Integrating this equation, we get
\begin{equation}\label{psiHschematic}
||\psi_H||_{L^\infty(S_{u,\ub}))}\leq ||(\psi_{\Hb})_0||_{L^\infty(S_{u,0})}+||\psi_{\Hb}||_{L^2_uL^\infty(S)}||\psi_{H}||_{L^{\infty}_uL^\infty(S)}+....
\end{equation}
The initial data term $||(\psi_{\Hb})_0||_{L^\infty(S_{0,\ub})}$ and the factor $||\psi_{H}||_{L^{\infty}_uL^\infty(S)}$ in the second term are not bounded uniformly in $\ub$. Nevertheless, since we are only aiming to prove estimates for $\psi_H$ in $L^2_{\ub}L^\infty_u L^\infty(S)$, we can take the $L^2_{\ub}$ norm in (\ref{psiHschematic}) and every term on the right hand side is controlled by the mixed norms. This allows us to prove the mixed norm estimates for all the Ricci coefficients.

Even more remarkable is that the bounds we obtain for the Ricci coefficients in mixed norms are also sufficient to close the energy estimates for the renormalized curvature components. Schematically, the renormalized energy estimates read as follows:
\begin{equation*}
\begin{split}
&||(\beta,\rhoc,\sigmac)||_{L^\infty_u L^2_{\ub}L^2(S)}+||(\rhoc,\sigmac,\betab)||_{L^\infty_{\ub} L^2_uL^2(S)}\\
\leq& \mbox{ Initial Data }+||\Gamma\Psic^2||_{L^1_uL^1_{\ub}L^1(S)}+||\Gamma^5||_{L^1_uL^1_{\ub}L^1(S)}+...
\end{split}
\end{equation*}
The error terms on the right hand side have to be controlled by the $L^2$ curvature bounds on the left hand side together with the estimates for the Ricci coefficients in the mixed norms. As an example, an error term $\psi_H\rhoc\rhoc$ can be controlled after applying Cauchy-Schwarz as follows:
$$||\psi_H\rhoc\rhoc||_{L^1_uL^1_{\ub}L^1(S)}\leq ||\rhoc||_{L^\infty_{\ub}L^2_uL^2(S)}^2||\psi_H||_{L^2_{\ub}L^\infty_u L^\infty(S)}.$$
Here, it is important to note that using the mixed norms for $\psi_H$, we can estimate in $L^\infty$ first, before taking the $L^2$ norm. On the other hand, an error term of the type $\psi_H\beta\beta$ cannot be controlled in $L^1_uL^1_{\ub}L^1(S)$ since each of the three factors can only be bounded after taking the $L^2_{\ub}$ norm. Miraculously, such terms never arise as error terms in the energy estimates!

A similar structure also arises in the error terms of the form
$$||\Gamma^5||_{L^1_uL^1_{\ub}L^1(S)}.$$
For this term, $\psi_H$ (or $\psi_{\Hb}$) appears at most twice, allowing us to estimate each of them in $L^2_{\ub}$ (or $L^2_u$).

In order to close all the estimates, we need to prove mixed norm estimates for higher derivatives of the Ricci coefficients and energy estimates for higher derivatives of the curvature components. This is achieved using only angular covariant derivatives $\nab$ as commutators. For such estimates, the singular curvature components $\alpha$ and $\alphab$ never arise in the nonlinear error terms (see Proposition \ref{commutation.prop} in Section \ref{commutation}). Moreover, there is a structure similar to that described above for the higher order estimates that allows us to close merely with the mixed norm bounds.

\subsubsection{Estimates in an Arbitrarily Long $u$ Interval}

In our main theorem, we prove existence, uniqueness and a priori estimates in a region such that only the $\ub$ interval is assumed to be short, while the $u$ interval can be arbitrarily long (but finite). This poses an extra challenge since when we control the nonlinear error terms integrated over the $u$ interval, we do not gain a smallness constant.

This difficulty already arises in the problem of existence in such a region with smooth initial data. This was studied in \cite{L}\footnote{In \cite{L}, the a priori estimates were proved in the case where the $u$ interval is assumed to be short and the $\ub$ interval is allowed to be arbitrarily long. We outline the main ideas of \cite{L} assuming instead the setting in this paper.}. It was noticed that both in carrying out the Ricci coefficient estimates and the energy estimates for the curvature components, the structure of the Einstein equations allows us to prove that whenever a smallness constant is absent, the estimate is in fact linear.

To achieve the bounds of the Ricci coefficients, the following structure of the null structure equations was used. Let
$$\Gamma_1\in \{\trch,\chih,\trchb,\chibh,\eta,\omega\},\quad\Gamma_2=\etab,\quad\Gamma_3=\omega.$$
They satisfy the following transport equations:
\begin{equation}\label{Gammasch1}
\begin{split}
\nab_4\Gamma_1=&\Psi+(\Gamma_1+\Gamma_2+\Gamma_3)(\Gamma_1+\Gamma_2+\Gamma_3),\\
\nab_3\Gamma_2=&\Psi+(\Gamma_1+\Gamma_2)\Gamma_1,\\
\nab_3\Gamma_3=&\Psi+(\Gamma_1+\Gamma_2+\Gamma_3)(\Gamma_1+\Gamma_2).
\end{split}
\end{equation}

We prove the bounds for $\Gamma_1$, $\Gamma_2$, $\Gamma_3$ in the setting of a bootstrap argument in which the control for the curvature components $\Psi$ is assumed. The estimates for $\Gamma_1$ can easily be obtained since integrating in the $e_4$ (i.e., $\ub$) direction gives a smallness constant. For $\Gamma_2$, the integration is in the $e_3$ (i.e., $u$) direction and does not have a smallness constant. Nevertheless, using the bounds for $\Gamma_1$ that have already been obtained, the error term is linear in $\Gamma_2$! This can thus be dealt with using Gronwall's inequality. Finally, the equation for $\Gamma_3$ is also linear in $\Gamma_3$. Therefore, using the the estimates already derived for $\Gamma_1$ and $\Gamma_2$ together with Gronwall's inequality, the equation for $\Gamma_3$ can be applied to get the desired control for $\Gamma_3$.

In the energy estimates for the curvature components, there is likewise a term without a smallness constant. Nevertheless, it was noted in \cite{L} that the only term not accompanied by a smallness constant is also linear. Thus, as in the case in controlling the Ricci coefficient, the energy estimates can be closed using Gronwall's inequality.

Returning to the setting of this paper, this challenge of having an arbitrarily long $u$ interval is coupled to the difficulty that the curvature components $\alpha$ and $\alphab$ are singular and that the Ricci coefficients $\chih$, $\chibh$, $\omega$, $\omegab$ can only be estimated in appropriate mixed norms. As a result, unlike in \cite{L}, we cannot use the $\nab_4$ equations for $\chih$ and $\trch$ to gain a smallness constant. The $\nab_4\chih$ equation is unavailable because $\alpha$ appears as the source of this equation, and in this paper, due to the singularity of $\alpha$, one of our goals is to prove all estimates without any information on $\alpha$. The $\nab_4\trch$ equation, while can be used, has $|\chih|^2$ as a source term. Since $\chih$ can only be estimated in $L^2_{\ub}$ using the mixed norm bounds, the integration in the $\ub$ direction does not give a smallness constant.

Nevertheless, a different structure can be exploited to overcome this challenge. We group the Ricci coefficients into $\Gamma_1$, $\Gamma_2$, $\Gamma_3$ and $\Gamma_4$ according to the equations and estimates that they satisfy. Let
$$\Gamma_1\in \{\trchb,\chibh,\eta,\omega\},\quad\Gamma_2=\etab,\quad\Gamma_3\in\{\chih,\omega\},\quad\Gamma_4=\trch.$$
They satisfy the following transport equations:
\begin{equation*}
\begin{split}
\nab_4\Gamma_1&=\Psic+(\Gamma_1+\Gamma_2+\Gamma_3+\Gamma_4)(\Gamma_1+\Gamma_2+\Gamma_3+\Gamma_4),\\
\nab_3\Gamma_2&=\Psic+(\Gamma_1+\Gamma_2)\Gamma_1,\\
\nab_3\Gamma_3&=\Psic+(\Gamma_1+\Gamma_2+\Gamma_3+\Gamma_4)(\Gamma_1+\Gamma_2),\\
\nab_4\Gamma_4&=(\Gamma_3+\Gamma_4)(\Gamma_3+\Gamma_4).
\end{split}
\end{equation*}
Notice that $\Gamma_3$ corresponds to the Ricci coefficients $\psi_H$ and can only be estimated in $L^2_{\ub}L^\infty_uL^\infty(S)$.

As before, the control of $\Psic$ is assumed in a bootstrap setting. The equations for $\Gamma_1$ and $\Gamma_2$ have similar structures as (\ref{Gammasch1}). Thus, we first estimate $\Gamma_1$, using the smallness constant provided by the integration in the $e_4$ (i.e., $\ub$) direction. We then control $\Gamma_2$ noting that with the bounds already obtained for $\Gamma_1$, the error term is linear in $\Gamma_2$. The equation for $\Gamma_3$ is similar to (\ref{Gammasch1}), except for an extra term containing $\Gamma_4$, which has not been estimated. Nevertheless, $\Gamma_3$ are the terms $\chih$ and $\omega$ which are only estimated in $L^2_{\ub}L^\infty_u L^\infty(S)$. Thus the error term containing $\Gamma_4$ only has to be controlled after taking the $L^2_{\ub}$ norm. This provides an extra smallness constant. Finally, while $\Gamma_4$ satisfies an equation in the $e_4$ (i.e., $\ub$) direction, $\Gamma_3\Gamma_3$ appears as a source. Recall that since $\Gamma_3$ can only be controlled in $L^2_{\ub}L^\infty_u L^\infty(S)$, this error term is only bounded in $L^1_{\ub}L^\infty_u L^\infty(S)$. In other words, integrating this equation does not give a smallness constant. Nevertheless, we can use the control for $\Gamma_3$ derived in the previous step! Thus we obtain the desired bounds for all the Ricci coefficients.

In a similar fashion, the energy estimates also have to be carried out in two steps. Recall from \eqref{EEsch} that in establishing the energy estimates, we need to control the error terms
$$||\Gamma\Psic\Psic||_{L^1_uL^1_{\ub}L^1(S)},$$
where $\Psic$ are the renormalized curvature components. The most difficult error terms are those containing $\beta$. This is because $\beta$ can only be controlled in $L^2(H)$. In order to control the error terms, the $L^2(H)$ norm of $\beta$ has to be integrated over the long $u$-interval and the estimates do not have a smallness constant. To deal with this problem, we first control $\betab$ in $L^2(\Hb)$ and $(\rhoc,\sigmac)$ in $L^2(H)$. While deriving these bounds, all the error terms are accompanied by a smallness constant $\epsilon^{\frac 12}$. We estimate $\beta$ after we obtain these bounds. The error terms that contain $\beta$ are\footnote{To be more precise, the term that actually appears is $||\chi\beta\nab\chib||_{L^1_uL^1_{\ub}L^1(S)}$ instead of $||\chi\beta\betab||_{L^1_uL^1_{\ub}L^1(S)}$. We note that using elliptic estimates, the control for $\nab\chib$ can be retrieved from the bound for $\betab$. We omit the technical details here and refer the readers to the content of the paper for details.}
$$||\chi\beta\betab||_{L^1_uL^1_{\ub}L^1(S)}$$
and
$$||\chib\beta\beta||_{L^1_uL^1_{\ub}L^1(S)}.$$
Since the $\betab$ has been controlled first, the first error term is sublinear. For the second term, it can be shown that the estimates for $\chib$ are independent of the bounds on the curvature and this term is therefore a linear term. It can thus be dealt with using Gronwall's inequality.

\subsubsection{Signature}\label{secsign}

In the proof of the a priori estimates, the structure of the Einstein equations plays a crucial role. It is thus useful to understand the structure of the equations in a more systematic fashion. Here, inspired by the work of Klainerman-Rodnianski \cite{KlRo} on the formation of trapped surfaces, we introduce a notion of signature that allows us to explain and tract that certain undesirable terms do not appear in a particular equation. Such a notion of signature is intimately tied to the scaling properties of the Einstein equations.

\subsubsection{Nonlinear Interaction of Impulsive Gravitational Waves}

As mentioned above, Theorem \ref{rdthmv1} implies the existence and uniqueness of solutions to the vacuum Einstein equations with characteristic initial data as in Theorem \ref{cgiwthm}. In the setting of the nonlinear interaction of impulsive gravitational waves in Theorem \ref{cgiwthm}, however, the initial data are more regular than the general initial data allowed in the assumptions of Theorem \ref{rdthmv1}. In particular, on each of the initial null hypersurfaces, the initial data are only singular on an embedded 2-sphere. This allows us to prove that the spacetime is smooth away from the null hypersurfaces emanating from the initial singularities. Moreover, $\alpha$ and $\alphab$ can be defined as measures with singular atoms supported on these null hypersurfaces.

We first note that standard local well-posedness theory and the results of \cite{LR} imply that the spacetime is smooth in $\{0\leq u < u_s\}\cup\{0\leq \ub < \ub_s\}$. Thus in order to show that the spacetime is smooth away from the null hypersurfaces $\{u= u_s\}$ and $\{\ub=\ub_s\}$, we only need to demonstrate the regularity of the spacetime in $\{u > u_s\}\cap\{\ub > \ub_s\}$. 

It turns out that using the a priori estimates derived in the proof of Theorem \ref{rdthmv1}, this can be shown by directly integrating the null structure equations. For example, while $\nab_4\chih$ has a delta singularity across $\ub=\ub_s$, we can prove that it is bounded for $\ub>\ub_s$. To this end, we consider
$$\nab_3\chih+\frac 1 2 \trchb \chih=\nab\widehat{\otimes} \eta+2\omegab \chih-\frac 12 \trch \chibh +\eta\widehat{\otimes} \eta .$$
Commute the equation with the $\nab_4$ derivative and substituting appropriate null structure equations, we get
$$\nab_3\nab_4\chih+\frac 1 2 \trchb \nab_4\chih-2\omegab \nab_4\chih=...$$
where ... denotes terms that have already been estimated in the proof the Theorem \ref{rdthmv1}. Thus by integrating this equation, we conclude that $\nab_4\chih$ inherits the regularity of the initial data and is bounded as long as $\ub\neq \ub_s$. This procedure can be carried out for all higher derivatives to show that the spacetime is smooth in the region $\{u>u_s\}\cap\{\ub>\ub_s\}$.

A surprising feature of this proof of smoothness of the resulting spacetime is that it does not require $\alpha$ and $\alphab$ to have delta singularities supported on the corresponding 2-spheres. In fact, if the initial data satisfy the assumptions of Theorem \ref{rdthmv1} and are more regular for $\ub> \tilde{\ub}$ on $H_0$ and $u>\tilde{u}$ on $\Hb_0$, then the spacetime can be proved to be more regular in $\{\ub> \tilde{\ub}\}\cap\{u>\tilde{u}\}$!

Returning to the interacting impulsive gravitational waves, we show that $\alpha$ and $\alphab$ can be defined as measures with delta singularities supported on $\Hb_{\ub_s}$ and $H_{u_s}$ respectively. To see this, consider the equations
$$\alpha=-\nab_4\chih-\trch\chih-2\omega\chih,$$
and
$$\alphab=-\nab_3\chibh-\trchb\,\chibh-2\omegab\chibh.$$
We can prove that $\chih$ (resp. $\chibh$) is smooth except across $\ub=\ub_s$ (resp. $u=u_s$) where it has a jump discontinuity. This implies that $\alpha$ and $\alphab$ are well-defined as measures and they have delta singularities supported on $\Hb_{\ub_s}$ and $H_{u_s}$ respectively.

\subsubsection{Formation of Trapped Surfaces}

Using the existence and uniqueness result in Theorem \ref{rdthmv1}, we construct a large class of spacetimes such that the initial data do not contain a trapped surface, and a trapped surface is formed in evolution. In particular, unlike in \cite{Chr}, \cite{KlRo} and \cite{KlRo1}, our construction does not require the initial data on $\Hb_0$ to be close to that of Minkowski space.

The challenge in this problem lies in the fact that in order to have a trapped surface, certain geometric quantities are necessarily large. Recall that in the setting of Christodoulou \cite{Chr} (see Figure 7), characteristic initial data were prescribed on $\Hb_0$ and a short region of $H_0$, where $0\leq \ub\leq \epsilon$.

\begin{figure}[htbp]
\begin{center}
 
\input{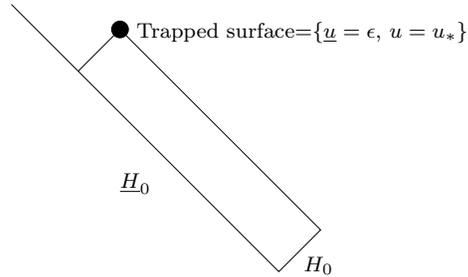}
 
\caption{Formation of a Trapped Surface}
\end{center}
\end{figure}
In view of the equation
\begin{equation}\label{trapped0}
\nab_4\trch=-\frac 12(\trch)^2-|\chih|^2,
\end{equation}
in order that for some $u$, $\trch$ becomes negative after integrating in a $\ub$ length of $\epsilon$, $\chih$ has to be of size $\sim \epsilon^{-\frac 12}$ and consequently $\alpha$ has to be of size $\sim \epsilon^{-\frac 32}$. In the work of Christodoulou \cite{Chr}, and the later extensions of Klainerman-Rodnianski \cite{KlRo}, \cite{KlRo1}, this largeness of the geometric quantities is compensated by requiring smallness of initial data on $\Hb_0$.

To go beyond the requirement of Minkowski data on $\Hb_0$, we notice that while the $L^\infty_{\ub}L^\infty(S)$ norm of $\chih$ is large in terms of $\epsilon$, its $L^2_{\ub}L^\infty(S)$ is merely of size $\sim 1$ with respect to $\epsilon$. Therefore, Theorem \ref{rdthmv1} implies the existence and uniqueness of a spacetime solution for this type of initial data, even without any smallness assumptions on $\Hb_0$. Note in particular that the assumptions of Theorem \ref{rdthmv1} do not require any control of $\alpha$ for the initial data. It thus remains to show that one can find initial data which do not contain a trapped surface and such that a trapped surface is formed in evolution.

With the initial data that he imposed, Christodoulou identified a mechanism for the formation of a trapped surface \cite{Chr}. Recalling \eqref{trapped0}, for $\epsilon$ sufficiently small, if at $u=0$,
\begin{equation}\label{trapped1}
\trch(u=0,\ub=0,\vartheta)>\int_0^\epsilon |\chih|^2(u=0,\vartheta) d\ub,
\end{equation}
and at $u=u_*$,
\begin{equation}\label{trapped2}
\trch(u=u_*,\ub=0,\vartheta)<\int_0^\epsilon |\chih|^2(u=u_*,\vartheta) d\ub,
\end{equation}
then the initial data are free of trapped surfaces and the $2$-sphere given by $\{\ub=\epsilon,\,u=u_*\}$ is a trapped surface, i.e., a trapped surface forms in evolution.

To achieve (\ref{trapped1}) and (\ref{trapped2}), consider the equations
$$\nab_3\chih+\frac 1 2 \trchb \chih=\nab\widehat{\otimes} \eta+2\omegab \chih-\frac 12 \trch \chibh +\eta\widehat{\otimes} \eta,$$
and
$$\nab_3 \trch+\frac1 2 \trchb \trch =2\omegab \trch+2\rho- \chih\cdot\chibh+2\div \eta+2|\eta|^2.$$
Assuming the right hand side of these equations to be error terms, we get
\begin{equation}\label{trappedapprox1}
\nab_3|\chih|^2+ \trchb |\chih|^2 \approx 0
\end{equation}
and
\begin{equation}\label{trappedapprox2}
\nab_3 \trch+\frac1 2 \trchb \trch \approx 0,
\end{equation}
which imply
$$|\chih|^2(u,\ub,\vartheta)\approx|\chih|^2(u=0,\ub,\vartheta)\exp(-\int_0^u \trchb (u',\ub,\vartheta) du')$$
and
\begin{equation}\label{trchgr}
\trch(u,\ub=0,\vartheta)\approx \trch(u=0,\ub=0,\vartheta)\exp(-\frac 12 \int_0^u \trchb (u',\ub=0,\vartheta) du').
\end{equation}
Christodoulou showed that in the setting of \cite{Chr},
\begin{equation}\label{trappedapprox3}
\trchb(u,\ub,\vartheta)\approx \trchb(u,\ub=0,\vartheta),
\end{equation}
which implies that
\begin{equation}\label{chihgr}
|\chih|^2(u,\ub,\vartheta)\approx|\chih|^2(u=0,\ub,\vartheta)\exp(-\int_0^u \trchb (u',\ub=0,\vartheta) du').
\end{equation}
Comparing (\ref{trchgr}) and (\ref{chihgr}), since $\trchb<0$, $|\chih|^2$ has a larger amplification factor than $\trch$. Therefore, there is an open set of initial data such that a trapped surface is formed in evolution.

In our setting where we remove the smallness assumptions on the data on $\Hb_0$, the estimates derived in Theorem \ref{aprioriestimates} imply that (\ref{trappedapprox1}) and (\ref{trappedapprox3}) hold. Nevertheless, the approximation (\ref{trappedapprox2}) is not necessarily valid. Instead, we impose a condition (\ref{trappedsurfaceineq}) on $\Hb_0$ in Theorem \ref{trappedsurface} in order to guarantee that a trapped surface is formed in evolution. This condition guarantees that there is a choice of initial data on $H_0$ such that (\ref{trapped1}) and (\ref{trapped2}) hold in the resulting spacetime.\\

\noindent{\bf Acknowledgments:} The authors would like to thank Mihalis Dafermos for valuable discussions. J. Luk is supported by the NSF Postdoctoral Fellowship DMS-1204493. I. Rodnianski is supported by the NSF grant DMS-1001500 and the FRG grant DMS-1065710.

\section{Setting, Equations and Notations}
\begin{figure}[htbp]
\begin{center}
 
\input{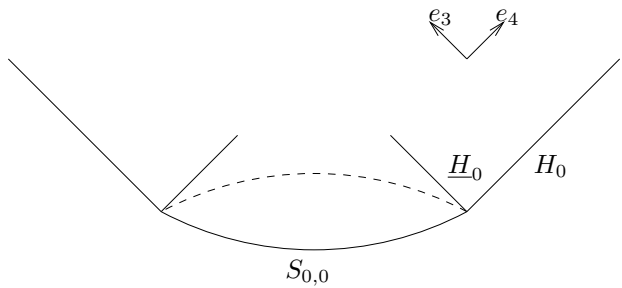}
 
\caption{The Basic Setup}
\end{center}
\end{figure}

Our setting is the characteristic initial value problem with data given on the two characteristic hypersurfaces $H_0$ and $\Hb_0$ intersecting at the sphere $S_{0,0}$. The spacetime will be a solution to the Einstein equations constructed in a neighborhood of $H_0$ and $\Hb_0$ containing $S_{0,0}$.

While we consider spacetimes with Riemann curvature tensors that are merely measures, it suffices to obtain a priori estimates for \emph{smooth} approximations of them. Once the a priori estimates are obtained, we can follow the limiting argument as in the case of one propagating impulsive gravitational wave \cite{LR} to obtain existence, uniqueness and regularity of the solutions. We refer the readers to \cite{LR} for details. In this paper, we will therefore focus on the proof of a priori estimates (see Theorem \ref{aprioriestimates}). To that end, we assume that we are given a smooth solution to the Einstein equations in a neighborhood of $H_0$ and $\Hb_0$. In particular, the double null foliation and the coordinate system introduced below are well-defined.

\subsection{Double Null Foliation}\label{secdnf}
For a spacetime in a neighborhood of $S_{0,0}$, we define a double null foliation as follows: Let $u$ and $\ub$ be solutions to the eikonal equation
$$(g^{-1})^{\mu\nu}\partial_\mu u\partial_\nu u=0,\quad (g^{-1})^{\mu\nu}\partial_\mu\ub\partial_\nu \ub=0,$$
satisfying the initial conditions $u=0$ on $H_0$ and $\ub=0$ on $\Hb_0$.
Let
$$L'^\mu=-2(g^{-1})^{\mu\nu}\partial_\nu u,\quad \Lb'^\mu=-2(g^{-1})^{\mu\nu}\partial_\nu \ub.$$ 
These are null and geodesic vector fields. Let
$$2\Omega^{-2}=-g(L',\Lb').$$
Define
$$e_3=\Omega\Lb'\mbox{, }e_4=\Omega L'$$
to be the normalized null pair such that 
$$g(e_3,e_4)=-2$$
and
$$\Lb=\Omega^2\Lb'\mbox{, }L=\Omega^2 L'$$
to be the so-called equivariant vector fields.

In the sequel, we will consider spacetime solutions to the vacuum Einstein equations in the gauge such that 
$$\Omega=1,\quad\mbox{on $H_0$ and $\Hb_0$}.$$

We denote the level sets of $u$ as $H_u$ and the level sets of $\ub$ and $\Hb_{\ub}$. By virtue of the eikonal equations, $H_u$ and $\Hb_{\ub}$ are null hypersurfaces. The sets defined by the intersections of the hypersurfaces $H_u$ and $\Hb_{\ub}$ are topologically 2-spheres, which we denote by $S_{u,\ub}$. Notice that the integral flows of $L$ and $\Lb$ respect the foliation $S_{u,\ub}$.

\subsection{The Coordinate System}\label{coordinates}
On a spacetime in a neighborhood of $S_{0,0}$, we define a coordinate system $(u,\ub,\th^1,\th^2)$ as follows:
On the sphere $S_{0,0}$, define a coordinate system $(\th^1,\th^2)$ such that on each coordinate patch the metric $\gamma$ is smooth, bounded and positive definite. Then we define the coordinates on the initial hypersurfaces $H_0$ and $\Hb_0$ by requiring $\th^A$ to be constant along the integral curves of $L$ and $\Lb$ respectively.
We now define the coordinate system in the spacetime in a neighborhood of $S_{0,0}$ by letting $u$ and $\ub$ to be solutions to the eikonal equations:
$$(g^{-1})^{\mu\nu}\partial_\mu u\partial_\nu u=0,\quad (g^{-1})^{\mu\nu}\partial_\mu\ub\partial_\nu \ub=0,$$
and define $\th^1, \th^2$ by
$$\Ls_L \th^A=0,$$ 
where $\Ls_L$ denotes the restriction of the Lie derivative to $TS_{u,\ub}$ (See \cite{Chr}, Chapter 1).
Relative to the coordinate system $(u,\ub,\th^1,\th^2)$, the null pair $e_3$ and $e_4$ can be expressed as
$$e_3=\Omega^{-1}\left(\frac{\partial}{\partial u}+b^A\frac{\partial}{\partial \th^A}\right), e_4=\Omega^{-1}\frac{\partial}{\partial \ub},$$
for some $b^A$ such that $b^A=0$ on $\Hb_0$, while the metric $g$ takes the form
$$g=-2\Omega^2(du\otimes d\ub+d\ub\otimes du)+\gamma_{AB}(d\th^A-b^Adu)\otimes (d\th^B-b^Bdu).$$ 

\subsection{Equations}\label{seceqn}
As indicated in the introduction, we will recast the Einstein equations as a system for Ricci coefficients and curvature components associated to a null frame $e_3$, $e_4$ defined above and an orthonormal frame ${e_1,e_2}$ tangent to the 2-spheres $S_{u,\ub}$. Using the indices $A,B$ to denote $1,2$, we recall the definition of the Ricci coefficients relative to the null fame:
 \begin{equation}
\begin{split}
&\chi_{AB}=g(D_A e_4,e_B),\, \,\, \quad \chib_{AB}=g(D_A e_3,e_B),\\
&\eta_A=-\frac 12 g(D_3 e_A,e_4),\quad \etab_A=-\frac 12 g(D_4 e_A,e_3),\\
&\omega=-\frac 14 g(D_4 e_3,e_4),\quad\,\,\, \omegab=-\frac 14 g(D_3 e_4,e_3),\\
&\zeta_A=\frac 1 2 g(D_A e_4,e_3)
\end{split}
\end{equation}
where $D_A=D_{e_{(A)}}$. We also recall the definition of the null curvature components,
 \begin{equation}
\begin{split}
\a_{AB}&=R(e_A, e_4, e_B, e_4),\quad \, \,\,   \ab_{AB}=R(e_A, e_3, e_B, e_3),\\
\b_A&= \frac 1 2 R(e_A,  e_4, e_3, e_4) ,\quad \bb_A =\frac 1 2 R(e_A,  e_3,  e_3, e_4),\\
\rho&=\frac 1 4 R(e_4,e_3, e_4,  e_3),\quad \sigma=\frac 1 4  \,^*R(e_4,e_3, e_4,  e_3).
\end{split}
\end{equation}
Here $\, ^*R$ denotes the Hodge dual of $R$.  We denote by $\nab$ the 
induced covariant derivative operator on $S_{u,\ub}$ and by $\nab_3$, $\nab_4$
the projections to $S_{u,\ub}$ of the covariant derivatives $D_3$, $D_4$ (see
precise definitions in \cite{KN}). 

Observe that,
\begin{equation}
\begin{split}
&\omega=-\frac 12 \nab_4 (\log\Omega),\qquad \omegab=-\frac 12 \nab_3 (\log\Omega),\\
&\eta_A=\zeta_A +\nab_A (\log\Omega),\quad \etab_A=-\zeta_A+\nab_A (\log\Omega).
\end{split}
\end{equation}

Define the following contractions of the tensor product $\phi^{(1)}$ and $\phi^{(2)}$ with respect to the metric $\gamma$:
$$\phi^{(1)}\cdot\phi^{(2)}:=(\gamma^{-1})^{AC}(\gamma^{-1})^{BD}\phi^{(1)}_{AB}\phi^{(2)}_{CD} \quad\mbox{for symmetric $2$-tensors $\phi^{(1)}_{AB}$, $\phi^{(2)}_{AB}$,}$$
$$\phi^{(1)}\cdot\phi^{(2)}:=(\gamma^{-1})^{AB}\phi^{(1)}_{A}\phi^{(2)}_{B} \quad\mbox{for $1$-forms $\phi^{(1)}_{A}$, $\phi^{(2)}_{A}$,}$$
$$(\phi^{(1)}\cdot\phi^{(2)})_A:=(\gamma^{-1})^{BC}\phi^{(1)}_{AB}\phi^{(2)}_{C} \quad\mbox{for a symmetric $2$-tensor $\phi^{(1)}_{AB}$ and a $1$-form $\phi^{(2)}_{A}$,}$$
$$(\phi^{(1)}\hot\phi^{(2)})_{AB}:=\phi^{(1)}_A\phi^{(2)}_B+\phi^{(1)}_B\phi^{(2)}_A-\gamma_{AB}(\phi^{(1)}\cdot\phi^{(2)}) \quad\mbox{for one forms $\phi^{(1)}_A$, $\phi^{(2)}_A$,}$$
$$\phi^{(1)}\wedge\phi^{(2)}:=\eps^{AB}(\gamma^{-1})^{CD}\phi^{(1)}_{AC}\phi^{(2)}_{BD}\quad\mbox{for symmetric two tensors $\phi^{(1)}_{AB}$, $\phi^{(2)}_{AB}$},$$
where $\eps$ is the volume form associated to the metric $\gamma$.
Define $^*$ of $1$-forms and symmetric $2$-tensors respectively as follows (note that on $1$-forms this is the Hodge dual on $S_{u,\ub}$):
\begin{align*}
^*\phi_A := & \gamma_{AC} \eps^{CB} \phi_B, \quad ^*\phi_{AB} := \gamma_{BD} \eps^{DC} \phi_{AC}.
\end{align*}
Define the operator $\nab\widehat{\otimes}$ on a $1$-form $\phi_{A}$ by
$$(\nab\widehat{\otimes}\phi)_{AB} :=  \nab_A \phi_B + \nab_B \phi_A - \gamma_{AB} \div \phi.$$
For totally symmetric tensors, the $\div$ and $\curl$ operators are defined by the formulas
$$(\div\phi)_{A_1...A_r}:=(\gamma^{-1})^{BC}\nabla_C\phi_{BA_1...A_r},$$
$$(\curl\phi)_{A_1...A_r}:=\eps^{BC}\nabla_B\phi_{CA_1...A_r}.$$
Define also the trace of totally symmetric tensors to be
$$(\mbox{tr}\phi)_{A_1...A_{r-1}}:=(\gamma^{-1})^{BC}\phi_{BCA_1...A_{r-1}}.$$

We separate the trace and traceless part of $\chi$ and $\chib$. Let $\chih$ and $\chibh$ be the traceless parts of $\chi$ and $\chib$ respectively. Then $\chi$ and $\chib$ satisfy the following null structure equations:
\begin{equation}
\label{null.str1}
\begin{split}
\nab_4 \trch+\frac 12 (\trch)^2&=-|\chih|^2-2\omega \trch,\\
\nab_4\chih+\trch \chih&=-2 \omega \chih-\alpha,\\
\nab_3 \trchb+\frac 12 (\trchb)^2&=-2\omegab \trchb-|\chibh|^2,\\
\nab_3\chibh + \trchb\,  \chibh&= -2\omegab \chibh -\alphab,\\
\nab_4 \trchb+\frac1 2 \trch \trchb &=2\omega \trchb +2\rho- \chih\cdot\chibh +2\div \etab +2|\etab|^2,\\
\nab_4\chibh +\frac 1 2 \trch \chibh&=\nab\widehat{\otimes} \etab+2\omega \chibh-\frac 12 \trchb \chih +\etab\widehat{\otimes} \etab,\\
\nab_3 \trch+\frac1 2 \trchb \trch &=2\omegab \trch+2\rho- \chih\cdot\chibh+2\div \eta+2|\eta|^2,\\
\nab_3\chih+\frac 1 2 \trchb \chih&=\nab\widehat{\otimes} \eta+2\omegab \chih-\frac 12 \trch \chibh +\eta\widehat{\otimes} \eta.
\end{split}
\end{equation}
The other Ricci coefficients satisfy the following null structure equations:
\begin{equation}
\label{null.str2}
\begin{split}
\nabla_4\eta&=-\chi\cdot(\eta-\etab)-\b,\\
\nabla_3\etab &=-\chib\cdot (\etab-\eta)+\bb,\\
\nabla_4\omegab&=2\omega\omegab-\eta\cdot\etab+\f 12|\eta|^2+\frac 12 \rho,\\
\nabla_3\omega&=2\omega\omegab-\eta\cdot\etab+\f 12|\etab|^2+\frac 12 \rho.
\end{split}
\end{equation}
The Ricci coefficients also satisfy the following constraint equations
\begin{equation}
\label{null.str3}
\begin{split}
\div\chih&=\frac 12 \nabla \trch - \frac 12 (\eta-\etab)\cdot (\chih -\frac 1 2 \trch) -\beta,\\
\div\chibh&=\frac 12 \nabla \trchb + \frac 12 (\eta-\etab)\cdot (\chibh-\frac 1 2   \trchb) +\betab,\\
\curl\eta &=-\curl\etab=\sigma +\frac 1 2\chibh \wedge\chih,\\
K&=-\rho+\frac 1 2 \chih\cdot\chibh-\frac 1 4 \trch \trchb.
\end{split}
\end{equation}
with $K$ the Gauss curvature of the spheres $S_{u,\ub}$.
The null curvature components satisfy the following null Bianchi equations:
\begin{equation}
\label{eq:null.Bianchi}
\begin{split}
&\nab_3\alpha+\frac 12 \trchb \alpha=\nabla\hot \beta+ 4\omegab\alpha-3(\chih\rho+^*\chih\sigma)+
(\zeta+4\eta)\hot\beta,\\
&\nab_4\beta+2\trch\beta = \div\alpha - 2\omega\beta +  (2\zeta+\etab)\cdot \alpha,\\
&\nab_3\beta+\trchb\beta=\nabla\rho + 2\omegab \beta +^*\nabla\sigma +2\chih\cdot\betab+3(\eta\rho+^*\eta\sigma),\\
&\nab_4\sigma+\frac 32\trch\sigma=-\div^*\beta+\frac 12\chibh\wedge\alpha-\zeta\wedge\beta-2\etab\wedge\beta,\\
&\nab_3\sigma+\frac 32\trchb\sigma=-\div ^*\betab-\frac 12\chih\wedge\alphab+\zeta\wedge\betab-2\eta\wedge\betab,\\
&\nab_4\rho+\frac 32\trch\rho=\div\beta-\frac 12\chibh\cdot\alpha+\zeta\cdot\beta+2\etab\cdot\beta,\\
&\nab_3\rho+\frac 32\trchb\rho=-\div\betab- \frac 12\chih\cdot\alphab+\zeta\cdot\betab-2\eta\cdot\betab,\\
&\nab_4\betab+\trch\betab=-\nabla\rho +^*\nabla\sigma+ 2\omega\betab +2\chibh\cdot\beta-3(\etab\rho-^*\etab\sigma),\\
&\nab_3\betab+2\trchb\betab=-\div\alphab-2\omegab\betab-(-2\zeta+\eta) \cdot\alphab,\\
&\nab_4\alphab+\frac 12 \trch\alphab=-\nabla\hot \betab+ 4\omega\alphab-3(\chibh\rho-^*\chibh\sigma)+
(\zeta-4\etab)\hot \betab.
\end{split}
\end{equation}

We now define the renormalized curvature components and rewrite the Bianchi equations in terms of them. Let
$$\rhoc=\rho-\frac 12 \chih \cdot\chibh,\quad \sigmac=\sigma+\frac 12 \chibh\wedge\chih.$$
The Bianchi equations expressed in terms of $\rhoc$ and $\sigmac$ instead of $\rho$ and $\sigma$ are as follows:
\begin{equation}
\label{eq:null.Bianchi2}
\begin{split}
\nab_3\beta+\trchb\beta=&\nabla\rhoc  +^*\nabla\sigmac + 2\omegab \beta+2\chih\cdot\betab+3(\eta\rhoc+^*\eta\sigmac)+\frac 1 2(\nabla(\chih\cdot\chibh)+^*\nabla(\chih\wedge\chibh))\\
&+\f 32(\eta\chih\cdot\chibh+^*\eta\chih\wedge\chibh),\\
\nab_4\sigmac+\frac 32\trch\sigmac=&-\div^*\beta-\zeta\wedge\beta-2\etab\wedge
\beta-\frac 12 \chih\wedge(\nab\widehat{\otimes}\etab)-\frac 12 \chih\wedge(\etab\widehat{\otimes}\etab),\\
\nab_4\rhoc+\frac 32\trch\rhoc=&\div\beta+\zeta\cdot\beta+2\etab\cdot\beta-\frac 12 \chih\cdot\nab\widehat{\otimes}\etab-\frac 12 \chih\cdot(\etab\widehat{\otimes}\etab)+\frac 14\trchb|\chih|^2,\\
\nab_3\sigmac+\frac 32\trchb\sigmac=&-\div ^*\betab+\zeta\wedge\betab-2\eta\wedge
\betab+\frac 12 \chibh\wedge(\nab\widehat{\otimes}\eta)+\frac 12 \chibh\wedge(\eta\widehat{\otimes}\eta),\\
\nab_3\rhoc+\frac 32\trchb\rhoc=&-\div\betab+\zeta\cdot\betab-2\eta\cdot\betab-\frac 12 \chibh\cdot\nab\widehat{\otimes}\eta-\frac 12 \chibh\cdot(\eta\widehat{\otimes}\eta)+\frac 14\trch|\chibh|^2,\\
\nab_4\betab+\trch\betab=&-\nabla\rhoc +^*\nabla\sigmac+ 2\omega\betab +2\chibh\cdot\beta-3(\etab\rhoc-^*\etab\sigmac)-\frac 1 2(\nabla(\chih\cdot\chibh)-^*\nabla(\chih\wedge\chibh))\\
&-\f 32(\eta\chih\cdot\chibh-^*\eta\chih\wedge\chibh).
\end{split}
\end{equation}
Notice that we have obtained a system for the renormalized curvature components in which the singular curvature components $\alpha$ and $\alphab$ do not appear.

In the sequel, we will use capital Latin letters $A\in \{1,2\}$ for indices on the spheres $S_{u,\ub}$ and Greek letters $\mu\in\{1,2,3,4\}$ for indices in the whole spacetime.

\subsection{Signature}\label{sec.signature}

In this subsection, we introduce the concept of signature. This will allow us to easily show that some undesirable terms are absent in various equations.

To every null curvature component $\alpha,\beta,\rho,\sigma,\betab,\alphab$, null Ricci coefficients $\chi,\zeta,\eta,\etab,\omega,\omegab$, and the metric components $\gamma,\Omega$, we assign a signature according to the following rule:
$$sgn(\phi)=1\cdot N_4(\phi)+(-1)\cdot N_3(\phi),$$
where $N_4(\phi), N_3(\phi)$ denote the number of times $e_4$, respectively $e_3$, which appears in the definition of $\phi$.
 Thus,
\begin{eqnarray*}
 sgn (\beta)=1, \quad  sgn(\rho, \sigma)=0,\quad  sgn(\betab)=-1.
\end{eqnarray*}
Also,
\begin{eqnarray*}
sgn(\chi)=sgn( \omega)=1,\quad sgn(\zeta, \eta, \etab)=sgn(\gamma,\Omega)=0,\quad sgn(\chib)=sgn(\omegab)= -1.
\end{eqnarray*}

We use the notation $\Psi^{(s)}$ and $\Gamma^{(s)}$ to denote the renormalized curvature component and Ricci coefficient respectively with signature $s$. Then all the equations conserve signature in the following sense:
The null structure equations are all in the form
$$\nabla_4\Gamma^{(s)}=\Psi^{(s+1)}+\sum_{s_1+s_2=s+1}\Gamma^{(s_1)}\cdot\Gamma^{(s_2)},$$
$$\nabla_3\Gamma^{(s)}=\Psi^{(s-1)}+\sum_{s_1+s_2=s-1}\Gamma^{(s_1)}\cdot\Gamma^{(s_2)}.$$
and the null Bianchi equations are of the form
$$\nabla_4\Psi^{(s)}=\nabla\Psi^{(s+1)}+\sum_{s_1+s_2=s+1}(\Gamma^{(s_1)}\cdot\Psi^{(s_2)}+\Gamma^{(s_1)}\cdot\nabla\Gamma^{(s_2)}),$$
$$\nabla_3\Psi^{(s)}=\nabla\Psi^{(s-1)}+\sum_{s_1+s_2=s-1}(\Gamma^{(s_1)}\cdot\Psi^{(s_2)}+\Gamma^{(s_1)}\cdot\nabla\Gamma^{(s_2)}).$$

\subsection{Schematic Notation}\label{secsche}
We introduce a schematic notation as follow: Let $\phi$ denote an arbitrary tensorfield. For the Ricci coefficients, we use the notation
\begin{equation}\label{schpsi}
\psi\in\{\trch,\trchb,\eta,\etab\},\quad\psi_H\in\{\chih,\omega\},\quad\psi_{\Hb}\in\{\chibh,\omegab\}.
\end{equation}
Notice that $\psi_H$ has signature $1$ and $\psi_{\Hb}$ has signature $-1$. Unless otherwise stated, we will not use the schematic notation for the renormalized curvature components but will write them explicitly.

We will simply write $\psi\psi$ (or $\psi\psi_H$, $\psi\beta$, etc.) to denote arbitrary contractions with respect to the metric $\gamma$. $\nab$ will be used to denote an arbitrary angular covariant derivative. The use of the schematic notation is reserved for the cases when the precise nature of the contraction is not important to the argument. Moreover, when using this schematic notation, we will neglect all constant factors.

We will use brackets to denote terms with any one of the components in the brackets. For example, $\psi(\rhoc,\sigmac)$ is used to denote either $\psi\rhoc$ or $\psi\sigmac$.

The expression $\nab^i\psi^j$ will be used to denote angular derivatives of products of Ricci coefficients. More precisely, $\nab^i\psi^j$ denotes the sum of all terms which are products of $j$ factors, with each factor being $\nab^{i_k}\psi$ and that the sum of all $i_k$'s being $i$, i.e., 
$$\nab^i\psi^j=\displaystyle\sum_{i_1+i_2+...+i_j}\underbrace{\nab^{i_1}\psi\nab^{i_2}\psi...\nab^{i_j}\psi}_\text{j factors}.$$

Using these notations, we write all the equations from Section \ref{seceqn} in the schematic form. The structure of the equations can be read off directly from Section \ref{seceqn}. On the other hand, we notice that the structure for most of the equations also follows from signature considerations as indicated in Section \ref{secsign}. We will later point out places where we need to use an additional structure of the equations that goes beyond signature considerations.

We first write down the null structure equations (\ref{null.str1}) and (\ref{null.str2}) in schematic form. Here, we do not write down the two equations that involve the singular curvature components $\alpha$ or $\alphab$.
\begin{equation}
\label{null.str1.sch}
\begin{split}
\nab_4 \trch&=\chih\chih+\psi(\psi+\psi_H),\\
\nab_3 \trchb&=\chibh\,\chibh+\psi(\psi+\psi_{\Hb}),\\
\nab_4 \trchb&=\rhoc +\nab \etab +\psi(\psi+\psi_H),\\
\nab_3 \trch&=\rhoc+\nab\eta+\psi(\psi+\psi_{\Hb}),\\
\nabla_4\eta&=\b+\psi(\psi+\psi_H),\\
\nabla_3\etab &=\bb+(\eta+\etab)(\trchb+\psi_{\Hb}),\\
\nab_4\chibh&=\rhoc+\nab\etab+\psi(\psi+\psi_{H})+\psi_{\Hb}(\trch+\psi_H),\\
\nab_3\chih&=\rhoc+\nab\eta+\psi(\psi+\psi_{\Hb})+\psi_H(\trchb+\psi_{\Hb}).
\end{split}
\end{equation}
Except for the equation $\nab_4\trchb$ and $\nab_3\trch$, the structure of the nonlinear terms in the other equations follow from signature considerations.\footnote{Notice that we have written a more precise version of schematic equation for $\nab_3\etab$ compared to $\nab_4\eta$. This will be useful in the proof since when integrating in the $u$ direction using the $\nab_3$ equation, we will not have a smallness in the length scale and we need to use the extra structure of the equation.} We now write the constraint equations (\ref{null.str3}) in schematic form:
\begin{equation}
\label{null.str3.sch}
\begin{split}
\div\chih&=\frac 12 \nabla \trch + \psi(\trch+\chih) -\beta,\\
\div\chibh&=\frac 12 \nabla \trchb + \psi(\trchb+\chibh) +\betab,\\
\curl\eta &=-\curl\etab=\sigmac, \\
K&=-\rhoc+\psi\psi.
\end{split}
\end{equation}
We now write down the Bianchi equations (\ref{eq:null.Bianchi2}) in schematic form, substituting the Codazzi equations in (\ref{null.str3.sch}) for some $\beta$ and $\betab$. In these equations, the left hand side is written with exact constants while the right hand side is written only schematically.
\begin{equation*}\label{eq:null.Bianchi2.sch}
\begin{split}
\nab_3\beta-\nabla\rhoc -^*\nabla\sigmac=& \psi(\rhoc,\sigmac)+\psi^{i_1}\nab^{i_2}(\psi_H+\trch)\nab^{i_3}(\psi_{\Hb}+\trchb),\\
\nab_4\sigmac+\div ^*\beta=&\psi\sigmac+\sum_{i_1+i_2+i_3\leq 1}\psi^{i_1}\nab^{i_2}\psi\nab^{i_3}\psi_{H}+\psi\chih\chih,\\
\nab_4\rhoc+\div\beta=&\psi\rhoc+\sum_{i_1+i_2+i_3\leq 1}\psi^{i_1}\nab^{i_2}\psi\nab^{i_3}\psi_{H}+\psi\chih\chih,\\
\nab_3\sigmac+\div ^*\betab=&\psi\sigmac+\sum_{i_1+i_2+i_3\leq 1}\psi^{i_1}\nab^{i_2}\psi\nab^{i_3}\psi_{\Hb}+\psi\chibh\,\chibh,\\
\nab_3\rhoc+\div\betab=&\psi\rhoc+\sum_{i_1+i_2+i_3\leq 1}\psi^{i_1}\nab^{i_2}\psi\nab^{i_3}\psi_{\Hb}+\psi\chibh\,\chibh,\\
\nab_4\betab+\nabla\rhoc -^*\nabla\sigmac=& \psi(\rhoc,\sigmac)+\psi^{i_1}\nab^{i_2}(\psi_H+\trch)\nab^{i_3}(\psi_{\Hb}+\trchb).\\
\end{split}
\end{equation*}
It is important in the sequel that in the equations for $\nab_4(\rhoc,\sigmac)$ (resp. $\nab_3(\rhoc,\sigmac)$), $\psi_{\Hb}$ (resp. $\psi_H$) does not appear. This does not follow from signature considerations alone since in principle the conservation of signature would allow a term $\psi_{\Hb}\psi_H\psi_H$ (resp. $\psi_H\psi_{\Hb}\psi_{\Hb}$). The fact that these terms do not appear can be observed directly in the equation (\ref{eq:null.Bianchi2}). 

\subsection{Integration}
Let $U$ be a coordinate patch on $S_{0,0}$ and define $U_{u,0}$ to be a coordinate patch on $S_{u,0}$ given by the one-parameter diffeomorphism generated by $\Lb$. Define $U_{u,\ub}$ to be the image of $U_{u,0}$ under the one-parameter diffeomorphism generated by $L$. Define also $D_U=\bigcup_{0\leq u\leq I ,0\leq \ub\leq \epsilon} U_{u,\ub}$. Let $\{p_U\}$ be a partition of unity such that $p_U$ is supported in $D_U$. Given a function $\phi$, the integration on $S_{u,\ub}$ is given by the formula:
$$\int_{S_{u,\ub}} \phi :=\sum_U \int_{-\infty}^{\infty}\int_{-\infty}^{\infty}\phi p_U\sqrt{\det\gamma}d\th^1 d\th^2.$$
Let $D_{u',\ub'}$ by the region $0\leq u\leq u'$, $0\leq \ub\leq \ub'$. The integration on $D_{u,\ub}$ is given by the formula
\begin{equation*}
\begin{split}
\int_{D_{u,\ub}} \phi :=&\sum_U \int_0^u\int_0^{\ub}\int_{-\infty}^{\infty}\int_{-\infty}^{\infty}\phi p_U\sqrt{-\det g}d\th^1 d\th^2d\ub du\\
=&2\sum_U \int_0^u\int_0^{\ub}\int_{-\infty}^{\infty}\int_{-\infty}^{\infty}\phi p_U\Omega^2\sqrt{\det \gamma}d\th^1 d\th^2d\ub du.
\end{split}
\end{equation*}
Since there are no canonical volume forms on $H_u$ and $\Hb_{\ub}$, we define integration by
$$\int_{H_{u}} \phi :=\sum_U \int_0^{\epsilon}\int_{-\infty}^{\infty}\int_{-\infty}^{\infty}\phi2 p_U\Omega\sqrt{\det\gamma}d\th^1 d\th^2d\ub,$$
and
$$\int_{H_{\ub}} \phi :=\sum_U \int_0^\epsilon\int_{-\infty}^{\infty}\int_{-\infty}^{\infty}\phi2p_U\Omega\sqrt{\det\gamma}d\th^1 d\th^2du.$$

With these notions of integration, we can define the norms that we will use. Let $\phi$ be an arbitrary tensorfield. For $1\leq p<\infty$, define
$$||\phi||_{L^p(S_{u,\ub})}^p:=\int_{S_{u,\ub}} <\phi,\phi>_\gamma^{p/2},$$
$$||\phi||_{L^p(H_u)}^p:=\int_{H_{u}} <\phi,\phi>_\gamma^{p/2},$$
$$||\phi||_{L^p(\Hb_{\ub})}^p:=\int_{\Hb_{\ub}} <\phi,\phi>_\gamma^{p/2}.$$
Define also the $L^\infty$ norm by
$$||\phi||_{L^\infty(S_{u,\ub})}:=\sup_{\th\in S_{u,\ub}} <\phi,\phi>_\gamma^{1/2}(\th).$$
We will also use mixed norms defined by
$$||\phi||_{L^2_{\ub}L^\infty_u L^p(S)}=\left(\int_0^{\ub_*}(\sup_{u\in [0,u_*]}||\phi||_{L^p(S_{u,\ub})})^2d\ub\right)^{\frac{1}{2}},$$
$$||\phi||_{L^2_{u}L^\infty_{\ub} L^p(S)}=\left(\int_0^{u_*}(\sup_{\ub\in [0,\epsilon]}||\nabla^i\phi||_{L^p(S_{u,\ub})})^2du\right)^{\frac{1}{2}}.$$
Note that $L^\infty L^p$ is taken before taking $L^2$. In the sequel, we will frequently use
$$||\cdot||_{L^\infty_{\ub}L^2_{u} L^p(S)}\leq||\cdot||_{L^2_{u}L^\infty_{\ub} L^p(S)}.$$  

With the above definition, $||\phi||_{L^2_uL^2(S_{u,\ub})}$ and $||\phi||_{L^2(\Hb_{\ub})}$ differ by a factor of $\Omega$.  Nevertheless, in view of Proposition \ref{Omega}, these norms are equivalent up to a factor of $2$.

\subsection{Norms}
We now define the norms that we will work with. Let
$$\mathcal R=\sum_{i\leq 2}\left(\sum_{\Psi\in\{\beta,\rhoc,\sigmac\}}\sup_{u}||\nabla^i\Psi||_{L^2(H_u)}+ \sum_{\Psi\in\{\rhoc,\sigmac,\betab\}}\sup_{\ub}||\nabla^i\Psi||_{L^2(\Hb_{\ub})}\right),$$
$$\mathcal R(S)=\sum_{i\leq 1}(\sup_{u,\ub}||\nab^i(\rhoc,\sigmac,K)||_{L^2(S_{u,\ub})}+||\nab^i\betab||_{L^2_{u}L^\infty_{\ub}L^3(S)}),$$
$$\mathcal O_{i,p}=\sup_{u,\ub}||\nabla^i(\trch,\eta,\etab,\trchb)||_{L^p(S_{u,\ub})}+||\nabla^i(\chih,\omega)||_{L^2_{\ub}L^\infty_u L^p(S)}+||\nabla^i(\chibh,\omegab)||_{L^2_{u}L^\infty_{\ub} L^p(S)},$$
\begin{equation*}
\begin{split}
\tilde{\mathcal O}_{3,2}=&||\nabla^3(\trch,\trchb)||_{L^\infty_uL^\infty_{\ub}L^2(S)}+||\nabla^3(\eta,\etab)||_{L^\infty_u L^2_{\ub}L^2(S)}+||\nabla^3(\eta,\etab)||_{L^\infty_{\ub} L^2_uL^2(S)} \\
&+||\nabla^3(\chih,\omega,\omega^{\dagger})||_{L^\infty_uL^2_{\ub} L^2(S)}+||\nabla^3(\chibh,\omegab,\omegab^{\dagger})||_{L^\infty_{\ub}L^2_{u} L^2(S)},
\end{split}
\end{equation*}
where $\omega^{\dagger}$ and $\omegab^{\dagger}$ are defined to be the solutions to
$$\nab_3\omega^{\dagger}=\frac 12 \sigmac,\quad\nab_4\omegab^{\dagger}=\frac 12 \sigmac$$
with zero data\footnote{i.e., $\omega^{\dagger}=0$ on $H_0$ and $\omegab^{\dagger}=0$ on $\Hb_0$.} and $\mu$, $\mub$, $\kappa$, $\kappab$ are defined by
$$\mu:=-\div\eta-\rhoc,\quad\mub:=-\div\etab-\rhoc,\quad\kappa:=\nab\omega+^*\nab\omega^{\dagger}-\frac 12\beta,\quad\kappab:=-\nab\omegab+^*\nab\omegab^{\dagger}-\frac 12\betab.$$

Moreover, we will use the notation $\mathcal O_{i,p}[\trch]$ (and $\mathcal R(S)[\beta]$, etc) to denote the part of the $\mathcal O$ norm that depends on $\trch$, i.e., $\sup_{u,\ub}\|\nab^i\trch\|_{L^p(S_{u,\ub})}$.

Recall from (\ref{schpsi}) that we use the schematic notation $\psi\in\{\trch,\eta,\etab,\trchb\}$, $\psi_H\in\{\chih,\omega\}$ and $\psi_{\Hb}\in\{\chibh,\omegab\}$. The choice of this notation is due to the fact that they obey different estimates.

For the norms of the third derivatives of the Ricci coefficients, i.e., the $\tilde{\mathcal O}_{3,2}$ norms, notice that $\nabla^{3}\trch$ and $\nabla^{3}\trchb$ obey the same type of estimates as for lower order derivatives. $\nabla^{3}(\eta,\etab)$ can no longer be controlled on a 2-sphere, but it obeys estimates on \emph{either} null hypersurface. $\nabla^{3}\psi_H$ (resp. $\nabla^{3}\psi_{\Hb}$) satisfies similar estimates as before, but at this level of derivatives, we have to take $L^2_{\ub}$ (resp. $L^2_u$) before $L^\infty_u$ (resp. $L^\infty_{\ub}$). 

\noindent We write
$$\mathcal O:=\mathcal O_{0,\infty}+\displaystyle\sum_{i\leq 1}\mathcal O_{i,4}+\sum_{i\leq 2}\mathcal O_{i,2}.$$

\section{Statement of Main Theorem}\label{maintheorem}
With the notations introduced in the previous section, we formulate a more precise version of Theorem \ref{rdthmv1}, which we call Theorem \ref{rdthmv2}. As noted before, since the proof for the existence and uniqueness of solutions in $\{0\leq u\leq \epsilon\}\cap\{0\leq \ub\leq I_1\}$ is the same as that in $\{0\leq \ub\leq \epsilon\}\cap\{0\leq u\leq I_2\}$, we will focus on the latter case.
\begin{theorem}\label{rdthmv2}
Suppose the initial data set for the characteristic initial value problem is given on $H_0$ for $0\leq \ub\leq \ub_*$ and on $\Hb_0$ for $0\leq u\leq u_*\leq I$ such that
$$c\leq |\det\gamma \restriction_{S_{u,0}} |, |\det\gamma \restriction_{S_{0,\ub}} |\leq C,$$
$$\sum_{i\leq 3}\left(|(\frac{\partial}{\partial\th})^i\gamma \restriction_{S_{u,0}}|+|(\frac{\partial}{\partial\th})^i\gamma \restriction_{S_{0,\ub}}|\right)\leq C,$$
\begin{equation*}
\begin{split}
\mathcal O_0:=& \sum_{i\leq 3} \left(||\nab^i\psi||_{L^\infty_uL^2(S_{u,0})}+||\nab^i\psi||_{L^\infty_{\ub}L^2(S_{0,\ub})}+||\nabla^i\psi_H||_{L^2(H_{0})}+||\nabla^i\psi_{\Hb}||_{L^2(\Hb_{0})}\right)\\
\leq &C,\\
\mathcal R_0:=&\sum_{i\leq 2}\left(||\nab^i\beta||_{L^2(H_0)}+||\nab^i\betab||_{L^2(\Hb_0)}+\sum_{\Psi\in\{\rhoc,\sigmac\}}(||\nab^i\Psi||_{L^\infty_uL^2(S_{u,0})}+||\nab^i\Psi||_{L^\infty_{\ub}L^2(S_{0,\ub})})\right)\\
\leq &C.
\end{split}
\end{equation*}
Then there exists $\epsilon>0$ sufficiently small depending only on $C$, $c$ and $I$ such that if $\ub_*\leq \epsilon$, there exists a spacetime $(\mathcal M,g)$ that solves the characteristic initial value problem to the vacuum Einstein equations in the region $0\leq u\leq u_*$, $0\leq \ub\leq \ub_*$. Geometrically, this is the region to the future of the initial hypersurfaces $H_0$ and $\Hb_0$ which is bounded in the future by the incoming null hypersurface emanating from $S_{0,\ub_*}$ and the outgoing null hypersurface emanating from $S_{u_*,0}$. Associated to the spacetime $(\mathcal M,g)$, there exists a system of null coordinates $(u,\ub, \th^1, \th^2)$ in which the metric is continuous and takes the form
$$g=-2\Omega^2(du\otimes d\ub+d\ub\otimes du)+ \gamma_{AB}(d\th^A-b^Adu)\otimes(d\th^B-b^Bdu).$$
In addition, given a sequence of smooth initial data sets such that the metrics $\gamma_n$ approaches $\gamma$ in $L^\infty_uW^{3,\infty}(S_{u,0})\cap L^\infty_{\ub}W^{3,\infty}(S_{0,\ub})$, the Ricci coefficients $(\psi,\psi_H,\psi_{\Hb})_n$ approaches $(\psi,\psi_H,\psi_{\Hb})$ in the norm\footnote{Here, we take the norms and the connection coefficients on the spheres $S_{0,\ub}$ and $S_{u,0}$ to be defined with respect to $\gamma$.} given by $\mathcal O_0$ and the renormalized curvature components $(\beta,\rhoc,\sigmac,\betab)_n$ approaches $(\beta,\rhoc,\sigmac,\betab)$ in the norm $\mathcal R_0$, this sequence of initial data gives rise to a sequence of smooth spacetimes which approaches $(\mathcal M, g)$ in $C^0$. $(\mathcal M, g)$ is also the unique spacetime solving the characteristic initial value problem among all such $C^0$ limits of smooth solutions. Moreover\footnote{Here, we use $g$ to denote any components of the metric in double null coordinates, i.e., the components, $b^A$, $\gamma_{AB}$ and $\Omega$.}, 
$$\frac{\partial}{\partial \th}g \in C^0_u C^0_{\ub} L^4(S),$$
$$\frac{\partial^2}{\partial \th^2}g\in C^0_u C^0_{\ub} L^2(S),$$
$$\frac{\partial^2}{\partial \th \partial u}g,\frac{\partial^2}{\partial u^2}b^A\in L^2_{u} L^\infty_{\ub}  L^4(S).$$
$$\frac{\partial}{\partial u}g \in L^2_u L^\infty_{\ub}  L^\infty(S),$$
$$ \frac{\partial}{\partial u}((\gamma^{-1})^{AB}\frac{\partial}{\partial u}(\gamma)_{AB}) \in L^1_u L^\infty_{\ub}  L^\infty(S),$$
$$\frac{\partial^2}{\partial \th \partial \ub}g,\frac{\partial^2}{\partial\ub^2}b^A \in L^2_{\ub} L^\infty_u  L^4(S).$$
$$\frac{\partial}{\partial \ub}g \in L^2_{\ub} L^\infty_u  L^\infty(S),$$
$$ \frac{\partial}{\partial\ub}((\gamma^{-1})^{AB}\frac{\partial}{\partial\ub}(\gamma)_{AB}) \in L^1_{\ub} L^\infty_u  L^\infty(S),$$
$$\frac{\partial^2}{\partial u\partial\ub}g\in L^2_{u} L^2_{\ub}  L^4(S).$$
In the $(u,\ub,\th^1,\th^2)$ coordinates, the Einstein equations are satisfied in $L^1_uL^1_{\ub}L^1(S)$. Furthermore, the higher angular\footnote{i.e., in the $\frac{\partial}{\partial \th^A}$ directions.} differentiability in the data results in higher angular differentiability of $(\mathcal M,g)$.
\end{theorem}
In the remainder of this paper, we will prove the a priori estimates needed to establish Theorem \ref{rdthmv2} (see Theorem \ref{aprioriestimates}). The existence, uniqueness and regularity statements in Theorem \ref{rdthmv2} follow from the a priori estimates and an approximation argument as in \cite{LR}. Moreover, as in \cite{LR}, it suffices to prove a priori estimates for smooth solutions. We refer the readers to \cite{LR} for details. In the subsequent sections, we will prove the following theorem on the a priori estimates: 
\begin{theorem}\label{aprioriestimates}
Suppose a smooth initial data set for the characteristic initial value problem is given on $H_0$ for $0\leq \ub\leq \ub_*$ and on $\Hb_0$ for $0\leq u\leq u_*$ such that
$$c\leq |\det\gamma \restriction_{S_{u,0}} |\leq C,\quad \sum_{i\leq 3}|(\frac{\partial}{\partial\th})^i\gamma \restriction_{S_{u,0}}|\leq C,$$
\begin{equation*}
\begin{split}
\mathcal O_0:=& \sum_{i\leq 3} \left(||\nab^i\psi||_{L^\infty_uL^2(S_{u,0})}+||\nab^i\psi||_{L^\infty_{\ub}L^2(S_{0,\ub})}+||\nabla^i\psi_H||_{L^2(H_{0})}+||\nabla^i\psi_{\Hb}||_{L^2(\Hb_{0})}\right)\\
\leq &C,\\
\mathcal R_0:=&\sum_{i\leq 2}\left(||\nab^i\beta||_{L^2(H_0)}+||\nab^i\betab||_{L^2(\Hb_0)}+\sum_{\Psi\in\{\rhoc,\sigmac\}}(||\nab^i\Psi||_{L^\infty_uL^2(S_{u,0})}+||\nab^i\Psi||_{L^\infty_{\ub}L^2(S_{0,\ub})})\right)\\
\leq &C.
\end{split}
\end{equation*}
Then, there exists $\epsilon$ depending only on $C$, $c$ and $I$ such that if $u_*\leq I$ and $\ub_*\leq \epsilon$, a smooth solution to the vacuum Einstein equations in the region $0\leq u\leq u_*$, $0\leq \ub\leq \ub_*$ has the following norms bounded above by a constant $C'$ depending only on $C$, $c$ and $I$:
$$\mathcal O, \tilde{\mathcal O}_{3,2}, \mathcal R < C'.$$
\end{theorem}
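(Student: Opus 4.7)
My strategy follows the bootstrap scheme sketched in the introduction: assume a priori that $\mathcal{R}, \mathcal{R}(S), \mathcal{O}, \tilde{\mathcal{O}}_{3,2} \leq C''$ for some large constant $C''$ (depending on $C$, $c$, $I$) throughout the region $0 \leq u \leq u_*$, $0 \leq \ub \leq \ub_*$, then improve each of these bounds by using $\epsilon$-smallness. The proof naturally splits into four stages: (i) preliminary estimates on $\Omega$, $\gamma$ and the transport/Sobolev apparatus on $S_{u,\ub}$, (ii) transport estimates for the Ricci coefficients in mixed norms, (iii) elliptic estimates to upgrade to the third-derivative norms $\tilde{\mathcal{O}}_{3,2}$, and (iv) renormalized energy estimates for the curvature components $\beta, \rhoc, \sigmac, \betab$.

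\textbf{Stage (ii): Ricci coefficients via the $\Gamma_1$–$\Gamma_4$ hierarchy.} Following the scheme in the introduction, I group the coefficients as $\Gamma_1 \in \{\trchb,\chibh,\eta,\omegab\}$, $\Gamma_2 = \etab$, $\Gamma_3 \in \{\chih,\omega\}$, $\Gamma_4 = \trch$, and estimate them in this order using the schematic null structure equations \eqref{null.str1.sch}. For $\Gamma_1$, integration of the $\nab_4$ equations over $[0,\ub]$ picks up a factor of $\epsilon^{1/2}$ (via Cauchy--Schwarz applied to the $\Psi$ and $\psi_H \psi_{\Hb}$ sources), which closes that bound. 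For $\Gamma_2 = \etab$, the $\nab_3$ equation has no smallness in $u$, but with $\Gamma_1$ already controlled the forcing is linear in $\etab$, so Gronwall in $u$ suffices. For $\Gamma_3$ I integrate $\nab_3 \chih$ and $\nab_3 \omega$ in $u$ and take $L^2_{\ub}$ \emph{after} integrating, exactly as in \eqref{psiHschematic}; the troublesome $\Gamma_4$-term is controlled by applying Cauchy--Schwarz in $\ub$ to gain $\epsilon^{1/2}$. Finally, $\Gamma_4 = \trch$ is integrated via its $\nab_4$ equation whose $|\chih|^2$ source is already bounded in $L^1_{\ub} L^\infty_u L^\infty(S)$ by the $\Gamma_3$ step. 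One then re-runs this procedure after commuting $i \leq 2$ angular derivatives $\nab^i$, using that $[\nab_3,\nab]$ and $[\nab_4,\nab]$ introduce only already-controlled terms (with signature preserved, per Section~\ref{secsign}). For the $\tilde{\mathcal{O}}_{3,2}$ norms, three angular derivatives of the top-order coefficients cannot close directly, so one uses the elliptic Codazzi/Hodge systems \eqref{null.str3.sch} coupled with the auxiliary transport variables $\mu,\mub,\kappa,\kappab$ and $\omega^\dagger,\omegab^\dagger$: $\nab^2\mu,\nab^2\mub$ satisfy transport equations whose sources are third derivatives of \emph{lower-signature} coefficients, and elliptic estimates on $S_{u,\ub}$ convert these into control of $\nab^3(\eta,\etab,\chih,\chibh)$ in the stated mixed norms.

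\textbf{Stage (iv): Renormalized energy estimates.} Here I use the Bianchi system \eqref{eq:null.Bianchi2} written for the renormalized triples $(\beta,\rhoc,\sigmac)$ and $(\rhoc,\sigmac,\betab)$, commuted up to two angular derivatives. Contracting with $\nab^i\beta$, $\nab^i(\rhoc,\sigmac)$, $\nab^i\betab$ and integrating by parts over $D_{u,\ub}$ yields the schematic inequality
\begin{equation*}
\|\nab^i(\beta,\rhoc,\sigmac)\|_{L^2(H_u)}^2 + \|\nab^i(\rhoc,\sigmac,\betab)\|_{L^2(\Hb_{\ub})}^2 \leq \mathcal{R}_0^2 + \iint \nab^i\Gamma \cdot \nab^{i_1}\Psi \cdot \nab^{i_2}\Psi + \iint \nab^{i_1}\Gamma\cdots\nab^{i_5}\Gamma,
\end{equation*}
with $i_1+i_2 \leq i+1$. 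To close these I follow the two-step protocol in the introduction: first estimate $(\rhoc,\sigmac)$ on $H_u$ and $\betab$ on $\Hb_{\ub}$, where every error term that pairs with $\beta$ (say $\chibh\,\beta\,\beta$ or $\psi_H \beta\betab$) is controlled by Cauchy--Schwarz taking the $L^2_{\ub}$ norm on $\psi_H$ (giving $\epsilon^{1/2}$) and leaving the $\beta$-factors to be absorbed linearly; then estimate $\beta$ on $H_u$, where the remaining dangerous terms $\chih\,\beta\,\betab$ and $\chibh\,\beta\,\beta$ are either sublinear (since $\betab$ is already bounded) or linear in $\beta^2$ with $\chibh\in L^2_u L^\infty_{\ub} L^\infty(S)$ coefficient, and are handled by Gronwall in $u$. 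The crucial structural fact is the one emphasized after \eqref{eq:null.Bianchi2.sch}: neither $\psi_{\Hb}$ appears as a source in $\nab_4(\rhoc,\sigmac)$ nor $\psi_H$ in $\nab_3(\rhoc,\sigmac)$, which is exactly what prevents terms of the form $\psi_H\beta\beta$ or $\psi_{\Hb}\betab\betab$ from appearing unintegrated.

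\textbf{Main obstacle.} The hardest part will be stage (iv) at top order, where $\nab^2\betab$ is only in $L^2_u L^\infty_{\ub} L^3(S)$ (via $\mathcal{R}(S)$) rather than $L^2_{\ub} L^\infty_u L^2(S)$, and simultaneously one must handle error terms with three top-order factors (e.g.\ $\nab^2\psi \cdot \nab(\rhoc,\sigmac) \cdot \nab\beta$) using only mixed norms while keeping the $u$-interval of size $I$ and not $\epsilon$. Resolving this requires carefully allocating which factor sits in $L^\infty_u$ vs. $L^2_u$ and which in $L^\infty_{\ub}$ vs. $L^2_{\ub}$, using the fact (enforced by signature) that in every trilinear error term arising from the renormalized Bianchi system, at most one factor is of type $\psi_H$ and at most one of type $\psi_{\Hb}$, so Hölder in the product space $L^2_u L^2_{\ub} L^2(S)$ always closes against either a smallness $\epsilon^{1/2}$ or a linear Gronwall factor. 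Once all four stages are combined, the bootstrap constants improve to constants depending only on $C,c,I$ provided $\epsilon$ is chosen small enough, yielding the theorem.
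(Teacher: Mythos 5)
Your proposal follows essentially the same architecture as the paper's proof: the $\Gamma_1$--$\Gamma_4$ transport hierarchy (with Gronwall in $u$ for the terms lacking $\epsilon$-smallness), elliptic estimates via the auxiliary quantities $\mu, \mub, \kappa, \kappab, \omega^\dagger, \omegab^\dagger$ for $\tilde{\mathcal{O}}_{3,2}$, and the two-step renormalized energy estimate exploiting the absence of $\psi_{\Hb}$ (resp.\ $\psi_H$) in the $\nab_4(\rhoc,\sigmac)$ (resp.\ $\nab_3(\rhoc,\sigmac)$) equations. One structural point you subsume into the bootstrap that the paper isolates as a separate intermediate step (Section \ref{secRS}) is the bound $\mathcal{R}(S)\leq C(\mathcal{R}_0)$ \emph{independent of} $\mathcal{R}$: establishing this before the energy estimates is what guarantees that the Gronwall coefficients in the $\beta$-estimate over the long $u$-interval depend only on initial data and not on the bootstrap constant for $\mathcal{R}$, so it is worth tracking explicitly rather than leaving implicit.
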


\subsection{Structure of the Proof}

We briefly outline the proof of Theorem \ref{aprioriestimates}:\\

\noindent {\bf STEP 0}: Assuming that $\mathcal O_{0,\infty}$ and $\mathcal O_{1,4}$ are controlled, we prove the bounds on the metric components, from which we derive preliminary estimates such as the Sobolev embedding theorem and the estimates for transport equations. (Section \ref{secbasic}).\\

\noindent {\bf STEP 1}: Assuming $\mathcal R<\infty$, $\mathcal R(S)<\infty$ and $\tilde{\mathcal O}_{3,2}<\infty$, we prove that $\mathcal O\leq C(\mathcal O_0,\mathcal R(S))$. (Sections \ref{secRicciL4}, \ref{secRicciL2})\\

\noindent {\bf STEP 2}: Assuming $\mathcal R<\infty$ and $\tilde{\mathcal O}_{3,2}<\infty$, we show that $\mathcal R(S)\leq C(\mathcal R_0)$. (Section \ref{secRS}) Together with Step 1 this implies $\mathcal O\leq C(\mathcal O_0,\mathcal R_0)$.\\

\noindent {\bf STEP 3}: Assuming $\mathcal R<\infty$, we establish that $\tilde{\mathcal O}_{3,2}\leq C(\mathcal O_0)(1+\mathcal R)$, i.e., $\tilde{\mathcal O}_{3,2}$ grows at most linearly with $\mathcal R$, with a constant depending only on the initial data. (Section \ref{secRicci32})\\

\noindent {\bf STEP 4}: Using the previous steps, we obtain the estimate $\mathcal R\leq C(\mathcal O_0,\mathcal R_0)$, thus finishing the proof of Theorem \ref{aprioriestimates}. (Section \ref{seccurv})

\section{The Preliminary Estimates}\label{secbasic}

All estimates in this section will be proved under the following bootstrap assumption:

\begin{equation}\tag{A1}\label{BA1}
\mathcal O_{0,\infty}+\sum_{i\leq 1}\mathcal O_{i,4}\leq \Delta_0
\end{equation}
where $\Delta_0$ is a positive constant to be chosen later.

\subsection{Estimates for Metric Components}\label{metric}
We first show that we can control $\Omega$ under the bootstrap assumption (\ref{BA1}):
\begin{proposition}\label{Omega}
There exists $\epsilon_0=\epsilon_0(\Delta_0)$ such that for every $\epsilon\leq\epsilon_0$,
$$\frac 12\leq \Omega\leq 2.$$
\end{proposition}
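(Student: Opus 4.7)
\medskip

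\noindent\textbf{Proof proposal for Proposition \ref{Omega}.}
The plan is to translate the control on $\omega$ contained in the bootstrap assumption \eqref{BA1} into a pointwise bound on $\log\Omega$ by integrating a single scalar transport equation in the $\ub$-direction, starting from the initial gauge $\Omega=1$ on $\Hb_0$.

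First I would use the relation $\omega=-\tfrac12\nabla_4(\log\Omega)$ recorded after (\ref{null.str3}) together with the coordinate identity $e_4=\Omega^{-1}\partial_{\ub}$ to rewrite this as the scalar ODE
\begin{equation*}
\partial_{\ub}(\log\Omega)(u,\ub,\th)=-2\,\Omega(u,\ub,\th)\,\omega(u,\ub,\th),\qquad \log\Omega(u,0,\th)=0,
\end{equation*}
where the initial condition comes from the gauge choice $\Omega=1$ on $\Hb_0$. Integrating in $\ub$ along the integral curves of $\partial_{\ub}$ at fixed $(u,\th)$ yields
\begin{equation*}
\log\Omega(u,\ub,\th)=-2\int_0^{\ub}\Omega(u,\ub',\th)\,\omega(u,\ub',\th)\,d\ub'.
\end{equation*}

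Next I would run a standard continuity argument. Let $\ub_0\in[0,\epsilon]$ be the supremum of those $\ub'$ for which $\tfrac12\le\Omega\le 2$ holds on $\{0\le u\le I\}\cap\{0\le\ub\le\ub'\}$; by continuity of the metric components in $\ub$ and the fact that $\Omega=1$ at $\ub=0$, such $\ub_0>0$ exists. On this region $|\Omega|\le 2$, so by Cauchy--Schwarz in $\ub'$ and the definition of $\mathcal O_{0,\infty}$,
\begin{equation*}
|\log\Omega(u,\ub,\th)|\le 4\int_0^{\ub}|\omega(u,\ub',\th)|\,d\ub'\le 4\,\ub^{1/2}\,\|\omega\|_{L^2_{\ub}L^\infty_uL^\infty(S)}\le 4\epsilon^{1/2}\Delta_0,
\end{equation*}
for all $(u,\ub,\th)$ with $\ub\le\ub_0$. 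Choosing $\epsilon_0=\epsilon_0(\Delta_0)$ small enough that $4\epsilon_0^{1/2}\Delta_0\le\tfrac12\log 2$ strictly improves the bootstrap bound on $\Omega$, so by continuity $\ub_0=\epsilon$, which gives $\tfrac12\le\Omega\le 2$ throughout the region.

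There is no real obstacle here: the only structural input is that $\omega$ controls the $\nabla_4$-derivative of $\log\Omega$ and that $\omega$ is integrable in $\ub$ with a smallness constant $\epsilon^{1/2}$ coming from $L^2_{\ub}$. The most delicate step conceptually is noting that $\omega$ is only in $L^2_{\ub}L^\infty_u L^\infty(S)$ and not in $L^\infty$, but the square-root gain from Cauchy--Schwarz in $\ub$ over an interval of length $\epsilon$ is exactly what makes the argument close for $\epsilon$ small depending on $\Delta_0$.
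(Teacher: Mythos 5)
Your argument is correct and follows the same basic strategy as the paper: integrate a scalar transport equation for $\Omega$ in the $\ub$-direction from the gauge condition $\Omega=1$ on $\Hb_0$, and use Cauchy--Schwarz in $\ub$ to convert the $L^2_{\ub}$ control on $\omega$ from (\ref{BA1}) into an $L^1_{\ub}$ bound with the crucial smallness factor $\epsilon^{1/2}$. The one place you diverge from the paper is in the choice of unknown: you integrate $\partial_{\ub}\log\Omega=-2\Omega\omega$, which keeps an extra factor of $\Omega$ on the right-hand side and therefore forces you to run a continuity argument to bound that factor. The paper instead rewrites $\omega=-\tfrac12\nabla_4(\log\Omega)=\tfrac12\,\partial_{\ub}\Omega^{-1}$, so the resulting ODE $\partial_{\ub}\Omega^{-1}=2\omega$ is already solved for $\Omega^{-1}$ with no unknown factor on the right; the bound $\|\Omega^{-1}-1\|_{L^\infty(S_{u,\ub})}\le C\epsilon^{1/2}\Delta_0$ then follows by direct integration and Cauchy--Schwarz, with no bootstrap on $\Omega$ needed. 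Both routes are valid; the paper's algebraic normalization via $\Omega^{-1}$ just eliminates the continuity step, and this is a useful trick to keep in mind for other transport estimates where the transported quantity itself appears in the $\nabla_4$-to-$\partial_{\ub}$ conversion.
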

\begin{proof}
Consider the equation
\begin{equation}\label{Omegatransport}
 \omega=-\frac{1}{2}\nabla_4\log\Omega=\frac{1}{2}\Omega\nabla_4\Omega^{-1}=\frac{1}{2}\frac{\partial}{\partial \ub}\Omega^{-1}.
\end{equation}
Notice that both $\omega$ and $\Omega$ are scalars and therefore the $L^\infty$ norm is independent of the metric. We can integrate equation (\ref{Omegatransport}) using the fact that $\Omega^{-1}=1$ on $\Hb_0$ to obtain
$$||\Omega^{-1}-1||_{L^\infty(S_{u,\ub})}\leq C\int_0^{\ub}||\omega||_{L^\infty(S_{u,\ub'})}d\ub'\leq C\epsilon^{\frac 12}||\omega||_{L^\infty_uL^2_{\ub}L^\infty(S)}\leq C\Delta_0\epsilon^{\frac 12}.$$
This implies both the upper and lower bounds for $\Omega$ for sufficiently small $\epsilon$.
\end{proof}

We then show that we can control $\gamma$ under the bootstrap assumption (\ref{BA1}):
\begin{proposition}\label{gamma}
Consider a coordinate patch $U$ on $S_{0,0}$. Recall that $U_{u,0}$ is defined to be a coordinate patch on $S_{u,0}$ given by the one-parameter diffeomorphism generated by $\Lb$ and $U_{u,\ub}$ is defined to be to be the image of $U_{u,0}$ under the one-parameter diffeomorphism generated by $L$. Recall also that $D_U=\bigcup_{0\leq u\leq I ,0\leq \ub\leq \epsilon} U_{u,\ub}$. For $\epsilon$ small enough depending on initial data and $\Delta_0$, there exists $C$ and $c$ depending only on initial data such that the following pointwise bounds for $\gamma$ hold  in $D_U$:
$$c\leq \det\gamma\leq C. $$
Moreover, in $D_U$,
$$|\gamma_{AB}|,|(\gamma^{-1})^{AB}|\leq C.$$
\end{proposition}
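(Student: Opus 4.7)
The strategy is to propagate the metric $\gamma$ in the $\ub$-direction starting from the initial hypersurface $\Hb_0$, where by hypothesis we already have $c \leq |\det\gamma|_{S_{u,0}}| \leq C$ and bounds on the angular derivatives of $\gamma|_{S_{u,0}}$. Since the $\ub$-interval has length at most $\epsilon$, smallness in $\epsilon$ (together with the bootstrap bound \eqref{BA1}) will do all the work, even though the $u$-interval may be long.

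The first step is to derive the evolution equation in coordinates. By the construction in Section \ref{coordinates}, $L=\partial/\partial\ub$ in the $(u,\ub,\th^1,\th^2)$ chart (since $\Ls_L \th^A=0$ and the eikonal normalization), so $\partial_\ub$ commutes with each $\partial_{\th^A}$. A direct computation using $L=\Omega e_4$ and $\chi_{AB}=g(D_A e_4,e_B)$ gives the standard identity
\begin{equation*}
\frac{\partial}{\partial \ub}\gamma_{AB} = 2\Omega\,\chi_{AB} = 2\Omega\chih_{AB} + \Omega\,\trch\,\gamma_{AB}
\end{equation*}
in $D_U$. Taking its trace with $\gamma^{-1}$ yields
\begin{equation*}
\frac{\partial}{\partial\ub}\log\det\gamma = 2\Omega\,\trch.
\end{equation*}

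For the determinant, integrate this scalar transport equation from $\ub=0$, where the initial bound $c\leq \det\gamma|_{S_{u,0}}\leq C$ holds by hypothesis. Using Proposition \ref{Omega} to bound $\Omega$, the bootstrap assumption \eqref{BA1} yields $\|\trch\|_{L^\infty} \leq \Delta_0$, and hence
\begin{equation*}
\bigl|\log\det\gamma(u,\ub,\th) - \log\det\gamma(u,0,\th)\bigr| \leq C\,\Delta_0\,\epsilon,
\end{equation*}
which for $\epsilon$ small depending on $C,c,\Delta_0$ gives the two-sided bound $c/2 \leq \det\gamma\leq 2C$ (with constants depending only on initial data).

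For $\gamma_{AB}$ itself, I would integrate the evolution equation above in $\ub$ and estimate the two terms separately. The $\trch\,\gamma_{AB}$ piece is linear in $\gamma_{AB}$ and is handled by Gr\"onwall's inequality, using $\|\trch\|_{L^\infty}\leq\Delta_0$ and $\ub\leq\epsilon$. The $\chih_{AB}$ term is the delicate one, since $\chih$ is controlled only in the mixed norm $L^2_{\ub}L^\infty_u L^\infty(S)$, but this is exactly matched by Cauchy--Schwarz in $\ub$:
\begin{equation*}
\int_0^{\ub}|\chih_{AB}|(u,\ub',\th)\,d\ub' \leq \epsilon^{1/2}\,\|\chih\|_{L^2_{\ub}L^\infty_uL^\infty(S)} \leq \Delta_0\,\epsilon^{1/2}.
\end{equation*}
Combined with the initial bound $|\gamma_{AB}|_{S_{u,0}}|\leq C$ (from hypothesis) and Gr\"onwall, this gives $|\gamma_{AB}|\leq C'$ throughout $D_U$ for $\epsilon$ small. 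Finally, $|(\gamma^{-1})^{AB}|\leq C'$ follows from Cramer's rule, the upper bound on $|\gamma_{AB}|$, and the lower bound on $\det\gamma$.

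The one point requiring care is the identification that $L=\partial/\partial\ub$ in coordinates, so that the transport equation is literally a coordinate $\ub$-derivative — this uses $\Ls_L\th^A=0$ and $L=\Omega e_4=\partial/\partial\ub$ from Section \ref{secdnf}. The main potential obstacle one might worry about is the long $u$-direction, but it plays no role here: the initial datum on $\Hb_0$ is given uniformly in $0\leq u\leq I$ by hypothesis, and every estimate above integrates only in $\ub\in[0,\epsilon]$, so the smallness of $\epsilon$ (depending on $C,c,I,\Delta_0$) closes everything.
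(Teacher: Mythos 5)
Your proof takes essentially the same route as the paper (first variation formula, integrate in $\ub$, use the shortness of the $\ub$-interval together with the bootstrap assumption, and Cramer's rule at the end), and the determinant part is fine. But there is a genuine gap in the $\gamma_{AB}$ estimate, and it is precisely the one nontrivial point of this proposition.

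The difficulty is that the bootstrap assumption (A1) controls the \emph{metric norm} $|\chih|_\gamma = (\gamma^{AC}\gamma^{BD}\chih_{AB}\chih_{CD})^{1/2}$, not the coordinate components $\chih_{AB}$. These two quantities differ by factors involving the eigenvalues of $\gamma$ — in the worst case $|\chih_{AB}|$ can be as large as $\Lambda\,|\chih|_\gamma$, where $\Lambda$ is the largest eigenvalue of $\gamma$ — and $\Lambda$ is exactly the quantity you are trying to bound. Your line
\begin{equation*}
\int_0^{\ub}|\chih_{AB}|(u,\ub',\th)\,d\ub' \leq \epsilon^{1/2}\,\|\chih\|_{L^2_{\ub}L^\infty_uL^\infty(S)}
\end{equation*}
silently replaces $|\chih_{AB}|$ (Euclidean) by $|\chih|_\gamma$ (metric), so as written the inequality is not valid. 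You correctly noted that $\trch\,\gamma_{AB}$ ``is linear in $\gamma_{AB}$'' and hence needs Gr\"onwall; the point you missed is that $\chih_{AB}$ is \emph{also} implicitly linear in $\gamma_{AB}$, since lowering indices on a $\gamma$-normed tensor puts factors of $\gamma$ into the coordinate components. The paper handles this by introducing $\Lambda$, observing $\Lambda\leq C\sup_{A,B}|\gamma_{AB}|$ and $|\chi_{AB}|\leq C\Lambda\,|\chi|_\gamma$, and then closing a self-improving estimate: one obtains $\sup|\gamma_{AB}|\leq C_0 + C\Delta_0\epsilon^{1/2}\sup|\gamma_{AB}|$, which absorbs for $\epsilon$ small. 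Your Gr\"onwall framing can be made to work too — but only if the $\chih_{AB}$ term is placed \emph{inside} the Gr\"onwall, using $|\chih_{AB}|\leq C\,(\sup_{C,D}|\gamma_{CD}|)\,|\chih|_\gamma$, so that the exponent becomes $C\int_0^\ub\Omega\,|\chi|_\gamma\,d\ub'\leq C\Delta_0\epsilon^{1/2}$. As presented, treating $\chih_{AB}$ as an externally bounded source term is the missing step.

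(As a minor aside: your $\partial_\ub\log\det\gamma = 2\Omega\,\trch$ is actually correct — the paper's version without the factor of $2$ is a typo. Also, the identification $L=\partial/\partial\ub$ in the transported coordinates that you flag as a point requiring care is indeed correct and is used in the paper.)
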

\begin{proof}
The first variation formula states that
$$\Ls_L\gamma=2\Omega\chi.$$
In coordinates, this means
$$\frac{\partial}{\partial \ub}\gamma_{AB}=2\Omega\chi_{AB}.$$
From this we derive that 
$$\frac{\partial}{\partial \ub}\log(\det\gamma)=\Omega\trch.$$
Define $\gamma_0(u,\ub,\th^1,\th^2)=\gamma(u,0,\th^1,\th^2)$. Then
\begin{equation}\label{detgaper}
|\det\gamma-\det(\gamma_0)|\leq C\int_0^{\ub}|\trch|d\ub'\leq C\Delta_0\epsilon.
\end{equation}
This implies that the $\det \gamma$ is bounded above and below. Let $\Lambda$ be the larger eigenvalue of $\gamma$. Clearly,
\begin{equation}\label{La}
\Lambda\leq C\sup_{A,B=1,2}\gamma_{AB},
\end{equation}
and 
$$\sum_{A,B=1,2}|\chi_{AB}|\leq C\Lambda ||\chi||_{L^\infty(S_{u,\ub})}.$$
Then
$$|\gamma_{AB}-(\gamma_0)_{AB}|\leq C\int_0^{\ub}|\chi_{AB}|d\ub'\leq C\Lambda\Delta_0\epsilon^{\frac 12}.$$
Using the upper bound (\ref{La}), we thus obtain the upper bound for $|\gamma_{AB}|$. The upper bound for $|(\gamma^{-1})^{AB}|$ follows from the upper bound for $|\gamma_{AB}|$ and the lower bound for $\det\gamma$.
\end{proof}

A consequence of the previous proposition is an estimate on the surface area of the two sphere $S_{u,\ub}$.
\begin{proposition}\label{area}
$$\sup_{u,\ub}|\mbox{Area}(S_{u,\ub})-\mbox{Area}(S_{u,0})|\leq C\Delta_0\epsilon.$$
\end{proposition}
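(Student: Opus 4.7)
The plan is to derive the bound directly from the first variation formula that was already exploited in Proposition~\ref{gamma}. Recall from that proof the identity
\begin{equation*}
\frac{\partial}{\partial \ub}\log(\det\gamma)=\Omega\trch,
\end{equation*}
which immediately gives
\begin{equation*}
\frac{\partial}{\partial \ub}\sqrt{\det\gamma}=\tfrac{1}{2}\Omega\trch\sqrt{\det\gamma}.
\end{equation*}
By the bootstrap assumption \eqref{BA1} we have $\|\trch\|_{L^\infty(S_{u,\ub})}\leq \Delta_0$, by Proposition~\ref{Omega} we have $\Omega\leq 2$, and by Proposition~\ref{gamma} we have $\sqrt{\det\gamma}\leq C$ on each coordinate patch $D_U$. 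Thus pointwise in $(u,\ub,\th)$,
\begin{equation*}
\Bigl|\tfrac{\partial}{\partial \ub}\sqrt{\det\gamma}\Bigr|\leq C\Delta_0.
\end{equation*}

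First I would integrate this bound in $\ub$ from $0$ to any $\ub\leq\epsilon$ to obtain
\begin{equation*}
\bigl|\sqrt{\det\gamma(u,\ub,\th)}-\sqrt{\det\gamma(u,0,\th)}\bigr|\leq C\Delta_0\,\ub\leq C\Delta_0\epsilon,
\end{equation*}
uniformly in $\th$ inside each patch $D_U$. Then I would unfold the definition of the area,
\begin{equation*}
\mathrm{Area}(S_{u,\ub})=\sum_U\int\!\!\int p_U\sqrt{\det\gamma(u,\ub,\th)}\,d\th^1 d\th^2,
\end{equation*}
subtract the analogous expression for $\mathrm{Area}(S_{u,0})$, and bring the difference inside:
\begin{equation*}
\bigl|\mathrm{Area}(S_{u,\ub})-\mathrm{Area}(S_{u,0})\bigr|\leq \sum_U\int\!\!\int p_U\,\bigl|\sqrt{\det\gamma(u,\ub,\th)}-\sqrt{\det\gamma(u,0,\th)}\bigr|\,d\th^1 d\th^2.
\end{equation*}

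The partition of unity $p_U$ is determined by the coordinate patches on $S_{0,0}$, and the initial data hypotheses of Theorem~\ref{aprioriestimates} (boundedness of $\det\gamma$ on $S_{u,0}$ and the transport of the coordinates along $\Lb$) guarantee that $\mathrm{Area}(S_{u,0})$ is uniformly bounded by a constant depending only on the initial data. Consequently the right-hand side above is bounded by $C\Delta_0\epsilon$, after which taking the supremum over $(u,\ub)$ yields the claim. No real obstacle arises; this is essentially a corollary of Proposition~\ref{gamma}, and the only small point to verify is that the sum over coordinate patches against the partition of unity produces a constant depending only on the initial data and not on $\Delta_0$.
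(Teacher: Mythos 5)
Your proof is correct and takes essentially the same route as the paper, which simply cites the determinant bound $|\det\gamma-\det\gamma_0|\leq C\Delta_0\epsilon$ obtained in the proof of Proposition~\ref{gamma} from integrating $\partial_{\ub}\log\det\gamma=\Omega\trch$. Working directly with $\partial_{\ub}\sqrt{\det\gamma}=\tfrac12\Omega\trch\sqrt{\det\gamma}$ instead of with $\det\gamma$ is a cosmetic variant of the same computation; the only minor slip is that the final step needs the (fixed, data-dependent) coordinate integral $\sum_U\iint p_U\,d\th^1 d\th^2$ to be finite rather than the boundedness of $\mathrm{Area}(S_{u,0})$ per se, but this does not affect correctness.
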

\begin{proof}
This follows from \eqref{detgaper}.
\end{proof}
With the estimate on the volume form, we can now show that the $L^p$ norms defined with respect to the metric and the $L^p$ norms defined with respect to the coordinate system are equivalent.
\begin{proposition}\label{eqnorm}
Given a covariant tensor $\phi_{A_1...A_r}$ on $S_{u,\ub}$, we have
$$\int_{S_{u,\ub}} <\phi,\phi>_{\gamma}^{p/2} \sim \sum_{i=1}^r\sum_{A_i=1,2}\iint |\phi_{A_1...A_r}|^p \sqrt{\det\gamma} d\th^1 d\th^2.$$
\end{proposition}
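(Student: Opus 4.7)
The plan is to deduce the equivalence directly from the pointwise bounds on the metric components already established in Proposition \ref{gamma}, combined with the partition-of-unity definition of integration on $S_{u,\ub}$. Everything reduces to bounding the inner product $\langle\phi,\phi\rangle_\gamma$ by the sum of squares of coordinate components and vice versa, then integrating against the volume form $\sqrt{\det\gamma}\,d\theta^1 d\theta^2$.

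First I would recall from Proposition \ref{gamma} that on each coordinate patch one has simultaneously $c \le \det\gamma \le C$ and $|\gamma_{AB}|,\,|(\gamma^{-1})^{AB}| \le C$. Viewing $\gamma_{AB}$ as a $2\times 2$ positive-definite symmetric matrix, its two eigenvalues $\lambda_1 \le \lambda_2$ satisfy $\lambda_1\lambda_2 = \det\gamma \ge c$ and $\lambda_2 \le \mathrm{tr}\,\gamma \le C$, which forces $\lambda_1 \ge c/C$. Consequently both eigenvalues lie in a fixed interval $[c',C']$ depending only on the constants from Proposition \ref{gamma}, and the same holds for $\gamma^{-1}$.

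Next, for a covariant $r$-tensor $\phi_{A_1\ldots A_r}$, write
\[
\langle\phi,\phi\rangle_\gamma \;=\; \gamma^{A_1B_1}\cdots\gamma^{A_rB_r}\,\phi_{A_1\ldots A_r}\phi_{B_1\ldots B_r}.
\]
Applying the eigenvalue bounds $r$ times on each factor of $\gamma^{-1}$ gives the pointwise comparison
\[
(c')^{r}\!\!\sum_{A_1,\ldots,A_r}|\phi_{A_1\ldots A_r}|^2 \;\le\; \langle\phi,\phi\rangle_\gamma \;\le\; (C')^{r}\!\!\sum_{A_1,\ldots,A_r}|\phi_{A_1\ldots A_r}|^2.
\]
Raising to the $p/2$ power and using the elementary inequality $(\sum x_i^2)^{p/2} \sim \sum x_i^p$ on a fixed finite index set (with constants depending only on $p$ and $r$) yields
\[
\langle\phi,\phi\rangle_\gamma^{p/2} \;\sim\; \sum_{A_1,\ldots,A_r}|\phi_{A_1\ldots A_r}|^p
\]
pointwise, with constants depending only on $p$, $r$, $c$, $C$.

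Finally I would integrate this comparison against the volume form. Since $\sqrt{\det\gamma}$ is bounded above and below by positive constants, the integral $\int_{S_{u,\ub}}\langle\phi,\phi\rangle_\gamma^{p/2}$, defined via the partition of unity $\{p_U\}$ subordinate to the coordinate atlas, is equivalent to $\sum_U \iint p_U \sum_{A_i}|\phi_{A_1\ldots A_r}|^p \sqrt{\det\gamma}\,d\theta^1 d\theta^2$, which in turn is equivalent to the sum on the right-hand side of the claim after absorbing the partition of unity. No step is really an obstacle here; the only thing to be slightly careful about is that the bounds of Proposition \ref{gamma} were stated per coordinate patch $D_U$, so one should apply the above argument patch-by-patch and then sum — the finiteness of the atlas covering $S_{0,0}$ and the fact that the transport of coordinates under $L,\Lb$ preserves this atlas structure guarantees uniform constants in $(u,\ub)$.
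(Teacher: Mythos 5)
Your proof is correct and follows exactly the route the paper intends: the paper gives no explicit proof for this proposition, presenting it as a direct consequence of the metric bounds in Proposition \ref{gamma}, which is precisely what you flesh out via the eigenvalue argument, the pointwise equivalence of $\langle\phi,\phi\rangle_\gamma^{p/2}$ with the coordinate $\ell^p$ quantity, and the uniform two-sided bound on $\sqrt{\det\gamma}$. (One minor slip: from $|\gamma_{AB}|\le C$ you get $\mathrm{tr}\,\gamma\le 2C$, hence $\lambda_1\ge c/(2C)$, but this only affects the harmless constant.)
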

We can also bound $b$ under the bootstrap assumption, thus controlling the full spacetime metric: 
\begin{proposition}\label{b}
In the coordinate system $(u,\ub,\th^1,\th^2)$,
$$|b^A|\leq C\Delta_0\epsilon.$$
\end{proposition}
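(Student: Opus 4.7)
My plan is to derive a pure transport equation for $b^A$ in the $\ub$ direction and integrate from $\Hb_0$, where $b^A$ vanishes by the choice of coordinates (see Section \ref{coordinates}). The bootstrap assumption \eqref{BA1}, together with Propositions \ref{Omega} and \ref{gamma}, provides pointwise control of the resulting right-hand side, and the $\ub$-length $\epsilon$ supplies the claimed smallness.

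The starting point is the identity
$$[L,\Lb] = \left[\frac{\partial}{\partial\ub},\, \frac{\partial}{\partial u} + b^A\frac{\partial}{\partial\th^A}\right] = \frac{\partial b^A}{\partial\ub}\,\frac{\partial}{\partial\th^A},$$
which is immediate from the coordinate formulas in Section \ref{coordinates}. On the other hand, since $L = \Omega e_4$ and $\Lb = \Omega e_3$, I can expand
$$[L,\Lb] = \Omega^{2}\,[e_4,e_3] + \Omega\,(e_4\Omega)\,e_3 - \Omega\,(e_3\Omega)\,e_4.$$
Extracting $D_3 e_4 = 2\omegab\, e_4 + 2\eta^A e_A$ and $D_4 e_3 = 2\omega\, e_3 + 2\etab^A e_A$ from the defining relations for the Ricci coefficients, one finds
$$[e_4,e_3] = 2\omega\, e_3 - 2\omegab\, e_4 + 2(\etab - \eta)^A e_A = 2\omega\, e_3 - 2\omegab\, e_4 - 4\zeta^A e_A.$$
Using $e_4(\log\Omega) = -2\omega$, the contribution $2\Omega\omega\, b^A$ coming from the $e_3$-component of $[e_4,e_3]$ exactly cancels the contribution $-2\Omega\omega\, b^A$ coming from $\Omega(e_4\Omega)\,e_3$, while $-\Omega(e_3\Omega)\, e_4$ contributes nothing in the angular directions since $e_4$ is proportional to $\partial_{\ub}$. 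Reading off the $\partial/\partial\th^A$ component yields the clean transport equation
$$\frac{\partial b^A}{\partial\ub} = -4\,\Omega^{2}\,\zeta^{B}\,(e_B)^{A},$$
where $(e_B)^A$ denotes the coordinate components of the $S_{u,\ub}$-orthonormal frame $e_B$.

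With this equation in hand, the conclusion is immediate. Since $\zeta = \tfrac{1}{2}(\eta-\etab)$ and $\|\eta\|_{L^\infty}+\|\etab\|_{L^\infty}\leq C\Delta_0$ under \eqref{BA1}, and since $\Omega$, $\gamma$, and $\gamma^{-1}$ are uniformly bounded by Propositions \ref{Omega} and \ref{gamma}, the right-hand side is $O(\Delta_0)$ in $L^\infty(S_{u,\ub})$. Integrating from $\ub = 0$, where $b^A \equiv 0$ by the choice of coordinates, gives
$$|b^A|(u,\ub,\th) \;\leq\; C\,\Delta_0\,\int_0^{\ub} d\ub' \;\leq\; C\,\Delta_0\,\epsilon,$$
as claimed. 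No analytic obstacle arises; the only nontrivial step is the coordinate and frame bookkeeping needed to identify the $\partial_{\th^A}$ component of $[L,\Lb]$, and the cancellation of the potential linear-in-$b$ term is what makes the estimate a direct integration rather than requiring Gronwall.
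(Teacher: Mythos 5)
Your proof is correct and follows essentially the same route as the paper: derive the transport equation $\partial_{\ub}b^A = -4\Omega^2\zeta^A$ from the identity $[L,\Lb]=\partial_{\ub}b^A\,\partial_{\th^A}$, then integrate from $\Hb_0$ using the $L^\infty$ control of $\zeta$, $\Omega$ and $\gamma$ supplied by \eqref{BA1} and Propositions \ref{Omega}, \ref{gamma}. The only difference is presentational: the paper states the transport equation and appeals to Proposition \ref{eqnorm}, while you carry out the frame computation explicitly (including the cancellation of the potential $\omega\,b^A$ term) and invoke the underlying bounds directly; the content is the same.
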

\begin{proof}
$b^A$ satisfies the equation
\begin{equation}\label{btrans}
\frac{\partial b^A}{\partial \ub}=-4\Omega^2\zeta^A.
\end{equation}
This can be derived from 
$$[L,\Lb]=\frac{\partial b^A}{\partial \ub}\frac{\partial}{\partial \th^A}.$$
Now, integrating \eqref{btrans} and using Proposition \ref{eqnorm} gives the result.
\end{proof}

\subsection{Estimates for Transport Equations}\label{transportsec}
The estimates for the Ricci coefficients and the null curvature components are derived from the null structure equations and the null Bianchi equations respectively. In order to use the equations, we need a way to obtain estimates from the covariant null transport equations. Such estimates require the boundedness of $\trch$ and $\trchb$, which is consistent with our bootstrap assumption (\ref{BA1}). Below, we state two Propositions which provide $L^p$ estimates for general quantities satisfying transport equations either in the $e_3$ or $e_4$ direction.
\begin{proposition}\label{transport}
There exists $\epsilon_0=\epsilon_0(\Delta_0)$ such that for all $\epsilon \leq \epsilon_0$ and for every $2\leq p<\infty$, we have
\[
 ||\phi||_{L^p(S_{u,\ub})}\leq C(||\phi||_{L^p(S_{u,\ub'})}+\int_{\ub'}^{\ub} ||\nabla_4\phi||_{L^p(S_{u,\ub''})}d{\ub''}),
\]
\[
 ||\phi||_{L^p(S_{u,\ub})}\leq C(||\phi||_{L^p(S_{u',\ub})}+\int_{u'}^{u} ||\nabla_3\phi||_{L^p(S_{u'',\ub})}d{u''}).
\]
\end{proposition}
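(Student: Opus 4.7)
The plan is a standard transport-Gronwall argument, whose only subtlety is the bookkeeping of the Ricci coefficient norms supplied by the bootstrap assumption \eqref{BA1}.

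First I would derive an evolution identity for $\|\phi\|_{L^p(S_{u,\ub})}^p$ in $\ub$. From Proposition \ref{gamma} we have the first variation formula $\partial_{\ub}\sqrt{\det\gamma}=\Omega\,\trch\,\sqrt{\det\gamma}$, and from $e_4=\Omega^{-1}\partial_{\ub}$ together with the definition of $\nab_4$ as the projection of $D_4$ to $S_{u,\ub}$, one obtains the pointwise identity
\begin{equation*}
e_4|\phi|^p_\gamma = p|\phi|^{p-2}_\gamma\langle\phi,\nab_4\phi\rangle_\gamma + O\bigl(|\chi|\,|\phi|^p_\gamma\bigr),
\end{equation*}
where the error arises from $e_4(\gamma^{-1})^{AB}=-2\chi^{AB}$ applied to each tensor slot. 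Combining these and invoking the uniform bound $\tfrac12\le\Omega\le 2$ from Proposition \ref{Omega}, I get
\begin{equation*}
\frac{d}{d\ub}\int_{S_{u,\ub}}|\phi|^p_\gamma\,d\mu_\gamma \le Cp\int_{S_{u,\ub}}|\phi|^{p-1}_\gamma|\nab_4\phi|_\gamma\,d\mu_\gamma + C\|\chi\|_{L^\infty(S_{u,\ub})}\int_{S_{u,\ub}}|\phi|^p_\gamma\,d\mu_\gamma.
\end{equation*}

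Next I would apply H\"older's inequality on the first term and divide by $p\|\phi\|_{L^p(S_{u,\ub})}^{p-1}$ (with the usual regularization to justify differentiation of the $L^p$ norm, or equivalently by working with $(\|\phi\|^p+\delta)^{1/p}$ and sending $\delta\to 0$) to conclude
\begin{equation*}
\frac{d}{d\ub}\|\phi\|_{L^p(S_{u,\ub})} \le C\|\nab_4\phi\|_{L^p(S_{u,\ub})} + C\|\chi\|_{L^\infty(S_{u,\ub})}\|\phi\|_{L^p(S_{u,\ub})}.
\end{equation*}
Then Gronwall yields
\begin{equation*}
\|\phi\|_{L^p(S_{u,\ub})}\le \exp\!\Bigl(C\!\int_{\ub'}^{\ub}\|\chi\|_{L^\infty(S_{u,\ub''})}\,d\ub''\Bigr)\Bigl(\|\phi\|_{L^p(S_{u,\ub'})} + C\!\int_{\ub'}^{\ub}\|\nab_4\phi\|_{L^p(S_{u,\ub''})}\,d\ub''\Bigr),
\end{equation*}
and since, under \eqref{BA1}, $\|\trch\|_{L^\infty(S)}\le\Delta_0$ gives $\int_0^{\ub}\|\trch\|_{L^\infty(S)}d\ub''\le\Delta_0\epsilon$, while Cauchy--Schwarz applied to $\|\chih\|_{L^2_\ub L^\infty_u L^\infty(S)}\le\Delta_0$ yields $\int_0^{\ub}\|\chih\|_{L^\infty(S)}d\ub''\le \Delta_0\epsilon^{1/2}$, the exponential prefactor is bounded by an absolute constant for $\epsilon\le\epsilon_0(\Delta_0)$. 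This gives the first inequality.

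The second inequality follows by the same argument with the roles of $(u,\ub,e_4,\chi)$ replaced by $(\ub,u,e_3,\chib)$. The only point requiring attention is that the $u$-interval has length up to $I$, which is not small; however, the exponent in Gronwall now becomes $C\!\int_{u'}^{u}\|\chib\|_{L^\infty(S_{u'',\ub})}du''$, which is controlled by $C(\Delta_0\,I + \Delta_0\,I^{1/2})$ using $\|\trchb\|_{L^\infty(S)}\le\Delta_0$ and $\|\chibh\|_{L^2_u L^\infty_{\ub}L^\infty(S)}\le\Delta_0$ via Cauchy--Schwarz. Thus the exponential is bounded by a constant depending on $\Delta_0$ and $I$, which is absorbed into the constant $C$. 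The only mild obstacle here is remembering that, unlike in the $\ub$-direction, one does not harvest smallness in $\epsilon$, but this is harmless because the bootstrap gives an $\epsilon$-independent bound on the damping factor integrated over $[0,I]$.
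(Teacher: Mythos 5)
Your proposal follows the same approach as the paper: differentiate the $L^p(S_{u,\ub})$ norm along the flows of $L$, $\Lb$ using the first variation of the volume form, then close by H\"older, absorption, and Gronwall under the bootstrap bounds, and the final conclusion is correct. There is, however, a small but conceptually relevant inaccuracy in the intermediate pointwise identity. You write
$e_4|\phi|^p_\gamma = p|\phi|^{p-2}_\gamma\langle\phi,\nab_4\phi\rangle_\gamma + O(|\chi|\,|\phi|^p_\gamma)$,
attributing the error term to $e_4(\gamma^{-1})^{AB}=-2\chi^{AB}$ acting on the contraction slots. In fact no such error term exists: the operator $\nab_4$, being the projection of $D_4$ and $D$ being metric for $g$, satisfies $\nab_4\gamma=0$, so $e_4|\phi|^2_\gamma = 2\langle\phi,\nab_4\phi\rangle_\gamma$ exactly. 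The $\chi$ contribution from differentiating $(\gamma^{-1})^{AB}$ in coordinates cancels precisely against the $\chi$ appearing in the difference between $e_4(\phi_{A_1\dots A_r})$ (component derivative) and $(\nab_4\phi)_{A_1\dots A_r}$ (covariant derivative). This is why the paper's identity \eqref{Lptransport} has only the $\trch$ term from the variation of $\sqrt{\det\gamma}$ and no separate $\chih$ contribution. Your spurious term happens to be harmless here because under \eqref{BA1} one has $\int_0^{\ub}\|\chih\|_{L^\infty}\,d\ub''\leq\Delta_0\epsilon^{1/2}$ (and the $u$-direction analogue with $I^{1/2}$), so the Gronwall exponential is still uniformly bounded and the argument closes; but the extra term you inserted does not actually arise, and it is worth keeping this cancellation straight since elsewhere in the paper the distinction between $\trch$ (controlled in $L^\infty$) and $\chih$ (controlled only in $L^2_{\ub}L^\infty$) is load-bearing.
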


\begin{proof}

The following identity holds for any scalar $f$:
\[
 \frac{d}{d\ub}\int_{\mathcal S_{u,\ub}} f=\int_{\mathcal S_{u,\ub}} \left(\frac{df}{d\ub}+\Omega \trch f\right)=\int_{\mathcal S_{u,\ub}} \Omega\left(e_4(f)+ \trch f\right).
\]
Similarly, we have
\[
 \frac{d}{du}\int_{\mathcal S_{u,\ub}} f=\int_{\mathcal S_{u,\ub}} \Omega\left(e_3(f)+ \trchb f\right).
\]
Hence, taking $f=|\phi|_{\gamma}^p$, we have
\begin{equation}\label{Lptransport}
\begin{split}
 ||\phi||^p_{L^p(S_{u,\ub})}=&||\phi||^p_{L^p(S_{u,\ub'})}+\int_{\ub'}^{\ub}\int_{S_{u,\ub''}} p|\phi|^{p-2}\Omega\left(<\phi,\nabla_4\phi>_\gamma+ \frac{1}{p}\trch |\phi|^2_{\gamma}\right)d{\ub''},\\
 ||\phi||^p_{L^p(S_{u,\ub})}=&||\phi||^p_{L^p(S_{u',\ub})}+\int_{u'}^{u}\int_{S_{u'',\ub}} p|\phi|^{p-2}\Omega\left(<\phi,\nabla_3\phi>_\gamma+ \frac{1}{p}\trchb |\phi|^2_{\gamma}\right)d{u''}.
\end{split}
\end{equation}
By the $L^\infty$ bounds for $\Omega$ and $\trch$ ($\trchb$) which are provided by Proposition \ref{Omega} and the bootstrap assumption (\ref{BA1}) respectively, we can control the last term in each of these equations using Gronwall's inequality to get
\begin{equation}\label{Lptransport.2}
\begin{split}
 ||\phi||^p_{L^p(S_{u,\ub})}\leq &C\left(||\phi||^p_{L^p(S_{u,\ub'})}+\int_{\ub'}^{\ub}\int_{S_{u,\ub''}} |\phi|^{p-1}|\nabla_4\phi|d{\ub''}\right),\\
 ||\phi||^p_{L^p(S_{u,\ub})}\leq &C\left(||\phi||^p_{L^p(S_{u',\ub})}+\int_{u'}^{u}\int_{S_{u'',\ub}} |\phi|^{p-1}|\nabla_3\phi|d{u''}\right).
\end{split}
\end{equation}
Notice that \eqref{Lptransport.2} allows us to in fact control $\sup_{\ub'\leq \ub''\leq \ub}||\phi||^p_{L^p(S_{u,\ub''})}$ and $\sup_{u'\leq u''\leq u}||\phi||^p_{L^p(S_{u'',\ub})}$ respectively. Therefore, using H\"older's inequality on the $2$-spheres, we get
\begin{equation*}
\begin{split}
\sup_{\ub'\leq \ub''\leq \ub}||\phi||^p_{L^p(S_{u,\ub''})}\leq &C\sup_{\ub'\leq \ub''\leq \ub}||\phi||^{p-1}_{L^p(S_{u,\ub''})}\left(||\phi||_{L^p(S_{u,\ub'})}+\int_{\ub'}^{\ub}\|\nabla_4\phi\|_{L^p(S_{u,\ub''})}d{\ub''}\right),\\
 \sup_{u'\leq u''\leq u}||\phi||^p_{L^p(S_{u'',\ub})}\leq &C\sup_{u'\leq u''\leq u}||\phi||^{p-1}_{L^p(S_{u'',\ub})}\left(||\phi||_{L^p(S_{u',\ub})}+\int_{u'}^{u}\int_{S_{u'',\ub}} \|\nabla_3\phi\|_{L^p(S_{u'',\ub})}d{u''}\right).
\end{split}
\end{equation*}
Dividing by $\sup_{\ub'\leq \ub''\leq \ub}||\phi||^{p-1}_{L^p(S_{u,\ub''})}$ and $\sup_{u'\leq u''\leq u}||\phi||^{p-1}_{L^p(S_{u'',\ub})}$ respectively gives the desired conclusion.
\end{proof}
The above estimates also hold for $p=\infty$:
\begin{proposition}\label{transportinfty}
There exists $\epsilon_0=\epsilon_0(\Delta_0)$ such that for all $\epsilon \leq \epsilon_0$, we have
\[
 ||\phi||_{L^\infty(S_{u,\ub})}\leq C\left(||\phi||_{L^\infty(S_{u,\ub'})}+\int_{\ub'}^{\ub} ||\nabla_4\phi||_{L^\infty(S_{u,\ub''})}d{\ub''}\right),
\]
\[
 ||\phi||_{L^\infty(S_{u,\ub})}\leq C\left(||\phi||_{L^\infty(S_{u',\ub})}+\int_{u'}^{u} ||\nabla_3\phi||_{L^\infty(S_{u'',\ub})}d{u''}\right).
\]
\end{proposition}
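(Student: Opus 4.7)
My plan is to deduce the $L^\infty$ estimates by passing to the limit $p \to \infty$ in the $L^p$ estimates already established in Proposition \ref{transport}. The key observations are (i) that the constant $C$ appearing in Proposition \ref{transport} can be taken uniformly in $p$ for $p \geq 2$, and (ii) that Proposition \ref{area} guarantees uniformly bounded area of $S_{u,\ub}$, so that $\|\cdot\|_{L^p(S_{u,\ub})} \to \|\cdot\|_{L^\infty(S_{u,\ub})}$ in the limit for tensor fields with bounded $L^\infty$ norm.

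The first step is to revisit the derivation of Proposition \ref{transport} in order to expose the $p$-dependence. Starting from \eqref{Lptransport} and applying Cauchy-Schwarz on the sphere to bound $|\phi|^{p-2}\langle\phi,\nabla_4\phi\rangle_\gamma \leq |\phi|^{p-1}|\nabla_4\phi|$ followed by H\"older's inequality on $S_{u,\ub}$, one obtains the differential inequality
\[
\frac{d}{d\ub}\|\phi\|_{L^p(S_{u,\ub})} \leq \|\Omega\|_{L^\infty}\|\nabla_4\phi\|_{L^p(S_{u,\ub})} + \frac{1}{p}\|\Omega\trch\|_{L^\infty}\|\phi\|_{L^p(S_{u,\ub})}.
\]
Gronwall's lemma then yields
\[
\|\phi\|_{L^p(S_{u,\ub})} \leq e^{\|\Omega\trch\|_{L^\infty}(\ub-\ub')/p}\!\left(\|\phi\|_{L^p(S_{u,\ub'})} + \|\Omega\|_{L^\infty}\!\int_{\ub'}^{\ub}\!\|\nabla_4\phi\|_{L^p(S_{u,\ub''})}\,d\ub''\right),
\]
with a prefactor $e^{M/p}$ that is bounded uniformly in $p \geq 2$, where $M$ depends only on the bounds on $\Omega$ and $\trch$ available from Proposition \ref{Omega} and the bootstrap assumption \eqref{BA1}.

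Next I pass to the limit $p\to\infty$. For each fixed $(u,\ub)$, the uniform upper bound on $\mathrm{Area}(S_{u,\ub})$ from Proposition \ref{area} ensures $\|\phi\|_{L^p(S_{u,\ub})} \to \|\phi\|_{L^\infty(S_{u,\ub})}$ for any continuous tensor field. For the integral term, the dominating bound $\|\nabla_4\phi\|_{L^p(S_{u,\ub''})} \leq C\|\nabla_4\phi\|_{L^\infty(S_{u,\ub''})}$, valid for all $p\geq 2$ by bounded area, allows an application of dominated convergence in $\ub''$, yielding the desired estimate
\[
\|\phi\|_{L^\infty(S_{u,\ub})} \leq C\!\left(\|\phi\|_{L^\infty(S_{u,\ub'})} + \int_{\ub'}^{\ub}\!\|\nabla_4\phi\|_{L^\infty(S_{u,\ub''})}\,d\ub''\right).
\]
The $\nabla_3$ estimate follows by the symmetric argument using the $u$-variant of the identity in Proposition \ref{transport}.

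The main obstacle is mild and entirely technical: establishing that the constant in Proposition \ref{transport} is genuinely $p$-independent, which requires going back into its proof rather than invoking the statement as a black box. A subsidiary point is that the $L^\infty$ norms on both sides must a priori be finite, so the estimate should be read as holding for sufficiently regular $\phi$ (e.g.\ smooth tensor fields); this is harmless in the bootstrap framework of Section \ref{secbasic}, where all \emph{a priori} bounds are derived for smooth approximations and only passed to the low-regularity limit later, as indicated after the statement of Theorem \ref{rdthmv2}.
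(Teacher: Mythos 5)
Your argument is correct, but it takes a substantially longer route than the paper's. The paper's proof is the one-line observation that since $L=\Omega e_4 = \frac{\partial}{\partial\ub}$ and $\nabla_4$ is compatible with $\gamma$, one has the pointwise identity $\frac{\partial}{\partial\ub}|\phi|^2_\gamma = 2\Omega\langle\phi,\nabla_4\phi\rangle_\gamma$, hence the Kato-type inequality $\frac{\partial}{\partial\ub}|\phi|_\gamma \leq \Omega|\nabla_4\phi|_\gamma$; integrating this along the integral curve of $L$ through each point of $S_{u,\ub}$, taking the supremum over the sphere, and invoking $\Omega\leq 2$ from Proposition~\ref{Omega} gives the $\nabla_4$ estimate immediately, and likewise for $\nabla_3$ using $\Lb$. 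Your $p\to\infty$ limit is sound in its details --- the $1/p$ decay in the Gr\"onwall exponent, the uniform $L^p\to L^\infty$ domination via Proposition~\ref{area}, and the convergence $\|\cdot\|_{L^p(S)}\to\|\cdot\|_{L^\infty(S)}$ on a finite-measure sphere (the $\liminf$ half of which also quietly uses the lower bound $\det\gamma\geq c$ from Proposition~\ref{gamma}) --- but it routes the elementary pointwise fact through the integrated $L^p$ identity~\eqref{Lptransport} and then peels the integration back off in the limit. What the direct approach buys is that it never integrates over the sphere in the first place: the transport equation is an ODE along each null generator and the $L^\infty$ estimate is simply the ODE estimate applied fiber-by-fiber. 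Your approach would be the right one if only the $L^p$ identity were available and no pointwise formula existed, but here it also forces you to re-open Proposition~\ref{transport} rather than cite it, as you yourself flag.
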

\begin{proof}
This follows simply from integrating along the integral curves of $L$ and $\Lb$, and the estimate on $\Omega$ in Proposition \ref{Omega}.
\end{proof}

\subsection{Sobolev Embedding}\label{Embedding}
Using the estimates for the metric $\gamma$ in Proposition \ref{gamma}, Sobolev embedding theorems in our setting follows from the standard Sobolev embedding theorems (see \cite{LR}): 
\begin{proposition}\label{L4}
There exists $\epsilon_0=\epsilon_0(\Delta_0)$ such that as long as $\epsilon\leq \epsilon_0$, we have
$$||\phi||_{L^4(S_{u,\ub})}\leq C\sum_{i=0}^1||\nabla^i\phi||_{L^2(S_{u,\ub})}. $$
\end{proposition}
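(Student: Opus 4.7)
My plan is to deduce the uniform $L^4$--$W^{1,2}$ embedding on each sphere $S_{u,\ub}$ from the classical two-dimensional Sobolev inequality together with the uniform metric bounds already established in Proposition \ref{gamma}. The key point is not the inequality itself (which is standard for any fixed smooth Riemannian 2-manifold) but the uniformity of the constant $C$ as $(u,\ub)$ varies over the domain $\{0\leq u\leq I\}\cap\{0\leq\ub\leq\epsilon\}$. The entire argument will take place on the fixed angular atlas $\{U\}$ of $S_{0,0}$ transported to each $S_{u,\ub}$ via the flows of $L$ and $\Lb$, as set up in Proposition \ref{gamma}.

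First, I would use the partition of unity $\{p_U\}$ on $S_{0,0}$ pulled back to $S_{u,\ub}$ along the transported coordinates, so that it suffices to establish the inequality for a tensor $\phi$ supported in a single coordinate patch $U_{u,\ub}$. Proposition \ref{eqnorm} then lets me replace the covariant $L^p$ norms of $\phi$ and of $\nabla\phi$ with the corresponding sums over coordinate components $|\phi_{A_1\cdots A_r}|$ and $|\partial_\th\phi_{A_1\cdots A_r}|$ integrated against $\sqrt{\det\gamma}\,d\th^1d\th^2$, up to constants which depend only on the uniform upper and lower bounds for $\gamma$ and $\gamma^{-1}$ supplied by Proposition \ref{gamma}. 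The Christoffel symbol terms that arise when converting $\nabla\phi$ to $\partial_\th\phi$ are controlled by one further derivative of $\gamma$, which by the same proposition (combined with the transport equation $\partial_{\ub}\gamma=2\Omega\chi$ commuted with $\partial_\th$, so that the angular derivatives of $\gamma$ propagate from the data) is bounded uniformly in $(u,\ub)$.

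Next, with the metric volume form bounded above and below uniformly, the weighted $L^p$ norms on $U_{u,\ub}$ are equivalent to the flat-coordinate $L^p$ norms of the components on the corresponding open set in $\mathbb R^2$. The classical two-dimensional Sobolev embedding $W^{1,2}(\mathbb R^2)\hookrightarrow L^4(\mathbb R^2)$, applied component by component to $\phi_{A_1\cdots A_r}$ (after extension via the partition of unity), then yields the desired inequality with a constant depending only on the uniform metric bounds in Proposition \ref{gamma}, hence only on the initial data and $\Delta_0$.

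The mildly nontrivial step is controlling the Christoffel correction: one must know that $\partial_\th\gamma$ is uniformly bounded in $D_U$ so that $|\partial_\th\phi|$ and $|\nabla\phi|$ are comparable in $L^2$ with a universal constant. This however is immediate from differentiating the first variation equation $\partial_{\ub}\gamma_{AB}=2\Omega\chi_{AB}$ in $\th$, integrating in $\ub$, and using the initial bound $\sum_{i\leq 3}|(\partial/\partial\th)^i\gamma\restriction_{S_{u,0}}|\leq C$ together with the bootstrap assumption \eqref{BA1}, just as in the proof of Proposition \ref{gamma}. Once this is in hand the remainder is bookkeeping, and since the statement of Proposition \ref{L4} is explicitly attributed to the references in \cite{LR}, I expect no further obstacle.
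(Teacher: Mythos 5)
Your overall strategy — reduce to the flat two-dimensional Gagliardo--Nirenberg inequality via the transported coordinate patches and the equivalence of covariant and coordinate $L^p$ norms from Proposition \ref{eqnorm} — is the right one, and the uniformity-of-constant issue you single out is indeed the only content of the proposition. However, the way you handle the Christoffel correction has a gap. You assert that $\partial_\th\gamma$ is \emph{uniformly} bounded in $D_U$ (i.e.\ in $L^\infty$), ``just as in the proof of Proposition \ref{gamma}.'' But Proposition \ref{gamma} establishes only the zeroth-order bounds on $\gamma$, $\gamma^{-1}$ and $\det\gamma$; it says nothing about $\partial_\th\gamma$. If you try to propagate $\partial_\th\gamma$ by integrating $\partial_{\ub}\partial_\th\gamma_{AB}=2\partial_\th(\Omega\chi_{AB})$ in $\ub$, the term $\partial_\th\chih$ is controlled under the bootstrap assumption (\ref{BA1}) only through $\mathcal O_{1,4}$, i.e.\ $\nab\chih\in L^2_{\ub}L^\infty_u L^4(S)$. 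After integrating over $[0,\epsilon]$ this gives $\partial_\th\gamma\in L^\infty_u L^\infty_{\ub}L^4(S)$, \emph{not} $L^\infty(S)$. With $\Gamma$ only in $L^4(S)$, the bound $\|\Gamma\cdot\phi\|_{L^2}\leq\|\Gamma\|_{L^4}\|\phi\|_{L^4}$ reintroduces the very $L^4$ norm you are trying to estimate, so your comparability claim ``$|\partial_\th\phi|$ and $|\nabla\phi|$ are comparable in $L^2$ with a universal constant'' is not available.

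The argument can be repaired in two ways. The minimal patch to your approach is to keep only the $L^4$ bound on $\Gamma$, estimate $\|\Gamma\cdot\phi\|_{L^2}\leq\|\Gamma\|_{L^4}\|\phi\|_{L^4}$, plug this into the Euclidean Gagliardo--Nirenberg inequality $\|\phi_{A_1\cdots A_r}\|_{L^4}^2\leq C\|\phi\|_{L^2}\|\partial_\th\phi\|_{L^2}+C\|\phi\|_{L^2}^2$, and absorb the resulting $\tfrac12\|\phi\|_{L^4}^2$ by Young's inequality; the constant then picks up a harmless factor of $1+\|\Gamma\|_{L^4}^2$. The cleaner route, which is what the paper's phrase ``using the upper and lower bounds of the volume form'' is pointing at, is to sidestep Christoffel symbols entirely: apply the scalar Gagliardo--Nirenberg inequality to $f=|\phi|_\gamma$ and use Kato's inequality $|d|\phi|_\gamma|\leq|\nabla\phi|_\gamma$ pointwise. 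For a scalar $f$, the coordinate and covariant derivatives agree, so the Sobolev constant depends only on the $L^\infty$ bounds for $\gamma$, $\gamma^{-1}$ and $\det\gamma$ supplied by Proposition \ref{gamma}, with no control on $\partial_\th\gamma$ needed at all. Either fix makes the argument sound, but as written the claim ``$\partial_\th\gamma$ is uniformly bounded, immediate from differentiating the first variation equation'' is an overclaim that does not follow from the hypotheses at your disposal.
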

Similarly, we can also prove the Sobolev embedding theorem for the $L^\infty$ norm: 
\begin{proposition}\label{Linfty}
There exists $\epsilon_0=\epsilon_0(\Delta_0)$ such that as long as $\epsilon\leq \epsilon_0$, we have
$$||\phi||_{L^\infty(S_{u,\ub})}\leq C\left(||\phi||_{L^2(S_{u,\ub})}+||\nabla\phi||_{L^3(S_{u,\ub})}\right). $$
As a consequence, since the area of $S_{u,\ub}$ is uniformly bounded, we have
$$||\phi||_{L^\infty(S_{u,\ub})}\leq C\left(||\phi||_{L^2(S_{u,\ub})}+||\nabla\phi||_{L^4(S_{u,\ub})}\right) $$
and
$$||\phi||_{L^\infty(S_{u,\ub})}\leq C\sum_{i=0}^2||\nabla^i\phi||_{L^2(S_{u,\ub})}. $$
\end{proposition}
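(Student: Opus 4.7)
The plan is to reduce the claimed Sobolev embedding on $S_{u,\ub}$ to the standard Sobolev embedding $W^{1,4}(\mathbb{R}^2)\hookrightarrow L^\infty$, using the uniform metric bounds already established in Proposition \ref{gamma}. The structure of the argument closely parallels that of Proposition \ref{L4}, so the plan is to explain the two additional ingredients needed for the $L^\infty$ version.

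First, I reduce to scalar functions. For a tensor $\phi$ on $S_{u,\ub}$, set $f:=|\phi|_\gamma=\langle\phi,\phi\rangle_\gamma^{1/2}$. Then $\|\phi\|_{L^\infty(S_{u,\ub})}=\|f\|_{L^\infty(S_{u,\ub})}$, while Kato's inequality gives $|\nabla f|_\gamma\leq|\nabla\phi|_\gamma$ pointwise almost everywhere, so every $L^p$-norm of $\nabla f$ is bounded by that of $\nabla\phi$. It therefore suffices to prove the inequality for scalar $f\in W^{1,4}(S_{u,\ub})$.

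Next, I work in local coordinates. Take a finite cover of $S_{0,0}$ by coordinate patches $U$ and transport them to $S_{u,\ub}$ via the flows of $\Lb$ and $L$ as in Proposition \ref{gamma}, obtaining patches $U_{u,\ub}$ with a partition of unity $p_U$ inherited from $S_{0,0}$. By Proposition \ref{gamma}, in each patch the components $\gamma_{AB}$, $(\gamma^{-1})^{AB}$ and $\det\gamma$ are uniformly bounded above and below (independent of $u,\ub$). For a scalar $f$ covariant and coordinate derivatives coincide, so in each patch
\[
\|p_U f\|_{L^\infty}\leq C\bigl(\|p_U f\|_{L^4(d\th)}+\|\partial(p_U f)\|_{L^4(d\th)}\bigr)
\]
by the classical $W^{1,4}\hookrightarrow L^\infty$ embedding on $\mathbb{R}^2$, with constant independent of $u,\ub$. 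Passing back to the intrinsic norms via Proposition \ref{eqnorm} and summing over the finite cover yields
\[
\|f\|_{L^\infty(S_{u,\ub})}\leq C\bigl(\|f\|_{L^4(S_{u,\ub})}+\|\nabla f\|_{L^4(S_{u,\ub})}\bigr).
\]
To replace $\|f\|_{L^4}$ by $\|f\|_{L^2}$, I interpolate via Hölder using the uniformly bounded area from Proposition \ref{area}: $\|f\|_{L^4}^2\leq\|f\|_{L^\infty}\|f\|_{L^2}$. Substituting and applying Young's inequality absorbs the $\|f\|_{L^\infty}^{1/2}$ factor into the left-hand side, producing the first inequality.

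The second inequality is then immediate: apply Proposition \ref{L4} to $\nabla\phi$ to obtain $\|\nabla\phi\|_{L^4(S_{u,\ub})}\leq C(\|\nabla\phi\|_{L^2(S_{u,\ub})}+\|\nabla^2\phi\|_{L^2(S_{u,\ub})})$, and combine with the first inequality. There is no real obstacle here beyond the bookkeeping, since Propositions \ref{gamma} and \ref{area} already furnish the uniform geometric control required; the only mild subtlety, as in Proposition \ref{L4}, is the use of Kato's inequality to dispense with the Christoffel symbols of $\gamma$, which have not been controlled at this level of regularity.
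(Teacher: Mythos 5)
Your proposal is correct and takes essentially the same approach as the paper, which explicitly states that Propositions \ref{L4} and \ref{Linfty} follow from standard Sobolev embedding once the uniform bounds on $\det\gamma$ and $\gamma_{AB}$, $(\gamma^{-1})^{AB}$ from Proposition \ref{gamma} are in hand, and refers to \cite{LR} for the details. You have filled in exactly those details: the reduction to scalars via Kato's inequality (needed since the Christoffel symbols of $\gamma$ are not controlled at this regularity level), the passage to local coordinates with uniform constants, the Euclidean $W^{1,4}(\mathbb{R}^2)\hookrightarrow L^\infty$ embedding, the $L^4$-to-$L^2$ interpolation with absorption, and finally Proposition \ref{L4} applied to $\nabla\phi$ for the second inequality.
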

Besides the Sobolev embedding theorem on the 2-spheres, we also have a co-dimensional 1 trace estimate that controls the $L^3(S)$ norm by the $L^2(H)$ norm with a small constant.
\begin{proposition}\label{L3trace}
\begin{equation*}
\begin{split}
  ||\phi||_{L^3(S_{u,\ub})}
\leq &C\left(||\phi||_{L^3(S_{u,\ub'})}+\epsilon^{\frac 14}||\nab\phi||_{L^2_{\ub}L^2(S)}+\epsilon^{\frac 18}||\nab_4\phi||_{L^2_{\ub}L^2(S)}\right).
\end{split}
\end{equation*}
\end{proposition}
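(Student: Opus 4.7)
The plan is to integrate the transport equation for $|\phi|^3$ in the $e_4$ direction via Proposition \ref{transport} with $p=3$, then extract the $\epsilon$-smallness through a sharp Gagliardo--Nirenberg interpolation on the $2$-sphere and weighted Young's inequality. Concretely, one first obtains
$$\|\phi\|_{L^3(S_{u,\ub})}^3 \leq C\|\phi\|_{L^3(S_{u,\ub'})}^3 + C\int_{\ub'}^{\ub}\|\phi\|_{L^4(S_{u,\ub''})}^2 \|\nab_4 \phi\|_{L^2(S_{u,\ub''})}\, d\ub'' + C\int_{\ub'}^{\ub}\|\phi\|_{L^3(S_{u,\ub''})}^3\, d\ub'',$$
where the main integrand is produced by H\"older on $S$ applied to $|\phi|^2|\nab_4 \phi|$, and the last term is a lower-order $\trch$-contribution to be absorbed for $\epsilon$ small.

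The crucial input is the sharp interpolation inequality on $S^2$
$$\|\phi\|_{L^4(S)}^2 \leq C \|\phi\|_{L^3(S)}^{3/2} \bigl(\|\phi\|_{L^2(S)} + \|\nab\phi\|_{L^2(S)}\bigr)^{1/2},$$
with exponent $1/4$ on $H^1$, rather than the cruder $H^1 \hookrightarrow L^4$. Setting $X = \sup_{\ub''\in[\ub',\ub]}\|\phi\|_{L^3(S_{u,\ub''})}$, $A = \|\nab_4\phi\|_{L^2_{\ub}L^2(S)}$, $B = \|\nab\phi\|_{L^2_{\ub}L^2(S)}$, and exploiting $\|\phi\|_{L^2(S)} \leq C\|\phi\|_{L^3(S)}$ (bounded area of $S$, Proposition \ref{area}), Cauchy--Schwarz in $\ub$ yields
$$\int_{\ub'}^{\ub}\|\phi\|_{L^4(S)}^2 \|\nab_4\phi\|_{L^2(S)}\,d\ub'' \leq CX^{3/2}\bigl(\sqrt{\epsilon}\,X^{1/2}+\epsilon^{1/4}B^{1/2}\bigr)A = C\sqrt{\epsilon}\,X^2 A + C\epsilon^{1/4}X^{3/2}B^{1/2}A.$$

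The final step is three applications of weighted Young's inequality: the two $X$-factors are absorbed via $\sqrt{\epsilon}\,X^2 A \leq \delta X^3 + C_\delta \epsilon^{3/2}A^3$ and $\epsilon^{1/4}X^{3/2}B^{1/2}A \leq \delta X^3 + C_\delta \epsilon^{1/2}B A^2$; the residual cross term $\epsilon^{1/2}BA^2$ is then split as $C(\epsilon^{3/4}B^3 + \epsilon^{3/8}A^3)$ using weighted Young with $(p,q)=(3,3/2)$ and $\lambda=\epsilon^{1/12}$. Absorbing $\delta X^3$ and the lower-order $\epsilon X^3$ into the left side and extracting cube roots produces precisely $X \leq C\|\phi\|_{L^3(S_{u,\ub'})} + C\epsilon^{1/4}B + C\epsilon^{1/8}A$, as claimed. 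The main obstacle I expect is the choice of interpolation: the naive $\|\phi\|_{L^4(S)}\lesssim\|\phi\|_{H^1(S)}$ would yield an irreducible $(BA)^{1/2}$ cross term that admits no $\epsilon$-smallness, and only the sharp $3/4$--$1/4$ Gagliardo--Nirenberg split generates the stated $\epsilon^{1/4}$ and $\epsilon^{1/8}$ exponents after Young.
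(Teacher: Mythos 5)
Your proof is correct and follows essentially the same route as the paper: integrate the $p=3$ transport identity, apply H\"older on $S_{u,\ub}$ to produce $\|\phi\|_{L^4(S)}^2\|\nab_4\phi\|_{L^2(S)}$, use the sharp Gagliardo--Nirenberg estimate $\|\phi\|_{L^4(S)}\lesssim\|\phi\|_{L^3(S)}^{3/4}\|\nab\phi\|_{L^2(S)}^{1/4}+\|\phi\|_{L^3(S)}$, and then absorb the $\sup_{\ub}\|\phi\|_{L^3(S)}^3$ contributions via weighted Young's inequality to surface the $\epsilon^{1/4}$ and $\epsilon^{1/8}$ weights. The only differences from the paper are cosmetic — you apply Gagliardo--Nirenberg pointwise in $\ub$ before the H\"older-in-$\ub$ step, whereas the paper applies Cauchy--Schwarz in $\ub$ first and then Gagliardo--Nirenberg inside the resulting $L^4_{\ub}L^4(S)$ norm, and you absorb the $\trch$ term directly rather than invoking Gronwall — and your closing remark about the failure of the crude $H^1(S)\hookrightarrow L^4(S)$ embedding correctly identifies why the $3/4$--$1/4$ split is the essential input.
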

\begin{proof}
It follows from the standard Sobolev embedding theorem and the lower and upper bounds of the volume form that
\begin{equation}\label{GNL4}
||\phi||_{L^4(S)}\leq C(||\phi||_{L^3(S)}^{\frac 34}||\nabla\phi||_{L^2(S)}^{\frac 14} +||\phi||_{L^3(S)}).
\end{equation}
Using (\ref{Lptransport}) and (\ref{GNL4}), we have
\begin{equation*}
\begin{split}
 ||\phi||^3_{L^3(S_{u,\ub})}=&||\phi||^3_{L^3(S_{u,\ub'})}+\int_{\ub'}^{\ub}\int_{S_{u,\ub''}} 3\Omega|\phi|_\gamma\left(<\phi,\nabla_4\phi>_\gamma+ \frac{1}{3}\trch |\phi|^2_{\gamma}\right)d{\ub''}\\
\leq &||\phi||^3_{L^3(S_{u,\ub'})}+C||\phi||^2_{L^4(H)}||\nab_4\phi||_{L^2(H)}+\int_{0}^{\ub} C\Delta_0||\phi||_{L^3(S_{u,\ub'})}^3 d{\ub'}\\
\leq &||\phi||^3_{L^3(S_{u,\ub'})}+C(||\phi||_{L^\infty_{\ub} L^3(S)}^{\frac 32}||\nab\phi||_{L^1_{\ub}L^2(S)}^{\frac 12}+||\phi||_{L^4_{\ub}L^3(S)}^2)||\nab_4\phi||_{L^2(H)}\\
&+\int_0^{\ub} C\Delta_0||\phi||_{L^3(S_{u,\ub'})}^3 d\ub'\\
\leq &||\phi||^3_{L^3(S_{u,\ub'})}+C(||\phi||_{L^\infty_{\ub} L^3(S)}^{\frac 32}||\nab\phi||_{L^1_{\ub}L^2(S)}^{\frac 12}+\epsilon^{\frac 12}||\phi||_{L^\infty_{\ub}L^3(S)}^2)||\nab_4\phi||_{L^2(H)}\\
&+\int_0^{\ub} C\Delta_0||\phi||_{L^3(S_{u,\ub'})}^3 d\ub'.
\end{split}
\end{equation*}
Using H\"older's inequality and absorbing the term $||\phi||_{L^\infty_{\ub} L^3(S)}^3$ to the left hand side, we have
\begin{equation*}
\begin{split}
& ||\phi||^3_{L^3(S_{u,\ub})}\\
\leq &C\left(||\phi||^3_{L^3(S_{u,\ub'})}+||\nab\phi||_{L^1_{\ub}L^2(S)}||\nab_4\phi||^2_{L^2_{\ub}L^2(S)}+\epsilon^{\frac 32}||\nab_4\phi||^3_{L^2_{\ub}L^2(S)}\right.\\
&\left.+\int_0^{\ub} C\Delta_0||\phi||_{L^3(S_{u,\ub'})}^3 d\ub'\right)\\
\leq &C\left(||\phi||^3_{L^3(S_{u,\ub'})}+\epsilon^{\frac 34}||\nab\phi||_{L^2_{\ub}L^2(S)}^3+\epsilon^{\frac 38}||\nab_4\phi||^3_{L^2_{\ub}L^2(S)}+\epsilon^{\frac 32}||\nab_4\phi||^3_{L^2_{\ub}L^2(S)}\right.\\
&\left.+\int_0^{\ub} C\Delta_0||\phi||_{L^3(S_{u,\ub'})}^3 d\ub'\right),
\end{split}
\end{equation*}
where we have gained a smallness constant by changing $L^1_{\ub}$ to $L^2_{\ub}$ for $\nab\phi$ in the last line.
By Gronwall's inequality, and using the fact that $\ub\leq \epsilon$, we have
\begin{equation*}
\begin{split}
  ||\phi||^3_{L^3(S_{u,\ub})}
\leq &C\left(||\phi||^3_{L^3(S_{u,\ub'})}+\epsilon^{\frac 34}||\nab\phi||_{L^2_{\ub}L^2(S)}^3+\epsilon^{\frac 38}||\nab_4\phi||^3_{L^2_{\ub}L^2(S)}\right).
\end{split}
\end{equation*}
\end{proof}

\subsection{Commutation Formulae}\label{commutation}
We have the following formulae from \cite{KN}:
\begin{proposition}\label{commutation.prop}
The commutator $[\nabla_4,\nabla]$ acting on a $(0,r)$ S-tensor is given by
\begin{equation*}
 \begin{split}
[\nabla_4,\nabla_B]\phi_{A_1...A_r}=&[D_4,D_B]\phi_{A_1...A_r}+(\nabla_B\log\Omega)\nabla_4\phi_{A_1...A_r}-(\gamma^{-1})^{CD}\chi_{BD}\nabla_C\phi_{A_1...A_r} \\
&-\sum_{i=1}^r (\gamma^{-1})^{CD}\chi_{BD}\etab_{A_i}\phi_{A_1...\hat{A_i}C...A_r}+\sum_{i=1}^r (\gamma^{-1})^{CD}\chi_{A_iB}\etab_{D}\phi_{A_1...\hat{A_i}C...A_r}.
 \end{split}
\end{equation*}
Similarly, the commutator $[\nabla_3,\nabla]$ acting on a $(0,r)$ S-tensor is given by
\begin{equation*}
 \begin{split}
[\nabla_3,\nabla_B]\phi_{A_1...A_r}=&[D_3,D_B]\phi_{A_1...A_r}+(\nabla_B\log\Omega)\nabla_3\phi_{A_1...A_r}-(\gamma^{-1})^{CD}\chib_{BD}\nabla_C\phi_{A_1...A_r} \\
&-\sum_{i=1}^r (\gamma^{-1})^{CD}\chib_{BD}\eta_{A_i}\phi_{A_1...\hat{A_i}C...A_r}+\sum_{i=1}^r (\gamma^{-1})^{CD}\chib_{A_iB}\eta_{D}\phi_{A_1...\hat{A_i}C...A_r}.
 \end{split}
\end{equation*}
\end{proposition}
By induction, we get the following schematic formula for repeated commutations (see \cite{LR}):
\begin{proposition}\label{commuteeqn}
Suppose $\nabla_4\phi=F_0$ where $\phi$ and $F_0$ are $(0,r)$ S-tensors. Let $\nabla_4\nabla^i\phi=F_i$ where $F_i$ is a $(0,r+i)$ S-tensor.
Then $F_i$ is given schematically by
\begin{equation*}
\begin{split}
F_i\sim &\sum_{i_1+i_2+i_3=i}\nabla^{i_1}(\eta+\underline{\eta})^{i_2}\nabla^{i_3} F_0+\sum_{i_1+i_2+i_3+i_4=i}\nabla^{i_1}(\eta+\underline{\eta})^{i_2}\nabla^{i_3}\chi\nabla^{i_4} \phi\\
&+\sum_{i_1+i_2+i_3+i_4=i-1} \nabla^{i_1}(\eta+\underline{\eta})^{i_2}\nabla^{i_3}\beta\nabla^{i_4} \phi.
\end{split}
\end{equation*}
where by $\nabla^{i_1}(\eta+\underline{\eta})^{i_2}$ we mean the sum of all terms which is a product of $i_2$ factors, each factor being $\nabla^j (\eta+\underline{\eta})$ for some $j$ and that the sum of all $j$'s is $i_1$, i.e., $\nabla^{i_1}(\eta+\underline{\eta})^{i_2}=\displaystyle\sum_{j_1+...+j_{i_2}=i_1}\nabla^{j_1}(\eta+\underline{\eta})...\nabla^{j_{i_2}}(\eta+\underline{\eta})$. Similarly, suppose $\nabla_3\phi=G_{0}$ where $\phi$ and $G_0$ are $(0,r)$ S-tensors. Let $\nabla_3\nabla^i\phi=G_{i}$ where $G_i$ is a $(0,r_i)$ S-tensor.
Then $G_i$ is given schematically by
\begin{equation*}
\begin{split}
G_{i}\sim &\sum_{i_1+i_2+i_3=i}\nabla^{i_1}(\eta+\underline{\eta})^{i_2}\nabla^{i_3} G_{0}+\sum_{i_1+i_2+i_3+i_4=i}\nabla^{i_1}(\eta+\underline{\eta})^{i_2}\nabla^{i_3}\underline{\chi}\nabla^{i_4} \phi\\
&+\sum_{i_1+i_2+i_3+i_4=i-1} \nabla^{i_1}(\eta+\underline{\eta})^{i_2}\nabla^{i_3}\underline{\beta}\nabla^{i_4} \phi.
\end{split}
\end{equation*}

\end{proposition}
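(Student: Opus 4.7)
The plan is a straightforward induction on $i$, using the single-commutator formula (the preceding proposition) as the engine. The base case $i=0$ is the defining equation $F_0 = \nab_4\phi$, which is already of the form $\nab^{0}(\eta+\etab)^0\nab^0 F_0$, while the $\chi$- and $\beta$-sums are either vacuous or contribute nothing at this order. For the inductive step, I would write
$$F_i \;=\; \nab_4\nab^i\phi \;=\; \nab\bigl(\nab_4\nab^{i-1}\phi\bigr) + [\nab_4,\nab]\nab^{i-1}\phi \;=\; \nab F_{i-1} + [\nab_4,\nab]\nab^{i-1}\phi,$$
reducing the problem to (i) differentiating the inductive expression for $F_{i-1}$ and (ii) applying the single-commutator formula to $\nab^{i-1}\phi$.

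For step (i), I would apply the Leibniz rule term-by-term to the three schematic types provided by the inductive hypothesis: $\nab^{i_1}(\eta+\etab)^{i_2}\nab^{i_3}F_0$ with $i_1+i_2+i_3 = i-1$, $\nab^{i_1}(\eta+\etab)^{i_2}\nab^{i_3}\chi\nab^{i_4}\phi$ with $i_1+i_2+i_3+i_4 = i-1$, and $\nab^{i_1}(\eta+\etab)^{i_2}\nab^{i_3}\beta\nab^{i_4}\phi$ with $i_1+i_2+i_3+i_4 = i-2$. Each $\nab$ then redistributes one derivative among the factors, raising the total index count by one, which is exactly the index budget $i$, $i$, $i-1$ appearing in the target formula for $F_i$.

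For step (ii), the single-commutator formula applied to the $(0, r+i-1)$-tensor $\nab^{i-1}\phi$ yields schematically
$$[\nab_4,\nab]\nab^{i-1}\phi \;\sim\; [D_4,D]\nab^{i-1}\phi + (\nab\log\Omega)\,F_{i-1} + \chi\,\nab^i\phi + \chi\,\etab\,\nab^{i-1}\phi.$$
Using $\nab\log\Omega = \tfrac12(\eta+\etab)$ (from the relations $\eta=\zeta+\nab\log\Omega$, $\etab=-\zeta+\nab\log\Omega$ recorded in Section \ref{seceqn}), the second piece becomes $(\eta+\etab)F_{i-1}$, and the inductive hypothesis for $F_{i-1}$ re-expands it into schematic terms of the required form with $i_2$ raised by one. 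The third and fourth pieces are manifestly of the $\chi$-type with $i_3+i_4 = i$ and $i_2+i_3+i_4 = i$ respectively. The first piece, $[D_4,D]\nab^{i-1}\phi$, contracts a spacetime Riemann-tensor component against $\nab^{i-1}\phi$, and I would identify it, via the vacuum null decomposition, with a pure $\beta$ contribution of the shape $\beta\,\nab^{i-1}\phi$ with $i_3+i_4 = i-1$, filling the missing $\beta$-sum.

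The main (and only non-formal) obstacle is this last identification: one must verify that the tangential commutator curvature $R(e_4,e_B,\cdot,\cdot)$ acting on an S-tangential tensor produces only $\beta$-components (no $\alpha$, $\rho$ or $\sigma$), once the Ricci-flatness traces and the first Bianchi identity are imposed; any $\rho$- or $\chih\cdot\chih$-type leftovers that appear from Gauss--type relations can either be absorbed into the $\chi$-term or reabsorbed into $F_{i-1}$ via the inductive hypothesis. The second assertion on $G_i$ follows from an identical induction with the single-commutator formula for $[\nab_3,\nab]$, which is entirely symmetric under the swap $(e_4,\chi,\beta) \leftrightarrow (e_3,\chib,\betab)$.
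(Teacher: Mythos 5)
Your inductive argument is exactly the intended proof: the paper itself says only ``By induction\ldots{} (see \cite{LR})'', and your strategy of expanding $F_i=\nabla F_{i-1}+[\nabla_4,\nabla]\nabla^{i-1}\phi$, applying Leibniz to the inductive expression, and using the single-commutator formula (with $\nabla\log\Omega=\tfrac12(\eta+\etab)$ and the vacuum identification $[D_4,D]\sim\beta$ on $S$-tangent tensors) is precisely how that induction is carried out. The only remark worth making is that the $[D_4,D]\sim\beta$ step you flag as ``the main obstacle'' is a standard fact already implicit in the single-commutator proposition quoted from \cite{KN}, so it is less of an obstacle than a bookkeeping check.
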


The following further simplified version is useful for our estimates in the next section:
\begin{proposition}\label{commute.prop}
Suppose $\nabla_4\phi=F_0$ where $\phi$ and $F_0$ are $(0,r)$ S-tensors. Let $\nabla_4\nabla^i\phi=F_i$ where $F_i$ is a $(0,r+i)$ S-tensor.
Then $F_i$ is given schematically by
\begin{equation*}
\begin{split}
F_i\sim &\sum_{i_1+i_2+i_3=i}\nabla^{i_1}\psi^{i_2}\nabla^{i_3} F_0+\sum_{i_1+i_2+i_3+i_4=i}\nabla^{i_1}\psi^{i_2}\nabla^{i_3}\chi\nabla^{i_4} \phi.\\
\end{split}
\end{equation*}
Similarly, suppose $\nabla_3\phi=G_{0}$ where $\phi$ and $G_0$ are $(0,r)$ S-tensors. Let $\nabla_3\nabla^i\phi=G_{i}$ where $G_i$ is a $(0,r_i)$ S-tensor.
Then $G_i$ is given schematically by
\begin{equation*}
\begin{split}
G_{i}\sim &\sum_{i_1+i_2+i_3=i}\nabla^{i_1}\psi^{i_2}\nabla^{i_3} G_{0}+\sum_{i_1+i_2+i_3+i_4=i}\nabla^{i_1}\psi^{i_2}\nabla^{i_3}\underline{\chi}\nabla^{i_4} \phi.
\end{split}
\end{equation*}
\end{proposition}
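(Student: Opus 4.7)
The plan is to deduce this simplified commutator identity directly from Proposition \ref{commuteeqn} by eliminating the $\beta$ (respectively $\underline{\beta}$) terms using the Codazzi equations in (\ref{null.str3.sch}). The schematic set $\psi \in \{\trch,\trchb,\eta,\etab\}$ from (\ref{schpsi}) already contains $\eta$ and $\etab$, so the factor $\nabla^{i_1}(\eta+\underline{\eta})^{i_2}$ appearing in Proposition \ref{commuteeqn} fits tautologically into $\nabla^{i_1}\psi^{i_2}$. The only work is to rewrite the $\beta$ (respectively $\underline{\beta}$) term so that it matches the form of the second sum in the simplified statement.

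For the $\nabla_4$ version, the Codazzi equation in (\ref{null.str3.sch}),
$$\div\chih = \tfrac{1}{2}\nabla\trch + \psi(\trch+\chih) - \beta,$$
yields the schematic identity $\beta \sim \nabla\chi + \psi\,\chi$ (the $\tfrac{1}{2}\nabla\trch$ is absorbed into $\nabla\chi$ since $\trch$ is a component of $\chi$, and $\trch$ is itself a component of $\psi$). Substituting this into the term
$$\nabla^{i_1}(\eta+\underline{\eta})^{i_2}\nabla^{i_3}\beta\,\nabla^{i_4}\phi, \qquad i_1+i_2+i_3+i_4 = i-1,$$
and using the Leibniz rule to distribute the $i_3$ derivatives produces contributions of the two shapes
$$\nabla^{i_1}\psi^{i_2}\,\nabla^{i_3+1}\chi\,\nabla^{i_4}\phi \quad \text{and} \quad \nabla^{i_1}\psi^{i_2}\nabla^{j_1}\psi\,\nabla^{j_2}\chi\,\nabla^{i_4}\phi \;\; (j_1+j_2=i_3).$$
The first has total index sum $i_1+i_2+(i_3+1)+i_4 = i$, and the second has $(i_1+j_1)+(i_2+1)+j_2+i_4 = i$, so both fall exactly into the second sum of the simplified formula. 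The first sum $\sum_{i_1+i_2+i_3=i}\nabla^{i_1}\psi^{i_2}\nabla^{i_3}F_0$ and the $\chi$ term $\sum_{i_1+i_2+i_3+i_4=i}\nabla^{i_1}\psi^{i_2}\nabla^{i_3}\chi\,\nabla^{i_4}\phi$ from Proposition \ref{commuteeqn} carry over verbatim.

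The $\nabla_3$ statement follows by the symmetric argument: the Codazzi equation for $\chibh$ from (\ref{null.str3.sch}) gives $\underline{\beta} \sim \nabla\underline{\chi} + \psi\,\underline{\chi}$, and the same bookkeeping replaces the $\underline{\beta}$ term in Proposition \ref{commuteeqn} with contributions of the form $\nabla^{i_1}\psi^{i_2}\nabla^{i_3}\underline{\chi}\,\nabla^{i_4}\phi$ with $i_1+i_2+i_3+i_4=i$.

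There is no genuine obstacle here; the proposition is a pure reorganization of Proposition \ref{commuteeqn}. The only subtle point to keep track of is the index arithmetic when absorbing the Codazzi substitution — the derivative count on $\chi$ (or $\underline{\chi}$) is raised by one, which is exactly compensated by the fact that the $\beta$ (or $\underline{\beta}$) sum in Proposition \ref{commuteeqn} runs over $i_1+i_2+i_3+i_4 = i-1$ rather than $=i$. This rewriting is what makes the simplified formula useful downstream: since $\alpha, \underline{\alpha}$ (the singular components) never appear, and neither do $\beta, \underline{\beta}$ explicitly, one can commute $\nabla_4$ and $\nabla_3$ freely through angular derivatives while retaining only estimates on Ricci coefficients $\psi, \chi, \underline{\chi}$ and the inhomogeneities $F_0, G_0$.
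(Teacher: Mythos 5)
Your proof is correct and takes exactly the paper's approach: the paper's own proof is a one-liner stating that $\beta$ and $\underline{\beta}$ are replaced via the Codazzi equations $\beta\sim\nabla\chi+\psi\chi$ and $\underline{\beta}\sim\nabla\underline{\chi}+\psi\underline{\chi}$. You have simply spelled out the Leibniz-rule index bookkeeping, which the paper leaves implicit.
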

\begin{proof}
We replace $\beta$ and $\betab$ using the Codazzi equations, which schematically looks like
$$\beta=\nabla\chi+\psi\chi,$$
$$\betab=\nabla\chib+\psi\chib.$$
\end{proof}

\subsection{General Elliptic Estimates for Hodge Systems}\label{elliptic}
We recall the definition of the divergence and curl of a symmetric covariant tensor of an arbitrary rank:
$$(\div\phi)_{A_1...A_r}=\nabla^B\phi_{BA_1...A_r},$$
$$(\curl\phi)_{A_1...A_r}=\eps^{BC}\nabla_B\phi_{CA_1...A_r},$$
where $\eps$ is the volume form associated to the metric $\gamma$.
Recall also that the trace is defined to be
$$(tr\phi)_{A_1...A_{r-1}}=(\gamma^{-1})^{BC}\phi_{BCA_1...A_{r-1}}.$$
The following elliptic estimate is standard (See for example \cite{CK} or \cite{Chr}):
\begin{proposition}\label{ellipticthm}
Let $\phi$ be a totally symmetric $r+1$ covariant tensorfield on a 2-sphere $(\mathbb S^2,\gamma)$ satisfying
$$\div\phi=f,\quad \curl\phi=g,\quad \mbox{tr}\phi=h.$$
Suppose also that
$$\sum_{i\leq 1}||\nabla^i K||_{L^2(S)}< \infty.$$
Then for $i\leq 3$,
$$||\nabla^{i}\phi||_{L^2(S)}\leq C(\sum_{k\leq 1}||\nabla^k K||_{L^2(S)})(\sum_{j=0}^{i-1}(||\nabla^{j}f||_{L^2(S)}+||\nabla^{j}g||_{L^2(S)}+||\nabla^{j}h||_{L^2(S)}+||\phi||_{L^2(S)})).$$
\end{proposition}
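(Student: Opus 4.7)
The plan is to reduce to the trace-free case, apply a Bochner--Weitzenb\"ock identity at the base level $i=1$, and then iterate by commuting $\nabla$ through the Hodge operators, paying careful attention to the commutator terms so that only $\sum_{k\leq 1}\|\nabla^k K\|_{L^2(S)}$ is needed.

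First, I would reduce to the traceless case. Writing $h=\mbox{tr}\,\phi$, decompose $\phi=\phi_0+\phi_1$ where $\phi_1$ is the totally symmetric trace-restoring piece built from $h$ and $\gamma$ (a symmetrized tensor product), so that $\mbox{tr}\,\phi_0=0$. Then $\phi_0$ satisfies a modified Hodge system
\[
\div\phi_0=f+\{\mbox{linear in }\nabla h\},\qquad \curl\phi_0=g+\{\mbox{linear in }\nabla h\},\qquad \mbox{tr}\,\phi_0=0,
\]
so the estimate for $\phi_0$ reduces, after using the rank-$(r-1)$ version of the estimate for $h$ (inductive hypothesis in $r$), to the problem of bounding a traceless totally symmetric tensor on $(\mathbb S^2,\gamma)$.

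Second, for traceless totally symmetric $\phi_0$, I would use the Bochner--Weitzenb\"ock identity on surfaces, which in the totally symmetric traceless rank-$(r+1)$ setting takes the schematic form
\[
\int_S\bigl(|\div\phi_0|^2+|\curl\phi_0|^2\bigr)=\int_S|\nabla\phi_0|^2+\int_S c_{r}\,K\,|\phi_0|^2,
\]
with $c_r$ a positive constant. Combined with H\"older and the Sobolev embedding $H^1(S)\hookrightarrow L^4(S)$ from Proposition \ref{L4} (which controls $\|K\|_{L^4(S)}$ by $\sum_{k\leq 1}\|\nabla^k K\|_{L^2(S)}$), this yields the base estimate
\[
\|\nabla\phi_0\|_{L^2(S)}\leq C\bigl(\|f\|_{L^2(S)}+\|g\|_{L^2(S)}+\|\nabla h\|_{L^2(S)}+\|\phi_0\|_{L^2(S)}\bigr),
\]
with $C$ depending on $\sum_{k\leq 1}\|\nabla^k K\|_{L^2(S)}$. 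This handles $i=1$.

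Third, for $i=2,3$, commute $\nabla$ through $\div$ and $\curl$; the commutator $[\nabla,\nabla]$ on a surface is expressible through $K$ (via the identity $R_{ABCD}=K(\gamma_{AC}\gamma_{BD}-\gamma_{AD}\gamma_{BC})$). Thus $\nabla\phi_0$ (and inductively $\nabla^{i-1}\phi_0$) satisfies a Hodge system with the same principal structure, whose sources are $\nabla f,\nabla g,\nabla h$ plus commutator contributions schematically of the form $\sum_{j=0}^{i-1}\nabla^{j}K\cdot\nabla^{i-1-j}\phi_0$. Iterating the Bochner identity on these commuted quantities, the proof reduces to estimating these nonlinear error terms.

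The main obstacle, and the only delicate step, is the top level $i=3$: the commutators produce terms like $K\,\nabla^2\phi_0$ and $\nabla K\cdot\nabla\phi_0$, and we only have $K\in H^1(S)$, which does not embed into $L^\infty$. I would handle the $K\nabla^2\phi_0$ contribution by integration by parts on the sphere, transferring one derivative off $\nabla^2\phi_0$ onto $K$ and onto a lower-order test tensor; this converts it into terms of the schematic form $\nabla K\cdot\nabla\phi_0\cdot\nabla^2\phi_0$ and $K\cdot\nabla\phi_0\cdot\nabla^3\phi_0$ appearing paired against a test field that can be absorbed. Then by H\"older with the split $L^4\cdot L^4\cdot L^2$ (resp.\ $L^\infty\cdot\cdots$ avoided in favor of $L^4$-based estimates), and using Sobolev $H^1(S)\hookrightarrow L^4(S)$ to control $\|\nabla K\|_{L^4}$ by $\|\nabla K\|_{L^2}+\|\nabla^2 K\|_{L^2}$ is \emph{not} available; instead one uses $\|\nabla\phi_0\|_{L^4(S)}\lesssim \|\nabla\phi_0\|_{L^2(S)}+\|\nabla^2\phi_0\|_{L^2(S)}$ (already controlled at the previous step) and $\|K\|_{L^4(S)}\lesssim\sum_{k\leq 1}\|\nabla^k K\|_{L^2(S)}$. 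Any remaining $\nabla^3\phi_0$ factor on the right is absorbed into the left-hand side via Cauchy--Schwarz with a small constant. After this absorption, Gronwall-type iteration in the rank $r$ (using the rank-$(r-1)$ estimate for $h$) closes the bound, and the $\|\phi\|_{L^2(S)}$ term on the right appears because $\phi_0$ differs from $\phi$ by $\phi_1\sim h\otimes_{\mathrm{sym}}\gamma$, whose $L^2$ norm is bounded by $\|h\|_{L^2(S)}\leq C\|\phi\|_{L^2(S)}$.
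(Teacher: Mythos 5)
The paper itself supplies no proof of Proposition~\ref{ellipticthm}; it cites \cite{CK} and \cite{Chr} as ``standard.'' Your reconstruction follows the same route as those references: reduce to the traceless case, prove the base case $i=1$ via a Weitzenb\"ock-type $L^2$ identity whose curvature term is controlled using $\|K\|_{L^2}$, Gagliardo--Nirenberg and Young's inequality, and then bootstrap to $i=2,3$ by commuting $\nabla$ through the Hodge operators. The top-order absorption at $i=3$ is handled correctly: $\|K\nabla\phi_0\|_{L^2}\lesssim\|K\|_{L^4}\|\nabla\phi_0\|_{L^4}$ with $\|\nabla\phi_0\|_{L^4}$ already controlled by the $i=2$ step; $\|\nabla K\cdot\phi_0\|_{L^2}\lesssim\|\nabla K\|_{L^2}\|\phi_0\|_{L^\infty}$ with $\|\phi_0\|_{L^\infty}$ controlled via $H^2\hookrightarrow L^\infty$ on the sphere; and the term quadratic in $\nabla^2\phi_0$ and linear in $K$ is absorbed via $\|\nabla^2\phi_0\|_{L^4}^2\lesssim\|\nabla^2\phi_0\|_{L^2}(\|\nabla^2\phi_0\|_{L^2}+\|\nabla^3\phi_0\|_{L^2})$ and Young. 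So your proposal is essentially a correct expansion of what the paper leaves implicit.

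Two small remarks. First, your attribution ``the commutators produce terms like $K\nabla^2\phi_0$'' is imprecise: the commutator $[\nabla^2,\div]$ produces the sources $\nabla K\cdot\phi_0$ and $K\nabla\phi_0$, while the term quadratic in $\nabla^2\phi_0$ and linear in $K$ arises from the curvature term of the Weitzenb\"ock identity applied to $\nabla^2\phi_0$; the direct Gagliardo--Nirenberg/Young route handles it more cleanly than the integration-by-parts route you sketch, which reintroduces $\nabla^3\phi_0$ and requires a second absorption. Second, the trace-reduction step needs care at low rank: your inductive-in-rank estimate for $h=\tr\phi$ via the compatibility relations $\div h=\tr f$, $\curl h=\tr g$ bottoms out when $h$ is a scalar (rank-$2$ tensor $\phi$), where no Hodge system for $h$ is available and $\nabla^i h$ cannot be gained from $f,g$ alone. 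In the paper's applications the trace is always trivial (one-forms, or traceless two-tensors via Proposition~\ref{elliptictraceless}), so this does not affect any use of the estimate, but a proof of the literal statement should make the scalar-trace case explicit.
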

For the special case that $\phi$ a symmetric traceless 2-tensor, we only need to know its divergence:
\begin{proposition}\label{elliptictraceless}
Suppose $\phi$ is a symmetric traceless 2-tensor satisfying
$$\div\phi=f.$$
Suppose moreover that $$\sum_{i\leq 1}||\nabla^i K||_{L^2(S)}< \infty.$$
Then, for $i\leq 3$,
$$||\nabla^{i}\phi||_{L^2(S)}\leq C(\sum_{k\leq 1}||\nabla^k K||_{L^2(S)})(\sum_{j=0}^{i-1}(||\nabla^{j}f||_{L^2(S)}+||\phi||_{L^2(S)})).$$
\end{proposition}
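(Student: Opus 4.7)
The plan is to reduce Proposition \ref{elliptictraceless} to the already-stated Proposition \ref{ellipticthm}. Since $\phi$ is traceless by assumption, we have $\tr\phi = h = 0$, so the trace data vanishes. The key observation is that for a symmetric traceless 2-tensor $\phi$ on a 2-surface, $\curl\phi$ is completely determined by $\div\phi$: in fact, pointwise
$$(\curl\phi)_A = \eps_A{}^B (\div\phi)_B.$$
Thus once this identity is established, we can take $f = \div\phi$, $g = \eps \cdot f$, $h = 0$ in Proposition \ref{ellipticthm} and read off the conclusion.

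To verify the identity, I work at an arbitrary point $p$ in an orthonormal frame $\{e_1,e_2\}$. The symmetric traceless condition forces two independent components, say $\phi_{11} = -\phi_{22}$ and $\phi_{12} = \phi_{21}$, and these relations are tensorial so they commute with $\nabla$. A direct computation then gives
$$(\div\phi)_1 = \nabla_1\phi_{11}+\nabla_2\phi_{12},\qquad (\div\phi)_2 = \nabla_1\phi_{12}-\nabla_2\phi_{11},$$
and symmetrically
$$(\curl\phi)_1 = \nabla_1\phi_{12}-\nabla_2\phi_{11} = (\div\phi)_2,\qquad (\curl\phi)_2 = -\nabla_1\phi_{11}-\nabla_2\phi_{12} = -(\div\phi)_1,$$
which is exactly the claimed Hodge-dual relation.

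Because $\eps$ is parallel with respect to the Levi-Civita connection (it equals the volume form of $\gamma$), taking covariant derivatives commutes with the Hodge dual, so
$$\|\nabla^j \curl\phi\|_{L^2(S)} = \|\nabla^j \div\phi\|_{L^2(S)} = \|\nabla^j f\|_{L^2(S)}\qquad \text{for all } j.$$
Plugging $f$, $g = \eps\cdot f$ and $h=0$ into Proposition \ref{ellipticthm} and collapsing the resulting three identical sums into one gives the bound
$$\|\nabla^i \phi\|_{L^2(S)} \leq C\Big(\sum_{k\leq 1}\|\nabla^k K\|_{L^2(S)}\Big)\sum_{j=0}^{i-1}\big(\|\nabla^j f\|_{L^2(S)} + \|\phi\|_{L^2(S)}\big)$$
for $i \leq 3$, as desired.

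The only real point of care is the pointwise identity $\curl\phi = *\div\phi$; everything else is essentially book-keeping on top of Proposition \ref{ellipticthm}. There is no genuine obstacle here — the content of the proposition is purely that the symmetric traceless structure together with vanishing trace saves two of the three Hodge-system data. If one preferred to give a self-contained proof (without citing the general Hodge-system estimate), the natural alternative is the Bochner identity $\int_S(|\nabla\phi|^2 + 2K|\phi|^2) = 2\int_S|\div\phi|^2$ for symmetric traceless 2-tensors, combined with Gagliardo--Nirenberg to control the curvature term $\int K|\phi|^2$ by absorption, and then induction in $i$ using the commutation formulas of Section \ref{commutation}, which pick up Gauss curvature factors that are handled by the $\sum_{k\leq 1}\|\nabla^k K\|_{L^2(S)}$ bound via Sobolev embedding on the 2-sphere.
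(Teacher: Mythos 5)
Your proof is correct and follows the same route as the paper's: reduce to Proposition \ref{ellipticthm} by observing that for a symmetric traceless 2-tensor one has $\curl\phi = {}^*\div\phi$ (hence $g={}^*f$, $h=0$), so only $f$ carries independent data. The paper states this identity without the frame computation; you simply spell out the "direct computation" it invokes, and the accounting with Proposition \ref{ellipticthm} is the same.
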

\begin{proof}
In view of Proposition \ref{ellipticthm}, this Proposition follows from
$$\curl\phi=^*f.$$
This is a direct computation using the fact that $\phi$ is both symmetric and traceless.
\end{proof}

\section{Estimates for the Ricci Coefficients}
We continue to work under the bootstrap assumptions (\ref{BA1}). In this section, we show that assuming the curvature norm $\mathcal R$ is bounded, then so are the Ricci coefficient norms $\mathcal O$, $\tilde{\mathcal O}_{3,2}$ and the curvature norm $\mathcal R(S)$ on the spheres. In particular, our bootstrap assumption (\ref{BA1}) and all the estimates in the last section are verified as long as $\mathcal R$ is controlled.

\subsection{$L^4(S)$ Estimates for First Derivatives of Ricci Coefficients}\label{secRicciL4}

\begin{proposition}\label{L4Ricci1}
Assume
$$\mathcal R <\infty,\quad\tilde{\mathcal O}_{3,2}<\infty,\quad\mathcal O_{2,2}<\infty.$$
Then there exists $\epsilon_0=\epsilon_0(\mathcal R, \tilde{\mathcal O}_{3,2}, \mathcal O_{2,2},\Delta_0)$ such that whenever $\epsilon\leq \epsilon_0$, 
\[
 \sum_{i\leq 1}\mathcal O_{i,4}[\trchb,\eta]\leq C(\mathcal O_0).
\]
In particular, $C(\mathcal O_0)$ is independent of $\Delta_0$.
\end{proposition}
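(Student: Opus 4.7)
The plan is to exploit the fact that both $\trchb$ and $\eta$ satisfy $\nab_4$ transport equations, so the integration direction is the short interval $0\leq \ub\leq\epsilon$. Any nonlinear error term picks up at least a factor of $\epsilon^{1/2}$ (or $\epsilon$) after integration, so under the bootstrap \eqref{BA1} it is bounded by a small multiple of a power of $\Delta_0$ together with quantities controlled by $\mathcal R$, $\mathcal O_{2,2}$, $\tilde{\mathcal O}_{3,2}$. Choosing $\epsilon$ small depending on all of these collapses the error to a constant depending only on $\mathcal O_0$, leaving only the initial-data contribution on $S_{u,0}$. The key structural fact being used is that, by the schematic form of the null structure equations displayed in Section \ref{secsche},
\[
\nab_4\trchb = \rhoc + \nab\etab + \psi(\psi+\psi_H), \qquad \nab_4\eta=\beta+\psi(\psi+\psi_H),
\]
so neither equation contains $\alpha$, $\alphab$, or any $\psi_{\Hb}$ factor. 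In particular, only the renormalized curvature components $\rhoc$ and $\beta$ appear, and these are exactly the ones controlled by $\mathcal R$.

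For the zeroth-order estimate, I apply Proposition \ref{transport} in $L^4(S)$ to the two equations above. The initial data on $S_{u,0}$ is bounded by $\mathcal O_0$. For the error integrals, the curvature contribution is controlled by splitting Holder in $\ub$ and Sobolev on spheres:
\[
\int_0^{\ub}\|\rhoc\|_{L^4(S)}\,d\ub' \leq \epsilon^{1/2}\|\rhoc\|_{L^2_{\ub}L^4(S)} \leq C\epsilon^{1/2}\sum_{j\leq 1}\|\nab^j\rhoc\|_{L^2(H_u)} \leq C\epsilon^{1/2}\mathcal R,
\]
and the same argument applied to $\beta$ yields $\int_0^{\ub}\|\beta\|_{L^4(S)}\,d\ub'\leq C\epsilon^{1/2}\mathcal R$. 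The linear $\nab\etab$ term is bounded under \eqref{BA1} by $\epsilon\Delta_0$. Each quadratic term of type $\psi\cdot\psi$ contributes $\epsilon\Delta_0^2$, while terms involving $\psi_H\in\{\chih,\omega\}$ contribute $\epsilon^{1/2}\Delta_0^2$ since these coefficients are known only in $L^2_{\ub}L^\infty_u L^\infty(S)$. Summing, the total error is bounded by $C\epsilon^{1/2}(\mathcal R+\Delta_0+\Delta_0^2)$, which is absorbed by taking $\epsilon$ small.

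For the first derivative, I commute the two $\nab_4$ equations with $\nab$ using Proposition \ref{commuteeqn}, producing
\[
\nab_4\nab\trchb = \nab\rhoc + \nab^2\etab + \nab(\psi(\psi+\psi_H)) + (\psi+\chi)\nab\trchb, \qquad \text{analogously for }\eta.
\]
Integration in $\ub$ with $L^4(S)$ and Sobolev embedding (Proposition \ref{L4}) yields $\int_0^{\ub}\|\nab\rhoc\|_{L^4(S)}d\ub'\leq C\epsilon^{1/2}\mathcal R$ and similarly for $\nab\beta$. The term $\nab^2\etab$ is controlled by $\epsilon^{1/2}\tilde{\mathcal O}_{3,2}$ via the $L^\infty_u L^2_{\ub}L^2(S)$ component of the $\tilde{\mathcal O}_{3,2}$ norm together with Sobolev. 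The quadratic commutator terms $\chi\nab\trchb$ and products like $\psi\nab\psi$, $\nab\psi\cdot\psi_H$ are all of size $\epsilon^{1/2}$ times powers of $\Delta_0$ and $\mathcal O_{2,2}$ after integration, the worst requiring the $L^2_{\ub}L^\infty_u L^\infty(S)$ bound on $\psi_H$ to balance the short-direction integration. Combining, the commuted equations give $\|\nab\trchb,\nab\eta\|_{L^4(S_{u,\ub})}\leq \mathcal O_0+C\epsilon^{1/2}F(\mathcal R,\tilde{\mathcal O}_{3,2},\mathcal O_{2,2},\Delta_0)$.

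The main obstacle is organizational rather than analytical: every nonlinear term must be identified and checked to carry a factor of at least $\epsilon^{1/2}$ after the Holder and Sobolev manipulations, with all the highest-order derivative factors accounted for by $\tilde{\mathcal O}_{3,2}$ or $\mathcal R$ and the $L^4(S)$ norms on the curvature handled by the co-dimension-one trace on the null hypersurfaces. The cleanness of this bookkeeping rests entirely on the absence of $\alpha$, $\alphab$, and $\psi_{\Hb}$ from the right-hand sides of the $\nab_4\trchb$ and $\nab_4\eta$ equations, which is why the final constant is independent of $\Delta_0$.
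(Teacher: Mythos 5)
Your proposal is correct and follows essentially the same approach as the paper: you identify the same schematic $\nab_4$ transport equations for $\trchb$ and $\eta$, observe that $\alpha$, $\alphab$, and $\psi_{\Hb}$ are absent from the source terms, integrate in the short $\ub$ direction via Proposition \ref{transport}, and gain the $\epsilon^{1/2}$ smallness needed to absorb the error terms (using the $\mathcal R$ norm for the curvature contributions, $\mathcal O_{2,2}$ and $\tilde{\mathcal O}_{3,2}$ for the $\nab\etab$ term, and the mixed $L^2_{\ub}L^\infty_u L^\infty(S)$ bound for $\psi_H$). The only minor difference is presentational: the paper carries the $i\leq 1$ derivative count through a single schematic sum, while you split the zeroth and first order cases, but this does not change the substance.
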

\begin{proof}
Using the null structure equations, we have a schematic equation of the type
$$\nabla_{4}(\trchb,\eta)=\beta+\rhoc+\nabla\etab+\psi\psi+\psi_H\psi.$$
It is important to note that $\betab$, $\psi_{\Hb}$ do not appear in the source terms. In other words, only the terms that can be controlled on the outgoing hypersurface $H_u$ enter the equation. By Proposition \ref{commute.prop}, we have the following null structure equations commuted with angular derivatives:
$$\nabla_{4}\nabla^i(\trchb,\eta)=\sum_{i_1+i_2+i_3=i}\nabla^{i_1}\psi^{i_2}\nabla^{i_3}(\beta+\rhoc+\nabla\etab)+\sum_{i_1+i_2+i_3+i_4=i}\nabla^{i_1}\psi^{i_2}\nabla^{i_3}\psi\nabla^{i_4}(\psi+\psi_H).$$
By Proposition \ref{transport}, in order to estimate $||\nabla^i(\trchb,\eta)||_{L^\infty_uL^\infty_{\ub}L^4(S)}$, it suffices to estimate the initial data and the $||\cdot||_{L^\infty_uL^1_{\ub}L^4(S)}$ norm of the right hand side. We now estimate each of the terms in the equations. For the curvature terms, we have
\begin{equation*}
\begin{split}
&||\sum_{i_1+i_2\leq 1}\psi^{i_1}\nabla^{i_2}(\beta,\rhoc)||_{L^\infty_{u}L^1_{\ub}L^4(S)} \\
\leq& C(\sum_{i_1\leq 1}||\psi||^{i_1}_{L^\infty_{u}L^\infty_{\ub}L^\infty(S)})(\sum_{i_2\leq 1}||\nabla^{i_2}(\beta,\rhoc)||_{L^\infty_{u}L^1_{\ub}L^4(S)})\\
\leq &C(1+\Delta_0)\epsilon^{\frac{1}{2}}\sum_{i\leq 2}||\nabla^{i}(\beta,\rhoc)||_{L^\infty_{u}L^2_{\ub}L^2(S)} \\
\leq &C(1+\Delta_0)\epsilon^{\frac{1}{2}}\mathcal R.
\end{split}
\end{equation*}
The term with $\nabla\etab$ instead of $(\beta,\rhoc)$ can be bounded analogously, except for using the $\mathcal O$ and $\tilde{\mathcal O}_{3,2}$ norms together instead of the $\mathcal R$ norm:
\begin{equation*}
\begin{split}
&||\sum_{i_1+i_2\leq 1}\psi^{i_1}\nabla^{i_2+1}\etab||_{L^\infty_{u}L^1_{\ub}L^4(S)} \\
\leq& C(\sum_{i_1\leq 1}||\psi||^{i_1}_{L^\infty_{u}L^\infty_{\ub}L^\infty(S)})(\sum_{i_2\leq 2}||\nabla^{i_2}\etab||_{L^\infty_uL^1_{\ub}L^4(S)})\\
\leq &C(1+\Delta_0)\epsilon^{\frac{1}{2}}\left(\sum_{i\leq 1}||\nabla^{i}\eta||_{L^\infty_{u}L^\infty_{\ub}L^4(S)}+\sum_{2\leq i\leq 3}||\nabla^i\etab||_{L^\infty_{u}L^2_{\ub}L^2(S)}\right) \\
\leq &C(1+\Delta_0)\epsilon^{\frac{1}{2}}(\sum_{i\leq 1}\mathcal O_{i,4}+\mathcal O_{2,2}+\tilde{\mathcal O}_{3,2})\\
\leq &C(1+\Delta_0)\epsilon^{\frac{1}{2}}(\Delta_0+\mathcal O_{2,2}+\tilde{\mathcal O}_{3,2}).
\end{split}
\end{equation*}
We now move on to the lower order terms:
\begin{equation*}
\begin{split}
&||\sum_{i_1+i_2+i_3\leq 1}\psi^{i_1}\nabla^{i_2}\psi\nabla^{i_3}\psi||_{L^\infty_{u}L^1_{\ub}L^4(S)} \\
\leq& C(\sum_{i_1\leq 2}||\psi||^{i_1}_{L^\infty_{u}L^\infty_{\ub}L^\infty(S)})(\sum_{i_2\leq 1}||\nabla^{i_2}\psi||_{L^\infty_{u}L^1_{\ub}L^4(S)})\\
\leq &C\Delta_0(1+\Delta_0)^{2}\epsilon.
\end{split}
\end{equation*}
Finally, we bound the lower order terms that contain $\psi_{H}$: 
\begin{equation*}
\begin{split}
&||\sum_{i_1+i_2+i_3\leq 1}\psi^{i_1}\nabla^{i_2}\psi\nabla^{i_3}\psi_{H}||_{L^\infty_{u}L^1_{\ub}L^4(S)} \\
\leq& C(\sum_{i_1\leq 2}||\psi||^{i_1}_{L^\infty_{u}L^\infty_{\ub}L^\infty(S)})(\sum_{i_2\leq 1}||\nabla^{i_2}\psi_H||_{L^\infty_{u}L^1_{\ub}L^4(S)}) \\
&+C(\sum_{i_1\leq 1}||\psi||^{i_1}_{L^\infty_{u}L^\infty_{\ub}L^\infty(S)})(\sum_{i_2\leq 1}||\nabla^{i_2}\psi||_{L^\infty_{u}L^2_{\ub}L^4(S)})(||\psi_H||_{L^\infty_{u}L^2_{\ub}L^\infty(S)})\\
\leq &C(1+\Delta_0)^{2}\epsilon^\frac{1}{2}(\sum_{i\leq 1}||\nabla^{i}\psi_H||_{L^\infty_{u}L^2_{\ub}L^4(S)}+ \sum_{i\leq 1}||\nabla^{i}\psi||_{L^\infty_{u}L^\infty_{\ub}L^4(S)})\\
\leq &C\Delta_0(1+\Delta_0)^{2}\epsilon^{\frac{1}{2}}.
\end{split}
\end{equation*}
Hence, by Proposition \ref{transport}, we have
\begin{equation*}
\begin{split}
\sum_{i\leq 1}\mathcal O_{i,4}[\trchb,\eta]\leq &\mathcal O_0+C(1+\Delta_0)^2\epsilon^{\frac{1}{2}}(\mathcal R+\mathcal O_{2,2}+\tilde{\mathcal O}_{3,2}+\Delta_0). 
\end{split}
\end{equation*}
The proposition follows from choosing $\epsilon$ to be sufficiently small, depending on $\mathcal R, \tilde{\mathcal O}_{3,2}, \mathcal O_{2,2},\Delta_0$.
\end{proof}

We now estimate the terms that we denote by $\psi_{\Hb}$, i.e., $\chibh$ and $\omegab$. Both of them obey a $\nabla_4$ equation. However, a new difficulty compared Proposition \ref{L4Ricci1} arises since the initial data for $\chibh$ and $\omegab$ are not in $L^\infty_u$. Thus they can only be estimated after taking the $L^2_u$ norm.
\begin{proposition}\label{L4Ricci3}
Assume
$$\mathcal R <\infty,\quad\tilde{\mathcal O}_{3,2}<\infty,\quad\mathcal O_{2,2}<\infty.$$
Then there exists $\epsilon_0=\epsilon_0(\mathcal R, \tilde{\mathcal O}_{3,2}, \mathcal O_{2,2},\Delta_0)$ such that whenever $\epsilon\leq \epsilon_0$, 
$$\sum_{i\leq 1}\mathcal O_{i,4}[\chibh,\omegab]\leq C(\mathcal O_0) .$$
In particular, this estimate is independent of $\Delta_0$.
\end{proposition}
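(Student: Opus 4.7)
The strategy parallels Proposition \ref{L4Ricci1}: I integrate the schematic null structure equation
$$\nabla_4 \psi_{\Hb} = \rhoc + \nabla\etab + \psi(\psi+\psi_H) + \psi_{\Hb}(\trch+\psi_H)$$
(and its angular-commuted analogues from Proposition \ref{commuteeqn}) along the $e_4$ direction via Proposition \ref{transport} with $p=4$, and then take $L^\infty_\ub$ followed by $L^2_u$ of both sides. The key structural difference with the $\psi_H$ case treated in Proposition \ref{L4Ricci1} is that the initial data $\|\nabla^i\psi_{\Hb}\|_{L^2(\Hb_0)}\leq \mathcal O_0$ lives in $L^2_u L^2(S)$ rather than $L^\infty_u L^2(S)$, forcing the closure in the mixed norm $L^2_u L^\infty_\ub L^4(S)$ which defines $\mathcal O_{i,4}[\chibh,\omegab]$. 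The initial data contribution is handled by Sobolev embedding on $S_{u,0}$ together with Proposition \ref{gamma}:
$$\|\nabla^i\psi_{\Hb}\|_{L^2_u L^4(S_{u,0})} \leq C\sum_{j\leq i+1}\|\nabla^j\psi_{\Hb}\|_{L^2(\Hb_0)} \leq C\mathcal O_0, \qquad i\leq 1.$$

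For the inhomogeneous piece, Minkowski's inequality yields
$$\Big\|\int_0^\ub \|F_i\|_{L^4(S_{u,\ub'})}\,d\ub'\Big\|_{L^2_u L^\infty_\ub} \leq \int_0^\epsilon \|F_i\|_{L^2_u L^4(S)}\,d\ub',$$
where $F_i$ denotes the commuted right-hand side. All terms in $F_i$ \emph{not} containing $\psi_{\Hb}$ are estimated verbatim as in Proposition \ref{L4Ricci1}: the linear curvature pieces $\nabla^{\leq 1}(\rhoc,\beta)$ and $\nabla^{\leq 2}\etab$ are bounded using $\mathcal R$, $\tilde{\mathcal O}_{3,2}$ and $\mathcal O_{2,2}$ with an extra factor $\epsilon^{1/2}$ from the $\ub$-integration; pure $\psi$-quadratic contributions are $O(\Delta_0^2\epsilon)$; and terms carrying one $\psi_H$ factor give $O(\Delta_0^2\epsilon^{1/2})$ after Cauchy--Schwarz in $\ub$, using $\|\psi_H\|_{L^2_\ub L^\infty_u L^\infty(S)}\leq \Delta_0$ from (\ref{BA1}).

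The main obstacle, and the only genuinely new feature compared to Proposition \ref{L4Ricci1}, is the self-interaction term $\psi_{\Hb}(\trch+\psi_H)$ together with the commutator contributions of the form $\nabla^{\leq 1}\chi\cdot\nabla^{\leq 1}\psi_{\Hb}$, which reintroduce $\psi_{\Hb}$ on the right-hand side. The $\psi\,\psi_{\Hb}$ subclass (including $\trch\,\psi_{\Hb}$ and its commuted versions) produces a term
$$\Delta_0\int_0^\ub \|\nabla^{\leq 1}\psi_{\Hb}\|_{L^2_u L^4(S)}(\ub')\,d\ub',$$
which is closed by Gronwall's inequality in $\ub$ on $[0,\epsilon]$ with a Gronwall constant $e^{C\Delta_0\epsilon}=O(1)$. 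The $\psi_H\,\psi_{\Hb}$ subclass is controlled by Cauchy--Schwarz in $\ub$:
$$\|\psi_H\psi_{\Hb}\|_{L^2_u L^1_\ub L^4(S)} \leq \|\psi_H\|_{L^\infty_u L^2_\ub L^\infty(S)}\cdot \|\psi_{\Hb}\|_{L^2_u L^2_\ub L^4(S)} \leq C\Delta_0\epsilon^{1/2}\|\psi_{\Hb}\|_{L^2_u L^\infty_\ub L^4(S)},$$
using $\|\psi_H\|_{L^\infty_u L^2_\ub L^\infty(S)}\leq \|\psi_H\|_{L^2_\ub L^\infty_u L^\infty(S)}\leq \Delta_0$, and this is absorbed into the left-hand side once $\epsilon$ is small in terms of $\Delta_0$. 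Assembling everything gives
$$\sum_{i\leq 1}\mathcal O_{i,4}[\chibh,\omegab]\leq C\mathcal O_0 + C(1+\Delta_0)^2\epsilon^{1/2}\bigl(\mathcal R + \tilde{\mathcal O}_{3,2} + \mathcal O_{2,2} + \Delta_0\bigr),$$
from which the proposition follows by choosing $\epsilon$ small in terms of $\mathcal R,\tilde{\mathcal O}_{3,2},\mathcal O_{2,2},\Delta_0$; crucially, the final constant depends only on $\mathcal O_0$ and is independent of $\Delta_0$.
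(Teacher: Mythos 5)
Your proposal follows the same route as the paper: commute the $\nabla_4\psi_{\Hb}$ equation, use Proposition~\ref{transport} with $p=4$, handle the initial data by Sobolev embedding on $S_{u,0}$, and observe that the terms not containing $\psi_{\Hb}$ are identical to those already treated in Proposition~\ref{L4Ricci1} (carrying a factor $\epsilon^{1/2}$), while the genuinely new terms are the self-interactions $\psi\,\psi_{\Hb}$ and $\psi_H\,\psi_{\Hb}$. Your treatment of $\psi_H\,\psi_{\Hb}$ by Cauchy--Schwarz in $\ub$ and absorption into $\|\psi_{\Hb}\|_{L^2_uL^\infty_\ub L^4(S)}$ is exactly what the paper does.

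There is, however, a norm-ordering gap in the $\psi\,\psi_{\Hb}$ step. You apply Gronwall in $\ub$ to the quantity $G(\ub)=\|\nabla^{\leq 1}\psi_{\Hb}\|_{L^2_uL^4(S)}(\ub)$, which controls $\sup_\ub G(\ub)=\|\nabla^{\leq 1}\psi_{\Hb}\|_{L^\infty_\ub L^2_uL^4(S)}$ --- the \emph{weaker} mixed norm --- whereas $\mathcal O_{i,4}[\chibh,\omegab]$ is the \emph{stronger} $L^2_uL^\infty_\ub L^4(S)$ (recall $\|\cdot\|_{L^\infty_\ub L^2_u}\leq\|\cdot\|_{L^2_uL^\infty_\ub}$, with no reverse inequality). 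Mixing a Gronwall closure in the weak norm for one term with absorption by the strong norm for the other inside the same inequality is not consistent as written. The paper avoids this entirely: it fixes $u$, bounds $\|F\|_{L^1_\ub L^4(S)}$ by a quantity that depends on $u$ only through $\|\nabla^{\leq 1}\psi_{\Hb}\|_{L^\infty_\ub L^4(S)}(u)$ with a prefactor $\epsilon^{1/2}$, then takes $\sup_\ub$ on the left (the right-hand side is already $\ub$-independent), then $L^2_u$, and absorbs. Your argument closes the same way if you simply drop the Gronwall and absorb: after Minkowski you have $\int_0^\epsilon \Delta_0\,\|\nabla^{\leq 1}\psi_{\Hb}\|_{L^2_uL^4(S)}(\ub')\,d\ub' \leq \Delta_0\,\epsilon\,\|\nabla^{\leq 1}\psi_{\Hb}\|_{L^2_uL^\infty_\ub L^4(S)}$, which is absorbed for $\epsilon$ small depending on $\Delta_0$. (Also a small terminological slip: Proposition~\ref{L4Ricci1} treats $\trchb,\eta\in\psi$, not $\psi_H$; the $\psi_H$ case is Proposition~\ref{L4Ricci5}.)
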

\begin{proof}
Using the null structure equations, for each $\psi_{\Hb}\in\{\chibh,\omegab\}$, we have an equation of the type
$$\nabla_{4}\psi_{\Hb}=\rhoc+\nabla\etab+(\psi+\psi_H)(\psi+\psi_{\Hb}).$$
We also use the null structure equations commuted with angular derivatives:
$$\nabla_{4}\nabla^i\psi_{\Hb}=\sum_{i_1+i_2+i_3=i}\nabla^{i_1}\psi^{i_2}\nabla^{i_3}(\rhoc+\nabla\etab)+\sum_{i_1+i_2+i_3+i_4=i}\nabla^{i_1}\psi^{i_2}\nabla^{i_3}(\psi+\psi_H)\nabla^{i_4}(\psi+\psi_{\Hb}).$$
From the proof of Proposition \ref{L4Ricci1}, we have
\begin{equation*}
\begin{split}
||\sum_{i_1+i_2+i_3\leq 1}\nabla^{i_1}\psi^{i_2}\nabla^{i_3}\rhoc||_{L^\infty_{u}L^1_{\ub}L^4(S)} \leq &C(1+\Delta_0)\epsilon^{\frac{1}{2}}\mathcal R,
\end{split}
\end{equation*}
and
\begin{equation*}
\begin{split}
||\sum_{i_1+i_2\leq 1}\psi^{i_1}\nabla^{i_2+1}\etab||_{L^\infty_{u}L^1_{\ub}L^4(S)} 
\leq &C(1+\Delta_0)\epsilon^{\frac{1}{2}}(\Delta_0+\mathcal O_{2,2}+\tilde{\mathcal O}_{3,2}),
\end{split}
\end{equation*}
and
\begin{equation*}
\begin{split}
||\sum_{i_1+i_2+i_3\leq 1}\psi^{i_1}\nabla^{i_2}\psi\nabla^{i_3}\psi||_{L^\infty_{u}L^1_{\ub}L^4(S)} 
\leq &C\Delta_0(1+\Delta_0)^2\epsilon,
\end{split}
\end{equation*}
and
\begin{equation*}
\begin{split}
||\sum_{i_1+i_2+i_3\leq 1}\psi^{i_1}\nabla^{i_2}\psi\nabla^{i_3}\psi_{H}||_{L^\infty_{u}L^1_{\ub}L^4(S)} 
\leq &C\Delta_0(1+\Delta_0)^{2}\epsilon^{\frac{1}{2}}.
\end{split}
\end{equation*}
The two new terms that did not appear in the proof of Proposition \ref{L4Ricci1} are
$$\sum_{i_1+i_2+i_3+i_4\leq 1}\nabla^{i_1}\psi^{i_2}\nabla^{i_3}\psi_H\nabla^{i_4}\psi_{\Hb},\quad\mbox{and}\quad \sum_{i_1+i_2+i_3+i_4\leq 1}\nabla^{i_1}\psi^{i_2}\nabla^{i_3}\psi\nabla^{i_4}\psi_{\Hb}.$$
Both of these terms cannot be controlled in the $L^\infty_u L^1_{\ub}L^4(S)$ norm. Instead, for each fixed $u$, we bound the first term in the $L^1_{\ub}L^4(S)$ norm:
\begin{equation*}
\begin{split}
&||\sum_{i_1+i_2+i_3+i_4\leq 1}\nabla^{i_1}\psi^{i_2}\nabla^{i_3}\psi_H\nabla^{i_4}\psi_{\Hb}||_{L^1_{\ub}L^4(S)} \\
\leq& C(1+||\psi||_{L^\infty_{\ub}L^\infty(S)})(||\psi_{\Hb}||_{L^\infty_{\ub}L^\infty(S)})(\sum_{i\leq 1}||\nabla^{i}\psi_{H}||_{L^1_{\ub}L^4(S)}) \\
&+C(1+||\psi||_{L^\infty_{\ub}L^\infty(S)})(\sum_{i\leq 1}||\nabla^{i}\psi_{\Hb}||_{L^\infty_{\ub}L^4(S)})(||\psi_{H}||_{L^1_{\ub}L^\infty(S)}) \\
\leq &C(1+\Delta_0)^{2}\epsilon^\frac{1}{2}(||\psi_{H}||_{L^2_{\ub}L^\infty(S)}+\sum_{i\leq 1}||\nabla^{i}\psi_{H}||_{L^2_{\ub}L^4(S)})( \sum_{i\leq 1}||\nabla^{i}\psi_{\Hb}||_{L^\infty_{\ub}L^4(S)})\\
\leq &C(1+\Delta_0)^{2}\epsilon^\frac{1}{2}(\sum_{i\leq 1}||\nabla^{i}\psi_{\Hb}||_{L^2_{\ub}L^4(S)}+ \sum_{i\leq 1}||\nabla^{i}\psi_{\Hb}||_{L^\infty_{\ub}L^4(S)}).\\
&\quad\mbox{by Sobolev embedding in Proposition }\ref{L4} \\
\leq &C(1+\Delta_0)^{2}\epsilon^{\frac{1}{2}}(\Delta_0+ \sum_{i\leq 1}||\nabla^{i}\psi_{\Hb}||_{L^\infty_{\ub}L^4(S)}).
\end{split}
\end{equation*}
According to the definition of the $\mathcal O_{i,4}$ norm, $\psi$ obeys stronger estimates than $\psi_H$. Therefore, we can control the remaining term in the same manner:
$$||\sum_{i_1+i_2+i_3+i_4\leq 1}\nabla^{i_1}\psi^{i_2}\nabla^{i_3}\psi\nabla^{i_4}\psi_{\Hb}||_{L^1_{\ub}L^4(S)} \leq C(1+\Delta_0)^{2}\epsilon^{\frac{1}{2}}(\Delta_0+ \sum_{i\leq 1}||\nabla^{i}\psi_{\Hb}||_{L^\infty_{\ub}L^4(S)}).$$
Therefore, by Proposition \ref{transport}, for all $u\in [0,u_*]$,
\begin{equation*}
\begin{split}
&\sum_{i\leq 1}||\nabla^i\psi_{\Hb}||_{L^4(S_{u,\ub})}\\
\leq& C(\sum_{i\leq 1}||\nabla^i\psi_{\Hb}||_{L^4(S_{u,0})}+(1+\Delta_0)^{2}\epsilon^{\frac{1}{2}}(\mathcal R+\mathcal O_{2,2}+\tilde{\mathcal O}_{3,2}+\Delta_0+ \sum_{i\leq 1}||\nabla^{i}\psi_{\Hb}||_{L^\infty_{\ub}L^4(S)})).
\end{split}
\end{equation*}
Clearly the right hand side is independent of $\ub$. Thus we can take supremum in $\ub$ on the left hand side. Then, we take the $L^2_u$ norm to obtain
\begin{equation*}
\begin{split}
&\sum_{i\leq 1}||\nabla^i\psi_{\Hb}||_{L^2_{u}L^\infty_{\ub}L^4(S)} \\
\leq & C(\sum_{i\leq 1}||\nabla^i\psi_{\Hb}||_{L^2_{u}L^4(S_{u,0})}+(1+\Delta_0)^{2}\epsilon^{\frac{1}{2}}(\mathcal R+\tilde{\mathcal O}_{3,2}+\Delta_0+\sum_{i\leq 1}||\nabla^{i}\psi_{\Hb}||_{L^2_{u}L^\infty_{\ub}L^4(S)})) \\
\leq &C(\mathcal O_0+(1+\Delta_0)^{2}\epsilon^{\frac{1}{2}}(\mathcal R+\mathcal O_{2,2}+\tilde{\mathcal O}_{3,2}+\Delta_0))
\end{split}
\end{equation*}
since by \eqref{BA1} $\sum_{i\leq 1}||\nabla^{i}\psi_{\Hb}||_{L^2_{u}L^\infty_{\ub}L^4(S)}$ is controlled by $\Delta_0$. The left hand side is precisely what we need to control for the $\displaystyle\sum_{i\leq 1}\mathcal O_{i,4}[\chibh,\omegab]$ norm. Thus
$$\sum_{i\leq 1}\mathcal O_{i,4}[\chibh,\omegab]\leq C(\mathcal O_0+(1+\Delta_0)^{2}\epsilon^{\frac{1}{2}}(\mathcal R+\mathcal O_{2,2}+\tilde{\mathcal O}_{3,2}+\Delta_0)).$$
We conclude the proof by choosing $\epsilon$ to be sufficiently small.
\end{proof}

We now turn to the Ricci coefficients $\etab$, $\chih$, $\omega$. To estimate these Ricci coefficients, we use the $\nab_3$ equations. Unlike in the proofs of Propositions \ref{L4Ricci1} and \ref{L4Ricci3} where a smallness constant can be gained from the shortness of the $\ub$ interval, when integrating the $\nab_3$ equation, the $u$ interval is arbitrarily long. Instead, we show that the inhomogeneous terms are at worst linear in the unknown and the desired bounds can be obtained via Gronwall's inequality. Notice that $\chih$ satisfies a $\nab_4$ equation with $\alpha$ as a source term. We avoid this equation because $\alpha$ is singular. We begin with the estimates for $\etab$. As we will see below, we cannot directly estimate the $L^4(S)$ norms of $\etab$ and its derivatives, but have to first estimate the $L^\infty(S)$ norm of $\etab$:
\begin{proposition}\label{LinftyRicci1}
Assume
$$\mathcal R <\infty,\quad\tilde{\mathcal O}_{3,2}<\infty,\quad\mathcal O_{2,2}<\infty.$$
Then there exists $\epsilon_0=\epsilon_0(\mathcal R, \tilde{\mathcal O}_{3,2}, \mathcal O_{2,2},\Delta_0)$ such that whenever $\epsilon\leq \epsilon_0$, 
\[
 \mathcal O_{0,\infty}[\etab]\leq C(\mathcal O_0,\mathcal R(S)[\betab]).
\]
In particular, this estimate is independent of $\Delta_0$.
\end{proposition}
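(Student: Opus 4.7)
The plan is to estimate $\etab$ in $L^\infty(S_{u,\ub})$ by integrating its $\nabla_3$ null structure equation along the incoming null generators from $\Hb_0$. From \eqref{null.str2} one has the schematic form
\[
\nabla_3 \etab = \betab + (\trchb + \psi_{\Hb})(\eta+\etab),
\]
where crucially $\psi_H$ does not appear on the right-hand side. Integrating this along $e_3$ via Proposition \ref{transportinfty} gives, for each fixed $\ub$,
\[
\|\etab\|_{L^\infty(S_{u,\ub})} \leq C\|\etab\|_{L^\infty(S_{0,\ub})} + C\int_0^{u} \|\betab\|_{L^\infty(S_{u',\ub})}\,du' + C\int_0^{u} \|(\trchb+\psi_{\Hb})(\eta+\etab)\|_{L^\infty(S_{u',\ub})}\,du'.
\]

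The first (initial data) term is bounded by $C(\mathcal O_0)$ through Sobolev embedding on $S_{0,\ub}$, since $\mathcal O_0$ controls $\etab$ in $L^\infty_{\ub} H^3(S_{0,\ub})$. For $\betab$, the $\mathcal R(S)$ norm gives $\nab^{\leq 1}\betab \in L^2_u L^\infty_{\ub} L^3(S)$, so by the Sobolev embedding $W^{1,3}(S) \hookrightarrow L^\infty(S)$ and Cauchy--Schwarz in $u$,
\[
\int_0^u \|\betab\|_{L^\infty(S_{u',\ub})}\,du' \leq I^{1/2}\,\|\betab\|_{L^2_u L^\infty_{\ub} L^\infty(S)} \leq C(I)\,\mathcal R(S)[\betab].
\]
The inhomogeneous source term $(\trchb+\psi_{\Hb})\eta$ is controlled using Propositions \ref{L4Ricci1} and \ref{L4Ricci3}, which (together with Sobolev embedding $W^{1,4}(S)\hookrightarrow L^\infty(S)$) yield $\|\trchb\|_{L^\infty_u L^\infty_{\ub} L^\infty(S)}, \|\eta\|_{L^\infty_u L^\infty_{\ub} L^\infty(S)} \leq C(\mathcal O_0)$ and $\|\psi_{\Hb}\|_{L^2_u L^\infty_{\ub} L^\infty(S)} \leq C(\mathcal O_0)$, all \emph{independently of} $\Delta_0$.

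The remaining terms $\trchb\cdot\etab$ and $\psi_{\Hb}\cdot\etab$ are linear in the unknown $\etab$, so after bounding the coefficient in $L^1_u$ (using $\|\trchb\|_{L^\infty}\in L^\infty_u$ and $\|\psi_{\Hb}\|_{L^\infty}\in L^2_u \hookrightarrow L^1_u$ on the finite interval $[0,I]$), Gronwall's inequality closes the estimate and yields
\[
\|\etab\|_{L^\infty(S_{u,\ub})} \leq C(\mathcal O_0,\mathcal R(S)[\betab]),
\]
uniformly in $(u,\ub) \in [0,I]\times[0,\epsilon]$. The main structural point, and the only subtlety in the argument, is that the $u$-interval can be arbitrarily long, so one cannot gain a smallness factor in $\epsilon$ from the integration; instead, one must exploit (i) the absence of $\psi_H$ from the $\nabla_3\etab$ equation, which would otherwise prevent independence from $\Delta_0$, and (ii) the linear-in-$\etab$ structure of the inhomogeneity to absorb the borderline terms via Gronwall. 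The $L^\infty(S)$ estimate (rather than $L^4(S)$) is the natural target since the coefficient bounds for Gronwall only live in $L^\infty(S)$.
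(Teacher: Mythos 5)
Your argument matches the paper's own proof: you integrate the $\nabla_3\etab$ equation (in which $\psi_H$ is absent), bound $\betab$ via the $\mathcal R(S)[\betab]$ norm and the Sobolev embedding $W^{1,3}(S)\hookrightarrow L^\infty(S)$, control $(\trchb+\psi_{\Hb})\eta$ by the already-established $\Delta_0$-independent bounds on $\trchb,\psi_{\Hb},\eta$, and close the linear-in-$\etab$ term by Gronwall over the long $u$-interval. This is essentially the same proof.
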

\begin{proof}
$\etab$ satisfies a $\nabla_3$ equation. As remarked above, integrating in the $u$ direction does not give a small constant as in integrating in the $\ub$ direction. We therefore need to exploit the structure of the equation. We have, schematically
$$\nabla_3\etab=(\psi_{\Hb}+\trchb)(\eta+\etab)+\betab.$$
We notice that the quadratic term $\etab^2$ does not appear. Moreover, $\trch$, $\chih$ and $\omega$ do not enter the equation. In other words, all Ricci coefficients except $\etab$ in this equation have been estimated in the previous propositions by $C(\mathcal O_0)$. We now bound each of the terms. Firstly, the term with curvature can be controlled using H\"older's inequality and the Sobolev embedding theorem in Proposition \ref{Linfty} by $\mathcal R(S)$:
\begin{equation*}
\begin{split}
||\betab||_{L^\infty_{\ub}L^1_{u}L^\infty(S)}\leq C\sum_{i\leq 1}||\nabla^i\betab||_{L^\infty_{\ub}L^1_{u}L^3(S)} \leq CI^{\frac 12}R(S)[\betab].
\end{split}
\end{equation*}
Here, and below, we will simplify the notation by absorbing powers of $I$ into the constant $C$. We therefore simply write
\begin{equation*}
\begin{split}
||\betab||_{L^\infty_{\ub}L^1_{u}L^\infty(S)}\leq CI^{\frac 12}R(S)[\betab].
\end{split}
\end{equation*}
Then, we estimate the terms quadratic in the Ricci coefficients, which do not involve $\etab$:
\begin{equation*}
\begin{split}
&||(\psi_{\Hb}+\trchb)\eta||_{L^\infty_{\ub}L^1_{u}L^\infty(S)} \\
\leq& C ||\psi_{\Hb}+\trchb||_{L^\infty_{\ub}L^2_uL^\infty(S)}||\eta||_{L^\infty_{\ub}L^\infty_uL^\infty(S)}\\
\leq&C(\mathcal O_0),
\end{split}
\end{equation*}
by Propositions \ref{L4Ricci1} and \ref{L4Ricci3} and the Sobolev embedding theorem in Proposition \ref{Linfty}.
Finally, we estimate the term $(\psi_{\Hb}+\trchb)\etab$. Fix $\ub$. Then
\begin{equation*}
\begin{split}
&||(\psi_{\Hb}+\trchb)\etab||_{L^1_{u}L^\infty(S)} \\
\leq& C\int_0^u ||\psi_{\Hb}+\trchb||_{L^\infty(S_{u',\ub})}||\etab||_{L^\infty(S_{u',\ub})} du'.\\
\end{split}
\end{equation*}
Therefore, by Proposition \ref{transport}, we have, for every $\ub$,
$$||\etab||_{L^\infty_uL^\infty(S)}\leq C(\mathcal O_0)+ CR(S)[\betab]+C\int_0^u ||\psi_{\Hb}+\trchb||_{L^\infty(S_{u',\ub})}||\etab||_{L^\infty(S_{u',\ub})} du'.$$
 By Gronwall's inequality, we have, for every $\ub$,
$$||\etab||_{L^\infty_uL^\infty(S)}\leq C(\mathcal O_0,\mathcal R(S)[\betab])\exp(C\int_0^I ||\psi_{\Hb}+\trchb||_{L^\infty(S_{u',\ub})} du').$$
Using the Cauchy-Schwarz inequality, the Sobolev embedding theorem in Proposition \ref{Linfty}, as well as the estimates for the Ricci coefficients $\trchb$ and $\psi_{\Hb}$ derived in Propositions \ref{L4Ricci1} and \ref{L4Ricci3}, we have
$$||\etab||_{L^\infty_uL^\infty(S)}\leq C(\mathcal O_0,\mathcal R(S)[\betab]),$$
as desired.
\end{proof}
Using the $L^\infty$ estimate of $\etab$, we now control $\nabla\etab$ in $L^2$:
\begin{proposition}\label{L4Ricci40}
Assume
$$\mathcal R <\infty,\quad\tilde{\mathcal O}_{3,2}<\infty,\quad\mathcal O_{2,2}<\infty.$$
Then there exists $\epsilon_0=\epsilon_0(\mathcal R, \tilde{\mathcal O}_{3,2}, \mathcal O_{2,2},\Delta_0)$ such that whenever $\epsilon\leq \epsilon_0$, 
\[
 \mathcal O_{1,2}[\etab]\leq C(\mathcal O_0,\mathcal R(S)[\betab]).
\]
In particular, this estimate is independent of $\Delta_0$.
\end{proposition}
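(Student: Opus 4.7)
\textbf{Proof plan for Proposition \ref{L4Ricci40}.}

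The goal is to control $\nab\etab$ in $L^\infty_u L^\infty_{\ub} L^2(S)$ by a constant that depends only on $\mathcal O_0$ and $\mathcal R(S)[\betab]$. Since $\etab$ satisfies a $\nab_3$-transport equation integrated over the long $u$-interval $[0,I]$, the same caution as in Proposition \ref{LinftyRicci1} is required: no factor of $\epsilon$ will come from this integration, so the strategy must be to produce an inequality of the Gronwall type with coefficients already estimated in $\mathcal O_0$ and $\mathcal R(S)[\betab]$.

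First, I commute the schematic equation
\[
\nab_3\etab = \betab + (\psi_{\Hb}+\trchb)(\eta+\etab)
\]
with one angular derivative. Using Proposition \ref{commuteeqn} (where the commutator $[\nab_3,\nab]$ produces factors of $\chib$, $\eta+\etab$ and $\betab$) and the fact that $\chib = \chibh + \frac12\trchb\gamma$, I obtain schematically
\[
\nab_3\nab\etab \;=\; \nab\betab \;+\; \nab(\psi_{\Hb}+\trchb)\cdot(\eta+\etab) \;+\; (\psi_{\Hb}+\trchb+\eta+\etab)\cdot\nab\etab \;+\; \nab\eta\cdot(\psi_{\Hb}+\trchb) \;+\; (\text{cubic Ricci}) \;+\; \betab\cdot\etab.
\]
Crucially, $\chih$, $\omega$ and the singular components $\alpha,\alphab$ never appear, and the only genuinely top-order curvature contribution is $\nab\betab$.

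Next, I apply the $L^2$ transport estimate of Proposition \ref{transport} in the $\nab_3$ direction at a fixed $\ub$ to get
\[
\|\nab\etab\|_{L^2(S_{u,\ub})} \;\leq\; \|\nab\etab\|_{L^2(S_{0,\ub})} \;+\; C\int_0^u \|\nab_3\nab\etab\|_{L^2(S_{u',\ub})}\,du'.
\]
The initial term is bounded by $\mathcal O_0$. The $\nab\betab$ contribution is controlled by $\mathcal R(S)[\betab]$ using Propositions \ref{area} and \ref{eqnorm} (bounded area, so $L^3(S)\hookrightarrow L^2(S)$):
\[
\int_0^I \|\nab\betab\|_{L^2(S_{u',\ub})}\,du' \;\leq\; C\,I^{1/2}\|\nab\betab\|_{L^2_u L^\infty_{\ub} L^3(S)} \;\leq\; C\,I^{1/2}\,\mathcal R(S)[\betab].
\]
The bilinear Ricci terms not containing $\nab\etab$ are all of the schematic form (derivative of a Ricci coefficient)$\times$(Ricci coefficient in $L^\infty$). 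I place $\eta,\etab$ in $L^\infty_u L^\infty_{\ub} L^\infty(S)$ using Propositions \ref{L4Ricci1} and \ref{LinftyRicci1}, and the factors $\nab\psi_{\Hb}$, $\nab\chibh$, $\nab\trchb$, $\nab\eta$ in $L^2_u L^\infty_{\ub} L^2(S)$ using Propositions \ref{L4Ricci1}, \ref{L4Ricci3} together with the Sobolev embedding of Proposition \ref{L4}. Cauchy--Schwarz over $[0,I]$ in $u$ then gives a bound of $C(\mathcal O_0,\mathcal R(S)[\betab])$. The cubic terms and the $\betab\cdot\etab$ term are lower order and handled analogously (the latter again via $\mathcal R(S)[\betab]$ and the $L^\infty$ bound on $\etab$).

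Finally, the terms which are linear in $\nab\etab$ have coefficients from the set $\{\psi_{\Hb},\trchb,\eta,\etab,\chibh\}$, all of which are controlled in $L^1_u L^\infty_{\ub} L^\infty(S)$: the $\trch$-type and $(\eta,\etab)$ pieces are uniformly bounded, while $\psi_{\Hb}$ and $\chibh$ are in $L^2_u L^\infty_{\ub} L^\infty(S)\subset L^1_u L^\infty_{\ub} L^\infty(S)$ on $[0,I]$ (Proposition \ref{L4Ricci3} plus Sobolev embedding, using that $\mathcal O_{2,2}$ is controlled under our bootstrap). Applying Gronwall's inequality at each fixed $\ub$ then yields
\[
\|\nab\etab\|_{L^\infty_u L^2(S_{u,\ub})} \;\leq\; C(\mathcal O_0,\mathcal R(S)[\betab])
\]
with $C$ independent of $\Delta_0$ once $\epsilon$ is chosen small (depending on $\mathcal R,\tilde{\mathcal O}_{3,2},\mathcal O_{2,2},\Delta_0$). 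Taking $\sup_\ub$ gives the desired estimate. The main obstacle is precisely the absence of an $\epsilon$ small factor from the $u$-integration; this is overcome by the structural observation that (i) no self-interaction $(\nab\etab)(\nab\etab)$ appears, (ii) $\chih,\omega$ are absent from the commuted equation, so every factor multiplying $\nab\etab$ has been estimated previously in a norm integrable in $u$, and (iii) the top-order curvature input is exactly $\nab\betab$, which $\mathcal R(S)[\betab]$ was designed to control.
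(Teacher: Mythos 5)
Your proposal is correct and follows essentially the same route as the paper: commute the $\nab_3\etab$ equation once with $\nab$, observe that only $\betab$, $\psi_{\Hb}$, $\trchb$, $\eta$, $\etab$ appear (never $\chih$, $\omega$, $\alpha$, $\alphab$) and that $\etab$ never enters quadratically with two derivatives, put the $\nab\betab$ contribution in $\mathcal R(S)[\betab]$ and the previously estimated Ricci coefficients in their mixed norms, and then close the linear-in-$\nab\etab$ term by Gronwall at fixed $\ub$. One very minor inaccuracy: the $L^2_u L^\infty_{\ub}L^\infty(S)$ control on $\psi_{\Hb}$ and $\chibh$ comes directly from Proposition \ref{L4Ricci3} and the Sobolev embedding of Proposition \ref{Linfty} — it does not additionally rely on the $\mathcal O_{2,2}$ bootstrap as you suggest in the final step — but this does not affect the validity of the argument.
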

\begin{proof}
Recall that we have, schematically,
$$\nabla_3\etab=(\psi_{\Hb}+\trchb)(\eta+\etab)+\betab.$$
Commuting with angular derivatives, we get
\begin{equation}\label{etabcommuted}
\nabla_{3}\nabla\etab=\sum_{i_1+i_2=1}(\eta+\etab)^{i_1}\nabla^{i_2}\betab+\sum_{i_1+i_2+i_3=1}(\eta+\etab)^{i_1}\nabla^{i_2}(\eta+\etab)\nabla^{i_3}(\psi_{\Hb}+\trchb).
\end{equation}
We notice that in (\ref{etabcommuted}), when two $\etab$'s appear in a term, neither of them has a derivative. Fix $\ub$. We now estimate each of the terms. Firstly, the term with curvature:
\begin{equation*}
\begin{split}
&||\sum_{i_1+i_2\leq 1}(\eta+\etab)^{i_1}\nabla^{i_2}\betab||_{L^1_{u}L^2(S)} \\
\leq& C(1+||(\eta,\etab)||_{L^\infty_{u}L^\infty(S)})(\sum_{i\leq 1}||\nabla^{i}\betab||_{L^1_{u}L^2(S)})\\
\leq &C(\mathcal O_0,\mathcal R(S)[\betab])\sum_{i\leq 1}||\nabla^{i}\betab||_{L^2_{u}L^2(S)} \\
\leq &C(\mathcal O_0,\mathcal R(S)[\betab]),
\end{split}
\end{equation*}
since $\nabla\betab$ can be controlled in $L^2(\Hb_{\ub})$ by $\mathcal R(S)[\betab]$.
We then estimate the nonlinear term in the Ricci coefficients:
\begin{equation*}
\begin{split}
&||\sum_{i_1+i_2+i_3\leq 1}(\eta+\etab)^{i_1}\nabla^{i_2}(\eta+\etab)\nabla^{i_3}(\psi_{\Hb}+\trchb)||_{L^1_{u}L^2(S)} \\
\leq& C(1+||(\eta,\etab)||_{L^\infty_{u}L^\infty(S)})^2(\sum_{i\leq 1}||\nabla^{i}(\psi_{\Hb},\trchb)||_{L^1_{u}L^2(S)})\\
&+C\int_0^u||\nabla(\eta,\etab)||_{L^2(S_{u',\ub})}||(\psi_{\Hb},\trchb)||_{L^\infty(S_{u',\ub})}du'\\
\leq &C(\mathcal O_0,\mathcal R(S)[\betab])+ C\int_0^u||\nabla\etab||_{L^2(S_{u',\ub})}||(\psi_{\Hb},\trchb)||_{L^\infty(S_{u',\ub})}du',
\end{split}
\end{equation*}
where the first term is bounded using Propositions \ref{L4Ricci1}, \ref{L4Ricci3} and \ref{LinftyRicci1}. Therefore, by Proposition \ref{transport}, we have, for every $\ub$,
\begin{equation*}
\begin{split}
&\sum_{i\leq 1}||\nabla^i\etab||_{L^\infty_u L^2(S_{u,\ub})}\\
\leq &C(\mathcal O_0,\mathcal R(S)[\betab])+ C\int_0^u||\nabla\etab||_{L^2(S_{u',\ub})}||(\psi_{\Hb},\trchb)||_{L^\infty(S_{u',\ub})}du'.
\end{split}
\end{equation*}
By Gronwall's inequality, we have
\begin{equation*}
\begin{split}
&\sum_{i\leq 1}||\nabla^i\etab||_{L^\infty_u L^2(S_{u,\ub})}\\
\leq &C(\mathcal O_0,\mathcal R(S)[\betab])\exp(\int_0^u||(\psi_{\Hb},\trchb)||_{L^\infty(S_{u',\ub})}du').
\end{split}
\end{equation*}
The right hand side satisfies the desired bound by Propositions \ref{L4Ricci1} and \ref{L4Ricci3}.
\end{proof}
Recall that by Proposition \ref{LinftyRicci1} we now have a bound on $\mathcal O_{0,\infty}[\etab]$ independent of $\Delta_0$. This allows us to prove the $\mathcal O_{1,4}[\etab]$ estimates. However, unlike the $L^\infty(S)$ estimates for $\etab$ and $L^2(S)$ estimates for $\nabla\etab$, the $L^4(S)$ control that we prove at this point for $\nabla\etab$ grows linearly in $\mathcal R$. This bound will be improved in the next subsection.
\begin{proposition}\label{L4Ricci4}
Assume
$$\mathcal R <\infty,\quad\tilde{\mathcal O}_{3,2}<\infty,\quad\mathcal O_{2,2}<\infty.$$
Then there exists $\epsilon_0=\epsilon_0(\mathcal R, \tilde{\mathcal O}_{3,2}, \mathcal O_{2,2},\Delta_0)$ such that whenever $\epsilon\leq \epsilon_0$, 
\[
 \mathcal O_{1,4}[\etab]\leq C(\mathcal O_0,\mathcal R(S))(1+\mathcal R).
\]
This estimate is linear in the $\mathcal R$ norm and is independent of $\Delta_0$.
\end{proposition}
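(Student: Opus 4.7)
The goal is to bound $\|\nab^i\etab\|_{L^\infty_uL^\infty_{\ub}L^4(S)}$ for $i\le 1$ using the transport equation in the $e_3$ direction. Because the $u$ interval has length $I$ (not necessarily small), we cannot simply absorb nonlinear terms via a factor of $\epsilon^{1/2}$; instead, we must isolate a linear-in-$\nab\etab$ source term and handle it by Gronwall, and absorb the curvature source into the linear $\mathcal R$ factor advertised in the statement. The zeroth-order bound $\|\etab\|_{L^\infty(S)}$ is already controlled by $C(\mathcal O_0,\mathcal R(S))$ via Proposition \ref{LinftyRicci1}, so it remains to estimate $\|\nab\etab\|_{L^\infty_uL^\infty_{\ub}L^4(S)}$.

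I would start from the commuted equation \eqref{etabcommuted}:
\begin{equation*}
\nabla_{3}\nabla\etab=\sum_{i_1+i_2=1}(\eta+\etab)^{i_1}\nabla^{i_2}\betab+\sum_{i_1+i_2+i_3=1}(\eta+\etab)^{i_1}\nabla^{i_2}(\eta+\etab)\nabla^{i_3}(\psi_{\Hb}+\trchb),
\end{equation*}
and apply Proposition \ref{transport} with $p=4$. The right-hand side is estimated in $L^1_uL^4(S)$ for each fixed $\ub$. For the curvature term, the key observation is that by Sobolev embedding (Proposition \ref{L4}), $\|\nab\betab\|_{L^4(S)}\leq C\sum_{i\leq 1}\|\nab^{i+1}\betab\|_{L^2(S)}$, and by Cauchy--Schwarz in $u$,
\begin{equation*}
\|\nab\betab\|_{L^1_uL^4(S)}\leq CI^{1/2}\sum_{i\leq 2}\|\nab^i\betab\|_{L^2_uL^2(S)}\leq CI^{1/2}\mathcal R.
\end{equation*}
Combined with the $L^\infty$ bound on $(\eta,\etab)$ from Propositions \ref{L4Ricci1} and \ref{LinftyRicci1}, this yields a contribution of $C(\mathcal O_0,\mathcal R(S))\cdot\mathcal R$, which is the source of the linear $\mathcal R$ dependence in the conclusion.

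The quadratic Ricci-coefficient terms split into two groups. Those not involving $\nab\etab$ (for example $\nab(\eta,\etab)\cdot\nab^{i_3}(\psi_{\Hb}+\trchb)$ with the factor $\nab(\eta,\etab)$ replaced by $\nab\eta$, or the $L^\infty\cdot\nab\psi_{\Hb}$ terms) are controlled in $L^1_uL^4(S)$ using Propositions \ref{L4Ricci1}, \ref{L4Ricci3}, \ref{LinftyRicci1}, \ref{L4Ricci40}, and Cauchy--Schwarz in $u$ (here the $L^2_uL^\infty_{\ub}L^4(S)$ bound on $\nab\psi_{\Hb}$ and the pointwise bound on $\trchb$ suffice, losing at most $I^{1/2}$ and no power of $\mathcal R$). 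The one term one cannot estimate absolutely is
\begin{equation*}
\int_0^u\|\nab\etab\|_{L^4(S_{u',\ub})}\|\psi_{\Hb}+\trchb\|_{L^\infty(S_{u',\ub})}\,du',
\end{equation*}
which is linear in the unknown $\|\nab\etab\|_{L^4(S)}$.

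Putting this together gives, for each fixed $\ub$,
\begin{equation*}
\|\nab\etab\|_{L^4(S_{u,\ub})}\leq C(\mathcal O_0,\mathcal R(S))(1+\mathcal R)+C\int_0^u\|\psi_{\Hb}+\trchb\|_{L^\infty(S_{u',\ub})}\|\nab\etab\|_{L^4(S_{u',\ub})}\,du'.
\end{equation*}
Since $\|\psi_{\Hb}+\trchb\|_{L^1_uL^\infty(S)}\leq CI^{1/2}\|\psi_{\Hb}+\trchb\|_{L^2_uL^\infty(S)}\leq C(\mathcal O_0)$ by Propositions \ref{L4Ricci1}, \ref{L4Ricci3} and Sobolev embedding (Proposition \ref{Linfty}), Gronwall's inequality closes the estimate and yields the bound $\mathcal O_{1,4}[\etab]\leq C(\mathcal O_0,\mathcal R(S))(1+\mathcal R)$ uniformly in $\ub$, hence in $L^\infty_{\ub}$ as required. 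The main conceptual point, and the reason this proposition is stated separately with an explicit $\mathcal R$ on the right, is precisely that one loses a factor of $\mathcal R$ when Sobolev-embedding $\nab\betab$ into $L^4(S)$; this loss cannot be recovered until the $L^2(S)$ curvature norm $\mathcal R(S)$ is controlled independently in the later step of the bootstrap.
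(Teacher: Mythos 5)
Your proof is correct and follows essentially the same route as the paper: you use the commuted $\nabla_3\nabla\etab$ equation, bound the curvature term in $L^1_uL^4(S)$ by $\mathcal R$ via Sobolev embedding and Cauchy--Schwarz in $u$, isolate the linear $\nab\etab$ contribution, and close with Gronwall using the $L^1_uL^\infty(S)$ bound on $(\psi_{\Hb},\trchb)$. Your closing remark about the $\mathcal R$ loss being temporary until $\mathcal R(S)$ is controlled matches the paper's bootstrap structure (which recovers the improved bound later in Proposition \ref{L4Ricci4b}).
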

\begin{proof}
Recall that we have,
$$\nabla_{3}\nabla\etab=\sum_{i_1+i_2=1}(\eta+\etab)^{i_1}\nabla^{i_2}\betab+\sum_{i_1+i_2+i_3=1}(\eta+\etab)^{i_1}\nabla^{i_2}(\eta+\etab)\nabla^{i_3}(\psi_{\Hb}+\trchb).$$
As in the proof of Proposition \ref{L4Ricci40}, we notice that in this equation, when two $\etab$'s appear in a term, neither of them has a derivative. Fix $\ub$. Now, we estimate each of the terms. Firstly, the term with curvature:
\begin{equation*}
\begin{split}
&||\sum_{i_1+i_2\leq 1}(\eta+\etab)^{i_1}\nabla^{i_2}\betab||_{L^1_{u}L^4(S)} \\
\leq& C(1+||(\eta,\etab)||_{L^\infty_{u}L^\infty(S)})(\sum_{i\leq 1}||\nabla^{i}\betab||_{L^1_{u}L^4(S)})\\
\leq &C(\mathcal O_0,\mathcal R(S))\sum_{i\leq 2}||\nabla^{i}\betab||_{L^2_{u}L^2(S)} \\
\leq &C(\mathcal O_0,\mathcal R(S))\mathcal R.
\end{split}
\end{equation*}
We then control the nonlinear term in the Ricci coefficients:
\begin{equation*}
\begin{split}
&||\sum_{i_1+i_2+i_3\leq 1}(\eta+\etab)^{i_1}\nabla^{i_2}(\eta+\etab)\nabla^{i_3}(\psi_{\Hb}+\trchb)||_{L^1_{u}L^4(S)} \\
\leq& C(1+||(\eta,\etab)||_{L^\infty_{u}L^\infty(S)})^2(\sum_{i\leq 1}||\nabla^{i}(\psi_{\Hb},\trchb)||_{L^1_{u}L^4(S)})\\
&+C\int_0^u||\nabla(\eta,\etab)||_{L^4(S_{u',\ub})}||(\psi_{\Hb},\trchb)||_{L^\infty(S_{u',\ub})}du'\\
\leq &C(\mathcal O_0,\mathcal R(S))+ C\int_0^u||\nabla\etab||_{L^4(S_{u',\ub})}||(\psi_{\Hb},\trchb)||_{L^\infty(S_{u',\ub})}du'.
\end{split}
\end{equation*}
Therefore, by Proposition \ref{transport}, we have, for every $\ub$,
\begin{equation*}
\begin{split}
&||\nabla\etab||_{L^\infty_u L^4(S)}\\
\leq &C(\mathcal O_0,\mathcal R(S))(1+\mathcal R)+ C\int_0^u||\nabla\etab||_{L^4(S_{u',\ub})}||(\psi_{\Hb},\trchb)||_{L^\infty(S_{u',\ub})}du'.
\end{split}
\end{equation*}
By Gronwall's inequality, we have
\begin{equation*}
\begin{split}
&||\nabla\etab||_{L^\infty_u L^4(S)}\\
\leq &C(\mathcal O_0,\mathcal R(S))(1+\mathcal R)\exp(\int_0^u||(\psi_{\Hb},\trchb)||_{L^\infty(S_{u',\ub})}du').
\end{split}
\end{equation*}
The right hand side satisfies the desired bound by Propositions \ref{L4Ricci1} and \ref{L4Ricci3}.
\end{proof}
We now estimate the $\displaystyle\sum_{i\leq 1}\mathcal O_{i,4}$ norm of $\psi_H$.
\begin{proposition}\label{L4Ricci5}
Assume
$$\mathcal R <\infty,\quad\tilde{\mathcal O}_{3,2}<\infty,\quad\mathcal O_{2,2}<\infty.$$
Then there exists $\epsilon_0=\epsilon_0(\mathcal R, \tilde{\mathcal O}_{3,2}, \mathcal O_{2,2},\Delta_0)$ such that whenever $\epsilon\leq \epsilon_0$, 
$$\sum_{i\leq 1}\mathcal O_{i,4}[\chih,\omega]\leq C(\mathcal O_0,\mathcal R(S)[\betab]).$$
In particular, this estimate is independent of $\Delta_0$.
\end{proposition}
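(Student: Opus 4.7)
My plan is to mirror the structure of Propositions \ref{LinftyRicci1}, \ref{L4Ricci40} and \ref{L4Ricci4}, which handled $\etab$ via its $\nab_3$ equation, and to avoid the $\nab_4$ equation for $\chih$ (which contains the singular source $\alpha$) and the $\nab_4$ equation for $\omega$ as well. For $\psi_H \in \{\chih,\omega\}$, the null structure equations from (\ref{null.str1}) and (\ref{null.str2}), combined with $\rho = \rhoc + \frac{1}{2}\chih\!\cdot\!\chibh$ (and the constraint $K = -\rhoc - \frac{1}{4}\trch\trchb$ if one wishes to substitute out $\rhoc$), give schematically
\[
\nab_3\psi_H = \rhoc + \nab\eta + \psi\psi + \psi\,\psi_H + \psi\,\psi_{\Hb} + \psi_H\,\psi_{\Hb}.
\]
Commuting once with $\nab$ via Proposition \ref{commuteeqn}, and substituting Codazzi for the $\betab$ that appears in the commutator, produces an equation for $\nab_3\nab\psi_H$ whose right-hand side is a sum of $\nab\rhoc$, $\nab^2\eta$, products $\psi^{i_1}\nab^{i_2}\psi\,\nab^{i_3}\psi_H$ and $\psi^{i_1}\nab^{i_2}\psi\,\nab^{i_3}\psi_{\Hb}$ with $i_1+i_2+i_3\le 1$, plus a term $\psi_{\Hb}\nab\psi_H$ coming from the commutator.

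For each fixed $\ub$ I apply the $e_3$ version of Proposition \ref{transport} in $L^4(S)$ to the system $(\psi_H,\nab\psi_H)$, obtaining a bound on $\sum_{i\le 1}\|\nab^i\psi_H\|_{L^\infty_u L^4(S)}|_\ub$ in terms of the initial data on $\Hb_0$ and the $L^1_u L^4(S)|_\ub$ norm of the right-hand side. Every term not containing a factor of $\psi_H$ or $\nab\psi_H$ is controlled by $C(\mathcal O_0, \mathcal R(S))$: $\nab^{\le 1}\rhoc$ by $\mathcal R(S)$ and Sobolev (or equivalently by $K$ and $\psi\psi$ via the Gauss equation, which is why the statement cites only $\mathcal R(S)[\betab]$); $\nab^{\le 2}\eta$ in $L^1_u L^4(S)$ by Propositions \ref{L4Ricci1} and \ref{L4Ricci40} together with Sobolev and $\mathcal O_{2,2}$; the $\psi\psi$ and $\psi\psi_{\Hb}$ quadratic terms (and their $\nab$-derivatives) by Propositions \ref{L4Ricci1}, \ref{L4Ricci3}, and \ref{LinftyRicci1}. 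The remaining terms are exactly linear in $\psi_H$ or $\nab\psi_H$; most importantly, no purely quadratic $\psi_H\psi_H$ or $\nab\psi_H\cdot\nab\psi_H$ term appears on the right-hand side, which is the structural miracle that allows closure.

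I then view the estimate as a Gronwall inequality in $u$, for each fixed $\ub$, of the form
\[
\sum_{i\le 1}\|\nab^i\psi_H(u,\ub)\|_{L^4(S)}\le A(\ub) + \int_0^u B(u',\ub)\,\sum_{i\le 1}\|\nab^i\psi_H(u',\ub)\|_{L^4(S)}\,du',
\]
where $A(\ub)$ gathers the $\psi_H$-free contributions and $B(u',\ub)$ is a sum of $L^\infty(S)$ norms of $\psi, \trchb$ and $\psi_{\Hb}$ (the latter bounded after Sobolev by $\sum_{i\le 1}\|\nab^i\psi_{\Hb}\|_{L^4(S)}$). By Cauchy--Schwarz, $\int_0^I B(u',\ub)\,du' \le I^{1/2}\|\,\psi_{\Hb},\psi,\trchb\,\|_{L^2_u L^\infty_{\ub}L^\infty(S)}$, which is finite and independent of $\ub$ by Propositions \ref{L4Ricci1} and \ref{L4Ricci3}. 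Gronwall then gives
$\sum_{i\le 1}\|\nab^i\psi_H\|_{L^\infty_u L^4(S)}|_\ub \le C(\mathcal O_0,\mathcal R(S)[\betab])\bigl(1 + \|\psi_H(0,\ub)\|_{L^4(S)}+\|\nab\psi_H(0,\ub)\|_{L^4(S)}\bigr)$. Taking the $L^2_\ub$ norm and using $\|\nab^{\le 1}\psi_H|_{H_0}\|_{L^2_{\ub}L^4(S)}\lesssim \mathcal O_0$ (by Sobolev and the initial hypothesis) concludes the proof.

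The main obstacle I anticipate is the pair of terms $\psi_{\Hb}\nab\psi_H$ (from commutator) and $\nab\psi_{\Hb}\cdot\psi_H$, because $\psi_{\Hb}$ is only controlled in mixed norms $L^2_u L^\infty_{\ub}L^4(S)$ and one cannot put it in $L^\infty(S)$ without using $\nab\psi_{\Hb}$ through Sobolev. The point is that even after Sobolev, the $u$-integral of $\|\psi_{\Hb}\|_{L^\infty(S)}$ only needs to be finite (not small), and it is so uniformly in $\ub$, so Gronwall closes. Routing the argument through the $L^\infty_u$-then-$L^2_\ub$ order (as opposed to the reverse) is essential, since the initial data for $\psi_H$ are only in $L^2_\ub L^4(S)$ on $H_0$.
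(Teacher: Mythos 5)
Your overall strategy is the same as the paper's: avoid the $\nab_4\chih$ equation because it sources $\alpha$, use instead the $\nab_3\psi_H$ equation, commute once with $\nab$, apply the transport estimate in $L^4(S)$ for each fixed $\ub$, do a Gronwall argument in $u$, and only then take $L^2_{\ub}$. Your observation that the source is at most linear in $(\psi_H,\nab\psi_H)$ — with no $\psi_H\psi_H$ term — is indeed the structural point, and your ordering ($L^\infty_u$ first, $L^2_{\ub}$ last) is correct and essential.

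However, your accounting of how the final constant ends up being $C(\mathcal O_0,\mathcal R(S)[\betab])$ contains a real error. You claim the $\nab^{\le 1}\rhoc$ contribution can be reduced to $K$ and $\psi\psi$ "via the Gauss equation, which is why the statement cites only $\mathcal R(S)[\betab]$." This does not work: $K$ appears in the definition of $\mathcal R(S)$ on an equal footing with $\rhoc$ and $\sigmac$, and is not part of $\mathcal R(S)[\betab]$, so the substitution $\rhoc=-K+\psi\psi$ buys nothing here. Moreover, at this stage of the argument the $K$-bound itself (Proposition \ref{Kest}) is derived \emph{from} $\rhoc$ and $\mathcal R(S)$, so routing through the Gauss equation is circular. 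The actual reason the statement depends only on $\mathcal R(S)[\betab]$ is different and you must spell it out: after Gronwall, the $\psi_H$-free source term $A(\ub)$ is a $\ub$-independent constant (bounded, say, by $C(\mathcal O_0,\mathcal R,\tilde{\mathcal O}_{3,2},\mathcal O_{2,2},\Delta_0)$ — these are exactly the quantities assumed finite in the hypotheses), whereas the initial-data term $\|\nab^{\le 1}\psi_H(0,\ub)\|_{L^4(S)}$ genuinely depends on $\ub$. When you finally take the $L^2_{\ub}$ norm, the constant picks up an $\epsilon^{1/2}$ factor while the initial-data contribution becomes $\lesssim\mathcal O_0$. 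Choosing $\epsilon$ small (depending on $\mathcal R$, $\tilde{\mathcal O}_{3,2}$, $\mathcal O_{2,2}$, $\Delta_0$ — which is permitted by the statement) absorbs the $\epsilon^{1/2}A$ piece, and the only surviving dependence in the multiplicative Gronwall and coefficient factors is $C(\mathcal O_0,\mathcal R(S)[\betab])$, coming from $\|\eta\|_{L^\infty}$, $\|\etab\|_{L^\infty}$ via Propositions \ref{L4Ricci1} and \ref{L4Ricci40} and from $\sum_{i\le 1}\|\nab^i(\trchb,\psi_{\Hb})\|_{L^4(S)}$ via Propositions \ref{L4Ricci1} and \ref{L4Ricci3}. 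Your displayed inequality $\sum_{i\le 1}\|\nab^i\psi_H\|_{L^\infty_uL^4(S)}|_{\ub}\le C(\mathcal O_0,\mathcal R(S)[\betab])(1+\|\psi_H(0,\ub)\|_{L^4}+\|\nab\psi_H(0,\ub)\|_{L^4})$ implicitly asserts $A(\ub)\le C(\mathcal O_0,\mathcal R(S)[\betab])$, which is not what your own estimates give — you should leave $A$ bounded by the larger constant and then invoke the $\epsilon^{1/2}$ smallness after $L^2_{\ub}$, exactly as the paper does.
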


\begin{proof}
Consider the following equations for $\psi_H\in\{\chih,\omega\}$:
$$\nabla_3\psi_H=\nabla\eta+\rhoc+\psi_H(\trchb+\psi_{\Hb})+\psi(\psi+\psi_{\Hb}).$$
As before, we commute the equations with angular derivatives:
\begin{equation*}
\begin{split}
\nabla_{3}\nabla\psi_H
=&\sum_{i_1+i_2=1}(\eta+\etab)^{i_1}\nabla^{i_2}(\rhoc+\nabla\eta)\\
&+\sum_{i_1+i_2+i_3+i_4=1}(\eta+\etab)^{i_1}(\nabla^{i_2}\psi_H\nabla^{i_3}(\trchb+\psi_{\Hb})+\nabla^{i_2}\psi\nabla^{i_3}(\psi+\psi_{\Hb})).
\end{split}
\end{equation*}
We bound each of the terms in $L^1_u L^4(S)$.
First, we look at the curvature term:
\begin{equation*}
\begin{split}
&||\sum_{i_1+i_2\leq 1}(\eta+\etab)^{i_1}\nabla^{i_2}\rhoc||_{L^1_{u}L^4(S)} \\
\leq& C(1+||(\eta,\etab)||_{L^\infty_{u}L^\infty(S)})(\sum_{i\leq 1}||\nabla^{i}\rhoc||_{L^\infty_{\ub}L^1_{u}L^4(S)})\\
\leq &C(\Delta_0)\sum_{i\leq 2}||\nabla^{i}\rhoc||_{L^2_{u}L^2(S)} \\
\leq &C(\Delta_0,\mathcal R).
\end{split}
\end{equation*}
The term containing $\nabla^2\eta$ can be estimated analogously:
\begin{equation*}
\begin{split}
&||\sum_{i_1+i_2\leq 1}(\eta+\etab)^{i_1}\nabla^{i_2+1}\eta||_{L^1_{u}L^4(S)} \\
\leq& C(1+||(\eta,\etab)||_{L^\infty_{u}L^\infty(S)})(\sum_{i\leq 1}||\nabla^{i+1}\eta||_{L^\infty_{\ub}L^1_{u}L^4(S)})\\
\leq &C(\Delta_0)\sum_{i\leq 2}||\nabla^{i+1}\eta||_{L^2_{u}L^2(S)} \\
\leq &C(\Delta_0,\tilde{\mathcal O}_{3,2}[\eta],\mathcal O_{2,2}[\eta]).
\end{split}
\end{equation*}
Then, we control the terms containing $\psi_H$:
\begin{equation*}
\begin{split}
&||\sum_{i_1+i_2+i_3\leq 1}(\eta+\etab)^{i_1}\nabla^{i_2}\psi_H\nabla^{i_3}(\psi_{\Hb}+\trchb)||_{L^1_{u}L^4(S)} \\
\leq& C||(\eta,\etab)||_{L^\infty_{u}L^\infty(S)}\int_0^u ||\psi_{H}||_{L^\infty(S_{u',\ub})}\sum_{i\leq 1}||\nabla^{i}(\psi_{\Hb},\trchb)||_{L^4(S_{u',\ub})} du' \\
&+C||(\eta,\etab)||_{L^\infty_{u}L^\infty(S)}\int_0^u \sum_{i\leq 1}||\nabla^{i}\psi_{H}||_{L^4(S_{u',\ub})}||(\psi_{\Hb},\trchb)||_{L^\infty(S_{u',\ub})} du'\\
\leq &C(\mathcal O_0,\mathcal R(S)[\betab])\int_0^u \sum_{i_1\leq 1}||\nabla^{i_1}\psi_{H}||_{L^4(S_{u',\ub})}\sum_{i_2\leq 1}||\nabla^{i_2}(\psi_{\Hb},\trchb)||_{L^4(S_{u',\ub})} du'.
\end{split}
\end{equation*}
In the above, we noticed that $\eta$ and $\etab$ obey estimates from Propositions \ref{L4Ricci1} and \ref{L4Ricci40} that depend only on $\mathcal O_0$ and $\mathcal R(S)[\betab]$. For the terms not containing $\psi_H$, we can bound directly using the bootstrap assumption (\ref{BA1}),
\begin{equation*}
\begin{split}
&||\sum_{i_1+i_2+i_3\leq 1}(\eta+\etab)^{i_1}\nabla^{i_2}\psi\nabla^{i_3}(\psi+\psi_{\Hb})||_{L^1_{u}L^4(S)} \\
\leq &C(\sum_{i_1\leq 1} ||(\eta,\etab)||^{i_1}_{L^\infty_u L^\infty(S)})\sum_{i_2\leq 1}||\nabla^{i_2}\psi||_{L^\infty_uL^4(S)}\sum_{i_3\leq 1}||\nabla^{i_3}(\psi+\psi_{\Hb})||_{L^1_uL^4(S)} \\
\leq &C(\Delta_0).
\end{split}
\end{equation*}
Therefore, by Proposition \ref{transport},
\begin{equation*}
\begin{split}
&\sum_{i\leq 1}||\nabla^i\psi_H||_{L^4(S_{u,\ub})}\\
\leq &C\sum_{i\leq 1}||\nabla^i\psi_H||_{L^4(S_{0,\ub})}+C(\mathcal R,\tilde{\mathcal O}_{3,2}[\eta],\mathcal O_{2,2}[\eta],\Delta_0)\\
&+C(\mathcal O_0,\mathcal R(S)[\betab])\int_0^u \sum_{i_1\leq 1}||\nabla^{i_1}\psi_{H}||_{L^4(S_{u',\ub})}\sum_{i_2\leq 1}||\nabla^{i_2}(\psi_{\Hb},\trchb)||_{L^4(S_{u',\ub})} du'.
\end{split}
\end{equation*}
By Gronwall's inequality, we have
\begin{equation*}
\begin{split}
&\sum_{i \leq 1}||\nabla^i\psi_H||_{L^4(S_{u,\ub})}\\
\leq &C(\sum_{i\leq 1}||\nabla^i\psi_H||_{L^4(S_{0,\ub})}+C(\mathcal R,\tilde{\mathcal O}_{3,2}[\eta],\mathcal O_{2,2}[\eta],\Delta_0))\\
&\times\exp(\int_0^u C(\mathcal O_0,\mathcal R(S)[\betab])\sum_{i\leq 1}||\nabla^{i}(\psi_{\Hb},\trchb)||_{L^4(S_{u',\ub})} du').
\end{split}
\end{equation*}
By Propositions \ref{L4Ricci1} and \ref{L4Ricci3}, we have
$$\exp(\int_0^u C(\mathcal O_0,\mathcal R(S)[\betab])\sum_{i\leq 1}||\nabla^{i}(\psi_{\Hb},\trchb)||_{L^4(S_{u',\ub})} du'\leq C(\mathcal O_0,\mathcal R(S)[\betab]).$$
Therefore, we have, for any $u,\ub$,
\begin{equation*}
\begin{split}
&\sum_{i\leq 1}||\nabla^i\psi_H||_{L^4(S_{u,\ub})}\\
\leq &C(\mathcal O_0,\mathcal R(S)[\betab])(\sum_{i\leq 1}||\nabla^i\psi_H||_{L^4(S_{0,\ub})}+C(\mathcal R,\tilde{\mathcal O}_{3,2}[\eta],\mathcal O_{2,2}[\eta],\Delta_0)).
\end{split}
\end{equation*}
Clearly the right hand side is independent of $u$. We first take supremum in $u$ and then take the $L^2_{\ub}$ norm to obtain
\begin{equation*}
\begin{split}
&\sum_{i\leq 1}||\nabla^i\psi_H||_{L^2_{\ub}L^\infty_uL^4(S)} \\
\leq &C(\mathcal O_0,\mathcal R(S)[\betab])(\sum_{i=0}^1||\nabla^i\psi_H||_{L^2_{\ub}L^4(S_{0,\ub})}+\epsilon^{\frac 12}C(\mathcal R,\tilde{\mathcal O}_{3,2}[\eta],\mathcal O_{2,2}[\eta],\Delta_0)).
\end{split}
\end{equation*}
By choosing $\epsilon$ sufficiently small depending on $\mathcal R,\tilde{\mathcal O}_{3,2},\mathcal O_{2,2},\Delta_0$, we have
$$\sum_{i\leq 1}||\nabla^i\psi_H||_{L^2_{\ub}L^\infty_uL^4(S)}\leq C(\mathcal O_0,\mathcal R(S)[\betab]).$$
\end{proof}
We then estimate the $\displaystyle\sum_{i\leq 1}\mathcal O_{i,4}$ norm of $\trch$. Although $\trch$ satisfies a $\nab_4$ equation, the term $\chih\chih$ appears on the right hand side and each of the $\chih$ factor has to be estimated in $L^2_{\ub}$. Therefore, the bound for this term does not have a smallness constant.
\begin{proposition}\label{L4Ricci2}
Assume
$$\mathcal R <\infty,\quad\tilde{\mathcal O}_{3,2}<\infty,\quad\mathcal O_{2,2}<\infty.$$
Then there exists $\epsilon_0=\epsilon_0(\mathcal R, \tilde{\mathcal O}_{3,2}, \mathcal O_{2,2},\Delta_0)$ such that whenever $\epsilon\leq \epsilon_0$, 
\[
 \sum_{i\leq 1}\mathcal O_{i,4}[\trch]\leq C(\mathcal O_0,\mathcal R(S)[\betab]).
\]
In particular, this estimate is independent of $\Delta_0$.
\end{proposition}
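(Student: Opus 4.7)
The plan is to derive the estimate by integrating the $\nab_4$-transport equation
$$\nab_4 \trch = |\chih|^2 + \psi(\psi + \psi_H)$$
and its first angular commutation, using Proposition \ref{transport} in the $e_4$ direction. This reduces bounding $\trch$ and $\nab\trch$ in $L^\infty_u L^\infty_\ub L^4(S)$ to bounding the initial data on $\Hb_0$ (controlled by $\mathcal O_0$) together with the right-hand side in $L^\infty_u L^1_\ub L^4(S)$. After commuting with $\nab$ and using the Codazzi equation in the schematic form to eliminate any $\beta$-term, the right-hand side will take the schematic form
$$\nab_4 \nab\trch \sim \sum_{i_1+i_2+i_3 \le 1} \psi^{i_1} \nab^{i_2}\chih \cdot \nab^{i_3}\chih + \sum_{i_1+i_2+i_3 \le 1} \psi^{i_1} \nab^{i_2}\psi \cdot \nab^{i_3}(\psi + \psi_H) + \psi_H \cdot \nab\trch,$$
and in particular contains no curvature component. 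Hence the final bound will depend only on $\mathcal O_0$ and $\mathcal R(S)[\betab]$ (through the Ricci coefficient bounds of Propositions \ref{L4Ricci1}--\ref{L4Ricci5}) and not on $\mathcal R$, $\tilde{\mathcal O}_{3,2}$, $\mathcal O_{2,2}$ or $\Delta_0$.

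The crucial structural observation is that the only source terms which do not acquire a smallness factor $\epsilon^{1/2}$ from integration in $\ub$ are $|\chih|^2$ and $\chih \cdot \nab\chih$, since $\chih$ is only controlled in $L^2_\ub$ rather than $L^\infty_\ub$. I will estimate these by Cauchy--Schwarz in $\ub$,
$$\sum_{i \le 1}||\chih \cdot \nab^i \chih||_{L^\infty_u L^1_\ub L^4(S)} \le ||\chih||_{L^2_\ub L^\infty_u L^\infty(S)} \sum_{i \le 1} ||\nab^i \chih||_{L^2_\ub L^\infty_u L^4(S)},$$
where both factors on the right are controlled by $C(\mathcal O_0, \mathcal R(S)[\betab])$ using Proposition \ref{L4Ricci5} together with the Sobolev embedding of Proposition \ref{Linfty} to promote the $L^4(S)$ bound of $\chih$ to an $L^\infty(S)$ bound. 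The key point is that these bounded-but-not-small contributions are \emph{independent of $\trch$}, so they contribute a constant to the transport estimate and no Gronwall argument on $\trch$ itself is required --- this is precisely why the present proposition is simpler than the $\nab_3$-based bounds of Propositions \ref{LinftyRicci1}--\ref{L4Ricci5}.

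All remaining source terms fall into one of two categories. Terms involving only schematic $\psi$'s are controlled in $L^\infty$ by Propositions \ref{L4Ricci1}, \ref{LinftyRicci1} and \ref{L4Ricci3}, and integration over an interval of length $\ub \le \epsilon$ supplies a factor of $\epsilon$. Terms containing a single $\psi_H$ factor are bounded in $L^2_\ub$ by Proposition \ref{L4Ricci5}, yielding a factor of $\epsilon^{1/2}$ by Cauchy--Schwarz. The only remaining terms still involving $\nab\trch$ --- in particular the commutator contribution $\psi_H \cdot \nab\trch$ --- carry such a smallness factor and can be absorbed into the left-hand side once $\epsilon$ is chosen sufficiently small depending on $\mathcal R, \tilde{\mathcal O}_{3,2}, \mathcal O_{2,2}, \Delta_0$. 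The main, and indeed only, conceptual obstacle is to recognize the structural fact that $|\chih|^2$ does not feed back into $\trch$; once this is noted, the remainder of the argument is a routine combination of the techniques already used in Propositions \ref{L4Ricci1} and \ref{L4Ricci3}.
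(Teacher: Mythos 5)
Your proposal is correct and reproduces the paper's own argument: both integrate the $\nab_4\trch$ transport equation (commuted once with $\nab$), isolate the $\chih\cdot\nab^{i}\chih$ terms as the only contributions without an $\epsilon^{1/2}$ gain, bound them via Cauchy--Schwarz in $\ub$ using Propositions \ref{L4Ricci40} and \ref{L4Ricci5} (promoted to $L^\infty(S)$ via Proposition \ref{Linfty}), and conclude by noting this constant is independent of $\trch$ and so no Gronwall argument is needed. Your structural remark---that $|\chih|^2$ is bounded but not small, yet does not feed back into $\trch$---is exactly the point the paper relies on, and your treatment of the remaining terms (small factors from $\epsilon$ integration, with the $\psi_H\nab\trch$ commutator contribution controlled through the bootstrap constant $\Delta_0$ and a sufficiently small choice of $\epsilon$) matches the paper.
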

\begin{proof}
Using the null structure equations, we have an equation of the type
$$\nabla_{4}\trch=\psi\psi+\psi_H\psi+\chih\chih.$$
We also have the null structure equations commuted with angular derivatives:
$$\nabla_{4}\nabla^i\trch=\sum_{i_1+i_2+i_3+i_4=i}\nabla^{i_1}\psi^{i_2}\nabla^{i_3}\psi\nabla^{i_4}(\psi+\psi_H)+\sum_{i_1+i_2+i_3+i_4=i}\nabla^{i_1}(\eta,\etab)^{i_2}\nabla^{i_3}\chih\nabla^{i_4}\chih.$$
By Proposition \ref{transport}, in order to estimate $||\nabla^i\psi||_{L^\infty_uL^\infty_{\ub}L^4(S)}$, it suffices to estimate the initial data and the $||\cdot||_{L^\infty_uL^1_{\ub}L^4(S)}$ norm of the right hand side. Notice that all terms except the one with $\chih\chih$ have appeared in the Proposition \ref{L4Ricci1}. We estimate those terms in the same manner. Hence,
\begin{equation*}
\begin{split}
||\sum_{i_1+i_2+i_3\leq 1}\psi^{i_1}\nabla^{i_2}\psi\nabla^{i_3}\psi||_{L^\infty_{u}L^1_{\ub}L^4(S)} 
\leq &C\Delta_0(1+\Delta_0)^2\epsilon,
\end{split}
\end{equation*}
and
\begin{equation*}
\begin{split}
||\sum_{i_1+i_2+i_3\leq 1}\psi^{i_1}\nabla^{i_2}\psi\nabla^{i_3}\psi_{H}||_{L^\infty_{u}L^1_{\ub}L^4(S)} 
\leq &C\Delta_0(1+\Delta_0)^{2}\epsilon^{\frac{1}{2}}.
\end{split}
\end{equation*}
For the term with $\chih\chih$, using the estimates obtained in Propositions \ref{L4Ricci40} and \ref{L4Ricci5}, we have
\begin{equation*}
\begin{split}
&||\sum_{i_1+i_2+i_3\leq 1}(\eta,\etab)^{i_1}\nabla^{i_2}\chih\nabla^{i_3}\chih||_{L^\infty_{u}L^1_{\ub}L^4(S)} \\
\leq& C(\sum_{i_1\leq 2}||(\eta,\etab)||^{i_1}_{L^\infty_{u}L^\infty_{\ub}L^\infty(S)})(||\chih||_{L^\infty_{u}L^2_{\ub}L^\infty(S)})(\sum_{i_2\leq 1}||\nabla^{i_2}\chih||_{L^\infty_{u}L^2_{\ub}L^4(S)}) \\
\leq &C(\mathcal O_0,\mathcal R(S)[\betab]).
\end{split}
\end{equation*}
Hence,
\begin{equation*}
\begin{split}
\sum_{i\leq 1}\mathcal O_{i,4}[\trch]\leq &C(\mathcal O_0,\mathcal R(S)[\betab])+C\Delta_0(1+\Delta_0)^2\epsilon^{\frac{1}{2}}.
\end{split}
\end{equation*}
The proposition follows by choosing $\epsilon$ to be sufficiently small depending on $\Delta_0$.
\end{proof}

Clearly Propositions \ref{L4Ricci1}, \ref{L4Ricci3}, \ref{L4Ricci4}, \ref{L4Ricci5}, \ref{L4Ricci2} imply the following estimate for the $L^4$ norms of the Ricci coefficients:
\begin{proposition}\label{L4Ricci}
$$\mathcal R <\infty,\quad\mathcal R(S)<\infty,\quad\tilde{\mathcal O}_{3,2}<\infty,\quad\mathcal O_{2,2}<\infty.$$
There exists $\epsilon_0=\epsilon_0(\mathcal O_0, \mathcal R,\mathcal R(S), \tilde{\mathcal O}_{3,2},\mathcal O_{2,2})$ such that for all $\epsilon\leq \epsilon_0$, we have
$$\sum_{i\leq 1}\mathcal O_{i,4}[\trchb,\eta,\chibh,\omegab]\leq C(\mathcal O_0),$$
$$\sum_{i\leq 1}\mathcal O_{i,4}[\trch,\chih,\omega]\leq C(\mathcal O_0,\mathcal R(S)[\betab]),$$
and
$$\sum_{i\leq 1}\mathcal O_{i,4}[\etab]\leq C(\mathcal O_0,\mathcal R(S),\mathcal R).$$
Together with Sobolev embedding in Proposition \ref{Linfty}, the bootstrap assumptions (\ref{BA1}) can be improved under the assumptions on $\mathcal R,\mathcal R(S), \tilde{\mathcal O}_{3,2}$ and $\mathcal O_{2,2}$.
\end{proposition}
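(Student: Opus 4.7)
The statement consolidates the $L^4$ Ricci bounds already derived in this subsection, so the plan is first to assemble them and then to close the bootstrap. For the three estimates, the bound $\sum_{i\leq 1}\mathcal O_{i,4}[\trchb,\eta,\chibh,\omegab]\leq C(\mathcal O_0)$ is the juxtaposition of Propositions \ref{L4Ricci1} and \ref{L4Ricci3}; the bound $\sum_{i\leq 1}\mathcal O_{i,4}[\trch,\chih,\omega]\leq C(\mathcal O_0,\mathcal R(S)[\betab])$ is the union of Propositions \ref{L4Ricci5} and \ref{L4Ricci2}; and the bound for $\etab$ couples the $L^\infty(S)$ estimate of Proposition \ref{LinftyRicci1} with the $L^4(S)$ estimate for $\nab\etab$ of Proposition \ref{L4Ricci4}, absorbing the factor $(1+\mathcal R)$ into the constant to yield $C(\mathcal O_0,\mathcal R(S),\mathcal R)$.

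To verify that the bootstrap assumption (\ref{BA1}) is strictly improved, I would invoke the Sobolev embedding of Proposition \ref{Linfty}, namely $\|\phi\|_{L^\infty(S)}\leq C(\|\phi\|_{L^2(S)}+\|\nab\phi\|_{L^4(S)})$, which upgrades the three estimates above into a bound on $\mathcal O_{0,\infty}$. Writing $C^{*}:=C(\mathcal O_0,\mathcal R,\mathcal R(S),\tilde{\mathcal O}_{3,2},\mathcal O_{2,2})$ for the resulting constant, the crucial feature explicitly recorded in each of Propositions \ref{L4Ricci1}--\ref{L4Ricci2} is that $C^{*}$ is \emph{independent} of $\Delta_0$. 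Fixing the bootstrap constant once and for all as $\Delta_0:=2C^{*}$ and taking $\epsilon_0$ to be the minimum of the thresholds provided by Propositions \ref{L4Ricci1}--\ref{L4Ricci2} at this value of $\Delta_0$, one has $\mathcal O_{0,\infty}+\sum_{i\leq 1}\mathcal O_{i,4}\leq C^{*}=\Delta_0/2<\Delta_0$, strictly improving (\ref{BA1}).

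Since every analytic ingredient is already in hand, there is no genuine obstacle remaining; the work is purely organizational. The single point worth emphasizing is that the careful ordering in this subsection---first estimating the Ricci coefficients governed by $\nab_4$ equations (where the short $\ub$-interval supplies the smallness), then handling $\etab$ through a Gronwall argument in $u$ whose nonlinear error term is rendered linear in $\etab$ precisely by the bounds already in hand, and only afterwards $\chih,\omega,\trch$ via another Gronwall argument using the $\etab$ bound---was engineered exactly so that every constant that appears is $\Delta_0$-independent, which is what renders the bootstrap closure at this low level of regularity possible.
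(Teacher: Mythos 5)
Your proposal is correct and follows essentially the same route as the paper: assemble the bounds of Propositions~\ref{L4Ricci1}--\ref{L4Ricci2}, note $\Delta_0$-independence of the constants, set $\Delta_0$ proportional to the resulting constant, shrink $\epsilon_0$ to activate all the earlier propositions at that $\Delta_0$, and upgrade to $L^\infty(S)$ via Proposition~\ref{Linfty} to close the bootstrap (\ref{BA1}). One small imprecision worth flagging: you write $C^*:=C(\mathcal O_0,\mathcal R,\mathcal R(S),\tilde{\mathcal O}_{3,2},\mathcal O_{2,2})$, but the constants appearing in Propositions~\ref{L4Ricci1}--\ref{L4Ricci2} depend at most on $\mathcal O_0$, $\mathcal R(S)$, and $\mathcal R$ (never on $\tilde{\mathcal O}_{3,2}$ or $\mathcal O_{2,2}$, which only enter the threshold $\epsilon_0$), and this tighter dependency of $C^*$ is what lets the proposition record the three bounds with the stated arguments rather than with a generic constant; the argument still closes as you wrote it, but the record-keeping matters for the later steps of the scheme, where the $\mathcal O$-bounds are recycled with control over exactly which norms they depend on.
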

\begin{proof}
Let $\Delta_0\gg \max\{C(\mathcal O_0), C(\mathcal O_0,\mathcal R(S)[\betab]), C(\mathcal O_0,\mathcal R(S),\mathcal R)\}$, where the right hand side is the maximum of the constants in Propositions \ref{L4Ricci1}, \ref{L4Ricci3}, \ref{L4Ricci4}, \ref{L4Ricci5}, \ref{L4Ricci2}. Then, take $\epsilon_0$ sufficiently small so that the conclusions of Propositions \ref{L4Ricci1}, \ref{L4Ricci3}, \ref{L4Ricci4}, \ref{L4Ricci5}, \ref{L4Ricci2} hold. Then by the Sobolev embedding theorem from Proposition \ref{Linfty}, we have improved (\ref{BA1}). Since the choice of $\Delta_0$ depends only on $\mathcal O_0, \mathcal R,\mathcal R(S)$, the choice of $\epsilon_0$ depends only on $\mathcal O_0, \mathcal R,\mathcal R(S), \tilde{\mathcal O}_{3,2},\mathcal O_{2,2}$.
\end{proof}

\subsection{$L^2(S)$ Estimates for Second Derivatives of Ricci Coefficients}\label{secRicciL2}

We now estimate the $\mathcal O_{2,2}$ norm. We make the bootstrap assumption:
\begin{equation}\tag{A2}\label{BA2}
\mathcal O_{2,2}\leq \Delta_1,
\end{equation}
where $\Delta_1$ is a positive constant to be chosen later.

The proof of the estimates for the $\mathcal O_{2,2}$ norm is very similar to that for the $\displaystyle\sum_{i\leq 1}\mathcal O_{i,4}$ norms, except that we now need to use the $L^4$ control that was obtained in the previous subsection. From now on, we will assume $\epsilon\leq \epsilon_0$ as in Proposition \ref{L4Ricci}, where $\epsilon_0$ depends on $\mathcal O_0, \mathcal R, \tilde{\mathcal O}_{3,2}$, $\mathcal R(S)$ and also on $\Delta_1$. 

We first prove the estimates for $\nab^2\trchb$ and $\nab^2\eta$:

\begin{proposition}\label{L2Ricci1}
Assume
$$\mathcal R <\infty,\quad\tilde{\mathcal O}_{3,2}<\infty,\quad\mathcal R(S)<\infty.$$
Then there exists $\epsilon_0=\epsilon_0(\mathcal O_0, \mathcal R, \tilde{\mathcal O}_{3,2},\mathcal R(S), \Delta_1)$ such that whenever $\epsilon\leq \epsilon_0$, 
\[
 \mathcal O_{2,2}[\trchb,\eta]\leq C(\mathcal O_0).
\]
In particular, this estimate is independent of $\Delta_1$.
\end{proposition}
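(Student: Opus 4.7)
The plan is to follow the strategy of Proposition \ref{L4Ricci1} at one higher order of angular differentiation, exploiting the fact that $\trchb$ and $\eta$ satisfy $\nab_4$-transport equations whose source terms contain neither $\betab$ nor $\psi_{\Hb}$ nor the singular $\alphab$. Schematically
\[
\nab_4(\trchb,\eta) = \beta + \rhoc + \nab\etab + \psi(\psi + \psi_H),
\]
and commuting twice with $\nab$ via the formula of Section \ref{commutation} yields
\[
\nab_4\nab^2(\trchb,\eta) = \sum_{i_1+i_2+i_3=2}\nab^{i_1}\psi^{i_2}\nab^{i_3}(\beta + \rhoc + \nab\etab) + \sum_{i_1+i_2+i_3+i_4=2}\nab^{i_1}\psi^{i_2}\nab^{i_3}\psi\,\nab^{i_4}(\psi + \psi_H).
\]
By Proposition \ref{transport} with $p=2$, it then suffices to bound the $L^\infty_u L^1_{\ub}L^2(S)$ norm of the right hand side and add it to the initial data contribution, which is already controlled by $\mathcal O_0$.

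For the top-order curvature pieces $\nab^{\leq 2}(\beta,\rhoc)$, Cauchy-Schwarz in $\ub$ produces a factor of $\epsilon^{1/2}$, and the remaining $L^\infty_u L^2_{\ub}L^2(S)$ norm is absorbed in $\mathcal R$. For the term $\nab^3\etab$, Cauchy-Schwarz in $\ub$ similarly yields an $\epsilon^{1/2}\tilde{\mathcal O}_{3,2}$ bound. Lower-order curvature factors always appear multiplied by at least one $\psi$ factor which can be placed in $L^\infty$ via Proposition \ref{L4Ricci} and Sobolev embedding (Proposition \ref{Linfty}).

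The remaining nonlinearity is quadratic or cubic in the Ricci coefficients. In each such term we place the factor carrying the most derivatives in $L^2(S)$ and the remaining factors in $L^4(S)$ or $L^\infty(S)$ using Propositions \ref{L4}, \ref{Linfty}, and the $L^4$ bounds of Proposition \ref{L4Ricci}. Every such term then acquires an $\epsilon^{1/2}$ factor after Cauchy-Schwarz in $\ub$. Terms containing $\nab^2\psi_H$ are controlled via the bootstrap (\ref{BA2}), where the resulting bound takes the form $\epsilon^{1/2}\Delta_1$ times a quantity depending only on $\mathcal O_0$, $\mathcal R$, $\tilde{\mathcal O}_{3,2}$, $\mathcal R(S)$; the analogous argument handles terms containing $\nab^2\psi$. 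The subtlety here is that $\psi_H$ is only controlled in $L^2_{\ub}L^\infty_u L^\infty(S)$, but since we need only the $L^1_{\ub}$ norm of the right hand side, Cauchy-Schwarz in $\ub$ provides the required smallness.

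Collecting the estimates gives
\[
\mathcal O_{2,2}[\trchb,\eta] \leq C(\mathcal O_0) + \epsilon^{1/2}\,C(\mathcal R,\mathcal R(S),\tilde{\mathcal O}_{3,2},\Delta_1),
\]
and choosing $\epsilon$ small enough depending on $\mathcal R, \mathcal R(S), \tilde{\mathcal O}_{3,2}, \Delta_1$ absorbs the second term into the first, yielding the claimed bound by $C(\mathcal O_0)$ independent of $\Delta_1$. The main technical care required is verifying the schematic structure of the commutator expansion to confirm that $\betab$, $\psi_{\Hb}$, $\alpha$, and $\alphab$ really do not appear on the right hand side, and tracking carefully where $\psi_H$ occurs so that every such factor is seen in $L^2_{\ub}$ and thus provides $\epsilon^{1/2}$ smallness after one Cauchy-Schwarz in the short $\ub$ direction; this is precisely the feature that permits improving the bootstrap (\ref{BA2}).
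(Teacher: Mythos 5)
Your proposal follows essentially the same approach as the paper: commute the $\nab_4(\trchb,\eta)$ transport equation twice with $\nab$, observe that neither $\betab$, $\psi_{\Hb}$, $\alpha$, nor $\alphab$ appears on the right, and bound the $L^\infty_u L^1_{\ub}L^2(S)$ norm of each source term using Cauchy-Schwarz in $\ub$ to produce the needed $\epsilon^{1/2}$ smallness, placing high-derivative curvature into $\mathcal R$, $\nab^3\etab$ into $\tilde{\mathcal O}_{3,2}$, and the terms involving $\nab^2\psi_H$ or $\nab^2\psi$ into the bootstrap norm $\Delta_1$. The final absorption argument, choosing $\epsilon$ small so that the $\epsilon^{1/2}C(\mathcal R,\mathcal R(S),\tilde{\mathcal O}_{3,2},\Delta_1)$ term is dominated by $C(\mathcal O_0)$, matches the paper's proof of Proposition \ref{L2Ricci1} exactly.
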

\begin{proof}
Using the null structure equations, we have
$$\nabla_{4}(\trchb,\eta)=\beta+\rhoc+\nabla\etab+\psi\psi+\psi_H\psi.$$
We use the null structure equations commuted with angular derivatives:
$$\nabla_{4}\nabla^i(\trchb,\eta)=\sum_{i_1+i_2+i_3=i}\nabla^{i_1}\psi^{i_2}\nabla^{i_3}(\beta+\rhoc+\nabla\etab)+\sum_{i_1+i_2+i_3+i_4=i}\nabla^{i_1}\psi^{i_2}\nabla^{i_3}\psi\nabla^{i_4}(\psi+\psi_H).$$
By Proposition \ref{transport}, in order to estimate $||\nabla^i(\trchb,\etab)||_{L^\infty_uL^\infty_{\ub}L^2(S)}$, it suffices to bound the initial data and the $||\cdot||_{L^\infty_{u}L^1_{\ub}L^2(S)}$ of the right hand side. We now estimate each of the terms in the equation. We first control the curvature term. As mentioned in the beginning of this subsection, the bounds are derived similarly as that for the $L^4$ norms, except we now need to use the $L^4(S)$ estimates proved above for $\nabla\psi$.
\begin{equation*}
\begin{split}
&||\sum_{i_1+i_2+i_3\leq 2}\nabla^{i_1}\psi^{i_2}\nabla^{i_3}(\beta,\rhoc)||_{L^\infty_{u}L^1_{\ub}L^2(S)} \\
\leq& C(\sum_{i_1\leq 3}||\psi||^{i_1}_{L^\infty_{u}L^\infty_{\ub}L^\infty(S)})(\sum_{i_2\leq 2}||\nabla^{i_2}(\beta,\rhoc)||_{L^\infty_{u}L^1_{\ub}L^2(S)}) \\
&+C(\sum_{i_1\leq 2}||\psi||_{L^\infty_{u}L^\infty_{\ub}L^\infty(S)}^{i_1})(\sum_{i_2\leq 2}||\nabla^{i_2}\psi||_{L^\infty_{u}L^\infty_{\ub}L^2(S)})(\sum_{i_3\leq 1}||\nabla^{i_3}(\beta,\rhoc)||_{L^\infty_{u}L^1_{\ub}L^4(S)})\\
\leq &C\epsilon^{\frac{1}{2}}(\sum_{i_1\leq 3}\sum_{i_2\leq 2}||\nabla^{i_2}\psi||_{L^\infty_{u}L^\infty_{\ub}L^2(S)}^{i_1})(\sum_{i_3\leq 2}||\nabla^{i_3}(\beta,\rhoc)||_{L^\infty_{u}L^2_{\ub}L^2(S)}) \\
\leq &C(\mathcal O_0,\mathcal R(S),\Delta_1,\mathcal R)\epsilon^{\frac{1}{2}}.
\end{split}
\end{equation*}
The term with $\nabla\etab$ instead of curvature can be estimated analogously, except for using the $\mathcal O_{2,2}$ and $\tilde{\mathcal O}_{3,2}$ norms instead of the $\mathcal R$ norm. Moreover, recall that the $\displaystyle\sum_{i\leq 1}\mathcal O_{i,4}[\etab]$ bounds that we have derived depend on $\mathcal R$. Hence the estimate below also depends on $\mathcal R$.
\begin{equation*}
\begin{split}
&||\sum_{i_1+i_2+i_3\leq 3}\nabla^{i_1}\psi^{i_2}\nabla^{i_3+1}\eta||_{L^\infty_{u}L^1_{\ub}L^2(S)} \\
\leq& C(\sum_{i_1\leq 3}||\psi||^{i_1}_{L^\infty_{u}L^\infty_{\ub}L^\infty(S)})(\sum_{i_2\leq 2}||\nabla^{i_2+1}\eta||_{L^\infty_{u}L^1_{\ub}L^2(S)}) \\
&+C(\sum_{i_1\leq 2}||\psi||_{L^\infty_{u}L^\infty_{\ub}L^\infty(S)}^{i_1})(\sum_{i_2\leq 2}||\nabla^{i_2}\psi||_{L^\infty_{u}L^\infty_{\ub}L^2(S)})(\sum_{i_3\leq 1}||\nabla^{i_3+1}\eta||_{L^\infty_{u}L^1_{\ub}L^4(S)})\\
\leq &C(\sum_{i_1\leq 3}\sum_{i_2\leq 2}||\nabla^{i_2}\psi||_{L^\infty_{u}L^\infty_{\ub}L^2(S)}^{i_1})(\sum_{i\leq 3}||\nabla^{i}\eta||_{L^\infty_{u}L^2_{\ub}L^2(S)}) \\
\leq &C(\mathcal O_0,\mathcal R(S),\Delta_1,\tilde{\mathcal O}_{3,2},\mathcal R)\epsilon^{\frac{1}{2}}.
\end{split}
\end{equation*}
We now move on to the lower order terms. We first control the lower order terms that contain $\psi_{H}$: 
\begin{equation*}
\begin{split}
&||\sum_{i_1+i_2+i_3+i_4\leq 2}\nabla^{i_1}\psi^{i_2}\nabla^{i_3}\psi\nabla^{i_4}\psi_H||_{L^\infty_{u}L^1_{\ub}L^2(S)} \\
\leq& C\epsilon^{\frac 12}(\sum_{i_1\leq 3}||\psi||^{i_1}_{L^\infty_{u}L^\infty_{\ub}L^\infty(S)})(\sum_{i_2\leq 2}||\nabla^{i_2}\psi_H||_{L^\infty_{u}L^2_{\ub}L^2(S)}) \\
&+C\epsilon^{\frac 12}(\sum_{i_1\leq 1}||\nabla^{i_1}\psi||_{L^\infty_{u}L^\infty_{\ub}L^4(S)})(\sum_{i_2\leq 1}||\nabla^{i_2}\psi_H||_{L^\infty_{u}L^2_{\ub}L^4(S)})\\
&+C\epsilon^{\frac 12}(\sum_{i_1\leq 1}||\psi||^{i_1}_{L^\infty_{u}L^\infty_{\ub}L^\infty(S)})(\sum_{i_2\leq 2}||\nabla^{i_2}\psi||_{L^\infty_{\ub}L^2_{u}L^2(S)})(||\psi_H||_{L^\infty_{u}L^2_{\ub}L^\infty(S)})\\
\leq &C(\mathcal O_0,\mathcal R(S),\mathcal R)\epsilon^\frac{1}{2}(\sum_{i\leq 2}||\nabla^{i}\psi_H||_{L^\infty_uL^2_{\ub}L^2(S)}+ \sum_{i\leq 2}||\nabla^{i}\psi||_{L^\infty_{\ub}L^\infty_{u}L^2(S)})\\
\leq &C(\mathcal O_0,\mathcal R(S),\mathcal R,\Delta_1)\epsilon^{\frac{1}{2}}.
\end{split}
\end{equation*}
The remaining lower order terms can be estimated in the same way since the norms for $\psi$ are stronger than those for $\psi_H$
\begin{equation*}
\begin{split}
||\sum_{i_1+i_2+i_3+i_4\leq 2}\nabla^{i_1}\psi^{i_2}\nabla^{i_3}\psi\nabla^{i_4}\psi||_{L^\infty_{u}L^1_{\ub}L^2(S)} 
\leq &C(\mathcal O_0,\mathcal R(S),\mathcal R,\Delta_1)\epsilon^{\frac{1}{2}}.
\end{split}
\end{equation*}
The conclusion thus follows from the above estimates and Proposition \ref{transport}, after choosing $\epsilon$ to be sufficiently small depending on $\mathcal O_0,\mathcal R(S),\Delta_1,\tilde{\mathcal O}_{3,2},\mathcal R$.
\end{proof}
We then estimate $\nab^2\psi_{\Hb}$. We again recall the notation that $\psi_{\Hb}\in\{\chibh, \omegab\}$.
\begin{proposition}\label{L2Ricci3}
Assume
$$\mathcal R <\infty,\quad\tilde{\mathcal O}_{3,2}<\infty,\quad\mathcal R(S)<\infty.$$
Then there exists $\epsilon_0=\epsilon_0(\mathcal O_0, \mathcal R, \tilde{\mathcal O}_{3,2},\mathcal R(S), \Delta_1)$ such that whenever $\epsilon\leq \epsilon_0$, 
\[
 \mathcal O_{2,2}[\chibh,\omegab]\leq C(\mathcal O_0).
\]
In particular, this estimate is independent of $\Delta_1$.
\end{proposition}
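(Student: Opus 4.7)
The proof proposal is to mirror the structure of Proposition \ref{L4Ricci3} but at one higher order of differentiation, using the $L^4(S)$ bounds obtained in Proposition \ref{L4Ricci} to handle the nonlinear terms that now involve up to two derivatives. Concretely, for each $\psi_{\Hb}\in\{\chibh,\omegab\}$, I will start from the schematic null structure equation
\[
\nab_4\psi_{\Hb}=\rhoc+\nab\etab+(\psi+\psi_H)(\psi+\psi_{\Hb}),
\]
commute twice with $\nab$ using Proposition \ref{commuteeqn} (noting that the $\beta$ term produced by commutation can be replaced via Codazzi by $\nab\chi+\psi\chi$), and then for each fixed $u$ apply Proposition \ref{transport} to convert the $\nab_4$ equation into the bound
\[
\sum_{i\leq 2}\|\nab^i\psi_{\Hb}\|_{L^2(S_{u,\ub})}\leq C\sum_{i\leq 2}\|\nab^i\psi_{\Hb}\|_{L^2(S_{u,0})}+C\sum_{i\leq 2}\|\nab_4\nab^i\psi_{\Hb}\|_{L^1_\ub L^2(S)}.
\]

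The linear curvature terms $\nab^{\leq 2}\rhoc$ and $\nab^{\leq 2}\nab\etab$ are to be estimated in $L^\infty_u L^1_\ub L^2(S)$ using Cauchy--Schwarz in $\ub$, gaining an $\epsilon^{1/2}$ smallness factor; the factors of $\psi^{i_2}$ in the commutator expansion are bounded by $\mathcal O_{0,\infty}$ (from the bootstrap \eqref{BA1}) or, when carrying a derivative, by $\mathcal O_{1,4}$ from Proposition \ref{L4Ricci}, and the highest-order curvature factor is bounded by $\mathcal R$ or by $\tilde{\mathcal O}_{3,2}[\eta,\etab]$ respectively. The quadratic $\psi$-$\psi$ terms pose no problem since $\psi$ is bounded in $L^\infty_u L^\infty_\ub$. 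The delicate cubic-type terms are those of the form $\nab^{i_1}\psi^{i_2}\nab^{i_3}(\psi+\psi_H)\nab^{i_4}(\psi+\psi_{\Hb})$ with $i_1+\cdots+i_4\leq 2$: here one must distribute $L^4$ versus $L^2$ norms on the spheres via Sobolev embedding (Proposition \ref{L4}) and, crucially, place the $\psi_H$ factor in $L^2_\ub L^\infty_u L^p(S)$ and the $\psi_{\Hb}$ factor in $L^\infty_\ub L^p(S)$ (with $L^2_u$ left outside everything); Cauchy--Schwarz in $\ub$ then extracts an $\epsilon^{1/2}$ from the $L^2_\ub$ factor.

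After these estimates, for fixed $u$ the bound reads
\[
\sum_{i\leq 2}\|\nab^i\psi_{\Hb}\|_{L^\infty_\ub L^2(S_{u,\cdot})}\leq C\sum_{i\leq 2}\|\nab^i\psi_{\Hb}\|_{L^2(S_{u,0})}+C(\mathcal O_0,\mathcal R,\mathcal R(S),\tilde{\mathcal O}_{3,2})\epsilon^{1/2}\!\left(1+\sum_{i\leq 2}\|\nab^i\psi_{\Hb}\|_{L^\infty_\ub L^2(S_{u,\cdot})}\right),
\]
where the dependence on $\Delta_1$ appears only multiplied by $\epsilon^{1/2}$ (from the $L^2_\ub$ of the top-order $\nab^2\psi_H$ or $\nab^2\psi_{\Hb}$ factor in the cubic error terms). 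Taking $\sup_\ub$ and then $L^2_u$ on both sides and using that $\|\cdot\|_{L^2_u L^2(S_{u,0})}$ is controlled by $\mathcal O_0$, I absorb the $\psi_{\Hb}$ term on the right into the left by choosing $\epsilon$ small depending on $\mathcal O_0,\mathcal R,\mathcal R(S),\tilde{\mathcal O}_{3,2},\Delta_1$. This gives $\mathcal O_{2,2}[\chibh,\omegab]\leq C(\mathcal O_0)$, with the constant genuinely independent of $\Delta_1$.

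The main obstacle is the handling of the error terms of the form $\nab\psi_H\cdot\nab\psi_{\Hb}$ or $\nab^2\psi_{\Hb}\cdot\psi_H$, where both factors can only be controlled in mixed norms with the $L^2$ taken in different null directions. The resolution is precisely the ordering of the norms: after integrating in $\ub$ (gaining $\epsilon^{1/2}$ from $\psi_H$'s $L^2_\ub$ norm) and placing $\psi_{\Hb}$ in $L^\infty_\ub L^4(S)$, the outer $L^2_u$ can be applied to the resulting inequality so that the $\psi_{\Hb}$ factor on the right becomes controllable by the same quantity being estimated on the left, and the smallness $\epsilon^{1/2}$ permits absorption. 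This is the same mechanism used in Proposition \ref{L4Ricci3}, simply upgraded to second angular derivatives with the $L^4$ bounds from Proposition \ref{L4Ricci} substituting for the $L^\infty$ bounds of $\psi$.
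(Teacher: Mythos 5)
Your proposal matches the paper's proof in all essentials: same schematic $\nab_4\psi_{\Hb}$ equation, same twofold angular commutation, same use of Proposition \ref{transport} for fixed $u$, the same distribution of mixed norms (extracting $\epsilon^{1/2}$ from the $L^2_{\ub}$-controlled factor while leaving $\psi_{\Hb}$ in $L^\infty_{\ub}L^p(S)$ to be hit by the outer $L^2_u$), and the same final smallness/closure step. The only cosmetic difference is that the paper estimates the dangerous $\|\nabla^i\psi_{\Hb}\|_{L^\infty_{\ub}L^2(S)}$ factor by the bootstrap constant $\Delta_1$ rather than by an explicit absorption into the left side, but these two closings are interchangeable once the $\epsilon^{1/2}$ is in hand.
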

\begin{proof}
Using the null structure equations, for each $\psi_{\Hb}\in\{\chibh,\omegab\}$, we have an equation of the type
$$\nabla_{4}\psi_{\Hb}=\rhoc+\nabla\etab+(\psi+\psi_H)(\psi+\psi_{\Hb}).$$
We also use the null structure equations commuted with angular derivatives:
$$\nabla_{4}\nabla^i\psi_{\Hb}=\sum_{i_1+i_2+i_3=i}\nabla^{i_1}\psi^{i_2}\nabla^{i_3}(\rhoc+\nabla\etab)+\sum_{i_1+i_2+i_3+i_4=i}\nabla^{i_1}\psi^{i_2}\nabla^{i_3}(\psi+\psi_H)\nabla^{i_4}(\psi+\psi_{\Hb}).$$
From the proof of Proposition \ref{L2Ricci1}, we have
\begin{equation*}
||\sum_{i_1+i_2+i_3\leq 2}\nabla^{i_1}\psi^{i_2}\nabla^{i_3}\rhoc||_{L^\infty_{u}L^1_{\ub}L^2(S)} \leq C(\mathcal O_0,\mathcal R(S),\Delta_1,\mathcal R)\epsilon^{\frac{1}{2}},
\end{equation*}
and
\begin{equation*}
||\sum_{i_1+i_2+i_3\leq 2}\nabla^{i_1}\psi^{i_2}\nabla^{i_3+1}\etab||_{L^\infty_{u}L^1_{\ub}L^2(S)} \leq C(\mathcal O_0,\mathcal R(S),\Delta_1,\tilde{\mathcal O}_{3,2},\mathcal R)\epsilon^{\frac{1}{2}},
\end{equation*}
and
\begin{equation*}
\begin{split}
&||\sum_{i_1+i_2+i_3+i_4\leq 2}\nabla^{i_1}\psi^{i_2}\nabla^{i_3}\psi\nabla^{i_4}\psi_H||_{L^\infty_{u}L^1_{\ub}L^2(S)}
\leq C(\mathcal O_0,\mathcal R(S),\mathcal R,\Delta_1)\epsilon^{\frac{1}{2}},
\end{split}
\end{equation*}
and
\begin{equation*}
\begin{split}
||\sum_{i_1+i_2+i_3+i_4\leq 2}\nabla^{i_1}\psi^{i_2}\nabla^{i_3}\psi\nabla^{i_4}\psi||_{L^\infty_{u}L^1_{\ub}L^2(S)} 
\leq &C(\mathcal O_0,\mathcal R(S),\mathcal R,\Delta_1)\epsilon^{\frac{1}{2}}.
\end{split}
\end{equation*}
It remains to estimate 
$$\sum_{i_1+i_2+i_3+i_4\leq 2}\nabla^{i_1}\psi^{i_2}\nabla^{i_3}\psi_H\nabla^{i_4}\psi_{\Hb},\quad \sum_{i_1+i_2+i_3+i_4\leq 2}\nabla^{i_1}\psi^{i_2}\nabla^{i_3}\psi_{\Hb}\nabla^{i_4}\psi.$$
For the first term, as in the proof of Proposition \ref{L4Ricci3}, we first fix $u$ and bound the $L^1_{\ub}L^2(S)$ norm for each fixed $u$.
\begin{equation*}
\begin{split}
&||\sum_{i_1+i_2+i_3+i_4\leq 2}\nabla^{i_1}\psi^{i_2}\nabla^{i_3}\psi_H\nabla^{i_4}\psi_{\Hb}||_{L^1_{\ub}L^2(S)} \\
\leq& C\epsilon^{\frac 12}(1+||\psi||_{L^\infty_{\ub}L^\infty(S)})^2(||\psi_{\Hb}||_{L^\infty_{\ub}L^\infty(S)})(\sum_{i\leq 2}||\nabla^{i}\psi_{H}||_{L^2_{\ub}L^2(S)}) \\
&+C\epsilon^{\frac 12}||\nabla\psi_{\Hb}||_{L^\infty_{\ub}L^4(S)}||\nabla\psi_{H}||_{L^2_{\ub}L^4(S)} \\
&+C\epsilon^{\frac 12}||\nab\psi||_{L^\infty_{\ub}L^2(S)}||\psi_{\Hb}||_{L^\infty_{\ub}L^\infty(S)}||\psi_{H}||_{L^2_{\ub}L^\infty(S)} \\
&+C\epsilon^{\frac 12}(1+||\psi||_{L^\infty_{\ub}L^\infty(S)})(\sum_{i\leq 2}||\nabla^{i}\psi_{\Hb}||_{L^\infty_{\ub}L^2(S)})(||\psi_{H}||_{L^2_{\ub}L^\infty(S)}) \\
\leq &C(\mathcal O_0,\mathcal R(S))\epsilon^\frac{1}{2}(\sum_{i\leq 2}||\nabla^{i}\psi_{\Hb}||_{L^\infty_{\ub}L^2(S)})(1+||\nabla^2\psi_{H}||_{L^2_{\ub}L^2(S)})\\
\leq &C(\mathcal O_0,\mathcal R(S))\epsilon^\frac{1}{2}(\sum_{i\leq 2}||\nabla^{i}\psi_{\Hb}||_{L^\infty_{\ub}L^2(S)})(1+\Delta_1).
\end{split}
\end{equation*}
Finally, we have the term $\displaystyle\sum_{i_1+i_2+i_3+i_4\leq 2}\nabla^{i_1}\psi^{i_2}\nabla^{i_3}\psi_{\Hb}\nabla^{i_4}\psi$. As before, we have, for each fixed $u$,
\begin{equation*}
\begin{split}
&||\sum_{i_1+i_2+i_3+i_4\leq 2}\nabla^{i_1}\psi^{i_2}\nabla^{i_3}\psi_{\Hb}\nabla^{i_4}\psi||_{L^1_{\ub}L^2(S)} \\
\leq &C(\mathcal O_0)\epsilon^\frac{1}{2}(\sum_{i\leq 2}||\nabla^{i}\psi_{\Hb}||_{L^\infty_{\ub}L^2(S)})(1+||\nabla^2\psi||_{L^2_{\ub}L^2(S)})\\
\leq &C(\mathcal O_0)\epsilon^\frac{1}{2}(\sum_{i\leq 2}||\nabla^{i}\psi_{\Hb}||_{L^\infty_{\ub}L^2(S)})(1+\epsilon^{\frac 12}\Delta_1).
\end{split}
\end{equation*}
Putting all these together, and using Proposition \ref{transport}, we have, for each $u$
\begin{equation*}
\begin{split}
&||\nabla^2(\chibh,\omegab)||_{L^\infty_{\ub}L^2(S)}\\
\leq &C(\mathcal O_0)+C(\mathcal O_0,\mathcal R(S),\Delta_1,\tilde{\mathcal O}_{3,2},\mathcal R)\epsilon^{\frac{1}{2}}+C(\mathcal O_0,\mathcal R(S),\Delta_1)\epsilon^\frac{1}{2}(\sum_{i\leq 2}||\nabla^{i}\psi_{\Hb}||_{L^\infty_{\ub}L^2(S)}).
\end{split}
\end{equation*}
Taking the $L^2$ norm in $u$ and using $\displaystyle \sum_{i\leq 2}||\nab^i\psi_{\Hb}||_{L^2_uL^\infty_{\ub}L^2(S)}\leq C(\mathcal O_0,\mathcal R(S),\Delta_1)$ from the bootstrap assumption \eqref{BA2}, we get
$$||\nabla^2(\chibh,\omegab)||_{L^2_uL^\infty_{\ub}L^2(S)}\leq C(\mathcal O_0)+C(\mathcal O_0,\mathcal R(S),\Delta_1,\tilde{\mathcal O}_{3,2},\mathcal R)\epsilon^{\frac{1}{2}}.$$
The conclusion follows from choosing $\epsilon$ sufficiently small depending on $\mathcal O_0,\mathcal R(S),\Delta_1,\tilde{\mathcal O}_{3,2},\mathcal R$.
\end{proof}

We now prove the estimates for $\nab^2\psi_H$. We recall our notation that $\psi_H\in\{\chih,\omega\}$.
\begin{proposition}\label{L2Ricci5}
Assume
$$\mathcal R <\infty,\quad\tilde{\mathcal O}_{3,2}<\infty,\quad\mathcal R(S)<\infty.$$
Then there exists $\epsilon_0=\epsilon_0(\mathcal O_0, \mathcal R, \tilde{\mathcal O}_{3,2},\mathcal R(S), \Delta_1)$ such that whenever $\epsilon\leq \epsilon_0$, 
$$\mathcal O_{2,2}[\chih,\omega]\leq C(\mathcal O_0,\mathcal R(S)).$$
In particular, this estimate is independent of $\Delta_1$.
\end{proposition}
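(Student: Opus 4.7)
\medskip

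\noindent\textbf{Proof proposal.} The plan is to repeat the scheme of Proposition \ref{L4Ricci5} one derivative higher and at $L^2$ rather than $L^4$ on the spheres. The starting point is the $\nab_3$ transport equation for $\psi_H\in\{\chih,\omega\}$ recorded in \eqref{null.str1.sch},
\begin{equation*}
\nab_3\psi_H = \nab\eta+\rhoc+\psi_H(\trchb+\psi_{\Hb})+\psi(\psi+\psi_{\Hb}),
\end{equation*}
commuted twice with $\nab$ via the formulas in Section \ref{commutation}, yielding schematically
\begin{equation*}
\nab_3\nab^2\psi_H = \sum_{i_1+i_2+i_3=2}\nab^{i_1}(\eta+\etab)^{i_2}\nab^{i_3}(\rhoc+\nab\eta) + \sum_{i_1+i_2+i_3+i_4=2}\nab^{i_1}(\eta+\etab)^{i_2}\bigl(\nab^{i_3}\psi_H\nab^{i_4}(\trchb+\psi_{\Hb})+\nab^{i_3}\psi\nab^{i_4}(\psi+\psi_{\Hb})\bigr).
\end{equation*}
For each fixed $\ub$ I apply Proposition \ref{transport} with $p=2$ to integrate in the $e_3$ (i.e.\ $u$) direction over $[0,I]$; this integration yields no smallness constant, so I must bound each source in $L^1_u L^2(S)$ uniformly in $\ub$, pull out any occurrence of $\nab^2\psi_H$ as a linear Gronwall term, and only extract the $\epsilon^{1/2}$ smallness at the very end by taking an $L^2_{\ub}$-norm in which the already-estimated right-hand side is independent of $\ub$.

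For the curvature sources, $\nab^{\le 2}\rhoc$ is controlled by $\mathcal R$ (after Cauchy--Schwarz in $u$, producing at worst $I^{1/2}\mathcal R$), while $\nab^{\le 3}\eta$ is controlled by $\mathcal O_{2,2}[\eta]$ and $\tilde{\mathcal O}_{3,2}[\eta]$ using Proposition \ref{L2Ricci1} together with the bounds for lower-order $\nab\eta$ from Proposition \ref{L4Ricci}. Products involving only $\psi,\psi_H,\psi_{\Hb}$ and not saturating at $\nab^2\psi_H$ are handled by splitting the highest two derivatives: one factor bounded in $L^\infty_u L^2(S)$ (for $\psi$) or $L^2_{\ub}L^\infty_u L^2(S)$ (for $\psi_H$) via Propositions \ref{L4Ricci}, \ref{L2Ricci1}, \ref{L2Ricci3}, the remaining lower-derivative factors in $L^\infty(S)$ or $L^4(S)$ via Sobolev embedding in Propositions \ref{L4}, \ref{Linfty}. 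These terms all come with an $\epsilon^{1/2}$ after the final $L^2_{\ub}$-norm is taken, and hence depend at worst on $(\mathcal O_0,\mathcal R(S),\mathcal R,\tilde{\mathcal O}_{3,2},\Delta_1)$ times $\epsilon^{1/2}$.

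The critical contribution is the linear one $\nab^2\psi_H\cdot(\trchb+\psi_{\Hb})$, which cannot be estimated with smallness in $u$ and must be absorbed by Gronwall. Since $\|(\trchb,\psi_{\Hb})\|_{L^1_u L^\infty(S)}$ for fixed $\ub$ is controlled by a constant $C(\mathcal O_0)$ that is independent of $\ub$ (combining Propositions \ref{L4Ricci1}, \ref{L4Ricci3} and the Sobolev embedding of Proposition \ref{Linfty}), Gronwall's inequality in $u$ yields, for every $\ub\in[0,\epsilon]$,
\begin{equation*}
\|\nab^2\psi_H\|_{L^\infty_u L^2(S_{\cdot,\ub})} \le C(\mathcal O_0,\mathcal R(S))\Bigl(\|\nab^2\psi_H\|_{L^2(S_{0,\ub})} + C(\mathcal R,\tilde{\mathcal O}_{3,2},\Delta_1)\,\epsilon^{1/2}\cdot(\text{$\ub$-indep.\ factor})\Bigr).
\end{equation*}
Taking the $L^2_{\ub}$-norm, using $\ub_*\le\epsilon$ to promote the constant term into a factor $\epsilon^{1/2}$ and using the initial bound $\sum_{i\le 3}\|\nab^i\psi_H\|_{L^2(H_0)}\le\mathcal O_0$ via $\int_0^\epsilon\|\nab^2\psi_H\|_{L^2(S_{0,\ub})}^2 d\ub\le C\mathcal O_0^2$, produces $\mathcal O_{2,2}[\psi_H]\le C(\mathcal O_0,\mathcal R(S))+\epsilon^{1/2}C(\mathcal R,\tilde{\mathcal O}_{3,2},\Delta_1)$, and choosing $\epsilon$ small enough absorbs the $\Delta_1$-dependence.

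\medskip

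\noindent\textbf{Main obstacle.} The subtlety is the mixed-norm bookkeeping at the top derivative order: $\nab^2\psi_H$ lives only in $L^2_{\ub}L^\infty_u L^2(S)$, so any quadratic source of the form $\nab^{i}\psi_H\cdot\nab^{j}\psi_H$ with $i+j=2$ and both factors requiring $L^2_{\ub}$ would prevent us from obtaining a Gronwall-type inequality in $u$ uniformly in $\ub$. Fortunately, the right-hand side of $\nab_3\psi_H$ contains no $\psi_H\psi_H$ nor $\psi_{\Hb}\psi_{\Hb}$ quadratic term; the only $\psi_H$-linear term is $\psi_H(\trchb+\psi_{\Hb})$, whose second factor is controlled in $L^\infty_u L^\infty(S)$ uniformly in $\ub$ by the already-proved Propositions \ref{L4Ricci1} and \ref{L4Ricci3}. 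This structural feature—an instance of the signature/mixed-norm compatibility explained in Sections \ref{secsign} and \ref{secsche}—is what allows the Gronwall closure and the final independence of the bound from $\Delta_1$.
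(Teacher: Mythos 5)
Your overall scheme matches the paper: commute the $\nab_3\psi_H$ equation twice with $\nab$, apply Proposition~\ref{transport} and Gronwall in $u$ for each fixed $\ub$, and take the $L^2_{\ub}$-norm last. There is, however, a gap in your treatment of the lower-order quadratic sources. You assert that all products of the form $\nab^{i_3}\psi_H\nab^{i_4}(\trchb+\psi_{\Hb})$ with $i_3\le 1$, $i_3+i_4\le 2$ ``all come with an $\epsilon^{1/2}$ after the final $L^2_{\ub}$-norm is taken.'' That is not the case. For instance for $\nab\psi_H\nab\psi_{\Hb}$, the only available estimate for the $\psi_H$ factor in the $u$-transport estimate is $\|\nab\psi_H\|_{L^\infty_u L^4(S_{\cdot,\ub})}$, which is \emph{not} bounded uniformly in $\ub$; the subsequent $L^2_{\ub}$-norm merely converts this into $\|\nab\psi_H\|_{L^2_{\ub}L^\infty_u L^4(S)}\le C(\mathcal O_0,\mathcal R(S))$ from Proposition~\ref{L4Ricci5}, and there is nothing left over to yield a smallness factor $\epsilon^{1/2}$. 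This contrasts with terms whose $L^\infty_u$-slot factor is a $\psi$ or $\psi_{\Hb}$: those factors have $\ub$-uniform bounds (e.g.\ $\mathcal O_{0,\infty}[\psi]$, or $\psi_{\Hb}$ in $L^2_u L^\infty_{\ub}L^\infty(S)$), so only there does the final $\ub$-integration over $[0,\epsilon]$ produce $\epsilon^{1/2}$. Consequently your post-Gronwall inequality is missing the term $C(\mathcal O_0,\mathcal R(S))\sum_{i\le 1}\|\nab^i\psi_H\|_{L^\infty_u L^4(S_{\cdot,\ub})}$, which the paper carries explicitly through the Gronwall step. The final conclusion is still salvageable --- that missing contribution is $\Delta_1$-independent and lands in the $C(\mathcal O_0,\mathcal R(S))$ slot, not the $\epsilon^{1/2}$-small slot --- but as written your bookkeeping does not show this, and would wrongly suggest that $\mathcal O_{2,2}[\psi_H]$ differs from its initial data value only by an $\epsilon^{1/2}$ error.

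A secondary imprecision: in the ``main obstacle'' discussion you claim the factor $(\trchb+\psi_{\Hb})$ in the linear term $\psi_H(\trchb+\psi_{\Hb})$ is controlled in $L^\infty_u L^\infty(S)$ uniformly in $\ub$. It is not; only $\trchb$ enjoys an $L^\infty_u L^\infty_{\ub}L^\infty(S)$ bound, while $\psi_{\Hb}$ is controlled only in $L^2_u L^\infty_{\ub}L^\infty(S)$. What is actually needed --- and what you correctly invoke a few lines later --- is $\|(\trchb,\psi_{\Hb})\|_{L^1_u L^\infty(S)}\le C(\mathcal O_0)$ uniformly in $\ub$, which suffices for the Gronwall exponential.
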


\begin{proof}
Consider the following equations for $\psi_H\in\{\chih,\omega\}$:
$$\nabla_3\psi_H=\rhoc+\nabla\eta+\psi\psi+\psi_{\Hb}\psi+\psi_H(\trchb+\psi_{\Hb}).$$
As before, we commute the equations with angular derivatives:
\begin{equation*}
\begin{split}
\nabla_{3}\nabla^2\psi_H=&\sum_{i_1+i_2+i_3=2}\nabla^{i_1}\psi^{i_2}\nabla^{i_3}(\rhoc+\nabla\eta)\\
&+\sum_{i_1+i_2+i_3+i_4=2}\nabla^{i_1}\psi^{i_2}(\nabla^{i_3}\psi_H\nabla^{i_4}(\trchb+\psi_{\Hb})+\nabla^{i_3}\psi\nabla^{i_4}(\psi+\psi_{\Hb})).
\end{split}
\end{equation*}
We first consider the term involving the curvature component $\rhoc$:
\begin{equation*}
\begin{split}
&||\sum_{i_1+i_2+i_3\leq 2}\nabla^{i_1}\psi^{i_2}\nabla^{i_3}\rhoc||_{L^1_{u}L^2(S)} \\
\leq& C(1+\sum_{i_1\leq 1}\sum_{i_2\leq 2}||\nabla^{i_1}\psi||^{i_2}_{L^\infty_{u}L^4(S)})(\sum_{i_3\leq 2}||\nabla^{i_3}\rhoc||_{L^1_{u}L^2(S)})\\
\leq &C(\mathcal O_0,\mathcal R(S),\mathcal R)\sum_{i\leq 2}||\nabla^{i}\rhoc||_{L^2_{u}L^2(S)} \\
\leq &C(\mathcal O_0,\mathcal R(S),\mathcal R).
\end{split}
\end{equation*}
The term containing $\nabla^3\eta$ can be estimated in a similar fashion:
\begin{equation*}
\begin{split}
&||\sum_{i_1+i_2+i_3\leq 2}\nabla^{i_1}\psi^{i_2}\nabla^{i_3+1}\eta||_{L^1_{u}L^2(S)} \\
\leq& C(1+\sum_{i_1\leq 1}\sum_{i_2\leq 2}||\nabla^{i_1}\psi||^{i_2}_{L^\infty_{u}L^4(S)})(\sum_{i_3\leq 2}||\nabla^{i_3+1}\eta||_{L^1_{u}L^2(S)})\\
\leq &C(\mathcal O_0,\mathcal R(S),\mathcal R)\sum_{i\leq 3}||\nabla^{i}\eta||_{L^2_{u}L^2(S)} \\
\leq &C(\mathcal O_0,\mathcal R(S),\mathcal R,\tilde{\mathcal O}_{3,2}[\eta],\Delta_1).
\end{split}
\end{equation*}
We now move to lower order terms. First, we control the terms in which both $\psi_H$ and $\psi_{\Hb}$ appear:
\begin{equation*}
\begin{split}
&||\sum_{i_1+i_2+i_3+i_4\leq 2}\nabla^{i_1}\psi^{i_2}\nabla^{i_3}\psi_H\nabla^{i_4}\psi_{\Hb}||_{L^1_{u}L^2(S)} \\
\leq& C(\sum_{i_1\leq 2}||\psi||^{i_1}_{L^\infty_{u}L^\infty(S)})(||\psi_{H}||_{L^\infty_{u}L^\infty(S)})(\sum_{i_2\leq 2}||\nabla^{i_2}\psi_{\Hb}||_{L^1_{u}L^2(S)}) \\
&+C(\sum_{i_1\leq 2}||\psi||^{i_1}_{L^\infty_{u}L^\infty(S)})(\sum_{i_2\leq 1}||\nabla^{i_2}\psi_{H}||_{L^\infty_{u}L^4(S)})(\sum_{i_3\leq 1}||\nabla^{i_3}\psi_{\Hb}||_{L^1_{u}L^4(S)}) \\
&+C\int_0^u ||\nabla^2\psi_{H}||_{L^2(S_{u',\ub})}||\psi_{\Hb}||_{L^\infty(S_{u',\ub})} du' \\
&+C||\nab\psi||_{L^\infty_{u}L^2(S)}||\psi_{H}||_{L^\infty_{u}L^\infty(S)}||\psi_{\Hb}||_{L^1_{u}L^\infty(S)} \\
\leq &C(\mathcal O_0,\mathcal R(S))(\sum_{i\leq 1}||\nabla^i\psi_{H}||_{L^\infty_{u}L^4(S)})
+C\int_0^u ||\nabla^2\psi_{H}||_{L^2(S_{u',\ub})}||\psi_{\Hb}||_{L^\infty(S_{u',\ub})} du',
\end{split}
\end{equation*}
by Propositions \ref{L4Ricci} and \ref{L2Ricci3}. The term with $\psi_H$ and $\trchb$ can be bounded in a similar fashion:
\begin{equation*}
\begin{split}
&||\sum_{i_1+i_2+i_3+i_4\leq 2}\nabla^{i_1}\psi^{i_2}\nabla^{i_3}\psi_H\nabla^{i_4}\trchb||_{L^1_{u}L^2(S)} \\
\leq &C(\mathcal O_0,\mathcal R(S))(\sum_{i\leq 1}||\nabla^i\psi_{H}||_{L^\infty_{u}L^4(S)})+C\int_0^u ||\nabla^2\psi_{H}||_{L^2(S_{u',\ub})}||\trchb||_{L^\infty(S_{u',\ub})} du',
\end{split}
\end{equation*}
by Propositions \ref{L4Ricci} and \ref{L2Ricci1}.
Then, we estimate the term with $\psi$ and $\psi_{\Hb}$. 
\begin{equation*}
\begin{split}
&||\sum_{i_1+i_2+i_3+i_4\leq 2}\nabla^{i_1}\psi^{i_2}\nabla^{i_3}\psi\nabla^{i_4}\psi_{\Hb}||_{L^1_{u}L^2(S)} \\
\leq& C(\sum_{i_1\leq 2}||\psi||^{i_1}_{L^\infty_{u}L^\infty(S)})(||\psi||_{L^\infty_{u}L^\infty(S)})(\sum_{i_2\leq 2}||\nabla^{i_2}\psi_{\Hb}||_{L^1_{u}L^2(S)}) \\
&+C(\sum_{i_1\leq 2}||\psi||^{i_1}_{L^\infty_{u}L^\infty(S)})(\sum_{i_2\leq 1}||\nabla^{i_2}\psi||_{L^\infty_{u}L^4(S)})(\sum_{i_3\leq 1}||\nabla^{i_3}\psi_{\Hb}||_{L^1_{u}L^4(S)}) \\
&+C||\nabla^2\psi||_{L^\infty_u L^2(S)}||\psi_{\Hb}||_{L^1_uL^\infty(S)} \\
\leq &C(\mathcal O_0,\mathcal R(S),\mathcal R,\Delta_1),
\end{split}
\end{equation*}
by Propositions \ref{L4Ricci}, \ref{L2Ricci1}, \ref{L2Ricci3} and the bootstrap assumption (\ref{BA2}). The term with only $\psi$ can also be controlled similarly:
\begin{equation*}
\begin{split}
&||\sum_{i_1+i_2+i_3+i_4\leq 2}\nabla^{i_1}\psi^{i_2}\nabla^{i_3}\psi\nabla^{i_4}\psi||_{L^1_{u}L^2(S)} \\
\leq& C(\sum_{i_1\leq 2}||\psi||^{i_1}_{L^\infty_{u}L^\infty(S)})(||\psi||_{L^\infty_{u}L^\infty(S)})(\sum_{i_2\leq 2}||\nabla^{i_2}\psi||_{L^\infty_{u}L^2(S)}) \\
&+C(\sum_{i_1\leq 2}||\psi||^{i_1}_{L^\infty_{u}L^\infty(S)})(\sum_{i_2\leq 1}||\nabla^{i_2}\psi||_{L^\infty_{u}L^4(S)})(\sum_{i_3\leq 1}||\nabla^{i_3}\psi||_{L^\infty_{u}L^4(S)}) \\
\leq &C(\mathcal O_0,\mathcal R(S),\mathcal R,\Delta_1),
\end{split}
\end{equation*}
by Propositions \ref{L4Ricci}, \ref{L2Ricci1} and the bootstrap assumption (\ref{BA2}). Therefore, by Proposition \ref{transport}, for fixed $u,\ub$,
\begin{equation*}
\begin{split}
&||\nabla^2\psi_H||_{L^2(S_{u,\ub})}\\
\leq &C||\nabla^2\psi_H||_{L^2(S_{0,\ub})}+C(\mathcal O_0,\mathcal R(S),\mathcal R,\tilde{\mathcal O}_{3,2}[\eta],\Delta_1)+C(\mathcal O_0,\mathcal R(S))(\sum_{i\leq 1}||\nabla^i\psi_{H}||_{L^\infty_{u}L^4(S)})\\
&+C\int_0^u ||\nabla^2\psi_{H}||_{L^2(S_{u',\ub})}||(\trchb,\psi_{\Hb})||_{L^\infty(S_{u',\ub})} du'.
\end{split}
\end{equation*}
By Gronwall's inequality,
\begin{equation*}
\begin{split}
&||\nabla^2\psi_H||_{L^2(S_{u,\ub})}\\
\leq &\left(C||\nabla^2\psi_H||_{L^2(S_{0,\ub})}+C(\mathcal O_0,\mathcal R(S),\mathcal R,\tilde{\mathcal O}_{3,2}[\eta],\Delta_1)+C(\mathcal O_0,\mathcal R(S))(\sum_{i\leq 1}||\nabla^i\psi_{H}||_{L^\infty_{u}L^4(S)})\right)\\
&\times\exp(||(\trchb,\psi_{\Hb})||_{L^1_uL^\infty(S)}).
\end{split}
\end{equation*}
By Propositions \ref{L2Ricci1} and \ref{L2Ricci3}, the norm inside the exponential function is bounded by $C(\mathcal O_0,\mathcal R(S))$. Thus, we have, for each fixed $\ub$
\begin{equation*}
\begin{split}
&||\nabla^2\psi_H||_{L^\infty_uL^2(S)}\\
\leq &C(\mathcal O_0,\mathcal R(S))\left(||\nabla^2\psi_H||_{L^2(S_{0,\ub})}+C(\mathcal O_0,\mathcal R(S),\mathcal R,\tilde{\mathcal O}_{3,2}[\eta],\Delta_1)\right.\\
&\left.\qquad\qquad\qquad\quad+C(\mathcal O_0,\mathcal R(S))(\sum_{i\leq 1}||\nabla^i\psi_{H}||_{L^\infty_{u}L^4(S)})\right).
\end{split}
\end{equation*}
We can now take the $L^2$ norm in $\ub$ to get
\begin{equation*}
\begin{split}
&||\nabla^2\psi_H||_{L^2_{\ub}L^\infty_uL^2(S)}\\
\leq &C(\mathcal O_0,\mathcal R(S))\left(||\nabla^2\psi_H||_{L^2_{\ub}L^2(S_{0,\ub})}+C(\mathcal O_0,\mathcal R(S),\mathcal R,\tilde{\mathcal O}_{3,2}[\eta],\Delta_1)\epsilon^{\frac 12}\right.\\
&\left.\quad\quad\quad\quad\quad\quad\quad\quad\quad\quad\quad\quad\quad+C(\mathcal O_0,\mathcal R(S))(\sum_{i\leq 1}||\nabla^i\psi_{H}||_{L^2_{\ub}L^\infty_{u}L^4(S)})\right)\\
\leq&C(\mathcal O_0,\mathcal R(S))(1+C(\mathcal O_0,\mathcal R(S),\mathcal R,\tilde{\mathcal O}_{3,2},\Delta_1)\epsilon^{\frac 12}),
\end{split}
\end{equation*}
using Proposition \ref{L4Ricci}. Choosing $\epsilon$ sufficiently small, we have
\begin{equation*}
\begin{split}
||\nabla^2\psi_H||_{L^2_{\ub}L^\infty_uL^2(S)}
\leq C(\mathcal O_0,\mathcal R(S)).
\end{split}
\end{equation*}
\end{proof}

We now estimate $\nab^2\trch$:
\begin{proposition}\label{L2Ricci2}
Assume
$$\mathcal R <\infty,\quad\tilde{\mathcal O}_{3,2}<\infty,\quad\mathcal R(S)<\infty.$$
Then there exists $\epsilon_0=\epsilon_0(\mathcal O_0, \mathcal R, \tilde{\mathcal O}_{3,2},\mathcal R(S), \Delta_1)$ such that whenever $\epsilon\leq \epsilon_0$, 
\[
 \mathcal O_{2,2}[\trch]\leq C(\mathcal O_0,\mathcal R(S)).
\]
In particular, this estimate is independent of $\Delta_1$.
\end{proposition}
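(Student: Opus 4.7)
\textbf{Proof plan for Proposition \ref{L2Ricci2}.} The equation for $\trch$ is the $\nab_4$ transport equation
$$\nab_4 \trch = \chih\cdot\chih + \psi(\psi+\psi_H),$$
so my plan is to commute with $\nab^2$ using Proposition \ref{commuteeqn} and then apply Proposition \ref{transport} with $p=2$, integrating in the short ($e_4$) direction from $\ub=0$ so that the initial data contribute $\mathcal O_0$. The commuted equation has the schematic form
\begin{equation*}
\nab_4 \nab^2 \trch = \sum_{i_1+i_2+i_3+i_4=2}\nab^{i_1}\psi^{i_2}\nab^{i_3}\psi\nab^{i_4}(\psi+\psi_H) + \sum_{i_1+i_2+i_3+i_4=2}\nab^{i_1}(\eta,\etab)^{i_2}\nab^{i_3}\chih\,\nab^{i_4}\chih,
\end{equation*}
and the task is to bound the $L^\infty_u L^1_{\ub} L^2(S)$ norm of the right-hand side.

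The first sum is identical in structure to the analogous sum appearing in the proof of Proposition \ref{L2Ricci1}; the same estimates give a bound of the form $C(\mathcal O_0,\mathcal R(S),\mathcal R,\tilde{\mathcal O}_{3,2},\Delta_1)\,\epsilon^{1/2}$. Because of the $\epsilon^{1/2}$, this contribution is absorbed for $\epsilon$ sufficiently small and therefore does not pollute the final constant with $\Delta_1$, $\mathcal R$ or $\tilde{\mathcal O}_{3,2}$ dependence. The delicate term is the genuinely new $\chih\cdot\chih$ quadratic piece: since $\chih$ is only controlled in $L^2_{\ub}$-based mixed norms, a Cauchy--Schwarz in $\ub$ yields only $L^1_{\ub}$, giving no $\epsilon^{1/2}$ smallness. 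This is precisely the structural obstacle signalled in the remark before Proposition \ref{L4Ricci2}, and it is the main point of the argument.

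To handle it, I would split according to how the two derivatives are distributed. For the piece $\chih\cdot\nab^2\chih$ use
$$\|\chih\cdot\nab^2\chih\|_{L^\infty_u L^1_{\ub}L^2(S)} \leq \|\chih\|_{L^2_{\ub}L^\infty_u L^\infty(S)}\,\|\nab^2\chih\|_{L^2_{\ub}L^\infty_u L^2(S)},$$
and for $\nab\chih\cdot\nab\chih$ use
$$\|\nab\chih\cdot\nab\chih\|_{L^\infty_u L^1_{\ub}L^2(S)}\leq \|\nab\chih\|_{L^2_{\ub}L^\infty_u L^4(S)}^{2}.$$
By Propositions \ref{L4Ricci5} and \ref{L2Ricci5} (and the Sobolev embedding of Proposition \ref{Linfty}, which converts $\nab^{\leq 1}\chih\in L^4(S)$ control into $L^\infty(S)$ control), each of these factors is bounded by $C(\mathcal O_0,\mathcal R(S))$, so the total $\chih\cdot\chih$ contribution is also $\leq C(\mathcal O_0,\mathcal R(S))$, with no $\epsilon^{1/2}$ but crucially with no $\Delta_1$, $\mathcal R$, or $\tilde{\mathcal O}_{3,2}$ dependence. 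The remaining pieces of the second sum carry extra $(\eta,\etab)$ factors, which are already bounded in $L^\infty$ by Proposition \ref{LinftyRicci1} with constant $C(\mathcal O_0,\mathcal R(S))$; so they are handled in the same way.

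Combining the two sums and applying Proposition \ref{transport} gives
$$\|\nab^2\trch\|_{L^\infty_u L^\infty_{\ub}L^2(S)}\leq \mathcal O_0 + C(\mathcal O_0,\mathcal R(S)) + C(\mathcal O_0,\mathcal R(S),\mathcal R,\tilde{\mathcal O}_{3,2},\Delta_1)\,\epsilon^{1/2},$$
which upon choosing $\epsilon$ sufficiently small depending on $\mathcal O_0,\mathcal R(S),\mathcal R,\tilde{\mathcal O}_{3,2},\Delta_1$ yields $\mathcal O_{2,2}[\trch]\leq C(\mathcal O_0,\mathcal R(S))$ as required. The main obstacle is the absence of smallness for the $\chih\cdot\chih$ term; the resolution is that the relevant mixed-norm bounds for $\chih$ from the preceding propositions are themselves independent of $\Delta_1$, so linearity in $\Delta_1$ never enters.
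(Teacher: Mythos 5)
Your proposal is correct and follows essentially the same route as the paper: same transport equation, same observation that the $\psi$ and $\psi\psi_H$ terms are handled exactly as in Proposition \ref{L2Ricci1} with an $\epsilon^{1/2}$ smallness factor, and the same Cauchy--Schwarz treatment of the $\chih\cdot\chih$ term, placing one factor in the $L^2_{\ub}L^\infty_u L^\infty(S)$ norm (via Proposition \ref{L4Ricci5} and Sobolev embedding) and the other in $L^2_{\ub}L^\infty_u L^2(S)$ or $L^2_{\ub}L^\infty_u L^4(S)$ (via Propositions \ref{L4Ricci5} and \ref{L2Ricci5}), both of which carry $\Delta_1$-independent bounds $C(\mathcal O_0,\mathcal R(S))$. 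The one cosmetic remark is that the $\chih\chih$ contribution also contains pieces with $\nab\psi$ acting (coming from $i_1=i_2=1$ in $\nab^{i_1}\psi^{i_2}$), which are controlled by the $\mathcal O_{1,4}$ bounds rather than just the $L^\infty$ bound of Proposition \ref{LinftyRicci1} you cite; the paper records this explicitly as a separate line in its Cauchy--Schwarz, but this does not affect the conclusion.
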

\begin{proof}
Using the null structure equations, we have
$$\nabla_{4}\trch=\psi\psi+\psi_H\psi+\chih\chih.$$
We also have the null structure equations commuted with angular derivatives:
$$\nabla_{4}\nabla^i\trch=\sum_{i_1+i_2+i_3+i_4=i}\nabla^{i_1}\psi^{i_2}\nabla^{i_3}\psi\nabla^{i_4}(\psi+\psi_H)+\sum_{i_1+i_2+i_3+i_4=i}\nabla^{i_1}\psi^{i_2}\nabla^{i_3}\chih\nabla^{i_4}\chih.$$
By Proposition \ref{transport}, in order to estimate $||\nabla^i\trch||_{L^\infty_uL^\infty_{\ub}L^2(S)}$, it suffices to estimate the initial data and the $||\cdot||_{L^\infty_{u}L^1_{\ub}L^2(S)}$ of the right hand side. From the proof of Proposition \ref{L2Ricci1}, we have
\begin{equation*}
\begin{split}
&||\sum_{i_1+i_2+i_3+i_4\leq 2}\nabla^{i_1}\psi^{i_2}\nabla^{i_3}\psi\nabla^{i_4}\psi_H||_{L^\infty_{u}L^1_{\ub}L^2(S)}
\leq C(\mathcal O_0,\mathcal R(S),\mathcal R,\Delta_1)\epsilon^{\frac{1}{2}}.
\end{split}
\end{equation*}
and
\begin{equation*}
\begin{split}
||\sum_{i_1+i_2+i_3+i_4\leq 2}\nabla^{i_1}\psi^{i_2}\nabla^{i_3}\psi\nabla^{i_4}\psi||_{L^\infty_{u}L^1_{\ub}L^2(S)} 
\leq C(\mathcal O_0,\mathcal R(S),\mathcal R,\Delta_1)\epsilon^{\frac{1}{2}}.
\end{split}
\end{equation*}
The only new term compared which did not appear in the proof of Proposition \ref{L2Ricci1} is the term involving $\chih\chih$:
\begin{equation*}
\begin{split}
&||\sum_{i_1+i_2+i_3+i_4\leq 2}\nabla^{i_1}\psi^{i_2}\nabla^{i_3}\chih\nabla^{i_4}\chih||_{L^\infty_{u}L^1_{\ub}L^2(S)} \\
\leq& C(\sum_{i_1\leq 2}||\psi||^{i_1}_{L^\infty_{u}L^\infty_{\ub}L^\infty(S)})(\sum_{i_2\leq 2}||\nabla^{i_2}\chih||_{L^\infty_{u}L^2_{\ub}L^2(S)})(||\chih||_{L^\infty_{u}L^2_{\ub}L^\infty(S)}) \\
&+ C(\sum_{i_1\leq 2}||\psi||^{i_1}_{L^\infty_{u}L^\infty_{\ub}L^\infty(S)})(\sum_{i_2\leq 1}||\nabla^{i_2}\chih||_{L^\infty_{u}L^2_{\ub}L^4(S)})(||\nabla\chih||_{L^\infty_{u}L^2_{\ub}L^4(S)}) \\
&+C(\sum_{i_1\leq 1}||\nabla^{i_1}\psi||_{L^\infty_{u}L^\infty_{\ub}L^2(S)})(||\chih||^2_{L^2_{\ub}L^\infty_{u}L^\infty(S)})\\
\leq &C(\mathcal O_0,\mathcal R(S)),
\end{split}
\end{equation*}
using Propositions \ref{LinftyRicci1}, \ref{L4Ricci40}, \ref{L4Ricci} and \ref{L2Ricci5}. The conclusion thus follows from the above bounds and Proposition \ref{transport} by choosing $\epsilon$ appropriately small.
\end{proof}
We now prove the $L^2(S)$ control for $\nab^2\etab$, thus obtaining all the $\mathcal O_{2,2}$ estimates. As in Proposition \ref{L4Ricci4} where the $L^4(S)$ estimates for $\nab\etab$ were derived, we need to integrate in the $\nab_3$ direction and will not be able to gain a smallness constant. In order to get a bound independent of $\mathcal R$, instead of controlling $\nab^2\etab$ directly, we first estimate $\nab\mub$, where $\mub=-\div\etab-\rhoc$. We then obtain the desired bounds for $\nab^2\etab$ by elliptic estimates. This allows us to take only one derivative of the curvature components which can be controlled using the $\mathcal R(S)$ norm instead of the $\mathcal R$ norm.
\begin{proposition}\label{L2Ricci4}
Assume
$$\mathcal R <\infty,\quad\tilde{\mathcal O}_{3,2}<\infty,\quad\mathcal R(S)<\infty.$$
Then there exists $\epsilon_0=\epsilon_0(\mathcal O_0, \mathcal R, \tilde{\mathcal O}_{3,2},\mathcal R(S), \Delta_1)$ such that whenever $\epsilon\leq \epsilon_0$, 
\[
 \mathcal O_{2,2}[\etab]\leq C(\mathcal O_0,\mathcal R(S)).
\]
In particular, this estimate is independent of $\Delta_1$.
\end{proposition}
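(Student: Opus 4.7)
The proof proceeds in two stages: first, a transport estimate for the mass aspect function
\begin{equation*}
\mub := -\div\etab - \rhoc,
\end{equation*}
followed by an elliptic div--curl estimate that recovers $\nab^2\etab$ from $\nab\mub$ together with $\nab\rhoc$ and $\nab\sigmac$. The raison d'\^etre of $\mub$ is that the leading order $\div\betab$ contributions, which would otherwise force a $\nab^2\betab$ bound (unavailable in $\mathcal R(S)$), cancel between the two defining pieces.

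\textbf{Step 1: Transport equation for $\mub$.} Apply $\div$ to the null structure equation $\nab_3\etab = -\chib\cdot(\etab-\eta)+\betab$ and combine the result with the renormalized Bianchi equation $\nab_3\rhoc + \tfrac{3}{2}\trchb\rhoc = -\div\betab + \tfrac12\chibh\cdot(\nab\hot\eta) + \ldots$ from \eqref{eq:null.Bianchi2}. The $\div\betab$ terms cancel, yielding the schematic identity
\begin{equation*}
\nab_3\mub + \trchb\mub = \nab(\psi_{\Hb}\cdot\psi) + \psi\cdot(\rhoc,\sigmac) + \psi_{\Hb}\cdot\nab\psi + \psi\cdot\betab + \psi\psi\psi,
\end{equation*}
with \emph{no} $\div\betab$ on the right hand side. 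Every remaining $\betab$ factor is then eliminated via the Codazzi identity $\betab = \div\chibh - \tfrac{1}{2}\nab\trchb - \psi(\trchb+\chibh)$, leaving a right hand side consisting only of $\nab^2\Gamma$, $\Gamma\cdot\nab\Gamma$, $\Gamma\cdot(\rhoc,\sigmac)$, and cubic products.

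\textbf{Step 2: Estimate for $\nab\mub$.} Commute the above transport equation with $\nab$. The commutator $[\nab_3,\nab]$ acting on the scalar $\mub$ produces a linear-in-$\nab\mub$ source with coefficients $\chibh,\trchb,\etab$; the genuine curvature commutator contributions (which involve $\betab, \bb$) are again removed using Codazzi. Schematically,
\begin{equation*}
\nab_3\nab\mub + \trchb\,\nab\mub = \nab^2\Gamma + \nab(\rhoc,\sigmac) + \Gamma\cdot\nab\Gamma + \psi\psi\nab\Gamma + (\chibh+\trchb+\etab)\cdot\nab\mub.
\end{equation*}
Every $\nab^2\Gamma$ with $\Gamma\neq\etab$ has already been controlled by $C(\mathcal O_0,\mathcal R(S))$ in Propositions \ref{L2Ricci1}, \ref{L2Ricci3}, \ref{L2Ricci5}, \ref{L2Ricci2}, and $\nab(\rhoc,\sigmac)$ are bounded in $L^\infty_{\ub}L^\infty_u L^2(S)$ directly by $\mathcal R(S)$. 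Integrating in the $e_3$-direction on an arbitrarily long $u$-interval gives no smallness, but the inhomogeneous source is uniformly bounded and the remaining term is linear in $\nab\mub$ with coefficient in $L^1_u L^\infty(S)$ by Propositions \ref{L4Ricci1}--\ref{L4Ricci3}. Gronwall's inequality then yields
\begin{equation*}
\sup_{u,\ub}\|\nab\mub\|_{L^2(S_{u,\ub})} \leq C(\mathcal O_0,\mathcal R(S)).
\end{equation*}

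\textbf{Step 3: Elliptic recovery.} The pair $(\div\etab,\curl\etab) = (-\mub-\rhoc, -\sigmac)$ is a Hodge system on each $S_{u,\ub}$. Applying Proposition \ref{ellipticthm}, together with $\|K\|_{L^2(S)} \leq \|\rhoc\|_{L^2(S)} + \|\psi\psi\|_{L^2(S)} \leq C(\mathcal O_0,\mathcal R(S))$ from the Gauss equation in \eqref{null.str3.sch}, gives
\begin{equation*}
\|\nab^2\etab\|_{L^2(S)} \leq C\bigl(\|\nab\mub\|_{L^2(S)}+\|\nab\rhoc\|_{L^2(S)}+\|\nab\sigmac\|_{L^2(S)}+\|\etab\|_{L^2(S)}\bigr).
\end{equation*}
By Step 2, the definition of $\mathcal R(S)$, and the $L^\infty$ bound for $\etab$ from Proposition \ref{LinftyRicci1}, the right hand side is bounded by $C(\mathcal O_0,\mathcal R(S))$, uniformly in $(u,\ub)$. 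This is the asserted estimate.

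\textbf{Main obstacle.} The decisive point is the cancellation of the $\div\betab$ source in the transport equation for $\mub$; without it, closing the bound would require $\nab^2\betab$, which lies outside $\mathcal R(S)$. A secondary subtlety is the treatment of the $\betab$ and $\bb$ terms generated by the Riemann commutator $[D_3,D_B]\etab_A$ upon commuting with $\nab$: these must be traded via the Codazzi identities for two angular derivatives of $\chibh$ and $\trchb$, which are in turn controlled by the already-established $\mathcal O_{2,2}$ bounds for Ricci coefficients other than $\etab$. Once both cancellations are exploited, all remaining source terms are either linear in $\nab\mub$ (handled by Gronwall) or controlled independently of $\mathcal R$ and $\Delta_1$.
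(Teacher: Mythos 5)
Your proposal follows the paper's own strategy: renormalize via the mass aspect $\mub=-\div\etab-\rhoc$ so that $\div\betab$ drops out, commute once with $\nab$, use Codazzi to replace $\betab$, close via Gronwall, and recover $\nab^2\etab$ from the Hodge system. The main ideas are correct.

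One logical ordering issue needs tightening. After commuting, the transport equation for $\nab\mub$ has $\psi_{\Hb}\cdot\nab^2\etab$ among its sources, and this term is \emph{not} a priori uniformly bounded at the moment you invoke Gronwall in your Step~2 — indeed $\nab^2\etab$ is precisely what you are trying to bound. As written, Step~2 already claims the Gronwall conclusion before Step~3 supplies the elliptic estimate. The correct order (which the paper follows) is: first derive the inequality with the Duhamel term $\int_0^u\|\nab^2\etab\|_{L^2(S)}\|\psi_{\Hb}\|_{L^\infty(S)}\,du'$, \emph{then} apply the elliptic estimate for the div--curl system pointwise in $(u,\ub)$ to trade $\nab^2\etab$ for $\nab\mub$ (plus $\nab(\rhoc,\sigmac)$ and $\etab$, already controlled via $\mathcal R(S)$ and Proposition~\ref{LinftyRicci1}), and \emph{only then} apply Gronwall to $\nab\mub$. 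Your Step~3, applied last, then converts the $\nab\mub$ bound into a bound on $\nab^2\etab$. With that rearrangement the argument is complete and matches the paper's proof.
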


\begin{proof}
Recall that
$$\mub=-\div\etab-\rhoc.$$
We have the following equation:
$$\nabla_3 \mub=\psi(\rhoc,\sigmac,\betab)+\psi\nabla(\psi+\psi_{\Hb})+\psi_{\Hb}\nabla\psi+\psi\psi(\trchb+\psi_{\Hb})+\psi\chibh\chibh,$$
The mass aspect function $\mub$ is constructed so that there is no first derivative of curvature components in the equation. Moreover, the equation does not contain $\psi_H$. This cannot be derived from signature considerations alone, but requires the exact form of the $\nab_3\rhoc$ equation as shown in Section \ref{secsche}.

After commuting with angular derivatives and substituting the Codazzi equation
$$\betab=\sum_{i_1+i_2=1}\psi^{i_1}\nab^{i_2}(\trchb+\psi_{\Hb}),$$ 
we get
\begin{equation*}
\begin{split}
\nabla_{3}\nabla\mub=&\sum_{i_1+i_2+i_3=1}\psi^{i_1}\nabla^{i_2}\psi\nabla^{i_3}(\rhoc,\sigmac)+\sum_{i_1+i_2+i_3+i_4=2}\nabla^{i_1}\psi^{i_2}\nabla^{i_3}\psi\nabla^{i_4}(\trchb+\psi_{\Hb})\\
&+\sum_{i_1+i_2+i_3+i_4=2}\nabla^{i_1}\psi^{i_2}\nabla^{i_3}\chibh\nabla^{i_4}\chibh.
\end{split}
\end{equation*}
Fix $\ub$. We now estimate each of the terms. Firstly, the term with curvature:
\begin{equation*}
\begin{split}
&||\sum_{i_1+i_2+i_3\leq 1}\psi^{i_1}\nabla^{i_2}\psi\nabla^{i_3}(\rhoc,\sigmac)||_{L^1_{u}L^2(S)} \\
\leq& C(1+\sum_{i_1\leq 1}\sum_{i_2\leq 2}||\nabla^{i_1}\psi||^{i_2}_{L^\infty_{u}L^2(S)})(\sum_{i_3\leq 1}||\nabla^{i_3}(\rhoc,\sigmac)||_{L^\infty_{\ub}L^1_{u}L^2(S)})\\
\leq &C(\mathcal O_0,\mathcal R(S))
\end{split}
\end{equation*}
by Propositions \ref{L4Ricci40}, \ref{L4Ricci} and the definition of $\mathcal R(S)$.
We then estimate the nonlinear Ricci coefficient terms with at most one $\psi_H$:
\begin{equation*}
\begin{split}
&||\sum_{i_1+i_2+i_3+i_4\leq 2}\nabla^{i_1}\psi^{i_2}\nabla^{i_3}\psi\nabla^{i_4}(\trchb+\psi_{\Hb})||_{L^1_{u}L^2(S)} \\
\leq& C(1+\sum_{i_1\leq 1}\sum_{i_2\leq 2}||\nabla^{i_1}\psi||^{i_2}_{L^\infty_{u}L^2(S)})(\sum_{i\leq 1}||\nabla^{i}(\psi_{\Hb},\trchb)||_{L^1_{u}L^4(S)})\\
&+C||\psi||_{L^\infty_{u}L^\infty(S)}||\nabla^2(\psi_{\Hb},\trchb)||_{L^1_{u}L^2(S)}\\
&+C||\nab^2(\eta,\trch,\trchb)||_{L^\infty_{u}L^2(S)}||(\psi_{\Hb},\trchb)||_{L^1_{u}L^\infty(S)}\\
&+C\int_0^u||\nabla^2\etab||_{L^4(S_{u',\ub})}||(\psi_{\Hb},\trchb)||_{L^\infty(S_{u',\ub})}du'\\
\leq &C(\mathcal O_0,\mathcal R(S))+ C\int_0^u||\nabla^2\etab||_{L^2(S_{u',\ub})}||(\psi_{\Hb},\trchb)||_{L^\infty(S_{u',\ub})}du'
\end{split}
\end{equation*}
by Propositions \ref{LinftyRicci1}, \ref{L4Ricci40}, \ref{L4Ricci}, \ref{L2Ricci1} and \ref{L2Ricci3}. We control the nonlinear term with two $\chibh$:
\begin{equation*}
\begin{split}
&||\sum_{i_1+i_2+i_3+i_4\leq 1}\psi^{i_1}\nabla^{i_2}\psi\nabla^{i_3}\chibh\nabla^{i_4}\chibh||_{L^1_{u}L^2(S)} \\
\leq& C(1+\sum_{i_1\leq 1}\sum_{i_2\leq 2}||\nabla^{i_1}\psi||^{i_2}_{L^\infty_{u}L^2(S)})(\sum_{i_3\leq 1}||\nabla^{i_3}\chibh||_{L^2_{u}L^2(S)})(||\chibh||_{L^2_{u}L^\infty(S)})\\
\leq &C(\mathcal O_0,\mathcal R(S))
\end{split}
\end{equation*}
by Propositions \ref{LinftyRicci1}, \ref{L4Ricci40} and \ref{L2Ricci2}.
Therefore, by Proposition \ref{transport}, we have, for every $\ub$,
\begin{equation}\label{mub1est}
\begin{split}
&||\nabla\mub||_{L^\infty_u L^2(S)}\\
\leq &C(\mathcal O_0,\mathcal R(S))+ C\int_0^u||\nabla^2\etab||_{L^2(S_{u',\ub})}||(\psi_{\Hb},\trchb)||_{L^\infty(S_{u',\ub})}du'.
\end{split}
\end{equation}
We now use the div-curl system
\begin{equation*}
\begin{split}
\div\etab=&-\mub-\rhoc,\\
\curl\etab=&-\sigmac
\end{split}
\end{equation*}
together with the elliptic estimates in Proposition \ref{ellipticthm} to get the bound
\begin{equation*}
\begin{split}
&||\nabla^2\etab||_{L^\infty_{\ub}L^\infty_uL^2(S)}\\
\leq& C(\sum_{i_1\leq 1}||\nab^{i_1}K||_{L^\infty_{\ub}L^\infty_uL^2(S)})\\
&\qquad\times(\sum_{i_2\leq 1}||\nabla^{i_2}\mub||_{L^\infty_{\ub}L^\infty_uL^2(S)}+\sum_{i_3\leq 1}||\nabla^{i_3}(\rhoc,\sigmac)||_{L^\infty_{\ub}L^\infty_uL^2(S)}+||\etab||_{L^\infty_{\ub}L^\infty_uL^2(S)}).
\end{split}
\end{equation*}
Since the $\mathcal R(S)$ norm controls $\nab^i K$ and $\nab^i(\rhoc,\sigmac)$ in $L^\infty_{\ub}L^\infty_uL^2(S)$ for $i\leq 1$, we have
\begin{equation}\label{ellipticetab0}
\begin{split}
&||\nabla^2\etab||_{L^\infty_{\ub}L^\infty_uL^2(S)}\\
\leq& C(\mathcal O_0,\mathcal R(S))(1+\sum_{i_2\leq 1}||\nabla^{i_2}\mub||_{L^\infty_{\ub}L^\infty_uL^2(S)}+||\etab||_{L^\infty_{\ub}L^\infty_uL^2(S)}).
\end{split}
\end{equation}
This, together with \eqref{mub1est} and Propositions \ref{L4Ricci40}, \ref{L4Ricci}, implies that 
\begin{equation*}
\begin{split}
&||\nabla\mub||_{L^\infty_u L^2(S)}\\
\leq &C(\mathcal O_0,\mathcal R(S))+ C(\mathcal O_0,\mathcal R(S))\int_0^u(1+||\nabla\mub||_{L^2(S_{u',\ub})})||(\psi_{\Hb},\trchb)||_{L^\infty(S_{u',\ub})}du'.
\end{split}
\end{equation*}
By Gronwall's inequality and Proposition \ref{L4Ricci}, we have
\begin{equation*}
\begin{split}
&||\nabla\mub||_{L^\infty_u L^2(S)}\\
\leq &C(\mathcal O_0,\mathcal R(S))\exp(\int_0^u||(\psi_{\Hb},\trchb)||_{L^\infty(S_{u',\ub})}du').
\end{split}
\end{equation*}
By Proposition \ref{L4Ricci},
$$\exp(\int_0^u||(\psi_{\Hb},\trchb)||_{L^\infty(S_{u',\ub})}du')\leq C(\mathcal O_0,\mathcal R(S)).$$
Thus
$$||\nabla\mub||_{L^\infty_u L^2(S)}\leq C(\mathcal O_0,\mathcal R(S)).$$
By (\ref{ellipticetab0}) and Proposition \ref{L4Ricci}, this implies
$$||\nabla^2\etab||_{L^\infty_{\ub}L^\infty_uL^2(S)}\leq C(\mathcal O_0,\mathcal R(S)),$$
as claimed.
\end{proof}
Using the Sobolev embedding theorem given by Theorem \ref{L4}, we improve the estimates in Proposition \ref{L4Ricci4}:
\begin{proposition}\label{L4Ricci4b}
\[
 \sum_{i\leq 1}\mathcal O_{i,4}[\etab]\leq C(\mathcal O_0,\mathcal R(S)).
\]
\end{proposition}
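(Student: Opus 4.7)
The plan is to obtain the $\sum_{i\leq 1}\mathcal{O}_{i,4}[\etab]$ bound directly from the $L^2(S)$-estimates for $\etab$, $\nab\etab$, and $\nab^2\etab$ already established in Propositions \ref{LinftyRicci1}, \ref{L4Ricci40} and \ref{L2Ricci4}, by invoking the Sobolev embedding $H^1(S)\hookrightarrow L^4(S)$ of Proposition \ref{L4}. The improvement over Proposition \ref{L4Ricci4} is that we avoid passing through the $\nab_3$ transport equation for $\nab\etab$ (which forced a source term containing $\nab\betab$, and thus linear dependence on $\mathcal{R}$); instead we control $\nab\etab$ in $L^4(S)$ by interpolating between $\nab\etab$ and $\nab^2\etab$ in $L^2(S)$, both of which have already been bounded purely in terms of $\mathcal{O}_0$ and $\mathcal{R}(S)$.

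Concretely, first I would apply Proposition \ref{L4} pointwise in $(u,\ub)$ to $\phi=\nab\etab$ to obtain
\[
\|\nab\etab\|_{L^4(S_{u,\ub})}\;\leq\; C\bigl(\|\nab\etab\|_{L^2(S_{u,\ub})}+\|\nab^2\etab\|_{L^2(S_{u,\ub})}\bigr).
\]
Taking the supremum over $(u,\ub)\in[0,I]\times[0,\epsilon]$ on the right-hand side, the first term is controlled by $\mathcal{O}_{1,2}[\etab]\leq C(\mathcal{O}_0,\mathcal{R}(S))$ from Proposition \ref{L4Ricci40}, and the second is controlled by $\mathcal{O}_{2,2}[\etab]\leq C(\mathcal{O}_0,\mathcal{R}(S))$ from Proposition \ref{L2Ricci4}. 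This yields $\mathcal{O}_{1,4}[\etab]\leq C(\mathcal{O}_0,\mathcal{R}(S))$.

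For the $i=0$ case, I would apply Proposition \ref{L4} to $\phi=\etab$ itself, giving
\[
\|\etab\|_{L^4(S_{u,\ub})}\;\leq\; C\bigl(\|\etab\|_{L^2(S_{u,\ub})}+\|\nab\etab\|_{L^2(S_{u,\ub})}\bigr),
\]
where both terms on the right are bounded by $\mathcal{O}_{1,2}[\etab]$ (together with the $L^\infty$ control from Proposition \ref{LinftyRicci1} and Proposition \ref{area} to compare $L^2(S)$ norms with the pointwise bound). Taking the supremum in $(u,\ub)$ produces $\mathcal{O}_{0,4}[\etab]\leq C(\mathcal{O}_0,\mathcal{R}(S))$. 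Summing the two contributions completes the proof.

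Since all inputs are already in place, there is essentially no obstacle; the only point to verify carefully is that Sobolev embedding in Proposition \ref{L4} is applicable uniformly on every $S_{u,\ub}$, which is guaranteed by the bounds on $\gamma$ and $\det\gamma$ from Proposition \ref{gamma} (valid under the bootstrap assumption \eqref{BA1}, which has already been improved in Proposition \ref{L4Ricci}). Consequently, the constant in the embedding depends only on the initial data and the previously estimated quantities, not on $\mathcal{R}$, so the final estimate depends only on $\mathcal{O}_0$ and $\mathcal{R}(S)$, as claimed.
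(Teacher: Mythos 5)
Your proposal is correct and coincides with the paper's proof: Proposition \ref{L4Ricci4b} is obtained precisely by applying the Sobolev embedding of Proposition \ref{L4} to $\etab$ and $\nab\etab$, feeding in the $L^2(S)$ bounds on $\etab$, $\nab\etab$, $\nab^2\etab$ from Propositions \ref{LinftyRicci1}, \ref{L4Ricci40}, \ref{L2Ricci4}, all of which depend only on $\mathcal O_0$ and $\mathcal R(S)$. You also correctly identify why this supersedes Proposition \ref{L4Ricci4}: the linear $\mathcal R$-dependence there came from estimating $\nab\betab$ in the transport equation, whereas the $\mathcal O_{2,2}[\etab]$ bound circumvents it via $\nab\mub$ and elliptic estimates.
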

Putting all the estimates in this subsection together, we obtain
\begin{proposition}\label{L2Ricci}
Assume
$$\mathcal R <\infty,\quad\tilde{\mathcal O}_{3,2}<\infty,\quad\mathcal R(S)<\infty.$$
Then there exists $\epsilon_0=\epsilon_0(\mathcal O_0, \mathcal R, \tilde{\mathcal O}_{3,2},\mathcal R(S))$ such that whenever $\epsilon\leq \epsilon_0$, 
$$\mathcal O_{2,2}\leq C(\mathcal O_0,\mathcal R(S)).$$ 
\end{proposition}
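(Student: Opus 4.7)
The plan is to observe that Proposition \ref{L2Ricci} is essentially the conjunction of Propositions \ref{L2Ricci1}, \ref{L2Ricci3}, \ref{L2Ricci5}, \ref{L2Ricci2}, and \ref{L2Ricci4}, which together treat every Ricci coefficient appearing in $\mathcal O_{2,2}$, namely $\trch$, $\trchb$, $\eta$, $\etab$, $\chih$, $\chibh$, $\omega$, $\omegab$. Each of those five propositions produces a bound of the form $\mathcal O_{2,2}[\,\cdot\,]\leq C(\mathcal O_0,\mathcal R(S))$ in which the constant is independent of the bootstrap parameter $\Delta_1$ in (\ref{BA2}); only the threshold $\epsilon_0$ below which the bound is valid is allowed to depend on $\Delta_1$ (as well as on $\mathcal O_0,\mathcal R,\tilde{\mathcal O}_{3,2},\mathcal R(S)$). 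Hence the task reduces to a standard continuity/bootstrap closure.

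Concretely, I would first choose
\[
\Delta_1 \;:=\; 2\,C^\ast(\mathcal O_0,\mathcal R(S)),
\]
where $C^\ast$ denotes the maximum of the constants appearing in the conclusions of Propositions \ref{L2Ricci1}, \ref{L2Ricci3}, \ref{L2Ricci5}, \ref{L2Ricci2}, \ref{L2Ricci4}. With $\Delta_1$ now fixed as a function of $\mathcal O_0$ and $\mathcal R(S)$, the auxiliary bootstrap (\ref{BA1}) must also be closed; this is supplied by Proposition \ref{L4Ricci}, which yields $\Delta_0=\Delta_0(\mathcal O_0,\mathcal R(S),\mathcal R)$ so that (\ref{BA1}) is improved by a factor of $2$ provided one stays below the corresponding $\epsilon$-threshold. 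Taking $\epsilon_0$ smaller than all of the thresholds produced by Propositions \ref{L4Ricci}, \ref{L2Ricci1}, \ref{L2Ricci3}, \ref{L2Ricci5}, \ref{L2Ricci2}, \ref{L2Ricci4} — each of which depends only on $\mathcal O_0,\mathcal R,\tilde{\mathcal O}_{3,2},\mathcal R(S),\Delta_0,\Delta_1$, and thus ultimately only on $\mathcal O_0,\mathcal R,\tilde{\mathcal O}_{3,2},\mathcal R(S)$ — gives an $\epsilon_0$ of the required form.

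The bootstrap is then run in the usual way. Let $\mathcal U\subset [0,I]\times[0,\epsilon]$ be the set of $(u_\ast,\ub_\ast)$ such that (\ref{BA1}) and (\ref{BA2}) hold on $[0,u_\ast]\times[0,\ub_\ast]$ with parameters $\Delta_0,\Delta_1$ respectively. This set is nonempty (it contains a neighborhood of $(0,0)$ by continuity and the initial data assumptions), closed in the relative topology (the norms are continuous), and open: on $\mathcal U$ the five component propositions apply and yield $\mathcal O_{2,2}\leq \tfrac12\Delta_1$, while Proposition \ref{L4Ricci} yields $\mathcal O_{0,\infty}+\sum_{i\leq 1}\mathcal O_{i,4}\leq \tfrac12\Delta_0$, so the assumptions can be extended to a neighborhood. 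Connectedness gives $\mathcal U=[0,I]\times[0,\epsilon]$, and the improved bound $\mathcal O_{2,2}\leq C(\mathcal O_0,\mathcal R(S))$ follows on the full region.

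There is no new obstacle at this stage: all the genuine analytic difficulties — the asymmetry between $\nabla_3$ and $\nabla_4$ integration over the long $u$-interval, the need to control $\psi_H$ and $\psi_{\Hb}$ only in mixed $L^2_{\ub}L^\infty_u$ and $L^2_u L^\infty_{\ub}$ norms, and the introduction of the mass aspect function $\mub$ to obtain an estimate for $\nabla^2\etab$ that does not cost a derivative of curvature — have already been resolved in the preceding propositions. The only point that requires care in writing up the bookkeeping is to verify that the $\Delta_1$-dependence enters each of the five propositions only through the smallness threshold on $\epsilon$ (and never through the size of the output constant), so that the circular appearance of $\Delta_1$ on both sides of the bootstrap can indeed be closed by smallness in $\epsilon$; this is explicit in the statements of Propositions \ref{L2Ricci1}, \ref{L2Ricci3}, \ref{L2Ricci5}, \ref{L2Ricci2}, \ref{L2Ricci4}.
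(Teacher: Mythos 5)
Your argument is correct and matches the paper's own proof: the paper also fixes $\Delta_1\gg C(\mathcal O_0,\mathcal R(S))$ as the maximum of the output constants of Propositions \ref{L2Ricci1}, \ref{L2Ricci3}, \ref{L2Ricci5}, \ref{L2Ricci2}, \ref{L2Ricci4}, notes that only the $\epsilon$-threshold (not the output constant) depends on $\Delta_1$, and closes the bootstrap (\ref{BA2}) by choosing $\epsilon$ small. The only difference is cosmetic: you spell out the open/closed/connected continuity argument and the concurrent closure of (\ref{BA1}) via Proposition \ref{L4Ricci}, whereas the paper leaves these steps implicit.
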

\begin{proof}
Let 
$$\Delta_1 \gg C(\mathcal O_0,\mathcal R(S)),$$
where $C(\mathcal O_0,\mathcal R(S))$ is taken to be the maximum of the bounds in Propositions \ref{L2Ricci1}, \ref{L2Ricci3}, \ref{L2Ricci5}, \ref{L2Ricci2} and \ref{L2Ricci4}. Then Propositions \ref{L2Ricci1}, \ref{L2Ricci3}, \ref{L2Ricci5}, \ref{L2Ricci2} and \ref{L2Ricci4} together show that the bootstrap assumption (\ref{BA2}) can be improved under appropriate choice of $\epsilon$. Since the choice of $\Delta_1$ depends only on $\mathcal O_0$ and $\mathcal R(S)$, we conclude that $\epsilon$ can be chosen to depend only on $\mathcal O_0, \mathcal R, \tilde{\mathcal O}_{3,2},\mathcal R(S)$.
\end{proof}

\subsection{$L^p(S)$ Estimates for Curvature Components}\label{secRS}
In this subsection, we estimate the $\mathcal R(S)$ norm. For this purpose, we make the bootstrap assumption
\begin{equation}\tag{A3}\label{BA3}
\mathcal R(S)\leq \Delta_2,
\end{equation}
where $\Delta_2$ is a positive constant to be chosen later.

We first prove the bounds on $\betab$.
\begin{proposition}\label{RSbetab}
Assume
$$\mathcal R <\infty,\quad\tilde{\mathcal O}_{3,2}<\infty.$$
Then there exists $\epsilon_0=\epsilon_0(\mathcal O_0,\mathcal R, \tilde{\mathcal O}_{3,2},\Delta_2)$ such that whenever $\epsilon\leq \epsilon_0$, 
$$\sum_{i\leq 1}||\nabla^i\betab||_{L^2_u L^\infty_{\ub} L^3(S)}\leq C(\mathcal R_0).$$
\end{proposition}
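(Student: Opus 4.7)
The plan is to deduce the $L^2_u L^\infty_\ub L^3(S)$ estimate for $\nab^i\betab$, $i\leq 1$, from the $L^2(\Hb_0)$ initial data control (which is bounded by $\mathcal R_0$) using the codimension-one trace inequality of Proposition \ref{L3trace} together with the renormalized Bianchi equation for $\nab_4\betab$. Applying Proposition \ref{L3trace} with $\phi = \nab^i\betab$ and $\ub' = 0$, and then taking $\sup_\ub$ followed by $L^2_u$ on $[0,I]$, yields
$$\sum_{i\leq 1}\|\nab^i\betab\|_{L^2_u L^\infty_\ub L^3(S)}\leq C\Bigl(\sum_{i\leq 1}\|\nab^i\betab\|_{L^2_u L^3(S_{u,0})} + \epsilon^{1/4}\sum_{i\leq 1}\|\nab^{i+1}\betab\|_{L^2_u L^2_\ub L^2(S)} + \epsilon^{1/8}\sum_{i\leq 1}\|\nab_4\nab^i\betab\|_{L^2_u L^2_\ub L^2(S)}\Bigr).$$
The first, initial-data term is bounded by $C(\mathcal R_0)$ using Proposition \ref{L4} and the finite measure embedding $L^4(S)\hookrightarrow L^3(S)$: $\|\nab^i\betab\|_{L^3(S_{u,0})}\leq C\sum_{j\leq i+1}\|\nab^j\betab\|_{L^2(S_{u,0})}$, and then $L^2_u$ of the right side is bounded by $\sum_{j\leq 2}\|\nab^j\betab\|_{L^2(\Hb_0)}\leq \mathcal R_0$. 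The second term is handled by Fubini: $\|\nab^{i+1}\betab\|_{L^2_u L^2_\ub L^2(S)}\leq \epsilon^{1/2}\sup_\ub\|\nab^{i+1}\betab\|_{L^2(\Hb_\ub)}\leq \epsilon^{1/2}\mathcal R$, which contributes $\epsilon^{3/4}\mathcal R$ and is absorbed for $\epsilon$ small depending on $\mathcal R$.

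The heart of the argument is the third term, where one uses the renormalized Bianchi equation from Section \ref{secsche},
$$\nab_4\betab = \nab\rhoc + \nab\sigmac + \psi(\rhoc,\sigmac)+\psi\betab+\psi_{\Hb}\beta + \sum_{i_1+i_2+i_3\leq 1}\psi^{i_1}\nab^{i_2}(\psi_H+\trch)\nab^{i_3}(\psi_{\Hb}+\trchb),$$
commuted with $\nab$ via Proposition \ref{commuteeqn}, and estimates each resulting term in $L^2_u L^2_\ub L^2(S)$. The curvature gradient terms $\nab^{i+1}(\rhoc,\sigmac)$ are bounded by $\epsilon^{1/2}\mathcal R$ via Fubini and the $\sup_\ub\|\cdot\|_{L^2(\Hb_\ub)}$ piece of $\mathcal R$. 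Terms $\psi\nab^j\betab$ are handled by combining the $L^\infty$ bound on $\psi$ (from $\mathcal O$, controlled by Step 1 in terms of $\mathcal O_0$ and the bootstrap $\Delta_2$) with the $\sup_\ub\|\cdot\|_{L^2(\Hb_\ub)}$ bound on $\nab^j\betab$, giving $\epsilon^{1/2}C(\mathcal O_0,\Delta_2)\mathcal R$. The $\psi_{\Hb}\beta$-type terms are bounded using the $L^2_u L^\infty_\ub L^\infty(S)$ control on $\psi_{\Hb}$ together with the $L^\infty_u L^2_\ub L^2(S)$ control on $\beta$ from $\mathcal R$, producing $C(\mathcal O_0,\Delta_2)\mathcal R$ without further $\epsilon$-smallness but still multiplied by $\epsilon^{1/8}$ from the trace inequality. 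Finally, the purely Ricci-coefficient nonlinear terms are controlled entirely by $C(\mathcal O_0,\Delta_2)$ times a positive power of $\epsilon$ using the mixed-norm bounds on $\psi,\psi_H,\psi_{\Hb}$ from Step 1.

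The main obstacle is bookkeeping: one must ensure that every dangerous error term arrives accompanied by a positive power of $\epsilon$ so it can be absorbed. The two scarce ingredients are the $\epsilon^{1/8}$ from Proposition \ref{L3trace} and the $\epsilon^{1/2}$ available whenever a factor can be integrated in $\ub$ via Fubini (exploiting the shortness of the $\ub$ interval even though the $u$ interval has length $I$). In particular, the $\psi_{\Hb}\beta$ term provides no $\ub$-smallness on its own and relies crucially on the $\epsilon^{1/8}$ trace factor plus the fact, available from Step 1, that the $\mathcal O$-norm dependence on $\Delta_2$ is only polynomial. Combining all estimates and choosing $\epsilon$ small depending on $\mathcal O_0,\mathcal R,\tilde{\mathcal O}_{3,2},\Delta_2$ absorbs every contribution proportional to $\mathcal R$ or $\Delta_2$, leaving $\sum_{i\leq 1}\|\nab^i\betab\|_{L^2_u L^\infty_\ub L^3(S)}\leq C(\mathcal R_0)$ with the constant independent of $\mathcal R$ and $\Delta_2$, as required by the Step 2 strategy.
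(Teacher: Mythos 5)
Your proposal is correct and follows essentially the same route as the paper: the trace inequality of Proposition \ref{L3trace} together with the $\nab_4$ Bianchi equation for $\betab$, with the $\epsilon^{1/8}$ and $\epsilon^{1/4}$ factors from the trace estimate and the $\epsilon^{1/2}$ from Fubini providing the smallness to absorb everything proportional to $\mathcal R$ and $\Delta_2$. The one genuine (but harmless) deviation is that you apply Proposition \ref{L3trace} to both $i=0$ and $i=1$ uniformly, whereas the paper treats $i=0$ by directly integrating the transport equation via Proposition \ref{transport} in $L^2_u L^1_{\ub}L^3(S)$ and reserves Proposition \ref{L3trace} for $\nab\betab$; your version is slightly more symmetric and works equally well. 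A few cosmetic imprecisions worth noting: your schematic $\nab_4\betab$ equation has $\psi\betab+\psi_{\Hb}\beta$ but omits the $\omega\betab$ term (a $\psi_H\betab$ contribution from $2\omega\betab$ in equation (\ref{eq:null.Bianchi2})), though this term is controlled by the same pairing of mixed norms; and the paper's schematic form actually absorbs the $\psi_{\Hb}\beta$ term by substituting Codazzi for $\beta$, writing it as $\psi_{\Hb}\nab\psi_H$-type, which you instead keep explicit and estimate via $\|\psi_{\Hb}\|_{L^2_uL^\infty_{\ub}L^\infty(S)}\|\beta\|_{L^\infty_uL^2_{\ub}L^2(S)}$ — both are valid. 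Also, the remark that the $\mathcal O$-dependence on $\Delta_2$ is ``only polynomial'' is unnecessary; all that matters is that each error term carries a positive power of $\epsilon$, which you correctly track.
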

\begin{proof}
Recall the $\nabla_4$ Bianchi equation for $\betab$
$$\nab_4\betab=\sum_{i_1+i_2=1}\psi^{i_1}\nabla^{i_2}(\rhoc,\sigmac)+\sum_{i_1+i_2+i_3=1}\psi^{i_1}\nabla^{i_2}(\psi+\psi_H)\nabla^{i_3}(\psi+\psi_{\Hb}).$$
From this we get the estimates for $\betab$ in  $L^2_u L^\infty_{\ub} L^3(S)$. To see this, by Proposition \ref{transport}, we need to estimate
$$||\sum_{i_1+i_2=1}\psi^{i_1}\nabla^{i_2}(\rhoc,\sigmac)+\sum_{i_1+i_2+i_3=1}\psi^{i_1}\nabla^{i_2}(\psi+\psi_H)\nabla^{i_3}(\psi+\psi_{\Hb})||_{L^2_u L^1_{\ub}L^3(S)}.$$
We have, by Propositions \ref{L3trace} and \ref{L2Ricci},
\begin{equation*}
\begin{split}
&||\sum_{i_1+i_2=1}\psi^{i_1}\nabla^{i_2}(\rhoc,\sigmac)||_{L^2_u L^1_{\ub}L^3(S)}\\
\leq &\epsilon^{\frac 12}(1+||\psi||_{L^\infty_uL^\infty_{\ub}L^\infty(S)})(\sum_{i\leq 2}||\nab^i(\rhoc,\sigmac)||_{L^2_u L^2_{\ub}L^2(S)})\\
\leq& C(\mathcal O_0,\Delta_2)\epsilon^{\frac 12}\mathcal R
\end{split}
\end{equation*}
and by Propositions \ref{LinftyRicci1}, \ref{L4Ricci} and \ref{L4Ricci4b},
\begin{equation*}
\begin{split}
&||\sum_{i_1+i_2+i_3=1}\psi^{i_1}\nabla^{i_2}(\psi+\psi_H)\nabla^{i_3}(\psi+\psi_{\Hb})||_{L^2_u L^1_{\ub}L^3(S)}\\
\leq &C\epsilon^{\frac 12}(1+||\psi||_{L^\infty_uL^\infty_{\ub}L^\infty(S)})(\sum_{i_1\leq 1}||\nab^{i_1}(\psi+\psi_H)||_{L^2_{\ub}L^\infty_uL^4(S)})(\sum_{i_2\leq 1}||\nab^{i_2}(\psi+\psi_H)||_{L^2_{u}L^\infty_{\ub}L^4(S)})\\
\leq& C(\mathcal O_0,\Delta_2)\epsilon^{\frac 12}.
\end{split}
\end{equation*}
Therefore,
$$||\betab||_{L^2_u L^\infty_{\ub} L^3(S)}\leq C(\mathcal R_0)+C(\mathcal O_0,\Delta_2,\mathcal R)\epsilon^{\frac 12},$$
which implies
$$||\betab||_{L^2_u L^\infty_{\ub} L^3(S)}\leq C(\mathcal R_0)$$
for $\epsilon$ sufficiently small depending on $\mathcal O_0$, $\Delta_2$ and $\mathcal R$.
We now estimate $\nab\betab$. Commuting the $\nab_4\betab$ equation with angular derivatives, we have
$$\nab_4\nabla\betab=\sum_{i_1+i_2+i_3=2}\nab^{i_1}\psi^{i_2}\nabla^{i_3}(\rhoc,\sigmac)+\sum_{i_1+i_2+i_3+i_4=2}\nabla^{i_1}\psi^{i_2}\nabla^{i_3}(\psi+\psi_H)\nabla^{i_4}(\psi+\psi_{\Hb}).$$
After taking the $L^2_u$ norm, Proposition \ref{L3trace} implies that
\begin{equation}\label{betabL3}
\begin{split}
  ||\nabla\betab||_{L^2_u L^\infty_{\ub}L^3(S)}
\leq &C\left(||\nabla\betab||_{L^2_u L^3(S_{u,0})}+\epsilon^{\frac 14}||\nab^2\betab||_{L^2_uL^2_{\ub}L^2(S)}+\epsilon^{\frac 18}||\nab_4\nab\betab||_{L^2_uL^2_{\ub}L^2(S)}\right).
\end{split}
\end{equation}
The initial data is bounded by the initial data norm
$$||\nabla\betab||_{L^2_u L^3(S_{u,0})}\leq C\mathcal R_0.$$
Then, we note that by the definition of the norm $\mathcal R$, 
$$||\nabla^2\betab||_{L^2_u L^2_{\ub}L^2(S)}\leq C\mathcal R.$$
We estimate each term in the right of side of the equation for $\nab_4\nab\betab$ in $L^2_u L^2_{\ub}L^2(S)$. First, we control the curvature term:
\begin{equation*}
\begin{split}
&||\sum_{i_1+i_2+i_3\leq 2}\nab^{i_1}\psi^{i_2}\nabla^{i_3}(\rhoc,\sigmac)||_{L^2_u L^2_{\ub}L^2(S)}\\
\leq &C(\sum_{i_1\leq 2}||\psi||^{i_1}_{L^\infty_uL^\infty_{\ub}L^\infty(S)}) (\sum_{i_2\leq 2}||\nabla^{i_2}(\rhoc,\sigmac)||_{L^2_uL^2_{\ub}L^2(S)})\\
&+C||\nabla\psi||_{L^\infty_uL^\infty_{\ub}L^4(S)}||(\rhoc,\sigmac)||_{L^2_uL^2_{\ub}L^4(S)}\\
\leq &C(\mathcal O_0,\Delta_2)\mathcal R,
\end{split}
\end{equation*}
using Propositions \ref{L4Ricci} and \ref{L4Ricci4b}.
Then we bound the term with $\psi_H$ and $\psi_{\Hb}$. Using Propositions \ref{L4Ricci}, \ref{L4Ricci4b} and \ref{L2Ricci},
\begin{equation*}
\begin{split}
&||\sum_{i_1+i_2+i_3\leq 2}\nabla^{i_1}\psi^{i_2}\nabla^{i_3}\psi_{\Hb}\nabla^{i_4}\psi_H||_{L^2_u L^2_{\ub}L^2(S)}\\
\leq &C(\sum_{i_1\leq 2}||\psi||^{i_1}_{L^\infty_u L^\infty_{\ub}L^\infty(S)}) (\sum_{i_2\leq 2}||\nabla^{i_2}\psi_{\Hb}||_{L^\infty_{\ub}L^2_{u}L^2(S)})(\sum_{i_3\leq 1}||\nabla^{i_3}\psi_H||_{L^2_{\ub}L^\infty_u L^4(S)})\\
&+C(\sum_{i_1\leq 2}||\psi||^{i_1}_{L^\infty_u L^\infty_{\ub}L^\infty(S)}) (\sum_{i_2\leq 1}||\nabla^{i_2}\psi_{\Hb}||_{L^2_{u}L^\infty_{\ub}L^4(S)})(\sum_{i_3\leq 2}||\nabla^{i_3}\psi_H||_{L^\infty_uL^2_{\ub}L^2(S)})\\
&+C ||\nabla\psi||_{L^\infty_uL^\infty_{\ub}L^2(S)}||\psi_{\Hb}||_{L^2_uL^\infty_{\ub}L^\infty(S)}||\psi_H||_{L^2_{\ub}L^\infty_uL^\infty(S)}\\
\leq &C(\mathcal O_0,\Delta_2,\mathcal R).
\end{split}
\end{equation*}
Since $\psi$ satisfies stronger estimates than either $\psi_H$ or $\psi_{\Hb}$, we have
\begin{equation*}
\begin{split}
&||\sum_{i_1+i_2+i_3\leq 2}\nabla^{i_1}\psi^{i_2}\nabla^{i_3}(\psi+\psi_{\Hb})\nabla^{i_4}(\psi+\psi_H)||_{L^2_u L^2_{\ub}L^2(S)}\\
\leq &C(\mathcal O_0,\Delta_2,\mathcal R).
\end{split}
\end{equation*}
Putting the bounds together, we have
$$||\nab_4\nab\betab||_{L^2_uL^2_{\ub}L^2(S)}\leq C(\mathcal O_0,\Delta_2,\mathcal R).$$
Thus, (\ref{betabL3}) implies that
\begin{equation*}
\begin{split}
  ||\nabla\betab||_{L^2_u L^\infty_{\ub}L^3(S)}
\leq &C\left(\mathcal R_0+\epsilon^{\frac 14}\mathcal R+\epsilon^{\frac 18}C(\mathcal O_0,\Delta_2,\mathcal R)\right).
\end{split}
\end{equation*}
The proposition follows from choosing $\epsilon$ sufficiently small depending on $\mathcal O_0,\mathcal R, \tilde{\mathcal O}_{3,2},\Delta_2$.
\end{proof}
Since we have proved the estimate of $\mathcal R(S)[\betab]$ independent of the $\mathcal R$ norm, we get the following improved bounds on the Ricci coefficients:
\begin{proposition}\label{L4Ricciimproved}
Assume
$$\mathcal R <\infty,\quad\tilde{\mathcal O}_{3,2}<\infty.$$
Then there exists $\epsilon_0=\epsilon_0(\mathcal O_0,\mathcal R_0,\mathcal R, \tilde{\mathcal O}_{3,2},\Delta_2)$ such that whenever $\epsilon\leq \epsilon_0$, 
$$\mathcal O_{0,\infty}[\etab]\leq C(\mathcal O_0,\mathcal R_0),$$
$$\mathcal O_{1,2}[\etab]\leq C(\mathcal O_0,\mathcal R_0),$$
$$\sum_{i\leq 1}\mathcal O_{i,4}[\chih,\omega,\trch]\leq C(\mathcal O_0,\mathcal R_0).$$
\end{proposition}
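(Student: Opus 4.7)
The plan is to observe that Proposition \ref{L4Ricciimproved} is essentially a corollary of Proposition \ref{RSbetab} combined with the Ricci coefficient estimates already established earlier in the section. Recall from the definition of $\mathcal R(S)$ in Section~2 that
$$\mathcal R(S)[\betab]=\sum_{i\leq 1}\|\nab^i\betab\|_{L^2_u L^\infty_{\ub} L^3(S)},$$
and Proposition \ref{RSbetab} controls exactly this quantity by $C(\mathcal R_0)$ alone, \emph{independently of} $\mathcal R(S)$, $\Delta_2$, or any other bootstrap constant (provided $\epsilon$ is small enough depending on $\mathcal O_0,\mathcal R,\tilde{\mathcal O}_{3,2},\Delta_2$).

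First I would simply invoke Proposition \ref{RSbetab} to conclude $\mathcal R(S)[\betab]\leq C(\mathcal R_0)$. Then I would revisit the four earlier propositions whose conclusions were stated in the form $C(\mathcal O_0,\mathcal R(S)[\betab])$: namely Proposition \ref{LinftyRicci1} for $\mathcal O_{0,\infty}[\etab]$, Proposition \ref{L4Ricci40} for $\mathcal O_{1,2}[\etab]$, Proposition \ref{L4Ricci5} for $\sum_{i\leq 1}\mathcal O_{i,4}[\chih,\omega]$, and Proposition \ref{L4Ricci2} for $\sum_{i\leq 1}\mathcal O_{i,4}[\trch]$. In each of these statements, we simply substitute the new bound for $\mathcal R(S)[\betab]$ to replace $C(\mathcal O_0,\mathcal R(S)[\betab])$ by $C(\mathcal O_0,\mathcal R_0)$, which gives the three conclusions of the proposition.

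There is no real obstacle here: the hard work was done in Propositions \ref{LinftyRicci1}--\ref{L4Ricci2} (where the careful structural observation was that the transport equations for $\etab,\chih,\omega,\trch$ in the $\nab_3$ direction do not involve the unbounded curvature components $\alpha,\alphab$ and close linearly in the relevant unknown), and in Proposition \ref{RSbetab} (where $\betab$ was controlled using only the initial data norm $\mathcal R_0$, by means of a trace estimate combined with the $\nab_4\betab$ Bianchi equation). The only thing to verify is bookkeeping of the smallness threshold: the $\epsilon_0$ in the conclusion depends on $\mathcal O_0$, $\mathcal R_0$, $\mathcal R$, $\tilde{\mathcal O}_{3,2}$, and $\Delta_2$, which is exactly what is stated. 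Thus the proposition follows immediately by substitution, and will serve in Step~2 of the overall scheme to upgrade all $\displaystyle\sum_{i\leq 1}\mathcal O_{i,4}$ and $\mathcal O_{0,\infty}$ bounds to constants that depend only on initial data.
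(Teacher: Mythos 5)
Your proof is correct and is essentially identical to the paper's own argument: substitute the $\mathcal R(S)[\betab]\leq C(\mathcal R_0)$ bound from Proposition \ref{RSbetab} into the conclusions of Propositions \ref{LinftyRicci1}, \ref{L4Ricci40}, \ref{L4Ricci5}, and \ref{L4Ricci2}, noting that the $\epsilon_0$ thresholds combine to give dependence on $\mathcal O_0,\mathcal R_0,\mathcal R,\tilde{\mathcal O}_{3,2},\Delta_2$.
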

\begin{proof}
This follows from substituting the bound for $\mathcal R(S)[\betab]$ from Proposition \ref{RSbetab} into the estimates from Propositions \ref{LinftyRicci1}, \ref{L4Ricci40}, \ref{L4Ricci5} and \ref{L4Ricci2}.
\end{proof}

Using this improvement, we prove the $\mathcal R(S)$ estimates for $\rhoc$ and $\sigmac$.
\begin{proposition}\label{RSrhocsigmac}
Assume
$$\mathcal R <\infty,\quad\tilde{\mathcal O}_{3,2}<\infty.$$
Then there exists $\epsilon_0=\epsilon_0(\mathcal O_0,\mathcal R_0,\mathcal R, \tilde{\mathcal O}_{3,2},\Delta_2)$ such that whenever $\epsilon\leq \epsilon_0$, 
$$\sum_{i\leq 1}||\nabla^i(\rhoc,\sigmac)||_{L^\infty_u L^\infty_{\ub} L^2(S)}\leq C(\mathcal O_0,\mathcal R_0).$$
\end{proposition}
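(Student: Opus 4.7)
The plan is to control $(\rhoc,\sigmac)$ and their first angular derivatives via the $\nab_4$ Bianchi equations (\ref{eq:null.Bianchi2}), integrating in the short $\ub$ direction so that the shortness of the $\ub$-interval provides the smallness needed to absorb the $\mathcal R$- and $\tilde{\mathcal O}_{3,2}$-dependent error terms, leaving a bound that depends only on $\mathcal O_0$ and $\mathcal R_0$. The key virtue of using the $\nab_4$ equations for $(\rhoc,\sigmac)$ is that $\alphab$ does not appear (as explained at the end of Section \ref{secsche}); moreover the initial data on $S_{u,0}$ are directly controlled by $\mathcal R_0$ since $\mathcal R_0$ includes $\sup_u\|\nab^i(\rhoc,\sigmac)\|_{L^2(S_{u,0})}$ for $i\leq 2$.

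Concretely, I would commute the $\nab_4$ equations once with angular derivatives using Proposition \ref{commuteeqn} and apply Proposition \ref{transport} to deduce, for each fixed $u$,
\begin{equation*}
\sum_{i\leq 1}\|\nab^i(\rhoc,\sigmac)\|_{L^\infty_{\ub}L^2(S)}\leq C\Big(\sum_{i\leq 1}\|\nab^i(\rhoc,\sigmac)\|_{L^2(S_{u,0})}+\|\mathrm{RHS}\|_{L^1_{\ub}L^2(S)}\Big),
\end{equation*}
and then take supremum in $u$. The right-hand side of the commuted equation is schematically $\nab^{i+1}\beta+\sum_{j\leq i}\nab^j(\psi(\rhoc,\sigmac))+\sum_{j_1+j_2+j_3\leq i+1}\psi^{j_1}\nab^{j_2}\psi\,\nab^{j_3}\psi_H+\sum_{j_1+j_2+j_3\leq i}\psi^{j_1}\nab^{j_2}\chih\,\nab^{j_3}\chih$. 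The leading $\nab^{i+1}\beta$ term is bounded by $\epsilon^{1/2}\mathcal R$ by Cauchy--Schwarz in $\ub$, using that $\sum_{i\leq 2}\|\nab^i\beta\|_{L^2(H_u)}\leq\mathcal R$; the linear term $\psi(\rhoc,\sigmac)$ gives $\epsilon^{1/2}$ times the bootstrap assumption (\ref{BA3}); and the $\nab^j\psi\nab^{j'}\psi_H$ terms produce $\epsilon^{1/2}$ factors through one copy of $L^2_{\ub}$ applied to $\psi_H$ using Propositions \ref{L4Ricciimproved} and \ref{L2Ricci} (together with Sobolev embedding).

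The critical term is $\psi\,\chih\,\chih$ and its angular derivative $\psi\,\nab\chih\,\chih+\nab\psi\,\chih\,\chih$, since $\chih$ is only in $L^2_{\ub}$ and the product $\chih\chih$ in $L^1_{\ub}L^\infty(S)$ brings no smallness factor. Nevertheless, placing both $\chih$ factors in $L^2_{\ub}L^\infty_uL^\infty(S)$ (resp.\ one factor in $L^2_{\ub}L^\infty_uL^\infty(S)$ and the other $\nab\chih$ in $L^2_{\ub}L^\infty_uL^2(S)$) and the $\psi$ or $\nab\psi$ factor in the bounded norms from Proposition \ref{L4Ricciimproved} gives
\begin{equation*}
\|\psi\,|\chih|^2\|_{L^1_{\ub}L^2(S)}+\|\nab(\psi\,|\chih|^2)\|_{L^1_{\ub}L^2(S)}\leq C(\mathcal O_0,\mathcal R_0),
\end{equation*}
which, while having no extra $\epsilon^{1/2}$, is bounded only in terms of $\mathcal O_0,\mathcal R_0$, exactly as required. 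Similarly the analogous $|\chibh|^2\,\psi$ pieces do not arise in the $\nab_4$ equation (that is precisely the structural fact highlighted at the end of Section \ref{secsche}), so we never need to bound two $\psi_{\Hb}$ factors in this argument.

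Combining the above, we obtain a bound of the form $C(\mathcal R_0)+C(\mathcal O_0,\mathcal R_0)+\epsilon^{1/2}\,C(\mathcal O_0,\mathcal R_0,\mathcal R,\tilde{\mathcal O}_{3,2},\Delta_2)$. Choosing $\epsilon$ sufficiently small depending on $\mathcal O_0,\mathcal R_0,\mathcal R,\tilde{\mathcal O}_{3,2},\Delta_2$ absorbs the last term and yields the desired estimate $\sum_{i\leq 1}\|\nab^i(\rhoc,\sigmac)\|_{L^\infty_uL^\infty_{\ub}L^2(S)}\leq C(\mathcal O_0,\mathcal R_0)$. The main obstacle is precisely the bookkeeping for the $|\chih|^2$ source term: it must be controlled purely in terms of the improved bounds of Proposition \ref{L4Ricciimproved} (which do not depend on $\Delta_2$) rather than the bootstrap quantity $\Delta_2$, because this estimate is what will later close the bootstrap for $\mathcal R(S)$.
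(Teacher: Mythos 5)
Your proposal matches the paper's proof in all essential respects: you integrate the $\nab_4$ Bianchi equations for $(\rhoc,\sigmac)$ (which avoid $\alphab$ and $\psi_{\Hb}$), use Proposition~\ref{transport} and the shortness of the $\ub$-interval to gain $\epsilon^{1/2}$ on the $\nabla\beta$, $\psi(\rhoc,\sigmac)$, and $\psi\nab\psi_H$-type terms, and correctly identify the $\psi\,\chih\chih$ term as the one with no smallness factor that must nonetheless be bounded by the improved estimates of Proposition~\ref{L4Ricciimproved} (which are independent of $\Delta_2$). This is exactly the argument in the paper, including the final absorption of $\epsilon^{1/2}C(\mathcal O_0,\mathcal R_0,\mathcal R,\tilde{\mathcal O}_{3,2},\Delta_2)$ by choosing $\epsilon$ small.
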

\begin{proof}
Consider the $\nabla_4$ equations for $\rhoc$ and $\sigmac$:
\begin{equation*}
\begin{split}
\nabla_4(\rhoc,\sigmac)
=&\nabla\beta+\psi(\beta,\rhoc,\sigmac)+\psi\nabla(\psi+\psi_H)+\psi\psi(\psi+\psi_H)+\psi\chih\chih.
\end{split}
\end{equation*}
After commuting with angular derivatives, we get
\begin{equation*}
\begin{split}
\nabla_{4}\nabla(\rhoc,\sigmac)
=&\sum_{i_1+i_2=2}\nab^{i_1}\psi^{i_2}\nabla^{i_3}(\beta,\rhoc,\sigmac)+\sum_{i_1+i_2+i_3=1}\psi^{i_1}\nabla^{i_2}\psi_H\nabla^{i_3}(\rhoc,\sigmac)\\
&+\sum_{i_1+i_2+i_3+i_4=2}\nabla^{i_1}\psi^{i_2}\nabla^{i_3}\psi\nabla^{i_4}(\psi+\psi_H)+\sum_{i_1+i_2+i_3+i_4=1}\psi^{i_1}\nabla^{i_2}\psi\nabla^{i_3}\chih\nabla^{i_4}\chih.
\end{split}
\end{equation*}

By Proposition \ref{transport}, in order to estimate $\nabla^i(\rhoc,\sigmac)$ in $L^\infty_u L^\infty_{\ub} L^2(S)$, we need to estimate $\nab_4\nab^i(\rhoc,\sigmac)$ in $L^\infty_u L^1_{\ub}L^2(S)$. The first term with curvature can be estimated by
\begin{equation*}
\begin{split}
&||\sum_{i_1+i_2\leq 2}\nab^{i_1}\psi^{i_2}\nabla^{i_3}(\beta,\rhoc,\sigmac)||_{L^1_{\ub}L^2(S)}\\
\leq &C\epsilon^{\frac 12}(\sum_{i_1\leq 2}||\psi||^{i_1}_{L^\infty_{\ub}L^\infty(S)}) (\sum_{i_2\leq 2}||\nabla^{i_2}(\beta,\rhoc,\sigmac)||_{L^2_{\ub}L^2(S)})\\
&+C\epsilon^{\frac 12}||\nabla\psi||_{L^\infty_{\ub}L^4(S)} ||(\beta,\rhoc,\sigmac)||_{L^2_{\ub}L^4(S)}\\
\leq &C(\mathcal O_0,\mathcal R(S),\mathcal R)\epsilon^{\frac 12}\mathcal R
\end{split}
\end{equation*}
by Proposition \ref{L4Ricci}.
The second term with curvature can be estimated by
\begin{equation*}
\begin{split}
&||\sum_{i_1+i_2+i_3=1}\psi^{i_1}\nabla^{i_2}\psi_H\nabla^{i_3}(\rhoc,\sigmac)||_{L^1_{\ub}L^2(S)}\\
\leq &C\epsilon^{\frac 12}(\sum_{i_1\leq 1}||\psi||^{i_1}_{L^\infty_{\ub}L^2(S)})(\sum_{i_2\leq 2}||\nab^{i_2}\psi_H||_{L^2_{\ub}L^2(S)}) (\sum_{i_3\leq 1}||\nabla^{i_3}(\rhoc,\sigmac)||_{L^\infty_{\ub}L^2(S)})\\
\leq &C(\mathcal O_0,\mathcal R(S),\mathcal R)\epsilon^{\frac 12}
\end{split}
\end{equation*}
by Proposition \ref{L2Ricci}.
The nonlinear Ricci coefficient term with at most one $\psi_H$ can be controlled by
\begin{equation*}
\begin{split}
&||\sum_{i_1+i_2+i_3\leq 2}\nabla^{i_1}\psi^{i_2}\nabla^{i_3}\psi\nabla^{i_4}(\psi+\psi_H)||_{L^1_{\ub}L^2(S)}\\
\leq &\epsilon^{\frac 12}||\sum_{i_1+i_2+i_3\leq 2}\nabla^{i_1}\psi^{i_2}\nabla^{i_3}\psi\nabla^{i_4}(\psi+\psi_H)||_{L^2_{\ub}L^2(S)}\\
\leq &C\epsilon^{\frac 12}(\sum_{i_1\leq 3}||\psi||^{i_1}_{L^\infty_{\ub}L^\infty(S)}) (\sum_{i_2\leq 2}||\nabla^{i_2}(\psi,\psi_H)||_{L^2_{\ub}L^2(S)})\\
&+C\epsilon^{\frac 12}||\nabla^2 \psi||_{L^\infty_{\ub}L^2(S)}||(\psi,\psi_H)||_{L^2_{\ub}L^2(S)}\\
&+C\epsilon^{\frac 12} ||\nabla\psi||_{L^\infty_{\ub}L^4(S)} ||\nabla(\psi,\psi_H)||_{L^2_{\ub}L^4(S)}\\
\leq &C(\mathcal O_0,\mathcal R(S),\mathcal R)\epsilon^{\frac 12}
\end{split}
\end{equation*}
by Propositions \ref{L4Ricci} and \ref{L2Ricci}.
The remaining term containing $\chih\chih$ can be estimated using Proposition \ref{L4Ricciimproved}:
\begin{equation*}
\begin{split}
&||\sum_{i_1+i_2+i_3\leq 1}\psi^{i_1}\nabla^{i_2}\psi\nabla^{i_3}\chih\nabla^{i_4}\chih||_{L^1_{\ub}L^2(S)}\\
\leq &C(\sum_{i_1\leq 2}||\psi||^{i_1}_{L^\infty_{\ub}L^\infty(S)})(||\chih||_{L^2_{\ub}L^\infty(S)}) (\sum_{i_2\leq 1}||\nabla^{i_2}\psi_H||_{L^2_{\ub}L^2(S)})\\
&+C||\psi||_{L^\infty_{\ub}L^\infty(S)}||\nabla\psi||_{L^\infty_{\ub}L^2(S)} (||\psi_H||_{L^2_{\ub}L^\infty(S)})^2\\
\leq &C(\mathcal O_0,\mathcal R_0).
\end{split}
\end{equation*}
Therefore, by Proposition \ref{transport}
\begin{equation*}
\begin{split}
\sum_{i\leq 1}||\nabla^i(\rhoc,\sigmac)||_{L^2(S_{u,\ub})}
\leq &C(\mathcal O_0,\mathcal R_0)+\epsilon^{\frac 12}C(\mathcal O_0,\mathcal R_0,\mathcal R(S),\mathcal R).
\end{split}
\end{equation*}
By the bootstrap assumption (\ref{BA3}) on $\mathcal R(S)$, we can choose $\epsilon$ small depending on $\mathcal O_0$, $\mathcal R_0$, $\mathcal R$, $\tilde{\mathcal O}_{3,2}$ and $\Delta_2$ to conclude the proposition.
\end{proof}
Finally, we prove the bounds for the Gauss curvature. This will be used in the next subsection to carry out elliptic estimates.
\begin{proposition}\label{Kest}
Assume
$$\mathcal R <\infty,\quad\tilde{\mathcal O}_{3,2}<\infty.$$
Then there exists $\epsilon_0=\epsilon_0(\mathcal O_0,\mathcal R_0,\mathcal R, \tilde{\mathcal O}_{3,2},\Delta_2)$ such that whenever $\epsilon\leq \epsilon_0$, 
$$\sum_{i\leq 1}||\nabla^iK||_{L^\infty_u L^\infty_{\ub} L^2(S)}\leq C(\mathcal O_0,\mathcal R_0).$$
\end{proposition}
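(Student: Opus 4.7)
The plan is to read the Gauss curvature off the last line of \eqref{null.str3.sch}, namely
$$K=-\rhoc+\frac12\chih\cdot\chibh-\frac14\trch\,\trchb,$$
so that schematically $K=-\rhoc+\psi\psi+\psi_H\psi_{\Hb}$. In particular $K$ is given \emph{algebraically} in terms of a renormalized curvature component and products of Ricci coefficients; no transport equation needs to be integrated. Applying an angular derivative one obtains
$$\nabla K=-\nabla\rhoc+\sum_{i_1+i_2=1}\psi^{i_1}\bigl(\nabla^{i_2}\psi\cdot\psi+\nabla^{i_2}\psi_H\cdot\psi_{\Hb}+\nabla^{i_2}\psi_{\Hb}\cdot\psi_H\bigr).$$
Thus it suffices to estimate each term on the right hand side in $L^\infty_u L^\infty_{\ub} L^2(S)$.

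For the curvature terms $\rhoc$ and $\nabla\rhoc$, Proposition \ref{RSrhocsigmac} delivers exactly the required bound $C(\mathcal O_0,\mathcal R_0)$. For the pure $\psi\psi$ piece, one writes $\|\psi\psi\|_{L^2(S)}\le\|\psi\|_{L^\infty(S)}\|\psi\|_{L^2(S)}$, and for $\nabla(\psi\psi)$ one uses $\|\nabla\psi\cdot\psi\|_{L^2(S)}\le\|\nabla\psi\|_{L^2(S)}\|\psi\|_{L^\infty(S)}$. Both factors are uniformly controlled in $u,\ub$ by $\mathcal O_{0,\infty}$ together with $\mathcal O_{1,2}$, which by Propositions \ref{L2Ricci} and \ref{L4Ricciimproved} are bounded by $C(\mathcal O_0,\mathcal R_0)$ once $\epsilon$ is chosen small enough depending on the data and on $\mathcal R$, $\tilde{\mathcal O}_{3,2}$, $\Delta_2$.

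The only slightly delicate terms are the mixed ones $\psi_H\psi_{\Hb}$ and $\nabla(\psi_H\psi_{\Hb})$, since $\psi_H$ and $\psi_{\Hb}$ are only estimated in the mixed norms $L^2_{\ub}L^\infty_u L^p(S)$ and $L^2_u L^\infty_{\ub} L^p(S)$ respectively, so their separate $L^\infty_u L^\infty_{\ub}$ norms are not bounded. However, $\chih\in L^2_{\ub}L^\infty_u L^\infty(S)$ and $\chibh\in L^2_u L^\infty_{\ub} L^\infty(S)$, so placing one factor in $L^\infty(S)$ and the other in $L^2(S)$ gives pointwise-in-$(u,\ub)$ control after a Cauchy--Schwarz in $(u,\ub)$ producing a factor of $\epsilon^{1/2}$ times $\mathcal O_{2,2}+\sum_{i\le 1}\mathcal O_{i,4}$, all bounded by $C(\mathcal O_0,\mathcal R_0)$ by the previous subsections. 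The same manoeuvre handles $\nabla\psi_H\cdot\psi_{\Hb}$, using $\nabla\psi_H\in L^2_{\ub}L^\infty_u L^4(S)$ from Proposition \ref{L2Ricci5} paired with $\psi_{\Hb}\in L^2_u L^\infty_{\ub} L^4(S)$ from Proposition \ref{L2Ricci3}. Combining these bounds yields $\sum_{i\le 1}\|\nabla^i K\|_{L^\infty_u L^\infty_{\ub} L^2(S)}\le C(\mathcal O_0,\mathcal R_0)+\epsilon^{1/2}C(\mathcal O_0,\mathcal R_0,\mathcal R,\tilde{\mathcal O}_{3,2},\Delta_2)$, and choosing $\epsilon$ small depending on all these parameters, exactly as in Proposition \ref{RSrhocsigmac}, removes the $\Delta_2$-dependence and closes the estimate. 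The main (minor) obstacle is ensuring that the $\psi_H\psi_{\Hb}$ pairing is always placed in the correct mixed norm so that the integration in $u$ and $\ub$ can be carried out uniformly; this is purely bookkeeping given the previously established Ricci coefficient bounds.
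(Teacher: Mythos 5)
Your starting point is slightly off, and in a way that creates a spurious obstacle that you then (incorrectly) try to work around.  The Gauss equation in the paper is
\[
K=-\rho+\frac 12 \chih\cdot\chibh-\frac 14\trch\,\trchb,
\]
and the renormalized quantity is $\rhoc=\rho-\frac 12\chih\cdot\chibh$.  Substituting $\rho=\rhoc+\frac 12\chih\cdot\chibh$, the $\chih\cdot\chibh$ term \emph{cancels}, leaving
\[
K=-\rhoc-\frac 14\trch\,\trchb,
\]
which is exactly why the schematic form in \eqref{null.str3.sch} reads $K=-\rhoc+\psi\psi$ with no $\psi_H\psi_{\Hb}$ contribution.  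You instead wrote $K=-\rhoc+\frac12\chih\cdot\chibh-\frac14\trch\trchb$, i.e.\ you swapped $\rho\to\rhoc$ without adjusting the compensating term, and hence introduced a fictitious $\psi_H\psi_{\Hb}$ term.

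The workaround you propose for this term then fails on its own terms.  You claim to get $L^\infty_u L^\infty_{\ub}L^2(S)$ control of $\chih\chibh$ from $\chih\in L^2_{\ub}L^\infty_u L^\infty(S)$ and $\chibh\in L^2_u L^\infty_{\ub}L^\infty(S)$ by ``Cauchy--Schwarz in $(u,\ub)$''.  That cannot give a sup in both $u$ and $\ub$: pairing one factor in $L^\infty(S)$ with the other in $L^2(S)$ yields, at fixed $(u,\ub)$,
\[
\|\chih\chibh\|_{L^2(S_{u,\ub})}\le\|\chih\|_{L^\infty(S_{u,\ub})}\|\chibh\|_{L^2(S_{u,\ub})},
\]
and $\|\chih\|_{L^\infty(S_{u,\ub})}$ is controlled only after an $L^2_{\ub}$ integration, not pointwise in $\ub$.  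So if the term were actually present, it could not be estimated in $L^\infty_uL^\infty_{\ub}L^2(S)$ with the available Ricci coefficient norms — precisely the reason the renormalization $\rhoc$ is designed to eliminate it from this equation.  Once you use the correct renormalized Gauss equation $K=-\rhoc-\frac 14\trch\trchb$, the remainder of your argument (estimating $\rhoc$ and $\nabla\rhoc$ via Proposition \ref{RSrhocsigmac}, and estimating $\trch\trchb$ and its angular derivative via the $\mathcal O$-norm bounds) is correct and coincides with the paper's two-line proof.
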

\begin{proof}
Consider the Gauss equation:
$$K=-\rhoc+\psi\psi.$$
We estimate each term on the right hand side. By Proposition \ref{RSrhocsigmac},
$$||\rhoc||_{L^\infty_u L^\infty_{\ub} L^2(S)}\leq C(\mathcal O_0,\mathcal R_0).$$
By Propositions \ref{L4Ricci} and \ref{L4Ricciimproved},
$$||\psi\psi||_{L^\infty_u L^\infty_{\ub} L^2(S)}\leq ||\psi||_{L^\infty_u L^\infty_{\ub} L^4(S)}^2\leq C(\mathcal O_0,\mathcal R_0).$$
\end{proof}
We can thus close the bootstrap assumption (\ref{BA3}) to prove the following estimates for $\mathcal R(S)$, under the assumption that $\mathcal R$ and $\tilde{\mathcal O}_{3,2}$ are bounded.
\begin{proposition}\label{RS}
Assume
$$\mathcal R<\infty,\quad\tilde{\mathcal O}_{3,2}<\infty.$$
Then there exists $\epsilon_0=\epsilon_0(\mathcal O_0,\mathcal R_0,\mathcal R,\tilde{\mathcal O}_{3,2})$ such that whenever $\epsilon\leq \epsilon_0$,
$$\mathcal R(S)\leq C(\mathcal O_0,\mathcal R_0).$$
\end{proposition}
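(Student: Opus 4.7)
The plan is to close a standard bootstrap argument on $\mathcal R(S)$. Note that each of the three Propositions \ref{RSbetab}, \ref{RSrhocsigmac}, \ref{Kest} already establishes one of the three pieces that make up $\mathcal R(S)$, namely the bound on $\sum_{i\le 1}\|\nabla^i\betab\|_{L^2_u L^\infty_{\ub} L^3(S)}$, on $\sum_{i\le 1}\|\nabla^i(\rhoc,\sigmac)\|_{L^\infty_u L^\infty_{\ub} L^2(S)}$, and on $\sum_{i\le 1}\|\nabla^i K\|_{L^\infty_u L^\infty_{\ub} L^2(S)}$ respectively. However each of those propositions is proved under the bootstrap assumption (\ref{BA3}) that $\mathcal R(S)\le\Delta_2$, and the crucial point is that the resulting bound on the right-hand side is $C(\mathcal O_0,\mathcal R_0)$, \emph{independent} of $\Delta_2$. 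This is exactly the setup for closing a bootstrap.

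Concretely, I would proceed as follows. First, choose $\Delta_2=2C(\mathcal O_0,\mathcal R_0)$, where $C(\mathcal O_0,\mathcal R_0)$ is the maximum of the three constants appearing in Propositions \ref{RSbetab}, \ref{RSrhocsigmac}, \ref{Kest}. Let $\mathcal U\subset[0,I]\times[0,\epsilon]$ denote the set of pairs $(u_*,\ub_*)$ such that on the rectangle $[0,u_*]\times[0,\ub_*]$ the bootstrap assumption (\ref{BA3}) holds with this choice of $\Delta_2$. The set $\mathcal U$ is nonempty (it contains a neighborhood of $(0,0)$ thanks to the initial data bound $\mathcal R_0$ and continuity), closed by continuity of the norms in $(u,\ub)$, and I claim it is also open. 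Given openness and connectedness of $[0,I]\times[0,\epsilon]$, $\mathcal U$ will equal the full rectangle, which yields the proposition.

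To establish openness, fix $(u_*,\ub_*)\in\mathcal U$. On the corresponding rectangle, the hypotheses of Propositions \ref{RSbetab}, \ref{RSrhocsigmac} and \ref{Kest} are satisfied (using also the preliminary Ricci-coefficient estimates of Sections \ref{secRicciL4} and \ref{secRicciL2} which themselves rely only on $\mathcal R(S)\le\Delta_2$, $\mathcal R$ and $\tilde{\mathcal O}_{3,2}$). Each of those propositions requires $\epsilon$ to be small depending on $\mathcal O_0,\mathcal R_0,\mathcal R,\tilde{\mathcal O}_{3,2},\Delta_2$; since we have fixed $\Delta_2$ in terms of $\mathcal O_0$ and $\mathcal R_0$ only, this becomes a smallness requirement depending only on $\mathcal O_0,\mathcal R_0,\mathcal R,\tilde{\mathcal O}_{3,2}$, giving the advertised $\epsilon_0$. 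Applying the three propositions then produces
\[
\mathcal R(S)\le C(\mathcal O_0,\mathcal R_0)=\tfrac{1}{2}\Delta_2,
\]
which strictly improves the bootstrap assumption. Continuity of $\mathcal R(S)$ as a function of $(u_*,\ub_*)$ then extends this improved bound to a small neighborhood, proving openness.

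No genuinely new estimate is required; the only delicate point—and the one worth emphasizing—is the precise bookkeeping of dependencies: the constants in Propositions \ref{RSbetab}, \ref{RSrhocsigmac}, \ref{Kest} depend on $\mathcal O_0$ and $\mathcal R_0$ but not on $\Delta_2$ or $\mathcal R$, which is what allows a quadratic-in-$\Delta_2$ error term (arising for instance from $\psi_H\psi_{\Hb}$-type products) to be absorbed by the smallness of $\epsilon$ rather than by the size of $\Delta_2$. Once that observation is in hand, the bootstrap closes immediately and the proposition follows.
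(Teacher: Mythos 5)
Your proposal is correct and follows essentially the same route as the paper: fix $\Delta_2$ in terms of $\mathcal O_0,\mathcal R_0$ (the paper writes $\Delta_2\gg C(\mathcal O_0,\mathcal R_0)$), invoke Propositions \ref{RSbetab}, \ref{RSrhocsigmac}, \ref{Kest} whose conclusions are independent of $\Delta_2$, $\mathcal R$, $\tilde{\mathcal O}_{3,2}$, and close the bootstrap for $\epsilon$ small. The only difference is that you spell out the open/closed continuity argument that the paper leaves implicit.
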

\begin{proof}
Let 
$$\Delta_2 \gg C(\mathcal O_0,\mathcal R_0),$$
where $C(\mathcal O_0,\mathcal R_0)$ is taken to be the maximum of the bounds in Propositions \ref{RSbetab}, \ref{RSrhocsigmac} and \ref{Kest}. Hence, the choice of $\Delta_2$ depends only on $\mathcal O_0$ and $\mathcal R_0$. Thus, by Propositions \ref{RSbetab}, \ref{RSrhocsigmac} and \ref{Kest}, the bootstrap assumption (\ref{BA3}) can be improved by choosing $\epsilon$ sufficiently small depending on $\mathcal O_0,\mathcal R_0,\mathcal R$ and $\tilde{\mathcal O}_{3,2}$.
\end{proof}

Using Proposition \ref{RS}, we improve our estimates on the Ricci coefficients in Propositions \ref{L4Ricci} and \ref{L2Ricci} to get the following:
\begin{proposition}\label{Ricci}
Assume
$$\mathcal R<\infty,\quad\tilde{\mathcal O}_{3,2}<\infty.$$
Then there exists $\epsilon_0=\epsilon_0(\mathcal O_0,\mathcal R_0,\mathcal R,\tilde{\mathcal O}_{3,2})$ such that whenever $\epsilon\leq \epsilon_0$,
$$\sum_{i\leq 2} \mathcal O_{i,2}\leq C(\mathcal O_0,\mathcal R_0).$$
\end{proposition}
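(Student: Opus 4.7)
\medskip

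\textbf{Proof plan.} The estimate is essentially a bookkeeping consequence of the work already done in Sections \ref{secRicciL4}--\ref{secRS}. The plan is to chain together Propositions \ref{L4Ricci}, \ref{L2Ricci}, \ref{L4Ricciimproved} and \ref{RS}, using the key fact that the constants in these intermediate bounds depend only on $\mathcal O_0$ and on the auxiliary norm $\mathcal R(S)$ (and not on any of the bootstrap constants $\Delta_0,\Delta_1,\Delta_2$). Since Proposition \ref{RS} already provides $\mathcal R(S)\leq C(\mathcal O_0,\mathcal R_0)$ under exactly the hypotheses stated here, all the $\mathcal R(S)$-dependence in the earlier Ricci estimates immediately collapses into a dependence on $(\mathcal O_0,\mathcal R_0)$ alone.

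More concretely, I would proceed in three short steps. First, apply Proposition \ref{RS} to fix an $\epsilon_0=\epsilon_0(\mathcal O_0,\mathcal R_0,\mathcal R,\tilde{\mathcal O}_{3,2})$ such that $\mathcal R(S)\leq C(\mathcal O_0,\mathcal R_0)$ holds on the whole region $0\leq u\leq I$, $0\leq \ub\leq \epsilon$. Second, feed this bound into Proposition \ref{L4Ricci} and into its refinement Proposition \ref{L4Ricciimproved} (the latter is crucial, because the raw version of Proposition \ref{L4Ricci4} produced an estimate for $\mathcal O_{1,4}[\etab]$ that was linear in $\mathcal R$, whereas \ref{L4Ricciimproved} removes this dependence once $\mathcal R(S)$ is controlled). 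This yields
\[
\sum_{i\leq 1}\mathcal O_{i,4}\leq C(\mathcal O_0,\mathcal R_0),
\]
which in turn controls $\mathcal O_{0,4}$ and, via H\"older on $S_{u,\ub}$ together with the volume bounds from Proposition \ref{area}, also $\mathcal O_{0,2}+\mathcal O_{1,2}$. Third, substitute the $\mathcal R(S)$ bound into Proposition \ref{L2Ricci} to obtain
\[
\mathcal O_{2,2}\leq C(\mathcal O_0,\mathcal R(S))\leq C(\mathcal O_0,\mathcal R_0).
\]
Summing the three contributions gives $\sum_{i\leq 2}\mathcal O_{i,2}\leq C(\mathcal O_0,\mathcal R_0)$.

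There is no genuine obstacle here, only the verification that the smallness conditions on $\epsilon$ appearing in Propositions \ref{L4Ricci}, \ref{L4Ricciimproved}, \ref{L2Ricci} and \ref{RS} are mutually compatible. Each of them required $\epsilon\leq \epsilon_0$ with $\epsilon_0$ depending on a finite list among $\mathcal O_0,\mathcal R_0,\mathcal R,\tilde{\mathcal O}_{3,2}$ and the various bootstrap constants $\Delta_i$. Since the $\Delta_i$ were chosen in those propositions to depend only on $(\mathcal O_0,\mathcal R_0)$ (by taking them large compared to the corresponding $C(\mathcal O_0,\mathcal R_0)$), the final threshold depends only on $\mathcal O_0,\mathcal R_0,\mathcal R,\tilde{\mathcal O}_{3,2}$, exactly as asserted. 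Taking $\epsilon$ to be the minimum of these thresholds completes the proof.
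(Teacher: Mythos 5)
Your proposal is correct and follows essentially the same route as the paper: Proposition \ref{RS} gives $\mathcal R(S)\leq C(\mathcal O_0,\mathcal R_0)$ under the stated hypotheses, and substituting this into the $\mathcal R(S)$-dependent bounds of Propositions \ref{L4Ricci}, \ref{L4Ricci4b}, and \ref{L2Ricci} (with the $\epsilon$-threshold taken to be the minimum of the ones required there) yields $\sum_{i\leq 2}\mathcal O_{i,2}\leq C(\mathcal O_0,\mathcal R_0)$. The paper states this as a one-line corollary of those propositions, which is exactly the chain you carry out.
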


\subsection{Elliptic Estimates for Third Derivatives of the Ricci Coefficients}\label{secRicci32}

We now estimate the third angular derivatives of the Ricci coefficients. Introduce the bootstrap assumption:
\begin{equation}\tag{A4}\label{BA4}
\tilde{\mathcal O}_{3,2}\leq \Delta_3.
\end{equation}

The bounds for the third derivative of the Ricci coefficients cannot be achieved by the transport equations alone since there will be a loss of derivatives. We can however combine the transport equation bounds with the estimates derived from the Hodge systems as in \cite{KN}, \cite{Chr}, \cite{KlRo}. We first derive the control for some chosen combination of $\nab^3(\psi,\psi_H,\psi_{\Hb})+\nab^2(\beta,\rhoc,\sigmac)$ by the transport equations. Then we show that the estimates for the third derivatives of all the Ricci coefficients can be proved via elliptic estimates. We begin with the bounds for $\nab^3\trch$ and $\nab^3\chih$:

\begin{proposition}\label{trch3}
Assume 
$$\mathcal R<\infty.$$
Then there exists $\epsilon_0=\epsilon_0(\mathcal O_0,\mathcal R_0,\mathcal R,\Delta_3)$ such that whenever $\epsilon\leq \epsilon_0$,
$$\tilde{\mathcal O}_{3,2}[\trch,\chih]\leq C(\mathcal O_0)(1+\mathcal R[\beta]).$$
\end{proposition}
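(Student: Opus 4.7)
The plan is to handle $\nabla^3 \trch$ and $\nabla^3 \chih$ simultaneously by coupling a transport estimate for $\trch$ in the $e_4$ direction (which yields an $L^\infty_{\ub}$ norm with a small factor $\epsilon^{1/2}$) with an elliptic estimate for $\chih$ derived from the Codazzi equation. The key is that the $\nabla_4\chih$ equation contains $\alpha$, so I cannot use it to estimate $\nabla^3\chih$ by a transport argument; instead, I must obtain $\nabla^3\chih$ elliptically, with $\nabla^3\trch$ and $\nabla^2\beta$ appearing as source terms.

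\textbf{Step 1 (elliptic bound for $\nabla^3\chih$).} Since $\chih$ is a symmetric traceless $2$-tensor, Proposition \ref{elliptictraceless} applied to the Codazzi relation $\div\chih=\frac12\nabla\trch+\psi(\trch+\chih)-\beta$ gives, pointwise in $(u,\ub)$,
$$
\|\nabla^3\chih\|_{L^2(S)}\leq C\bigl(\textstyle\sum_{k\leq 1}\|\nabla^kK\|_{L^2(S)}\bigr)\bigl(\|\nabla^3\trch\|_{L^2(S)}+\|\nabla^2\beta\|_{L^2(S)}+\text{l.o.t.}\bigr),
$$
where the lower order terms involve $\nabla^{\leq 2}(\psi\chih)$ and $\|\chih\|_{L^2(S)}$. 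By Propositions \ref{Kest} and \ref{Ricci} the prefactor and lower order terms are bounded by $C(\mathcal O_0,\mathcal R_0)$. Taking $L^\infty_uL^2_{\ub}$ and using $\|\nabla^2\beta\|_{L^\infty_uL^2_{\ub}L^2(S)}\leq \mathcal R[\beta]$ plus $\|\nabla^3\trch\|_{L^\infty_uL^2_{\ub}L^2(S)}\leq \epsilon^{1/2}\|\nabla^3\trch\|_{L^\infty_uL^\infty_{\ub}L^2(S)}$, we obtain
$$
\|\nabla^3\chih\|_{L^\infty_uL^2_{\ub}L^2(S)}\leq C(\mathcal O_0,\mathcal R_0)\bigl(\epsilon^{1/2}\|\nabla^3\trch\|_{L^\infty_uL^\infty_{\ub}L^2(S)}+\mathcal R[\beta]+1\bigr).
$$

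\textbf{Step 2 (transport bound for $\nabla^3\trch$).} Commute $\nabla_4\trch=-\frac12(\trch)^2-|\chih|^2-2\omega\trch$ three times with $\nabla$ using Proposition \ref{commuteeqn} and apply Proposition \ref{transport} in the $e_4$ direction. All error terms produced are schematically of the form $\nabla^{i_1}\psi^{i_2}\nabla^{i_3}(\psi,\psi_H)\nabla^{i_4}(\psi,\psi_H)$ with $i_1+i_2+i_3+i_4\leq 3$. Using Cauchy--Schwarz in $\ub$ and gaining an $\epsilon^{1/2}$ from $L^1_{\ub}\hookrightarrow\epsilon^{1/2}L^2_{\ub}$, every such term is controlled by $C(\mathcal O_0,\mathcal R_0)\epsilon^{1/2}$ via the previously established bounds on $\mathcal O_{\leq 2,2}$, $\mathcal O_{\leq 1,4}$, $\mathcal O_{0,\infty}$, \emph{except} for the top-order contribution $\chih\cdot\nabla^3\chih$ coming from $\nabla^3|\chih|^2$. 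For this term,
$$
\|\chih\cdot\nabla^3\chih\|_{L^\infty_uL^1_{\ub}L^2(S)}\leq \|\chih\|_{L^\infty_uL^2_{\ub}L^\infty(S)}\,\|\nabla^3\chih\|_{L^\infty_uL^2_{\ub}L^2(S)},
$$
where the first factor is $\leq C(\mathcal O_0,\mathcal R_0)$ by Proposition \ref{L4Ricciimproved} and Sobolev embedding. Hence
$$
\|\nabla^3\trch\|_{L^\infty_uL^\infty_{\ub}L^2(S)}\leq C(\mathcal O_0)+C(\mathcal O_0,\mathcal R_0)\bigl(\epsilon^{1/2}+\|\nabla^3\chih\|_{L^\infty_uL^2_{\ub}L^2(S)}\bigr).
$$

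\textbf{Step 3 (closure).} Substituting Step 1 into Step 2 and choosing $\epsilon$ small enough to absorb the term $\epsilon^{1/2}C(\mathcal O_0,\mathcal R_0)\|\nabla^3\trch\|_{L^\infty_uL^\infty_{\ub}L^2(S)}$ on the right-hand side yields
$\|\nabla^3\trch\|_{L^\infty_uL^\infty_{\ub}L^2(S)}\leq C(\mathcal O_0)(1+\mathcal R[\beta])$, and feeding this back into Step 1 gives the corresponding bound for $\|\nabla^3\chih\|_{L^\infty_uL^2_{\ub}L^2(S)}$.

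\textbf{Main obstacle.} The only delicate point is the \emph{linearity} in $\mathcal R[\beta]$. This relies on three structural facts: (i) the Codazzi system produces $\beta$ linearly, so only one power of $\mathcal R[\beta]$ enters from Step 1; (ii) the coupling $\chih\cdot\nabla^3\chih$ between $\nabla^3\trch$ and $\nabla^3\chih$ is controlled purely by the $\mathcal O_0,\mathcal R_0$-bound on $\|\chih\|_{L^\infty_uL^2_{\ub}L^\infty(S)}$, with no $\mathcal R$-dependence; (iii) the bootstrap constant $\Delta_3$ never reappears because all occurrences of $\nabla^3$ on the right-hand sides have been eliminated either by the elliptic reduction of $\chih$ or by absorption using $\epsilon^{1/2}$. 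The forbidden component $\alpha$ is never invoked because we refuse to use the $\nabla_4\chih$ equation.
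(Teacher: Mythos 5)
Your proposal follows the same overall strategy as the paper: use the $\nabla_4\trch$ transport equation (which avoids the singular $\alpha$), couple it to the Codazzi elliptic estimate for $\nabla^3\chih$, and exploit the $\epsilon^{1/2}$ gain from the short $\ub$-interval together with the bound $\|\chih\|_{L^\infty_uL^2_{\ub}L^\infty(S)}\leq C(\mathcal O_0,\mathcal R_0)$ to close. The only genuine difference is mechanical: you close the coupled system by direct substitution and $\epsilon^{1/2}$-absorption, whereas the paper applies Gronwall in $\ub$ with multiplier $\sum_{i\leq 2}\|\nabla^i\chih\|_{L^2(S)}$, bounding the exponential by $\exp(\epsilon^{1/2}\mathcal O_{2,2})\leq C(\mathcal O_0,\mathcal R_0)$. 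Both are valid and yield the same linear-in-$\mathcal R[\beta]$ bound; the Gronwall route is somewhat more robust in that the multiplier need only be bounded, not small, after the $L^1_{\ub}\hookrightarrow\epsilon^{1/2}L^2_{\ub}$ gain.

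One imprecision in Step 2 is worth flagging. You claim that every error term other than $\chih\cdot\nabla^3\chih$ is controlled by $C(\mathcal O_0,\mathcal R_0)\epsilon^{1/2}$ using only $\mathcal O_{\leq 2,2}$, $\mathcal O_{\leq 1,4}$, $\mathcal O_{0,\infty}$. But the commuted $\nabla_4\nabla^3\trch$ equation also contains a term $\trch\cdot\nabla^3\omega$ (since $\omega$ appears as a source in $\nabla_4\trch$), and $\nabla^3\omega$ is neither captured by the elliptic reduction of $\chih$ nor bounded by $\mathcal O_{\leq 2,2}$; it must be controlled via the bootstrap quantity $\tilde{\mathcal O}_{3,2}[\omega]\leq\Delta_3$, producing $C(\mathcal O_0,\mathcal R_0)\epsilon^{1/2}\Delta_3$ rather than a clean $C(\mathcal O_0,\mathcal R_0)\epsilon^{1/2}$. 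This is exactly what the paper does, and it is harmless here since the proposition permits $\epsilon_0$ to depend on $\Delta_3$; but your assertion in obstacle (iii) that "$\Delta_3$ never reappears" should instead read that every appearance of $\Delta_3$ on the right is prefixed by $\epsilon^{1/2}$ and hence absorbable — $\Delta_3$ does reappear, it just does not survive the final choice of $\epsilon$.
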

\begin{proof}
Consider the following equation:
$$\nabla_4 \trch=\chih\chih+\trch(\psi+\psi_H),$$
After commuting with angular derivatives, we have
$$\nabla_{4}\nabla^{3}\trch=\sum_{i_1+i_2+i_3+i_4=3}\nabla^{i_1}\psi^{i_2}\nabla^{i_3}\trch\nabla^{i_4}(\psi+\psi_H)+\sum_{i_1+i_2+i_3+i_4=3}\nabla^{i_1}\psi^{i_2}\nabla^{i_3}\chih\nabla^{i_4}\chih.$$
We estimate term by term. First, we bound the term with $\trch$ and $\psi_H$. Integrating in the $\ub$ direction, applying the Sobolev embedding Theorem in Propositions \ref{L4} and \ref{Linfty} and using Proposition \ref{Ricci}, we get
\begin{equation*}
\begin{split}
&||\sum_{i_1+i_2+i_3+i_4\leq 3}\nabla^{i_1}\psi^{i_2}\nabla^{i_3}\trch\nabla^{i_4}\psi_H||_{L^1_{\ub}L^2(S)} \\
\leq& C\epsilon^{\frac 12}(\sum_{i_1\leq 2}\sum_{i_2\leq 3}||\nab^{i_1}\psi||^{i_2}_{L^\infty_{\ub}L^2(S)})(\sum_{i_3\leq 1}||\nabla^{i_3}\trch||_{L^\infty_{\ub}L^4(S)})(\sum_{i_4\leq 3}||\nabla^{i_4}\psi_H||_{L^2_{\ub}L^2(S)})\\
&+C\epsilon^{\frac 12}(\sum_{i_1\leq 2}\sum_{i_2\leq 3}||\nab^{i_1}\psi||^{i_2}_{L^\infty_{\ub}L^2(S)})(\sum_{i_3\leq 3}||\nabla^{i_3}\trch||_{L^\infty_{\ub}L^2(S)})(\sum_{i_4\leq 1}||\nabla^{i_4}\psi_H||_{L^2_{\ub}L^4(S)})\\
\leq &C(\mathcal O_0,\mathcal R_0)\epsilon^{\frac{1}{2}}(1+\Delta_3).
\end{split}
\end{equation*}
Since $\psi$ satisfies stronger estimates than $\psi_H$, we have the same bounds for the term with $\trch$ and $\psi$: 
\begin{equation*}
\begin{split}
&||\sum_{i_1+i_2+i_3+i_4\leq 3}\nabla^{i_1}\psi^{i_2}\nabla^{i_3}\trch\nabla^{i_4}\psi||_{L^1_{\ub}L^2(S)} \\
\leq &C(\mathcal O_0,\mathcal R_0)\epsilon^{\frac{1}{2}}(1+\Delta_3).
\end{split}
\end{equation*}
Finally, we consider the term with $\chih\chih$:
\begin{equation}\label{chichi}
\begin{split}
&||\sum_{i_1+i_2+i_3+i_4\leq 3}\nabla^{i_1}\psi^{i_2}\nabla^{i_3}\chih\nabla^{i_4}\chih||_{L^1_{\ub}L^2(S)} \\
\leq& C(\sum_{i_1\leq 2}\sum_{i_2\leq 3}||\nab^{i_1}\psi||^{i_2}_{L^\infty_{\ub}L^2(S)})\int_0^{\ub}(\sum_{i_3\leq 2}||\nabla^{i_3}\chih||_{L^2(S_{u,\ub'})})(\sum_{i_4\leq 3}||\nabla^{i_4}\chih||_{L^2(S_{u,\ub'})})d\ub'\\
\leq &C(\mathcal O_0,\mathcal R_0)(1+\int_0^{\ub}(\sum_{i\leq 2}||\nabla^{i}\chih||_{L^2(S_{u,\ub'})})(||\nabla^3\chih||_{L^2(S_{u,\ub'})})d\ub').
\end{split}
\end{equation}
We now use the Codazzi equation
$$\div\chih=\frac 12\nabla\trch-\beta+\psi(\psi+\psi_H)$$
and apply elliptic estimates in Proposition \ref{elliptictraceless} to get
\begin{equation}\label{chihelliptic}
\begin{split}
&||\nabla^3\chih||_{L^2(S)}\\
\leq &C(\mathcal O_0,\mathcal R_0)(\sum_{i\leq 3}||\nabla^i\trch||_{L^2(S)}+\sum_{i\leq 2}||\nabla^i\beta||_{L^2(S)}\\
&\qquad\qquad\quad+\sum_{i_1+i_2\leq 2}||\nabla^{i_1}\psi\nabla^{i_2}(\psi+\psi_H)||_{L^2(S)} +||\chih||_{L^2(S)}).
\end{split}
\end{equation}
Notice that we can apply elliptic estimates using Proposition \ref{elliptictraceless} since we have bounds for the Gauss curvature by Proposition \ref{Kest}.
Therefore, 
\begin{equation*}
\begin{split}
&\int_0^{\ub}(\sum_{i\leq 2}||\nabla^{i}\chih||_{L^2(S_{u,\ub'})})(||\nabla^3\chih||_{L^2(S_{u,\ub'})})d\ub'\\
\leq &C(\mathcal O_0,\mathcal R_0)(1+ \int_0^{\ub}(\sum_{i\leq 2}||\nabla^{i}\chih||_{L^2(S_{u,\ub'})})(||\nabla^3\trch||_{L^2(S_{u,\ub'})})d\ub'+\mathcal R[\beta]).
\end{split}
\end{equation*}
Gathering all the estimates, we get
\begin{equation*}
\begin{split}
&||\nabla^3\trch||_{L^2(S_{u,\ub})}\\
\leq &C(\mathcal O_0,\mathcal R_0)(1+\epsilon^{\frac{1}{2}}\Delta_3+ \int_0^{\ub}(\sum_{i_2\leq 2}||\nabla^{i_2}\chih||_{L^2(S_{u,\ub'})})(||\nabla^3\trch||_{L^2(S_{u,\ub'})})d\ub'+\mathcal R[\beta]).
\end{split}
\end{equation*}
Gronwall's inequality implies that
\begin{equation*}
\begin{split}
&||\nabla^3\trch||_{L^2(S_{u,\ub})}\\
\leq &C(\mathcal O_0,\mathcal R_0)(1+\epsilon^{\frac{1}{2}}\Delta_3+\mathcal R[\beta])\exp( \int_0^{\ub}(\sum_{i\leq 2}||\nabla^{i}\chih||_{L^2(S_{u,\ub'})})d\ub')\\
\leq &C(\mathcal O_0,\mathcal R_0)(1+\epsilon^{\frac{1}{2}}\Delta_3+\mathcal R[\beta])\exp( \epsilon^{\frac 12}\sum_{i\leq 2}||\nabla^{i}\chih||_{L^\infty_uL^2_{\ub}L^2(S)}))\\
\leq &C(\mathcal O_0,\mathcal R_0)(1+\epsilon^{\frac 12}\Delta_3+\mathcal R[\beta]).
\end{split}
\end{equation*}
By choosing $\epsilon$ sufficiently small depending on $\mathcal O_0, \mathcal R_0$ and $\Delta_3$, 
$$||\nabla^3\trch||_{L^2(S_{u,\ub})}\leq C(\mathcal O_0,\mathcal R_0)(1+\mathcal R[\beta]).$$
This, together with (\ref{chihelliptic}), implies that
$$||\nabla^3\chih||_{L^\infty_uL^2_{\ub}L^2(S)}\leq C(\mathcal O_0)(1+\mathcal R[\beta]).$$
\end{proof}
We now prove estimates for $\nab^3\eta$. To do so, we first prove bounds for second derivatives of $\mu=-\div\eta-\rhoc$ and recover the control for $\nab^3\eta$ via elliptic estimates.
\begin{proposition}\label{mu3}
Assume 
$$\mathcal R<\infty.$$
Then there exists $\epsilon_0=\epsilon_0(\mathcal O_0,\mathcal R_0,\mathcal R,\Delta_3)$ such that whenever $\epsilon\leq \epsilon_0$,
$$\tilde{\mathcal O}_{3,2}[\mu,\eta]\leq C(\mathcal O_0)(1+\epsilon^{\frac{1}{2}}\tilde{\mathcal O}_{3,2}+\mathcal R).$$
\end{proposition}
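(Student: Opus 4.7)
The plan is to imitate the structure used for $\mub$ in Proposition~\ref{L2Ricci4}, but one derivative higher and in the $\nab_4$ direction, and then recover $\nab^3\eta$ from elliptic estimates on the Hodge system $(\div\eta,\curl\eta)=(-\mu-\rhoc,\sigmac)$. First I would write down the schematic $\nab_4$ equation for the mass aspect $\mu=-\div\eta-\rhoc$: substituting the $\nab_4\eta$ and $\nab_4\rhoc$ equations from Section~\ref{seceqn} and using the commutation formula to swap $\div$ and $\nab_4$, one obtains
\[
\nab_4\mu=\psi(\rhoc,\sigmac,\beta)+\psi\,\nab(\psi+\psi_H)+\psi_H\,\nab\psi+\psi\psi(\trch+\psi_H)+\psi\,\chih\,\chih,
\]
\emph{with no second derivative of curvature on the right-hand side}; this is the analogue of the equation used in Proposition~\ref{L2Ricci4} and, importantly, $\psi_{\Hb}$ does not appear. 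I would then commute twice with $\nab$ using Proposition~\ref{commuteeqn}, producing terms of the schematic types $\nab^{i_1}\psi^{i_2}\nab^{i_3}\psi\,\nab^{i_4}(\rhoc,\sigmac,\beta)$, $\nab^{i_1}\psi^{i_2}\nab^{i_3}\psi\,\nab^{i_4}(\psi,\psi_H)$ with $i_3+i_4\leq 2$ including one derivative landing on $\chih$, and the delicate quadratic $\chih\cdot\chih$ piece differentiated twice.

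Next I would bound each term in $L^\infty_u L^1_{\ub}L^2(S)$ so that Proposition~\ref{transport} yields $\nab^2\mu$ in $L^\infty_uL^\infty_{\ub}L^2(S)$. The curvature terms $\sum_{i\leq 2}\nab^i(\beta,\rhoc,\sigmac)$ are controlled by $C(\mathcal O_0,\mathcal R_0)\,\mathcal R$ after using $\epsilon^{\frac12}$ from $L^1_{\ub}\hookrightarrow \epsilon^{\frac12}L^2_{\ub}$ together with the bounds on $\mathcal R(S)$ and lower-order $\mathcal O$-norms from Propositions~\ref{RS} and~\ref{Ricci}. The cubic Ricci terms containing at most two derivatives of $\psi,\psi_H$ are handled exactly as in Proposition~\ref{L2Ricci1}, giving $C(\mathcal O_0)\epsilon^{\frac12}(1+\tilde{\mathcal O}_{3,2})$, where the third-derivative piece comes with the desired smallness $\epsilon^{\frac12}$. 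The one genuinely new term is $\nab^{i_1}\psi^{i_2}\nab^{i_3}\chih\,\nab^{i_4}\chih$ with $i_3+i_4\leq 2$: using the mixed-norm control $\chih\in L^2_{\ub}L^\infty_uL^\infty(S)$ and $\nab\chih\in L^2_{\ub}L^\infty_uL^4(S)$ from Proposition~\ref{Ricci}, together with $\nab^2\chih\in L^\infty_uL^2_{\ub}L^2(S)$ from $\mathcal O_{2,2}$, this double-$L^2_{\ub}$ pairing produces a full $L^1_{\ub}$ of size $C(\mathcal O_0,\mathcal R_0)$ with no $\tilde{\mathcal O}_{3,2}$ cost.

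Having controlled $\nab^2\mu$ in $L^\infty_uL^\infty_{\ub}L^2(S)$, I would invoke the elliptic estimate of Proposition~\ref{ellipticthm} applied to the Hodge system $\div\eta=-\mu-\rhoc,\ \curl\eta=\sigmac$ on each sphere $S_{u,\ub}$. With the Gauss curvature bound from Proposition~\ref{Kest}, this yields
\[
\|\nab^3\eta\|_{L^2(S_{u,\ub})}\leq C(\mathcal O_0,\mathcal R_0)\Bigl(\sum_{i\leq 2}\|\nab^i\mu\|_{L^2(S)}+\sum_{i\leq 2}\|\nab^i(\rhoc,\sigmac)\|_{L^2(S)}+\|\eta\|_{L^2(S)}\Bigr).
\]
Taking $L^2_{\ub}$ at fixed $u$ produces the $L^\infty_uL^2_{\ub}L^2(S)$ bound, where the curvature piece contributes $\mathcal R[\rhoc,\sigmac,\beta]$; taking $L^2_u$ at fixed $\ub$ gives the $L^\infty_{\ub}L^2_uL^2(S)$ bound. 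In both cases the right-hand side is bounded by $C(\mathcal O_0)(1+\epsilon^{\frac12}\tilde{\mathcal O}_{3,2}+\mathcal R)$, which is the claim.

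The main obstacle is ensuring that no term forces a loss worse than $\epsilon^{\frac12}\tilde{\mathcal O}_{3,2}$ or linear $\mathcal R$. In particular, the $\psi\,\chih\,\chih$ piece, upon commuting with $\nab^2$, risks producing a product where \emph{both} $\chih$ factors require an $L^2_{\ub}$ trade, leaving no $\epsilon^{\frac12}$ to spare; the only successful assignment is to put $\nab^2$ on one $\chih$ (controlled in $L^\infty_uL^2_{\ub}L^2(S)$ by $\mathcal O_{2,2}$) and keep the other in $L^2_{\ub}L^\infty_uL^\infty(S)$, using $\ub$-integration to obtain a full $L^1_{\ub}$. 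The analogous accounting for terms carrying a $\nab^3\psi$ or $\nab^3\psi_H$ factor must always pair a $\tilde{\mathcal O}_{3,2}$ factor with an $L^\infty_{\ub}$-bounded coefficient and an $\epsilon^{\frac12}$ from the $L^1_{\ub}$ integration — this bookkeeping is exactly what the structure of the mass aspect equation permits, and where signature-consistency together with the absence of $\psi_{\Hb}$ is crucial.
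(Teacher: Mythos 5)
Your overall scaffolding — the $\nab_4\mu$ equation, twice-commuted transport estimate, and then elliptic recovery of $\nab^3\eta$ from the Hodge system — matches the paper's proof. You also correctly isolate the subtlety in the $\psi\,\chih\,\chih$ term, where two derivatives must be distributed so that exactly one $\chih$ is taken in $\mathcal O_{2,2}$ and the other in the lower-order mixed norm.

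However, you have glossed over the genuinely delicate term, and in doing so made a claim that is false. The schematic $\nab_4\mu$ equation contains a $\psi_H\nab\psi$ contribution (e.g.\ from $\chih\cdot\nab\widehat{\otimes}\etab$ and from the commutator $[\nab_4,\div]\eta$). After twice commuting with $\nab$, one of the terms produced is $\psi_H\nab^3\psi$ with $\psi\in\{\eta,\etab\}$. You assert that ``the analogous accounting for terms carrying a $\nab^3\psi$ or $\nab^3\psi_H$ factor must always pair a $\tilde{\mathcal O}_{3,2}$ factor with an $L^\infty_{\ub}$-bounded coefficient and an $\epsilon^{\frac12}$ from the $L^1_{\ub}$ integration,'' and that this is permitted by the structure of the equation. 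This is not the case here: the coefficient $\psi_H$ is only controlled in $L^2_{\ub}L^\infty_uL^\infty(S)$, not in $L^\infty_{\ub}$, and $\nab^3(\eta,\etab)$ via $\tilde{\mathcal O}_{3,2}$ is only in $L^\infty_u L^2_{\ub}L^2(S)$. Applying Cauchy-Schwarz in $\ub$ then exhausts the full $L^1_{\ub}$ integration, leaving no $\epsilon^{\frac12}$ to spare, and the resulting bound is $\tilde{\mathcal O}_{3,2}$ with a constant of order one, which is not strong enough to close the bootstrap assumption \eqref{BA4}. The point you attribute to ``absence of $\psi_{\Hb}$'' is correct as far as it goes, but the problematic coefficient here is $\psi_H$, which is present.

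The missing idea is to use the elliptic estimate of Proposition~\ref{ellipticthm} \emph{inside} the transport estimate for $\nab^2\mu$, not only afterward to recover $\nab^3\eta$. One substitutes
$$\|\nab^3(\eta,\etab)\|_{L^2(S)}\leq C\Bigl(\sum_{i\leq 2}\|\nab^i(\mu,\mub)\|_{L^2(S)}+\sum_{i\leq 2}\|\nab^i(\rhoc,\sigmac)\|_{L^2(S)}+\|(\eta,\etab)\|_{L^2(S)}\Bigr)$$
into the bound for $\|\psi_H\nab^3(\eta,\etab)\|_{L^1_{\ub}L^2(S)}$. Then $\nab^2(\mu,\mub)$ is in $L^\infty_uL^\infty_{\ub}L^2(S)$ by the bootstrap assumption, so pairing it with $\psi_H\in L^2_{\ub}$ and applying Cauchy-Schwarz over the $\ub$ interval produces the desired small factor $\epsilon^{\frac12}\Delta_3$, while the $\nab^2(\rhoc,\sigmac)$ piece in $L^2_{\ub}L^2(S)$ costs only $\mathcal R$. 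Without this substitution the estimate cannot be closed with the factor $\epsilon^{\frac12}$ attached to $\tilde{\mathcal O}_{3,2}$ as claimed in the proposition.
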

\begin{proof}
Recall that
$$\mu=-\div\eta-\rhoc$$
and $\mu$ satisfies the following equations:
$$\nabla_4 \mu=\psi(\beta,\rhoc,\sigmac)+\psi\nabla(\psi+\psi_H)+\psi_H\nabla\psi+\psi\psi(\psi+\psi_H)+\psi\chih\chih.$$
It is important to note that $\betab$, $\psi_{\Hb}$ are absent in this equation. This cannot be derived from signature considerations alone, but requires an exact cancellation in the equation for $\nab_4\rhoc$ as indicated in Section \ref{secsche}.

After commuting with angular derivatives, and substituting the Codazzi equation
$$\beta=\sum_{i_1+i_2= 1}\psi^{i_1}\nab^{i_2}(\psi+\psi_H),$$
we get
\begin{equation*}
\begin{split}
\nabla_{4}\nabla^2\mu=&\sum_{i_1+i_2+i_3+i_4=2}\nabla^{i_1}\psi^{i_2}\nabla^{i_3}\psi\nabla^{i_4}(\rhoc,\sigmac)+\sum_{i_1+i_2+i_3+i_4=3}\nabla^{i_1}\psi^{i_2}\nabla^{i_3}\psi\nabla^{i_4}(\psi+\psi_H) \\
&+\sum_{i_1+i_2+i_3+i_4=3}\nabla^{i_1}\psi^{i_2}\nabla^{i_3}\chih\nabla^{i_4}\chih.
\end{split}
\end{equation*}
The term with curvature can be estimated using Proposition \ref{Ricci} by
\begin{equation*}
\begin{split}
&||\sum_{i_1+i_2+i_3+i_4\leq 2}\nabla^{i_1}\psi^{i_2}\nabla^{i_3}\psi\nabla^{i_4}(\rhoc,\sigmac)||_{L^\infty_uL^1_{\ub}L^2(S)} \\
\leq& C(\sum_{i_1\leq 2}\sum_{i_2\leq 3}||\nabla^{i_1}\psi||^{i_2}_{L^\infty_{\ub}L^\infty_{u}L^2(S)})(\sum_{i_3\leq 2}||\nabla^{i_3}(\rhoc,\sigmac)||_{L^\infty_{u}L^1_{\ub}L^2(S)}) \\
\leq &C(\mathcal O_0,\mathcal R_0)\epsilon^{\frac{1}{2}}\sum_{i\leq 2}||\nabla^{i}(\rhoc,\sigmac)||_{L^\infty_uL^2_{\ub}L^2(S)} \\
\leq &C(\mathcal O_0,\mathcal R_0)\epsilon^{\frac{1}{2}}\mathcal R.
\end{split}
\end{equation*}
We next consider the term with two $\chih$'s. By (\ref{chichi}) in the proof of Proposition \ref{trch3}, we have
\begin{equation*}
\begin{split}
||\sum_{i_1+i_2+i_3+i_4\leq 3}\nabla^{i_1}\psi^{i_2}\nabla^{i_3}\chih\nabla^{i_4}\chih||_{L^\infty_{u}L^1_{\ub}L^2(S)} \leq &C(\mathcal O_0,\mathcal R_0)(1+\tilde{\mathcal O}_{3,2}[\chih]).
\end{split}
\end{equation*}
Applying Proposition \ref{trch3}, we get
$$||\sum_{i_1+i_2+i_3+i_4\leq 3}\nabla^{i_1}\psi^{i_2}\nabla^{i_3}\chih\nabla^{i_4}\chih||_{L^\infty_{u}L^1_{\ub}L^2(S)}\leq C(\mathcal O_0,\mathcal R_0)(1+\mathcal R).$$
We then move to the remaining terms with at most one $\psi_H$. First, we look at the terms that do not contain $\psi_H\nab^3\psi$. These are the terms
$$\sum_{i_1+i_2+i_3+i_4=3,i_3\leq 2}\nabla^{i_1}\psi^{i_2}\nabla^{i_3}\psi\nabla^{i_4}\psi_{H},$$
and
$$\sum_{i_1+i_2+i_3+i_4=3}\nabla^{i_1}\psi^{i_2}\nabla^{i_3}\psi\nabla^{i_4}\psi.$$
The first term can be estimated using Proposition \ref{Ricci} by
\begin{equation}\label{mu.aux}
\begin{split}
&||\sum_{i_1+i_2+i_3+i_4=3,i_3\leq 2}\nabla^{i_1}\psi^{i_2}\nabla^{i_3}\psi\nabla^{i_4}\psi_{H}||_{L^\infty_{u}L^1_{\ub}L^2(S)} \\
\leq& C\epsilon^{\frac 12}(\sum_{i_1\leq 1}\sum_{i_2\leq 3}||\nab^{i_1}\psi||^{i_2}_{L^\infty_{u}L^\infty_{\ub}L^4(S)})(\sum_{i_3\leq 1}||\nabla^{i_3}\psi||_{L^\infty_{u}L^\infty_{\ub}L^4(S)})(\sum_{i_4\leq 3}||\nabla^{i_4}(\psi,\psi_H)||_{L^\infty_{u}L^2_{\ub}L^2(S)})\\
&+C\epsilon^{\frac 12}(\sum_{i_1\leq 3}||\psi||^{i_1}_{L^\infty_{u}L^\infty_{\ub}L^\infty(S)})(\sum_{i_2\leq 2}||\nabla^{i_2}\psi||_{L^\infty_{u}L^\infty_{\ub}L^2(S)})||(\psi,\psi_H)||_{L^\infty_{u}L^2_{\ub}L^\infty(S)}\\
\leq &C(\mathcal O_0,\mathcal R_0)\epsilon^{\frac{1}{2}}(1+\Delta_3).
\end{split}
\end{equation}
The second term can be controlled using Proposition \ref{Ricci} by
\begin{equation*}
\begin{split}
&||\sum_{i_1+i_2+i_3+i_4=3}\nabla^{i_1}\psi^{i_2}\nabla^{i_3}\psi\nabla^{i_4}\psi||_{L^\infty_{u}L^1_{\ub}L^2(S)} \\
\leq& C\epsilon^{\frac 12}(\sum_{i_1\leq 1}\sum_{i_2\leq 3}||\nab^{i_1}\psi||^{i_2}_{L^\infty_{u}L^\infty_{\ub}L^4(S)})(\sum_{i_3\leq 1}||\nabla^{i_3}\psi||_{L^\infty_{u}L^\infty_{\ub}L^4(S)})(\sum_{i_4\leq 3}||\nabla^{i_4}\psi||_{L^\infty_{u}L^2_{\ub}L^2(S)})\\
\leq &C(\mathcal O_0,\mathcal R_0)\epsilon^{\frac{1}{2}}(1+\Delta_3).
\end{split}
\end{equation*}
We now bound the terms $\psi_H\nab^3\psi$. If $\psi\in\{\trch,\trchb\}$, we can estimate in a similar fashion as \eqref{mu.aux}, since we have $L^\infty_u L^\infty_{\ub} L^2(S)$ estimates for $\nabla^3(\trch,\trchb)$:
\begin{equation*}
\begin{split}
||\psi_H\nabla^3(\trch,\trchb)||_{L^\infty_{u}L^1_{\ub}L^2(S)} 
\leq &C(\mathcal O_0,\mathcal R_0)\epsilon^{\frac{1}{2}}\Delta_3.
\end{split}
\end{equation*}
The remaining terms are of the form $\psi_H\nab^3(\eta,\etab)$. The difficulty in estimating these terms is the fact that using the $\tilde{\mathcal O}_{3,2}$ norm, $\nab^3\eta$ and $\nab^3\etab$ can only be estimated in $L^2(H)$ but not $L^2(S)$. Thus we need to estimate both $\nab^3(\eta,\etab)$ and $\psi_H$ in $L^2_{\ub}$ and will not have an extra smallness constant $\epsilon^{\frac 12}$. Therefore, instead of bounding $\nab^3(\eta,\etab)$ with the $\tilde{\mathcal O}_{3,2}$ norm, we apply elliptic estimates and control $\nab^2\mu$ in $L^\infty_u L^\infty_{\ub}L^2(S)$.

By the div-curl systems
$$\div\eta=-\mu-\rhoc,\quad \curl\eta=\sigmac,$$
$$\div\etab=-\mub-\rhoc,\quad \curl\etab=-\sigmac,$$
and the elliptic estimates given by Propositions \ref{ellipticthm} and \ref{Kest}, we have
\begin{equation}\label{etaelliptic}
||\nabla^3\eta||_{L^2(S)}\leq C(\sum_{i\leq 2}||\nabla^i\mu||_{L^2(S)}+\sum_{i\leq 2}||\nabla^i(\rhoc,\sigmac)||_{L^2(S)}+||\eta||_{L^2(S)}),
\end{equation}
$$||\nabla^3\etab||_{L^2(S)}\leq C(\sum_{i\leq 2}||\nabla^i\mub||_{L^2(S)}+\sum_{i\leq 2}||\nabla^i(\rhoc,\sigmac)||_{L^2(S)}+||\etab||_{L^2(S)}).$$
This implies
\begin{equation*}
\begin{split}
&||\psi_H\nabla^3(\eta,\etab)||_{L^1_{\ub}L^2(S)}\\
\leq &C ||\psi_H||_{L^2_{\ub}L^\infty(S)}||\nabla^3(\eta,\etab)||_{L^2_{\ub}L^2(S)}\\
\leq &C(\mathcal O_0,\mathcal R_0)(1+||\psi_H||_{L^2_{\ub}L^\infty(S)}(\sum_{i\leq 2}\epsilon^{\frac 12}||\nabla^i(\mu,\mub)||_{L^\infty L^2(S)}+\sum_{i\leq 2}||\nabla^i(\rhoc,\sigmac)||_{L^2_{\ub}L^2(S)}))\\
\leq &C(\mathcal O_0,\mathcal R_0)(1+\epsilon^{\frac 12}\Delta_3+\mathcal R).
\end{split}
\end{equation*}
Hence, gathering all the above estimates, by Proposition \ref{transport}, we have
$$||\nabla^3\mu||_{L^\infty_u L^\infty_{\ub}L^2(S)}\leq C(\mathcal O_0,\mathcal R_0)(1+\epsilon^{\frac{1}{2}}\Delta_3+\mathcal R).$$
By choosing $\epsilon$ sufficiently small depending on $\Delta_3$, we have
$$||\nabla^3\mu||_{L^\infty_u L^\infty_{\ub}L^2(S)}\leq C(\mathcal O_0,\mathcal R_0)(1+\mathcal R).$$
Therefore, by (\ref{etaelliptic}), we have
$$||\nabla^3\eta||_{L^\infty_uL^2_{\ub}L^2(S)}\leq C(\mathcal O_0,\mathcal R_0)(1+\mathcal R),$$
and
$$||\nabla^3\eta||_{L^\infty_{\ub}L^2_{u}L^2(S)}\leq C(\mathcal O_0,\mathcal R_0)(1+\mathcal R).$$
\end{proof}
We now estimate $\nab^3\omegab$:
\begin{proposition}\label{omegab3}
Assume 
$$\mathcal R<\infty.$$
Then there exists $\epsilon_0=\epsilon_0(\mathcal O_0,\mathcal R_0,\mathcal R,\Delta_3)$ such that whenever $\epsilon\leq \epsilon_0$,
$$\tilde{\mathcal O}_{3,2}[\kappab,\omegab,\omegab^{\dagger}]\leq C(\mathcal O_0,\mathcal R_0)(1+\mathcal R[\betab]).$$
\end{proposition}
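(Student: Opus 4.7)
The plan is to mirror the strategy of Proposition~\ref{mu3}: derive a transport equation for the renormalized quantity $\kappab := -\nab\omegab + {}^*\nab\omegab^\dagger - \tfrac{1}{2}\betab$ in which all angular derivatives of the curvature components have cancelled, then recover $\nab^3\omegab$ and $\nab^3\omegab^\dagger$ from $\nab^2\kappab$ via an elliptic step. First I would compute $\nab_4\kappab$ by applying $\nab_4$ to each summand and invoking the transport equations $\nab_4\omegab = \tfrac{1}{2}\rho + \psi\psi$, $\nab_4\omegab^\dagger = \tfrac{1}{2}\sigmac$, together with the $\nab_4$ Bianchi identity for $\betab$ and the commutator formula $[\nab_4,\nab]$. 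Writing $\rho = \rhoc + \tfrac{1}{2}\chih\cdot\chibh$, the $\nab\rho$ coming from $-\nab(\nab_4\omegab)$ and the $\nab\rhoc$ coming from $-\tfrac{1}{2}\nab_4\betab$ cancel, as do the ${}^*\nab\sigmac$ contributions and the $\nab(\chih\cdot\chibh)$ terms; what remains is schematically
$$\nab_4\kappab \sim \nab(\chih\wedge\chibh) + (\chi+\psi)\nab(\omegab,\omegab^\dagger) + \psi(\rhoc,\sigmac) + \psi\nab(\text{Ricci}\cdot\text{Ricci}) + \text{Ricci}^3,$$
i.e., no angular derivatives of $\rhoc$ or $\sigmac$ appear, exactly as the definition of $\kappab$ is designed to ensure.

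Next I commute with $\nab^2$ and estimate $\nab^2\kappab$ in $L^\infty_u L^\infty_{\ub} L^2(S)$ by integrating the resulting $\nab_4$ equation in $\ub$ via Proposition~\ref{transport}. The principal new contribution on the right is $\nab^3(\chih\wedge\chibh) \sim \chibh\cdot\nab^3\chih + \chih\cdot\nab^3\chibh + \text{lower order}$. The first piece is acceptable: Proposition~\ref{trch3} provides $\nab^3\chih \in L^\infty_u L^2_{\ub}L^2(S)$ with norm $C(\mathcal O_0)(1+\mathcal R[\beta])$, and the mixed-norm bounds for $\chibh$ from Proposition~\ref{L4Ricciimproved} together with Sobolev embedding suffice. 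The critical term is $\chih\cdot\nab^3\chibh$: here I substitute the Codazzi equation $\div\chibh = \tfrac{1}{2}\nab\trchb + \psi(\chibh+\trchb) + \betab$ and invoke Proposition~\ref{elliptictraceless} (using the Gauss curvature control from Proposition~\ref{Kest}) to replace $\nab^3\chibh$ by $\nab^3\trchb$, $\nab^2\betab$, and strictly lower-order nonlinearities. This is precisely what dictates the linear dependence on $\mathcal R[\betab]$ in the statement. The remaining error terms, being bilinear or higher in the Ricci coefficients and involving a factor of $\psi_H$ or $\psi_{\Hb}$, pick up $\epsilon^{1/2}$ from Cauchy--Schwarz in $\ub$, allowing the bootstrap factor $\Delta_3$ from~\eqref{BA4} to be absorbed; any residual terms linear in $\nab^2\kappab$ (e.g.\ from $[\nab_4,\nab^2]\kappab \sim \chi\cdot\nab^2\kappab + \cdots$) are handled by Gronwall in $\ub$. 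The initial data $\|\nab^2\kappab\|_{L^\infty_u L^2(S_{u,0})}$ is controlled via the definition of $\kappab$ and the normalization $\omegab^\dagger \equiv 0$ on $H_0$, using only $\mathcal O_0$ and $\mathcal R_0$.

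Finally, to recover $\nab^3\omegab$ and $\nab^3\omegab^\dagger$, I exploit that the one-form $\phi := \kappab + \tfrac{1}{2}\betab = -\nab\omegab + {}^*\nab\omegab^\dagger$ satisfies, since $\omegab$ and $\omegab^\dagger$ are scalars and on $S_{u,\ub}$ one has $\curl(\nab f) = 0$ and $\div({}^*\nab f) = 0$,
$$\div\phi = -\Delta\omegab, \qquad \curl\phi = -\Delta\omegab^\dagger.$$
Combined with the Gauss curvature control from Proposition~\ref{Kest} and standard elliptic estimates for the Laplacian on the 2-sphere, this yields $\|\nab^3\omegab\|_{L^2(S)}$ and $\|\nab^3\omegab^\dagger\|_{L^2(S)}$ in terms of $\|\nab^2\kappab\|_{L^2(S)}$, $\|\nab^2\betab\|_{L^2(S)}$, and lower-order quantities. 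Taking the $L^\infty_{\ub}L^2_u L^2(S)$ norm produces the desired bounds for $\tilde{\mathcal O}_{3,2}[\omegab,\omegab^\dagger]$ from the previously obtained $\tilde{\mathcal O}_{3,2}[\kappab]$ and $\mathcal R[\betab]$, completing the proposition. The main obstacle is the algebraic verification of the cancellation structure in $\nab_4\kappab$---confirming that $\kappab$ is the correct renormalization eliminating $\nab(\rhoc,\sigmac,\chih\cdot\chibh)$ simultaneously---and, once that is in hand, the careful treatment of the surviving $\chih\cdot\nab^3\chibh$ term via Codazzi substitution, which is exactly the mechanism producing the linear dependence on $\mathcal R[\betab]$ in the conclusion.
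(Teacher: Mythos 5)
Your overall strategy (transport equation for $\kappab$, then a Hodge/elliptic step to recover $\nab^3\omegab$ and $\nab^3\omegab^\dagger$) matches the paper, and the algebraic cancellation you describe in computing $\nab_4\kappab$ is correct. However, there are two substantive problems.

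\textbf{Wrong norm for $\nab^2\kappab$.} You propose to estimate $\nab^2\kappab$ in $L^\infty_u L^\infty_{\ub} L^2(S)$. This cannot work: since $\nab_4\kappab$ is a transport equation in $\ub$, the initial data enter on $\Hb_0$ (the slice $\ub=0$, not $H_0$ as you write), where $\omegab^\dagger\equiv 0$, so $\nab^2\kappab\restriction_{\Hb_0}=-\nab^3\omegab-\tfrac 12\nab^2\betab$. Both pieces are only in $L^2_u L^2(S)$, controlled by $\mathcal O_0$ and $\mathcal R_0$ on $\Hb_0$, not in $L^\infty_u L^2(S)$. The paper therefore estimates $\nab^2\kappab$ in $L^2_u L^\infty_{\ub}L^2(S)$: one applies Proposition~\ref{transport} for each fixed $u$, then takes $L^2_u$. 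With the $L^\infty_u$ norm the initial-data term in the transport estimate is unbounded from the start.

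\textbf{Misattributed source of the $\mathcal R[\betab]$ dependence, and a latent circularity.} You claim the $\mathcal R[\betab]$ enters via the term $\chih\cdot\nab^3\chibh$ in the commuted transport equation after Codazzi substitution. In fact the paper obtains an $\mathcal R[\betab]$-\emph{independent} bound for $\nab^2\kappab$: all the $\nab^3(\psi,\psi_H,\psi_{\Hb})$ terms — including $\chih\nab^3\chibh$ — are controlled directly by the bootstrap $\Delta_3$ from~\eqref{BA4} combined with the $\epsilon^{1/2}$ smallness from integrating in the short $\ub$-interval, and then $\epsilon$ is chosen small to absorb $\Delta_3$ and $\mathcal R$. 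No Codazzi substitution is needed in the transport step. The $\mathcal R[\betab]$ appears only afterwards, in the elliptic step: since $-\div\nab\omegab=\div\kappab+\tfrac 12\div\betab$ (and similarly with $\curl$ and $\omegab^\dagger$), reconstructing $\nab^3(\omegab,\omegab^\dagger)$ from $\nab^2\kappab$ requires $\nab^2\betab$, giving $\|\nab^3(\omegab,\omegab^\dagger)\|_{L^\infty_{\ub}L^2_uL^2(S)}\le C(\mathcal O_0,\mathcal R_0)(1+\mathcal R[\betab])$. Your proposed route is not merely a different way to arrive at the same dependence; if you replace $\nab^3\chibh$ by $\nab^3\trchb$ plus $\nab^2\betab$ via Codazzi, you then need to control $\nab^3\trchb$, but Proposition~\ref{trchb3} — the bound for $\tilde{\mathcal O}_{3,2}[\trchb]$ — is proved \emph{after} and \emph{using} Proposition~\ref{omegab3}. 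You would have to invoke the bootstrap $\Delta_3$ for $\nab^3\trchb$ anyway, at which point you could equally invoke it for $\nab^3\chibh$ directly and skip the Codazzi step altogether.
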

\begin{proof}
Recall that $\omegab^{\dagger}$ is defined to be the solution to
$$\nab_4\omegab^{\dagger}=\frac 12 \sigmac$$
with zero initial data, i.e., $\omegab^{\dagger}=0$ on $\Hb_0$ and $\kappab$ is defined by
$$\kappab:=-\nab\omegab+^*\nab\omegab^{\dagger}-\frac 12\betab.$$
By the definition of $\omegab^{\dagger}$, it is easy to see that using Propositions \ref{transport} and \ref{Ricci},
$$\sum_{i\leq 2}||\nab^i\omegab^{\dagger}||_{L^2_uL^\infty_{\ub}L^2(S)}\leq C(\mathcal O_0,\mathcal R_0).$$
In other words, $\omegab^{\dagger}$ satisfies the same bounds as $\psi_{\Hb}$. In the remainder of the proof of this Proposition, we therefore also use $\psi_{\Hb}$ to denote $\omegab^{\dagger}$.

Consider the following equation for $\kappab$:
$$\nabla_4 \kappab=\psi(\rhoc+\sigmac)+\sum_{i_1+i_2+i_3=1}\psi^{i_1}\nabla^{i_2}(\psi+\psi_H)\nabla^{i_3}(\psi+\psi_{\Hb}).$$
After commuting with angular derivatives, we get
\begin{equation*}
\begin{split}
\nabla_{4}\nabla^2\kappab=&\sum_{i_1+i_2+i_3+i_4=2}\nabla^{i_1}\psi^{i_2}\nabla^{i_3}\psi\nabla^{i_4}(\rhoc+\sigmac)
+\sum_{i_1+i_2+i_3+i_4=3}\nabla^{i_1}\psi^{i_2}\nabla^{i_3}(\psi+\psi_H)\nabla^{i_4}(\psi+\psi_{\Hb}).\\
\end{split}
\end{equation*}
We estimate $\kappab$ in $L^2_uL^\infty_{\ub}L^2(S)$. By Proposition \ref{transport}, for each $u$, to bound $\nab^2\kappab$ in $L^\infty_{\ub}L^2(S)$, we need to estimate the right hand side in $L^1_{\ub}L^2(S)$. After taking the $L^2$ norm in $u$, we thus need to control the right hand side in $L^2_uL^1_{\ub}L^2(S)$. The term involving curvature has already been estimated in Proposition \ref{mu3} and can be controlled by
$$||\sum_{i_1+i_2+i_3+i_4=2}\nabla^{i_1}\psi^{i_2}\nabla^{i_3}\psi\nabla^{i_4}(\rhoc+\sigmac)||_{L^1_{\ub}L^2(S)}\leq C(\mathcal O_0,\mathcal R_0)\epsilon^{\frac{1}{2}}\mathcal R.$$
Thus,
$$||\sum_{i_1+i_2+i_3+i_4=2}\nabla^{i_1}\psi^{i_2}\nabla^{i_3}\psi\nabla^{i_4}(\rhoc+\sigmac)||_{L^2_u L^1_{\ub}L^2(S)}\leq C(\mathcal O_0,\mathcal R_0)\epsilon^{\frac{1}{2}}\mathcal R.$$
For the other terms, it suffices to consider 
$$\sum_{i_1+i_2+i_3+i_4=3}\nabla^{i_1}\psi^{i_2}\nabla^{i_3}\psi_H\nabla^{i_4}\psi_{\Hb}$$
since $\psi$ satisfies stronger estimates that either $\psi_H$ or $\psi_{\Hb}$. To this end, we have
\begin{equation*}
\begin{split}
&||\sum_{i_1+i_2+i_3+i_4 \leq 3}\nabla^{i_1}\psi^{i_2}\nabla^{i_3}\psi_H\nabla^{i_4}\psi_{\Hb}||_{L^2_{u}L^1_{\ub}L^2(S)} \\
\leq&C\epsilon^{\frac 12}||\sum_{i_1+i_2+i_3+i_4 \leq 3}\nabla^{i_1}\psi^{i_2}\nabla^{i_3}\psi_H\nabla^{i_4}\psi_{\Hb}||_{L^2_{u}L^2_{\ub}L^2(S)} \\
\leq& C\epsilon^{\frac 12}(\sum_{i_1\leq 2}\sum_{i_2\leq 3}||\nab^{i_1}\psi||^{i_2}_{L^\infty_{u}L^\infty_{\ub}L^2(S)})(\sum_{i_3\leq 3}||\nabla^{i_3}\psi_H||_{L^\infty_uL^2_{\ub}L^2(S)})(\sum_{i_4\leq 1}||\nabla^{i_4}\psi_{\Hb}||_{L^2_{u}L^\infty_{\ub}L^4(S)})\\
&+ C\epsilon^{\frac 12}(\sum_{i_1\leq 2}\sum_{i_2\leq 3}||\nab^{i_1}\psi||^{i_2}_{L^\infty_{u}L^\infty_{\ub}L^2(S)})(\sum_{i_3\leq 1}||\nabla^{i_3}\psi_H||_{L^2_{\ub}L^\infty_uL^4(S)})(\sum_{i_4\leq 3}||\nabla^{i_4}\psi_{\Hb}||_{L^\infty_{\ub}L^2_{u}L^2(S)})\\
\leq &C(\mathcal O_0,\mathcal R_0)\epsilon^{\frac 12}(1+\Delta_3)
\end{split}
\end{equation*}
by Propositions \ref{Ricci}.
Therefore, by Proposition \ref{transport},
\begin{equation*}
\begin{split}
||\nabla^2\kappab||_{L^2_uL^\infty_{\ub} L^2(S)}\leq &C(\mathcal O_0,\mathcal R_0)(1+\epsilon^{\frac 12}(\mathcal R+\Delta_3)).
\end{split}
\end{equation*}
By choosing $\epsilon$ sufficiently small depending on $\mathcal R$ and $\Delta_3$, we have
$$||\nabla^2\kappab||_{L^2_uL^\infty_{\ub} L^2(S)}\leq C(\mathcal O_0,\mathcal R_0).$$
Consider the div-curl system
$$\div\nabla\omegab=-\div\kappab-\frac 12\div\betab,$$
$$\curl\nabla\omegab=0,$$
$$\div\nabla\omegab^{\dagger}=-\curl\kappab-\frac 12\curl\betab,$$
$$\curl\nabla\omegab^{\dagger}=0.$$
By elliptic estimates given by Propositions \ref{ellipticthm} and \ref{Kest}, we have
$$||\nabla^3(\omegab,\omegab^{\dagger})||_{L^\infty_{\ub}L^2_uL^2(S)}\leq C(\mathcal O_0,\mathcal R_0)(1+\mathcal R[\betab]).$$
\end{proof}
In the remainder of this subsection, we consider the third derivatives of the Ricci coefficients that are estimated by integrating in the $u$ direction. We need to use the fact that the estimates derived in Propositions \ref{trch3}, \ref{mu3}, \ref{omegab3} are independent of $\Delta_3$. We now estimate $\nabla^3\trchb$ and $\nabla^3\chibh$:
\begin{proposition}\label{trchb3}
Assume 
$$\mathcal R<\infty.$$
Then there exists $\epsilon_0=\epsilon_0(\mathcal O_0,\mathcal R_0,\mathcal R,\Delta_3)$ such that whenever $\epsilon\leq \epsilon_0$,
$$\tilde{\mathcal O}_{3,2}[\trchb,\chibh]\leq C(\mathcal O_0,\mathcal R_0)(1+\mathcal R[\betab]).$$
\end{proposition}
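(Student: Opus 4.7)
The proof will mirror that of Proposition \ref{trch3} with the roles of $u$ and $\ub$ (equivalently, $H$ and $\Hb$) interchanged. The starting point is the schematic $\nab_3$ equation
$$\nab_3\trchb = \chibh\,\chibh + \trchb(\psi+\psi_{\Hb}),$$
which after commuting three times with angular derivatives gives
$$\nab_3\nab^3\trchb = \!\!\sum_{i_1+i_2+i_3+i_4=3}\!\!\nab^{i_1}\psi^{i_2}\nab^{i_3}\trchb\,\nab^{i_4}(\psi+\psi_{\Hb}) + \!\!\sum_{i_1+i_2+i_3+i_4=3}\!\!\nab^{i_1}\psi^{i_2}\nab^{i_3}\chibh\,\nab^{i_4}\chibh.$$
To close the loop between $\nab^3\trchb$ and $\nab^3\chibh$, I will couple this transport equation with the Codazzi equation
$$\div\chibh = \tfrac 12\nab\trchb + \betab + \psi(\trchb+\chibh),$$
from which elliptic estimates (Proposition \ref{elliptictraceless}, using the bound on $K$ from Proposition \ref{Kest}) give
$$||\nab^3\chibh||_{L^2(S)} \leq C(\mathcal O_0,\mathcal R_0)\Bigl(\sum_{i\leq 3}||\nab^i\trchb||_{L^2(S)} + \sum_{i\leq 2}||\nab^i\betab||_{L^2(S)} + \text{l.o.t.}\Bigr).$$
This is the origin of the $\mathcal R[\betab]$ on the right hand side of the statement (in direct analogy with the $\mathcal R[\beta]$ contribution in Proposition \ref{trch3}).

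Applying Proposition \ref{transport} integrated in $u$ and estimating the right hand side in $L^\infty_{\ub}L^1_uL^2(S)$, the critical observation is that integration in $u$ does \emph{not} produce a smallness factor $\epsilon^{1/2}$, so every ``bad'' term on the right must either be already-controlled (bounded by $\mathcal O_0,\mathcal R_0,\mathcal R[\betab]$ plus a term of size $\epsilon^{1/2}\Delta_3$) or linear in $\nab^3\trchb$ in a way amenable to Gronwall. For the $\trchb\,(\psi+\psi_{\Hb})$ terms, the top-order contribution $(\psi+\psi_{\Hb})\,\nab^3\trchb$ is handled by Gronwall since $||\psi+\psi_{\Hb}||_{L^1_u L^\infty(S)}\leq C(\mathcal O_0,\mathcal R_0)$ by Propositions \ref{LinftyRicci1} and \ref{L4Ricciimproved} (using $\psi_{\Hb}\in L^2_uL^\infty_{\ub}L^\infty(S)$ and Cauchy–Schwarz). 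All lower-order pieces where at most two derivatives fall on $\trchb$ can be bounded, after taking $\epsilon$ small depending on $\Delta_3$, by $C(\mathcal O_0,\mathcal R_0)(1+\epsilon^{1/2}\Delta_3)$ using Propositions \ref{Ricci} and the already-proved bounds \ref{trch3}, \ref{mu3}, \ref{omegab3}.

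The $\chibh\,\chibh$ term is the delicate one. Writing it schematically, its highest-order piece is $\nab^{\leq 2}\chibh\,\nab^3\chibh$, and since $\chibh$ can only be controlled in $L^2_u$ there is no smallness to exploit by integrating in $u$. Instead, following the treatment in Proposition \ref{trch3}, I substitute the elliptic bound for $||\nab^3\chibh||_{L^2(S)}$ above, producing
$$||\chibh\,\nab^3\chibh||_{L^1_uL^2(S)} \leq C(\mathcal O_0,\mathcal R_0)\bigl(1+\mathcal R[\betab]\bigr) + C(\mathcal O_0,\mathcal R_0)\!\int_0^u\!\!\Bigl(\sum_{i\leq 2}||\nab^i\chibh||_{L^2(S_{u',\ub})}\Bigr)||\nab^3\trchb||_{L^2(S_{u',\ub})}\,du',$$
where the integral in $u'$ has an $L^1_u$ weight $\sum_{i\leq 2}||\nab^i\chibh||_{L^2(S)}$ whose $L^1_u$ norm is finite and controlled by $\mathcal O_0,\mathcal R_0$ via Cauchy–Schwarz and Proposition \ref{Ricci}. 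Combining these estimates yields an inequality of the form
$$||\nab^3\trchb||_{L^2(S_{u,\ub})} \leq C(\mathcal O_0,\mathcal R_0)\bigl(1+\epsilon^{1/2}\Delta_3+\mathcal R[\betab]\bigr) + C(\mathcal O_0,\mathcal R_0)\int_0^u f(u')\,||\nab^3\trchb||_{L^2(S_{u',\ub})}\,du',$$
with $f\in L^1_u$ uniformly bounded in $\ub$. Gronwall's inequality then yields the pointwise bound on $||\nab^3\trchb||_{L^\infty_uL^\infty_{\ub}L^2(S)}$, and choosing $\epsilon$ small enough (depending on $\Delta_3$) absorbs the $\epsilon^{1/2}\Delta_3$ term. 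Feeding this back into the elliptic estimate for $\nab^3\chibh$ gives the $L^\infty_{\ub}L^2_uL^2(S)$ bound.

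The principal obstacle is precisely the absence of a smallness constant from $u$-integration, which forces the right hand side to be organized so that every non-absorbable top-order term is \emph{linear} in the unknown $\nab^3\trchb$ (or in $\nab^3\chibh$, which is tied to $\nab^3\trchb$ by the elliptic system). This is the same structural phenomenon exploited in \cite{L} and already encountered in Section \ref{secRicciL4} for the lower-order estimates on $\etab$, $\chih$, $\omega$: once the linearity is identified, Gronwall closes the estimate, at the cost of a constant depending on the long $u$-interval $I$ through the Gronwall exponential, which is absorbed into $C(\mathcal O_0,\mathcal R_0)$.
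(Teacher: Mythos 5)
Your overall strategy is correct and matches the paper's: couple the $\nab_3$ transport equation for $\trchb$ with the Codazzi-based elliptic estimate for $\nab^3\chibh$, keep the top-order $\nab^3\trchb$ contribution linear and close with Gronwall, and feed the result back through the elliptic estimate. Two imprecisions are worth flagging, though. First, your schematic source $\trchb(\psi+\psi_{\Hb})$ is too general. The true equation is $\nab_3\trchb = -\tfrac 12(\trchb)^2 - 2\omegab\trchb - |\chibh|^2$, so the second factor is only $\trchb$ or $\omegab$, never $\eta$ or $\etab$. This matters: $\tilde{\mathcal O}_{3,2}[\eta,\etab]$ (Propositions \ref{mu3}, \ref{mub3}) depends on the full $\mathcal R$, whereas $\tilde{\mathcal O}_{3,2}[\omegab]$ (Proposition \ref{omegab3}) gives only $\mathcal R[\betab]$ and $\nab^3\chibh$ is controlled via the Codazzi elliptic estimate, which also produces only $\mathcal R[\betab]$. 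Allowing $\nab^3\eta$ as a source would spoil the claimed $\mathcal R[\betab]$ dependence. The paper takes care to write $\trchb(\trchb+\psi_{\Hb})$ and then splits $\psi_{\Hb}$ into $\omegab$ (handled by Proposition \ref{omegab3}) and $\chibh$ (handled by the elliptic estimate), which is exactly how the Gronwall-plus-$\mathcal R[\betab]$ structure emerges. Second, your claim that the lower-order pieces are bounded by $C(\mathcal O_0,\mathcal R_0)(1+\epsilon^{1/2}\Delta_3)$ is off: integration is in $u$ here, not $\ub$, so there is no $\epsilon^{1/2}$ gain; the correct bound for those pieces is $C(\mathcal O_0,\mathcal R_0)(1+\mathcal R[\betab])$ plus Gronwall-type integrals of $\|\nab^3\trchb\|_{L^2(S)}$. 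This does not break your argument (the $\epsilon^{1/2}\Delta_3$ is just extraneous and your final Gronwall inequality is the right one), but the mechanism producing the $\mathcal R[\betab]$ is not purely the $\chibh\chibh$ elliptic substitution as you state; Proposition \ref{omegab3} also contributes.
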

\begin{proof}
Consider the following equation:
$$\nabla_3 \trchb=\chibh\chibh+\trchb(\trchb+\psi_{\Hb}).$$
After commuting with angular derivatives, we have
$$\nabla_{3}\nabla^3\trchb=\sum_{i_1+i_2+i_3+i_4=3}\nabla^{i_1}\psi^{i_2}\nabla^{i_3}\trchb\nabla^{i_4}(\trchb+\psi_{\Hb})+\sum_{i_1+i_2+i_3+i_4=3}\nabla^{i_1}\psi^{i_2}\nabla^{i_3}\chibh\nabla^{i_4}\chibh.$$
Fix $\ub$. We estimate term by term. First, by Proposition \ref{Ricci},
\begin{equation*}
\begin{split}
&||\sum_{i_1+i_2+i_3+i_4\leq 3}\nabla^{i_1}\psi^{i_2}\nabla^{i_3}\trchb\nabla^{i_4}\trchb||_{L^1_{u}L^2(S)} \\
\leq& C(\sum_{i_1\leq 2}\sum_{i_2\leq 3}||\nab^{i_1}\psi||^{i_2}_{L^\infty_{\ub}L^\infty_{u}L^2(S)})\int_0^u(\sum_{i_3\leq 1}||\nabla^{i_3}\trchb||_{L^4(S_{u',\ub})})(\sum_{i_4\leq 3}||\nabla^{i_4}\trchb||_{L^2(S_{u',\ub})})du'\\
\leq &C(\mathcal O_0,\mathcal R_0)(1+\int_0^u(\sum_{i\leq 1}||\nabla^{i}\trchb||_{L^4(S_{u',\ub})})||\nabla^{3}\trchb||_{L^2(S_{u',\ub})}du').
\end{split}
\end{equation*}
Then we bound the terms with one $\psi_{\Hb}$. We separate the cases where $\psi_{\Hb}=\omegab$ and $\psi_{\Hb}=\chibh$. First, for $\psi_{\Hb}=\omegab$:
\begin{equation*}
\begin{split}
&||\sum_{i_1+i_2+i_3+i_4\leq 3}\nabla^{i_1}\psi^{i_2}\nabla^{i_3}\trchb\nabla^{i_4}\omegab||_{L^1_{u}L^2(S)} \\
\leq& C(\sum_{i_1\leq 2}\sum_{i_2\leq 3}||\nab^{i_1}\psi||^{i_2}_{L^\infty_{\ub}L^\infty_{u}L^2(S)})(\sum_{i_3\leq 1}||\nabla^{i_3}\trchb||_{L^\infty_{\ub}L^\infty_{u}L^4(S)})(\sum_{i_4\leq 3}||\nabla^{i_4}\omegab||_{L^\infty_{\ub}L^2_uL^2(S)})\\
&+ C(\sum_{i_1\leq 2}\sum_{i_2\leq 3}||\nab^{i_1}\psi||^{i_2}_{L^\infty_{\ub}L^\infty_{u}L^2(S)})\int_0^u(\sum_{i_3\leq 1}||\nabla^{i_3}\omegab||_{L^4(S_{u',\ub})})(\sum_{i_4\leq 3}||\nabla^{i_4}\trchb||_{L^2(S_{u',\ub})})du'\\
\leq &C(\mathcal O_0,\mathcal R_0)(1+\mathcal R[\betab]+\int_0^u(\sum_{i\leq 1}||\nabla^{i}\omegab||_{L^4(S_{u',\ub})})||\nabla^{3}\trchb||_{L^2(S_{u',\ub})}du'),
\end{split}
\end{equation*}
where we have used Propositions \ref{Ricci} and \ref{omegab3}. Then, we consider the term with one $\psi_{\Hb}$, where $\psi_{\Hb}=\chibh$:
\begin{equation}\label{trchbchibh}
\begin{split}
&||\sum_{i_1+i_2+i_3+i_4\leq 3}\nabla^{i_1}\psi^{i_2}\nabla^{i_3}\trchb\nabla^{i_4}\chibh||_{L^1_{u}L^2(S)} \\
\leq& C(\sum_{i_1\leq 2}\sum_{i_2\leq 3}||\nab^{i_1}\psi||^{i_2}_{L^\infty_{\ub}L^\infty_{u}L^2(S)})\int_0^u(\sum_{i_3\leq 1}||\nabla^{i_3}\chibh||_{L^4(S_{u',\ub})})(\sum_{i_4\leq 3}||\nabla^{i_4}\trchb||_{L^2(S_{u',\ub})})du'\\
&+ C(\sum_{i_1\leq 2}\sum_{i_2\leq 3}||\nab^{i_1}\psi||^{i_2}_{L^\infty_{\ub}L^\infty_{u}L^2(S)})\int_0^u(\sum_{i_3\leq 1}||\nabla^{i_3}\trchb||_{L^4(S_{u',\ub})})(\sum_{i_4\leq 3}||\nabla^{i_4}\chibh||_{L^2(S_{u',\ub})})du'\\
\leq& C(\mathcal O_0,\mathcal R_0)(1+\int_0^u(\sum_{i\leq 1}||\nabla^{i}\chibh||_{L^4(S_{u',\ub})})(||\nabla^3\trchb||_{L^2(S_{u',\ub})})du')\\
&+ C(\mathcal O_0,\mathcal R_0)\int_0^u(\sum_{i\leq 1}||\nabla^{i}\trchb||_{L^4(S_{u',\ub})})(||\nabla^3\chibh||_{L^2(S_{u',\ub})})du',\\
\end{split}
\end{equation}
using Proposition \ref{Ricci}.
In order to control this, we need to use the Codazzi equation
$$\div\chibh=\frac 12\nabla\trchb+\betab+\psi(\psi+\psi_{\Hb})$$
and apply elliptic estimates using Propositions \ref{elliptictraceless} and \ref{Kest} to get
\begin{equation}\label{chibhelliptic}
\begin{split}
&||\nabla^3\chibh||_{L^2(S)}\\
\leq& C(\mathcal O_0,\mathcal R_0)(\sum_{i\leq 3}||\nabla^i\trchb||_{L^2(S)}+\sum_{i\leq 2}||\nabla^i\betab||_{L^2(S)}\\
&\qquad\qquad\quad+\sum_{i_1+i_2\leq 2}||\nabla^{i_1}\psi\nabla^{i_2}(\psi+\psi_{\Hb})||_{L^2(S)} +||\chibh||_{L^2(S)}).
\end{split}
\end{equation}
Using (\ref{chibhelliptic}), we can bound the second term in (\ref{trchbchibh}):
\begin{equation*}
\begin{split}
&\int_0^u(\sum_{i\leq 1}||\nabla^{i}\trchb||_{L^4(S_{u',\ub})})(||\nabla^3\chibh||_{L^2(S_{u',\ub})})du'\\
\leq &C\int_0^u(\sum_{i\leq 1}||\nabla^{i}\trchb||_{L^4(S_{u',\ub})})(||\nabla^3\trchb||_{L^2(S_{u',\ub})})du'\\
&+C(\sum_{i_1\leq 1}||\nabla^{i_1}\trchb||_{L^\infty_uL^4(S)})(\sum_{i_2\leq 2}||\nabla^{i_2}\betab||_{L^2_uL^2(S)})\\
&+C(\sum_{i_1\leq 1}||\nabla^{i_1}\trchb||_{L^\infty_uL^4(S)})(\sum_{i_2+i_3\leq 2}||\nabla^{i_2}\psi\nabla^{i_3}(\psi+\psi_H)||_{L^2_uL^2(S)}+||\chibh||_{L^2_uL^2(S)})\\
\leq &C(\mathcal O_0,\mathcal R_0)(1+\mathcal R[\betab]+\int_0^u(\sum_{i_1\leq 1}||\nabla^{i_1}\trchb||_{L^4(S_{u',\ub})})(||\nabla^3\trchb||_{L^2(S_{u',\ub})})du').
\end{split}
\end{equation*}
This, together with (\ref{trchbchibh}), implies that
\begin{equation*}
\begin{split}
&||\sum_{i_1+i_2+i_3+i_4\leq 3}\nabla^{i_1}\psi^{i_2}\nabla^{i_3}\trchb\nabla^{i_4}\chibh||_{L^1_{u}L^2(S)} \\
\leq &C(\mathcal O_0,\mathcal R_0)(1+\mathcal R[\betab]+\int_0^u(\sum_{i\leq 1}||\nabla^{i}(\trchb,\chibh)||_{L^4(S_{u',\ub})})(||\nabla^3\trchb||_{L^2(S_{u',\ub})})du').
\end{split}
\end{equation*}
Finally, we estimate the term with two $\psi_{\Hb}$'s. We note that the only such term is of the form $\chibh\chibh$. We control this term using (\ref{chibhelliptic}):
\begin{equation*}
\begin{split}
&||\sum_{i_1+i_2+i_3+i_4\leq 3}\nabla^{i_1}\psi^{i_2}\nabla^{i_3}\chibh\nabla^{i_4}\chibh||_{L^1_{u}L^2(S)} \\
\leq& C(\sum_{i_1\leq 2}\sum_{i_2\leq 3}||\nab^{i_1}\psi||^{i_2}_{L^\infty_{\ub}L^\infty_{u}L^2(S)})\int_0^u(\sum_{i_3\leq 1}||\nabla^{i_3}\chibh||_{L^4(S_{u',\ub})})(\sum_{i_4\leq 3}||\nabla^{i_4}\chibh||_{L^2(S_{u',\ub})})du'\\
\leq &C(\mathcal O_0,\mathcal R_0)+C\int_0^u(\sum_{i\leq 1}||\nabla^{i}\chibh||_{L^4(S_{u',\ub})})(||\nabla^3\trchb||_{L^2(S_{u',\ub})})du'\\
&+C(\sum_{i_1\leq 1}||\nabla^{i_1}\chibh||_{L^2_uL^4(S)})(\sum_{i_2\leq 2}||\nabla^{i_2}\betab||_{L^2_uL^2(S)})\\
&+C(\sum_{i_1\leq 1}||\nabla^{i_1}\chibh||_{L^2_uL^4(S)})(\sum_{i_2+i_3\leq 2}||\nabla^{i_2}\psi\nabla^{i_3}(\psi+\psi_H)||_{L^2_uL^2(S)}+||\chibh||_{L^2_uL^2(S)})\\
\leq &C(\mathcal O_0,\mathcal R_0)(1+\mathcal R[\betab]+\int_0^u(\sum_{i\leq 1}||\nabla^{i}\chibh||_{L^4(S_{u',\ub})})(||\nabla^3\trchb||_{L^2(S_{u',\ub})})du').
\end{split}
\end{equation*}
Therefore, by Proposition \ref{transport}, we have
\begin{equation*}
\begin{split}
&||\nabla^3\trchb||_{L^\infty_{u}L^2(S)} \\
\leq &C(\mathcal O_0,\mathcal R_0)(1+\mathcal R[\betab]+\int_0^u(\sum_{i\leq 1}||\nabla^{i}(\psi,\psi_{\Hb})||_{L^4(S_{u',\ub})})(||\nabla^3\trchb||_{L^2(S_{u',\ub})})du').
\end{split}
\end{equation*}
By Gronwall's inequality, we have
\begin{equation*}
\begin{split}
&||\nabla^3\trchb||_{L^\infty_{u}L^2(S)} \\
\leq &C(\mathcal O_0,\mathcal R_0)(1+\mathcal R[\betab])\exp(\int_0^u(\sum_{i\leq 1}||\nabla^{i}(\psi,\psi_{\Hb})||_{L^4(S_{u',\ub})})du')\\
\leq &C(\mathcal O_0,\mathcal R_0)(1+\mathcal R[\betab])\exp(C(\mathcal O_0,\mathcal R_0))\\
\leq &C(\mathcal O_0,\mathcal R_0)(1+\mathcal R[\betab]).
\end{split}
\end{equation*}
By (\ref{chibhelliptic}), this implies
$$||\nab^3\chibh||_{L^\infty_{\ub}L^2_uL^2(S)}\leq C(\mathcal O_0,\mathcal R_0)(1+\mathcal R[\betab]).$$
\end{proof}
We now control $\nabla^3\etab$.
\begin{proposition}\label{mub3}
Assume 
$$\mathcal R<\infty.$$
Then there exists $\epsilon_0=\epsilon_0(\mathcal O_0,\mathcal R_0,\mathcal R,\Delta_3)$ such that whenever $\epsilon\leq \epsilon_0$,
$$\tilde{\mathcal O}_{3,2}[\mub,\etab]\leq C(\mathcal O_0)(1+\mathcal R).$$
\end{proposition}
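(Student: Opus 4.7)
The plan is to mimic the strategy of Proposition \ref{mu3} but with the roles of $u$ and $\ub$ swapped, while being more careful because integrating in the $u$ direction does not provide a smallness factor $\epsilon^{1/2}$. The starting point is the $\nab_3$ analogue of the $\nab_4\mu$ equation. By the same cancellation that was used to derive the equation for $\nab_4\mu$ (namely that $\nab_4\rhoc$ contains $\div\beta$, which matches the divergence of the source in $\nab_4\eta$), the $\nab_3\mub$ equation will contain no $\nab\betab$ and, by the signature-swapped structure of $\nab_3\rhoc$ noted in Section~\ref{secsche}, no $\psi_H$ at all:
\begin{equation*}
\nab_3\mub = \psi(\rhoc,\sigmac,\betab) + \psi\nab(\psi+\psi_{\Hb}) + \psi_{\Hb}\nab\psi + \psi\psi(\psi+\psi_{\Hb}) + \psi\chibh\chibh.
\end{equation*}
I would commute with $\nab^2$ and, whenever $\betab$ appears undifferentiated, replace it via Codazzi $\betab=\nab\chibh+\psi(\trchb+\chibh)$.

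Next, fixing $\ub$, I would estimate the resulting right-hand side in $L^1_u L^2(S)$, using the already proved bounds from Propositions \ref{Ricci}, \ref{trchb3} and \ref{omegab3}. The curvature terms $\nab^{i_1}\psi^{i_2}\nab^{i_3}(\rhoc,\sigmac,\betab)$ with $i_3\leq 2$ are controlled directly by $\mathcal R$ after Cauchy--Schwarz in $u$ (using $L^2_u$ on $\Hb_{\ub}$), and similarly the term $\psi\nab^2\betab$ uses the $\nab^2\betab$ contribution in $\mathcal R$. The $\nab^3\chibh$ terms arising from commutation (through the Codazzi substitution) are converted, via the elliptic estimate \eqref{chibhelliptic}, into $\nab^3\trchb$ (bounded by $C(\mathcal O_0,\mathcal R_0)(1+\mathcal R[\betab])$ through Proposition \ref{trchb3}), $\nab^2\betab$ (in $\mathcal R$), and harmless lower-order pieces. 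The $\psi\chibh\chibh$ term, after commutation, reduces to the same two categories.

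The essential obstacle is the family of terms of the form $\psi_{\Hb}\nab^3\etab$, because neither factor carries smallness: $\psi_{\Hb}$ is only controlled in $L^2_u$, and $\nab^3\etab$ is precisely what we are trying to bound. To circumvent this, I would apply the elliptic estimate from Proposition \ref{ellipticthm} to the Hodge system $\div\etab=-\mub-\rhoc$, $\curl\etab=-\sigmac$ (whose coefficients are tamed by Proposition \ref{Kest}) to get pointwise in $u,\ub$:
\begin{equation*}
\|\nab^3\etab\|_{L^2(S)}\leq C(\mathcal O_0,\mathcal R_0)\Big(\sum_{i\leq 2}\|\nab^i\mub\|_{L^2(S)}+\sum_{i\leq 2}\|\nab^i(\rhoc,\sigmac)\|_{L^2(S)}+\|\etab\|_{L^2(S)}\Big).
\end{equation*}
Thus every offending $\psi_{\Hb}\nab^3\etab$ contribution becomes, up to terms absorbed into $C(\mathcal O_0,\mathcal R_0)(1+\mathcal R)$, the linear expression $\int_0^u\|\psi_{\Hb}\|_{L^\infty(S_{u',\ub})}\|\nab^2\mub\|_{L^2(S_{u',\ub})}\,du'$.

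Plugging everything into Proposition \ref{transport} yields, for each $\ub$,
\begin{equation*}
\|\nab^2\mub\|_{L^\infty_u L^2(S)}\leq C(\mathcal O_0,\mathcal R_0)(1+\mathcal R)+C\int_0^u\|\psi_{\Hb}\|_{L^\infty(S_{u',\ub})}\|\nab^2\mub\|_{L^2(S_{u',\ub})}\,du'.
\end{equation*}
Since $\|\psi_{\Hb}\|_{L^1_u L^\infty(S)}\leq C(\mathcal O_0,\mathcal R_0)$ by the estimates of Section \ref{secRicciL4}, Gronwall's inequality closes the bound $\|\nab^2\mub\|_{L^\infty_u L^\infty_{\ub}L^2(S)}\leq C(\mathcal O_0,\mathcal R_0)(1+\mathcal R)$. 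Feeding this back into the elliptic estimate for $\nab^3\etab$ and using that the right-hand side there is independent of whether one takes $L^2_u$ or $L^2_{\ub}$ afterward delivers both $\|\nab^3\etab\|_{L^\infty_u L^2_{\ub}L^2(S)}$ and $\|\nab^3\etab\|_{L^\infty_{\ub}L^2_u L^2(S)}$ bounded by $C(\mathcal O_0,\mathcal R_0)(1+\mathcal R)$, as claimed. The main difficulty throughout is the absence of smallness when integrating in $u$; it is resolved by arranging every potentially dangerous term to be linear in the unknown $\nab^2\mub$ (after elliptic substitution) and applying Gronwall.
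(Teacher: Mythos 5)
Your proposal follows the paper's argument essentially line by line: same $\nab_3\mub$ renormalization exploiting the structural absence of $\psi_H$ (so the long $u$-integral stays safe), same Codazzi substitution to trade $\nab\betab$ for $\nab^2\chibh$, same elliptic estimate on the Hodge system $\div\etab=-\mub-\rhoc$, $\curl\etab=-\sigmac$ to convert the problematic $\psi_{\Hb}\nab^3\etab$ term into something linear in $\nab^2\mub$, and the same Gronwall closure followed by back-substitution into the elliptic estimate. The only cosmetic difference is that you invoke the $\nab^3\chibh$ elliptic estimate explicitly where the paper cites Proposition~\ref{trchb3} directly, which already packages that step; both routes are equivalent.
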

\begin{proof}
Recall that
$$\mub=-\div\etab-\rhoc.$$
We have the following equation:
$$\nabla_3 \mub=\psi(\rhoc,\sigmac,\betab)+\psi\nabla(\psi+\psi_{\Hb})+\psi_{\Hb}\nabla\psi+\psi\psi(\psi+\psi_{\Hb})+\psi\chibh\chibh.$$
After commuting with angular derivatives, we get
\begin{equation*}
\begin{split}
\nabla_{3}\nabla^2\mub=&\sum_{i_1+i_2+i_3+i_4=2}\nabla^{i_1}\psi^{i_2}\nabla^{i_3}\psi\nabla^{i_4}(\rhoc,\sigmac)+\sum_{i_1+i_2+i_3+i_4=3}\nabla^{i_1}\psi^{i_2}\nabla^{i_3}\psi\nabla^{i_4}(\psi+\psi_{\Hb})\\
&+\sum_{i_1+i_2+i_3+i_4=3}\nabla^{i_1}\psi^{i_2}\nabla^{i_3}\chibh\nabla^{i_4}\chibh.
\end{split}
\end{equation*}
We estimate every term in the above expression. First, we bound the term with curvature:
\begin{equation*}
\begin{split}
&||\sum_{i_1+i_2+i_3+i_4= 2}\nabla^{i_1}\psi^{i_2}\nabla^{i_3}\psi\nabla^{i_4}(\rhoc,\sigmac)||_{L^\infty_{\ub}L^1_{u}L^2(S)} \\
\leq& C(\sum_{i_1\leq 2}||\psi||^{i_1}_{L^\infty_{\ub}L^\infty_{u}L^\infty(S)})(\sum_{i_2\leq 2}||\nabla^{i_2}(\rhoc,\sigmac)||_{L^\infty_{\ub}L^1_{u}L^2(S)}) \\
&+C(\sum_{i_1\leq 2}||\psi||^{i_1}_{L^\infty_{\ub}L^\infty_{u}L^\infty(S)})(\sum_{i_2\leq 1}||\nabla^{i_2}\psi||_{L^\infty_{\ub}L^\infty_{u}L^4(S)})(\sum_{i_3\leq 1}||\nabla^{i_3}(\rhoc,\sigmac)||_{L^\infty_{\ub}L^1_{u}L^4(S)})\\
\leq &C(\mathcal O_0,\mathcal R_0)\sum_{i\leq 2}||\nabla^{i}(\rhoc,\sigmac)||_{L^\infty_{\ub}L^2_{u}L^2(S)} \\
\leq &C(\mathcal O_0,\mathcal R_0)\mathcal R.
\end{split}
\end{equation*}
We now move on to the terms with the Ricci coefficients. Notice that by Propositions \ref{trch3}, \ref{mu3} \ref{trchb3}, all the terms of the form $\nab^3\psi$ except $\nab^3\etab$ have been estimated. Thus, by Propositions \ref{Ricci}, \ref{trch3}, \ref{mu3} and \ref{trchb3},
\begin{equation*}
\begin{split}
&||\sum_{i_1+i_2+i_3+i_4=3}\nabla^{i_1}\psi^{i_2}\nabla^{i_3}(\psi+\psi_{\Hb})\nabla^{i_4}(\psi+\psi_{\Hb})||_{L^1_{u}L^2(S)} \\
\leq& C(\sum_{i_1\leq 2}\sum_{i_2\leq 3}||\nab^{i_1}\psi||^{i_2}_{L^\infty_{\ub}L^\infty_{u}L^2(S)})(\sum_{i_3\leq 2}||\nabla^{i_3}(\psi,\psi_{\Hb})||_{L^\infty_{\ub}L^2_{u}L^2(S)})\\
&\qquad\times(\sum_{i_4\leq 3}||\nabla^{i_4}(\trch,\trchb,\eta,\psi_{\Hb})||_{L^\infty_{\ub}L^2_uL^2(S)})\\
&+C\int_0^u (\sum_{i\leq 1}||\nab^i(\psi,\psi_{\Hb})||_{L^4(S_{u',\ub})}||\nab^3\etab||_{L^2(S_{u',\ub})}du'\\
\leq &C(\mathcal O_0,\mathcal R_0)(1+\mathcal R)+C\int_0^u (\sum_{i\leq 1}||\nab^i(\psi,\psi_{\Hb})||_{L^4(S_{u',\ub})}||\nab^3\etab||_{L^2(S_{u',\ub})}du'.
\end{split}
\end{equation*}
We control the last term using the div-curl system
$$\div \etab=-\mub-\rhoc,\quad\curl\etab=-\sigmac.$$
Applying elliptic estimates using Propositions \ref{ellipticthm} and \ref{Kest}, we get
\begin{equation}\label{ellipticetab}
||\nabla^3\etab||_{L^2(S)}\leq C(\mathcal O_0,\mathcal R_0)(\sum_{i\leq 2}||\nabla^i\mub||_{L^2(S)}+\sum_{i\leq 2}||\nabla^i(\rhoc,\sigmac)||_{L^2(S)}+||\etab||_{L^2(S)}).
\end{equation}
Thus, we have
\begin{equation*}
\begin{split}
&\int_0^u (\sum_{i\leq 1}||\nab^i(\psi,\psi_{\Hb})||_{L^4(S_{u',\ub})}||\nab^3\etab||_{L^2(S_{u',\ub})}du'\\
\leq& C(\mathcal O_0,\mathcal R_0)(1+\mathcal R+\int_0^u(\sum_{i\leq 1}||\nab^i(\psi,\psi_{\Hb})||_{L^4(S_{u',\ub})})||\nabla^2\mub||_{L^2(S_{u',\ub})}du').
\end{split}
\end{equation*}
Hence, by Proposition \ref{transport}, we have
$$||\nabla^2\mub||_{L^\infty_{\ub}L^2(S)}\leq C(\mathcal O_0,\mathcal R_0)(1+\mathcal R+\int_0^u(\sum_{i\leq 1}||\nab^i(\psi,\psi_{\Hb})||_{L^4(S_{u',\ub})})||\nabla^2\mub||_{L^2(S_{u',\ub})}du').$$
By Gronwall's inequality, we have
$$||\nabla^2\mub||_{L^\infty_{\ub}L^2(S)}\leq C(\mathcal O_0,\mathcal R_0)(1+\mathcal R)\exp(\int_0^u \sum_{i\leq 1}||\nab^i(\psi,\psi_{\Hb})||_{L^4(S_{u',\ub})}du').$$
By Proposition \ref{Ricci}, $\displaystyle \sum_{i\leq 1}||\nab^i(\psi,\psi_{\Hb})||_{L^4(S_{u,\ub})}$ is controlled by $C(\mathcal O_0,\mathcal R_0)$.
Therefore, 
$$||\nabla^2\mub||_{L^\infty_{\ub}L^2(S)}\leq C(\mathcal O_0,\mathcal R_0)(1+\mathcal R).$$
The desired estimates for $\nabla^3\etab$ thus follow from (\ref{ellipticetab}) and taking the $L^2$ norm in either the $u$ or the $\ub$ direction.
\end{proof}
We finally prove estimates for $\nabla^3\omega$.
\begin{proposition}\label{omega3}
Assume 
$$\mathcal R<\infty.$$
Then there exists $\epsilon_0=\epsilon_0(\mathcal O_0,\mathcal R_0,\mathcal R,\Delta_3)$ such that whenever $\epsilon\leq \epsilon_0$,
$$\tilde{\mathcal O}_{3,2}[\kappa,\omega,\omega^{\dagger}]
\leq C(\mathcal O_0)(1+\mathcal R[\beta]).$$
\end{proposition}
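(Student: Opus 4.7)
The plan is to mirror the proof of Proposition \ref{omegab3}, swapping the roles of $\nab_3\leftrightarrow\nab_4$, $\beta\leftrightarrow\betab$ and $\psi_H\leftrightarrow\psi_{\Hb}$. The target bound on $\|\nab^3(\omega,\omega^\dagger)\|_{L^\infty_u L^2_{\ub}L^2(S)}$ will come from a transport estimate for $\nab^2\kappa$ in $L^\infty_u L^2_{\ub}L^2(S)$ along $\nab_3$, followed by the elliptic div-curl system that relates $\nab\omega$ and $\nab\omega^\dagger$ to $\kappa$ and $\beta$, combined with Propositions \ref{ellipticthm} and \ref{Kest}.

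First I would establish preliminary estimates on $\omega^\dagger$ directly from its defining equation $\nab_3\omega^\dagger=\tfrac 12\sigmac$. Commuting with $\nab^i$ for $i\leq 2$ and integrating in $u$, the bounds on $\nab^i\sigmac$ coming from $\mathcal R(S)\leq C(\mathcal O_0,\mathcal R_0)$ (Proposition \ref{RS}) for $i\leq 1$, together with the $L^2_{\ub}L^2(S)$ control of $\nab^2\sigmac$ by $\mathcal R$ combined with the $\epsilon^{1/2}$ from H\"older in $\ub$, give $\sum_{i\leq 2}\|\nab^i\omega^\dagger\|_{L^2_{\ub}L^\infty_u L^2(S)}\leq C(\mathcal O_0,\mathcal R_0)$ once $\epsilon$ is small; in particular $\omega^\dagger$ obeys the same mixed-norm estimates as $\psi_H$ and may be treated on equal footing with it. Next, differentiating the definition $\kappa=\nab\omega+{}^*\nab\omega^\dagger-\tfrac 12\beta$ by $\nab_3$ and substituting the $\nab_3$ Bianchi equation for $\beta$ together with the $\nab_3$ null structure equations for $\omega$ and $\omega^\dagger$, the terms $\nab\rhoc$ and ${}^*\nab\sigmac$ cancel exactly; after absorbing the residual $\chih\cdot\betab$ and $\omegab\cdot\beta$ into schematic form via the Codazzi equations for $\chibh$ and $\chih$, one arrives at
\[
\nab_3\kappa=\psi(\rhoc,\sigmac)+(\psi+\psi_H)\nab(\psi+\psi_{\Hb})+(\psi+\psi_{\Hb})\nab(\psi+\psi_H)+\psi(\psi+\psi_H)(\psi+\psi_{\Hb}),
\]
which matches term-for-term the $\nab_4\kappab$ equation used in Proposition \ref{omegab3}.

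Commuting once more with $\nab^2$ and estimating each resulting term in $L^2_{\ub}L^1_u L^2(S)$ by the bounds of Propositions \ref{Ricci}, \ref{trch3}, \ref{mu3}, \ref{omegab3}, \ref{trchb3} and \ref{mub3} produces a curvature contribution bounded by $C(\mathcal O_0,\mathcal R_0)\mathcal R$ (after H\"older in $u$ with factor $I^{1/2}$) and Ricci contributions bounded by $C(\mathcal O_0,\mathcal R_0)(1+\epsilon^{1/2}\Delta_3)$. The main obstacle is that integration in the $u$-direction over the fixed interval $[0,I]$ supplies no smallness constant, so the term $\psi_{\Hb}\nab^3\psi_H$ with $\psi_H=\omega$---which is linear in the unknown $\nab^3\omega$---cannot be absorbed by an $\epsilon^{1/2}$ factor as was done in Proposition \ref{omegab3}. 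To handle it I would convert $\nab^3\omega$ back to $\nab^2\kappa$ via the div-curl system
\[
\div\nab\omega=\div\kappa+\tfrac 12\div\beta,\ \curl\nab\omega=0,\ \div\nab\omega^\dagger=-\curl\kappa-\tfrac 12\curl\beta,\ \curl\nab\omega^\dagger=0,
\]
which, by Propositions \ref{ellipticthm} and \ref{Kest}, yields
\[
\|\nab^3\omega\|_{L^2(S)}+\|\nab^3\omega^\dagger\|_{L^2(S)}\leq C(\mathcal O_0,\mathcal R_0)\Bigl(\sum_{i\leq 2}\|\nab^i\kappa\|_{L^2(S)}+\sum_{i\leq 2}\|\nab^i\beta\|_{L^2(S)}+\text{l.o.t.}\Bigr).
\]
Plugging this back into the transport estimate makes the offending contribution linear in $\nab^2\kappa$, so Gronwall's inequality in $u$ at each fixed $\ub$ closes the loop; the Gronwall factor $\exp\bigl(\int_0^I\|\psi_{\Hb}\|_{L^\infty(S_{u',\ub})}du'\bigr)$ is bounded by $C(\mathcal O_0,\mathcal R_0)$ via Cauchy-Schwarz and Proposition \ref{Ricci}. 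Taking $L^2_{\ub}$ of the resulting bound and choosing $\epsilon$ small enough to absorb the $\epsilon^{1/2}\Delta_3$ contributions gives $\|\nab^2\kappa\|_{L^\infty_u L^2_{\ub}L^2(S)}\leq C(\mathcal O_0)(1+\mathcal R[\beta])$, and the elliptic relation then delivers the claimed bound on $\tilde{\mathcal O}_{3,2}[\kappa,\omega,\omega^\dagger]$, with the linear dependence on $\mathcal R[\beta]$ tracing back to the $\nab^2\beta$ factor in the elliptic estimate.
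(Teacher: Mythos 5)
Your strategy matches the paper's: a $\nab_3$-transport estimate for $\nab^2\kappa$, elliptic recovery of $\nab^3(\omega,\omega^\dagger)$ via the div-curl system, and Gronwall to absorb the contribution that is linear in the unknown. However, the claim that the Ricci contributions in the transport estimate are bounded by $C(\mathcal O_0,\mathcal R_0)(1+\epsilon^{1/2}\Delta_3)$ would break the argument if taken literally, and it is inconsistent with the propositions you cite. The $u$-integration is over the long interval $[0,I]$, so the term $\psi_{\Hb}\nab^3\chih$ (coming from $(\psi+\psi_{\Hb})\nab(\psi+\psi_H)$ after two commutations) gives, for fixed $\ub$, a Cauchy--Schwarz factor $\|\nab^3\chih\|_{L^2_uL^2(S)}$; after taking $L^2_{\ub}$ this becomes $\|\nab^3\chih\|_{L^2_{\ub}L^2_uL^2(S)}\leq I^{1/2}\|\nab^3\chih\|_{L^\infty_uL^2_{\ub}L^2(S)}$, with an $I^{1/2}$ factor, not $\epsilon^{1/2}$. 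Estimating this by the bootstrap $\tilde{\mathcal O}_{3,2}\leq\Delta_3$ leaves an unabsorbable $I^{1/2}\Delta_3$. The paper's proof gets around this by inserting, at exactly this point, the $\Delta_3$-independent improved bounds already established: $\tilde{\mathcal O}_{3,2}[\chih]\leq C(\mathcal O_0)(1+\mathcal R[\beta])$ from Proposition \ref{trch3} and $\mathcal O_{2,2}[\psi_H]\leq C(\mathcal O_0,\mathcal R_0)$ from Proposition \ref{Ricci}. This transport-side substitution, and not only the $\nab^2\beta$ factor in the elliptic estimate as you attribute, is where the linear dependence on $\mathcal R[\beta]$ enters. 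You do list Propositions \ref{Ricci} and \ref{trch3} among your tools, so the ingredients are present, but the summary bound you state for the Ricci terms contradicts having used them; a careful write-up needs to replace the $\epsilon^{1/2}\Delta_3$ claim by the $\Delta_3$-free bounds for $\nab^3\chih$ and $\nab^{\leq 2}\psi_H$.
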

\begin{proof}
Recall that $\omega^{\dagger}$ is defined to be the solution to 
$$\nab_3\omega^{\dagger}=\frac 12 \sigmac,$$
with zero initial data, i.e., $\omega^{\dagger}=0$ on $H_0$ and $\kappa$ is defined to be
$$\kappa:=\nab\omega+^*\nab\omega^{\dagger}-\frac 12 \beta.$$
By the definition of $\omega^{\dagger}$, it is easy to see that using Propositions \ref{transport} and \ref{Ricci},
$$\sum_{i\leq 2}||\nab^i\omega^{\dagger}||_{L^2_{\ub}L^\infty_uL^2(S)}\leq C(\mathcal O_0,\mathcal R_0).$$
In other words, $\omega^{\dagger}$ satisfies the same estimates as $\psi_H$. In the remainder of the proof of this Proposition, we therefore also use $\psi_H$ to denote $\omega^{\dagger}$.
Consider the following equations:
$$\nabla_3 \kappa=\psi(\rhoc+\sigmac)+\sum_{i_1+i_2+i_3=1}\psi^{i_1}\nabla^{i_2}(\psi+\psi_H)\nabla^{i_3}(\psi+\psi_{\Hb}).$$
Commuting with angular derivatives, we get
\begin{equation*}
\begin{split}
\nabla_{3}\nabla^2\kappa=&\sum_{i_1+i_2+i_3+i_4=2}\nabla^{i_1}\psi^{i_2}\nabla^{i_3}\psi\nabla^{i_4}(\rhoc+\sigmac)
+\sum_{i_1+i_2+i_3+i_4=3}\nabla^{i_1}\psi^{i_2}\nabla^{i_3}(\psi+\psi_H)\nabla^{i_4}(\psi+\psi_{\Hb}).
\end{split}
\end{equation*}
Fix $\ub$. The term involving curvature has already been bounded in Proposition \ref{mub3} and can be controlled by
$$||\sum_{i_1+i_2+i_3+i_4\leq 2}\nabla^{i_1}\psi^{i_2}\nabla^{i_3}\psi\nabla^{i_4}(\rhoc+\sigmac)||_{L^1_{u}L^2(S)}\leq C(\mathcal O_0,\mathcal R_0)\mathcal R.$$
For the terms with only Ricci coefficients, notice that the third derivatives of all the Ricci coefficients except $\omega$ and $\omega^{\dagger}$ have been estimated. Thus using Propositions \ref{Ricci}, \ref{trch3}, \ref{mu3}, \ref{omegab3}, \ref{trchb3} and \ref{mub3}, we have
\begin{equation*}
\begin{split}
&||\sum_{i_1+i_2+i_3+i_4 \leq 3}\nabla^{i_1}\psi^{i_2}\nabla^{i_3}(\psi,\psi_H)\nabla^{i_4}(\psi,\psi_{\Hb})||_{L^1_{u}L^2(S)} \\
\leq& C(\sum_{i_1\leq 2}\sum_{i_2\leq 3}||\nab^{i_1}\psi||^{i_2}_{L^\infty_{u}L^2(S)})(\sum_{i_3\leq 2}||\nabla^{i_3}(\psi,\psi_H)||_{L^2_{u}L^2(S)})(\sum_{i_4\leq 3}||\nabla^{i_4}(\psi,\psi_{\Hb})||_{L^2_{u}L^2(S)})\\
&+ C(||\nab^3\chih||_{L^2_{u}L^2(S)})(\sum_{i\leq 2}||\nabla^{i}(\psi,\psi_{\Hb})||_{L^2_{u}L^2(S)})\\
&+ C\int_0^u ||\nabla^3(\omega,\omega^{\dagger})||_{L^4(S_{u',\ub})}(\sum_{i\leq 1}||\nab^{i}(\psi,\psi_{\Hb})||_{L^\infty(S_{u',\ub})})du'\\
\leq &C(\mathcal O_0,\mathcal R_0)(1+\sum_{i_1\leq 2}||\nabla^{i_1}\psi_H||_{L^2_{u}L^2(S)}+||\nab^3\chih||_{L^2_{u}L^2(S)}\\
&\quad\quad\quad\quad\quad\quad+\int_0^u ||\nabla^3(\omega,\omega^{\dagger})||_{L^2(S_{u',\ub})}(\sum_{i_2\leq 1}||\nab^{i_2}(\psi,\psi_{\Hb})||_{L^4(S_{u',\ub})}du').
\end{split}
\end{equation*}
Therefore, by Proposition \ref{transport},
\begin{equation}\label{kappa.est}
\begin{split}
&||\nabla^2\kappa||_{L^\infty_u L^2(S)}\\
\leq &C(\mathcal O_0,\mathcal R_0)(1+\mathcal R+\sum_{i_1\leq 2}||\nabla^{i_1}\psi_H||_{L^2_{u}L^2(S)}+||\nab^3\chih||_{L^2_{u}L^2(S)}\\
&\quad\quad\quad\quad\quad\quad+\int_0^u ||\nabla^3(\omega,\omega^{\dagger})||_{L^2(S_{u',\ub})}(\sum_{i_2\leq 1}||\nab^{i_2}(\psi,\psi_{\Hb})||_{L^4(S_{u',\ub})}du').
\end{split}
\end{equation}
By the following div-curl system:
$$\div\nabla\omega=\div\kappa+\frac 12\div\beta,$$
$$\curl\nabla\omega=0,$$
$$\div\nabla\omega^\dagger=\curl\kappa+\frac 12\curl\beta,$$
$$\curl\nabla\omega^\dagger=0,$$
we have, using Propositions \ref{ellipticthm} and \ref{Kest},
\begin{equation}\label{ellipticomega}
||\nabla^3(\omega,\omega^{\dagger})||_{L^2(S)}\leq C(||\nabla^2\kappa||_{L^2(S)}+||\nabla^2\beta||_{L^2(S)}+||\nabla(\omega,\omega^{\dagger})||_{L^2(S)}).
\end{equation}
Applying this to the estimates for $\nab^3\kappa$ in \eqref{kappa.est} and using Proposition \ref{Ricci}, we get
\begin{equation*}
\begin{split}
||\nabla^2\kappa||_{L^\infty_u L^2(S)}
\leq &C(\mathcal O_0,\mathcal R_0)(1+\mathcal R+\sum_{i_1\leq 2}||\nabla^{i_1}\psi_H||_{L^2_{u}L^2(S)}+||\nab^3\chih||_{L^2_{u}L^2(S)}\\
&\quad\quad\quad\quad\quad\quad+\int_0^u ||\nabla^2\kappa||_{L^2(S_{u',\ub})}(\sum_{i_2\leq 1}||\nab^{i_2}(\psi,\psi_{\Hb})||_{L^4(S_{u',\ub})}du').
\end{split}
\end{equation*}
By Gronwall's inequality, and using Proposition \ref{Ricci},
\begin{equation*}
\begin{split}
&||\nabla^2\kappa||_{L^\infty_u L^2(S)}\\
\leq &C(\mathcal O_0,\mathcal R_0)(1+\mathcal R+\sum_{i\leq 2}||\nabla^{i}\psi_H||_{L^2_{u}L^2(S)}+||\nab^3\chih||_{L^2_{u}L^2(S)})\exp(\int_0^u ||\psi_{\Hb}||_{L^\infty(S_{u',\ub})}du')\\
\leq &C(\mathcal O_0,\mathcal R_0)(1+\mathcal R+\sum_{i\leq 2}||\nabla^{i}\psi_H||_{L^2_{u}L^2(S)}+||\nab^3\chih||_{L^2_{u}L^2(S)}).
\end{split}
\end{equation*}
Now, taking the $L^2$ norm in $\ub$, we get
\begin{equation*}
\begin{split}
&||\nabla^2\kappa||_{L^2_{\ub}L^\infty_u L^2(S)}\\
\leq &C(\mathcal O_0)(1+\epsilon^{\frac 12}\mathcal R+\sum_{i\leq 2}||\nabla^{i}\psi_H||_{L^2_{\ub}L^2_{u}L^2(S)}+||\nab^3\chih||_{L^2_{\ub}L^2_{u}L^2(S)})\\
\leq &C(\mathcal O_0,\mathcal R_0)(1+\epsilon^{\frac 12}\mathcal R+\mathcal R[\beta]),
\end{split}
\end{equation*}
where in the last line we have used Propositions \ref{Ricci} and \ref{trch3}. By choosing $\epsilon$ sufficiently small, we have
\begin{equation*}
\begin{split}
||\nabla^2\kappa||_{L^2_{\ub}L^\infty_u L^2(S)}
\leq &C(\mathcal O_0,\mathcal R_0)(1+\mathcal R[\beta]).
\end{split}
\end{equation*}
Therefore, by (\ref{ellipticomega}), we have
$$||\nabla^3(\omega,\omega^{\dagger})||_{L^\infty_uL^2_{\ub}L^2(S)}\leq C(\mathcal O_0,\mathcal R_0)(1+\mathcal R[\beta]).$$
\end{proof}

Putting these all together gives
\begin{proposition}\label{Ricci32}
Assume 
$$\mathcal R<\infty.$$
Then there exists $\epsilon_0=\epsilon_0(\mathcal O_0,\mathcal R_0,\mathcal R)$ such that whenever $\epsilon\leq \epsilon_0$,
\[
\tilde{\mathcal O}_{3,2}[\trch,\chih,\kappa,\omega,\omega^{\dagger}]\leq C(\mathcal O_0,\mathcal R_0)(1+\mathcal R[\beta]),
\]
\[
\tilde{\mathcal O}_{3,2}[\trchb,\chibh,\kappab,\omegab,\omegab^{\dagger}]\leq C(\mathcal O_0,\mathcal R_0)(1+\mathcal R[\betab]),
\]
and
\[
\tilde{\mathcal O}_{3,2}[\mu,\mub,\eta,\etab]\leq C(\mathcal O_0,\mathcal R_0)(1+\mathcal R).
\]
\end{proposition}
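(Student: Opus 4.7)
The plan is to assemble Proposition \ref{Ricci32} directly from the six preceding propositions, which have already individually established the desired third-derivative bounds for each group of Ricci coefficients. The outgoing bounds $\tilde{\mathcal O}_{3,2}[\trch,\chih]$, $\tilde{\mathcal O}_{3,2}[\mu,\eta]$, and $\tilde{\mathcal O}_{3,2}[\kappa,\omega,\omega^{\dagger}]$ come from Propositions \ref{trch3}, \ref{mu3}, and \ref{omega3} respectively, while the incoming bounds $\tilde{\mathcal O}_{3,2}[\trchb,\chibh]$, $\tilde{\mathcal O}_{3,2}[\mub,\etab]$, and $\tilde{\mathcal O}_{3,2}[\kappab,\omegab,\omegab^{\dagger}]$ come from Propositions \ref{trchb3}, \ref{mub3}, and \ref{omegab3}. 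The content of Proposition \ref{Ricci32} is therefore the verification that these bounds can be simultaneously achieved with a single smallness threshold $\epsilon_0=\epsilon_0(\mathcal O_0,\mathcal R_0,\mathcal R)$, and that the $\mathcal R$-dependence splits cleanly into $\mathcal R[\beta]$, $\mathcal R[\betab]$, and full $\mathcal R$ according to which Bianchi/Codazzi equation was invoked.

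The first task is to check that the dependencies among the six propositions form an acyclic tree so that no bootstrap on $\tilde{\mathcal O}_{3,2}$ is needed at the assembly step. Tracing through: Proposition \ref{trch3} uses only the $\nabla_4$ transport for $\trch$ together with Codazzi and Proposition \ref{Kest}, and its output is $\mathcal R[\beta]$-linear; Proposition \ref{mu3} then uses the $\nabla_4$ transport for $\mu$ and elliptic div-curl for $\eta$, feeding in $\nabla^3\chih$ from Proposition \ref{trch3}; Proposition \ref{omegab3} uses the $\nabla_4$ transport for $\kappab$ and is likewise purely outgoing. The conjugate triplet \ref{trchb3}, \ref{mub3}, \ref{omega3} is then handled by $\nabla_3$ transport (with Gronwall absorbing the lack of smallness in $u$), using the already-established outgoing bounds to control error terms of the form $\psi_H\nabla^3(\eta,\etab)$ and $\chibh\chibh$ that would otherwise not have a $\epsilon^{1/2}$ factor.

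The separation of $\mathcal R$-dependence follows directly from inspection of the Codazzi and Bianchi structure: $\chih$ is recovered from $\div\chih = \tfrac12\nabla\trch - \beta + \Gamma\Gamma$ via Proposition \ref{elliptictraceless}, producing only $\mathcal R[\beta]$; similarly $\chibh$ produces only $\mathcal R[\betab]$; the mass aspect functions $\mu,\mub$ sit in div-curl systems involving both $\rhoc$ and $\sigmac$, and therefore carry the full $\mathcal R$; and $\omega,\omega^\dagger$ (resp.\ $\omegab,\omegab^\dagger$) recover through a div-curl with $\beta$ (resp.\ $\betab$) as source. The schematic signature-based structure from Section \ref{secsign} guarantees that no undesirable cross-terms appear that would mix $\beta$ and $\betab$ inside a single coefficient's estimate.

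The only real obstacle is purely a bookkeeping one: one must verify that each intermediate proposition absorbs its own bootstrap contribution $\epsilon^{1/2}\Delta_3$ (as appears e.g.\ in the statement of Proposition \ref{mu3}) through the same choice of $\epsilon$, so that collecting all six bounds with a common $\epsilon_0$ does not reintroduce $\Delta_3$. This is legitimate because each proposition's dependence on $\Delta_3$ enters with a $\epsilon^{1/2}$ factor, so choosing $\epsilon_0$ smaller than the minimum of the six individual thresholds — and smaller than an absolute constant depending on the implicit constants $C(\mathcal O_0,\mathcal R_0)$ — lets us absorb the $\Delta_3$ terms on the left and close the estimates. No further Gronwall or elliptic argument is required at this assembly stage; the proof of Proposition \ref{Ricci32} is simply the statement that these six bounds hold simultaneously under the common hypothesis $\mathcal R<\infty$.
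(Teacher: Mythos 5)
Your proposal is correct and follows essentially the same route as the paper: the paper's proof of Proposition~\ref{Ricci32} is precisely the closure of the bootstrap assumption (\ref{BA4}), choosing $\Delta_3 \gg C(\mathcal O_0,\mathcal R_0)(1+\mathcal R)$ (the maximum of the six intermediate bounds, which are independent of $\Delta_3$) and then $\epsilon_0$ small enough depending on $\mathcal O_0,\mathcal R_0,\mathcal R$. One small clarification of wording: the assembly step \emph{is} the closure of the bootstrap on $\tilde{\mathcal O}_{3,2}$, rather than being "bootstrap-free" --- what the acyclic dependency structure among Propositions~\ref{trch3}--\ref{omega3} buys is that no \emph{new} bootstrap quantity needs to be introduced and that the six conclusions can be invoked in the stated order; you correctly identify the actual closure mechanism (the $\Delta_3$-dependence of each intermediate proposition enters only through $\epsilon^{1/2}\Delta_3$ and through the $\epsilon_0$-threshold, so once $\Delta_3$ is fixed by the $\Delta_3$-free outputs, a common $\epsilon_0$ improves (\ref{BA4})).
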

\begin{proof}
Let 
$$\Delta_3 \gg C(\mathcal O_0,\mathcal R_0)(1+\mathcal R),$$
where $C(\mathcal O_0,\mathcal R_0)(1+\mathcal R)$ is taken to be the maximum of the bounds in Propositions \ref{trch3}, \ref{mu3}, \ref{omegab3}, \ref{trchb3}, \ref{mub3}, \ref{omega3}. Hence, the choice of $\Delta_3$ depends only on $\mathcal O_0$, $\mathcal R_0$ and $\mathcal R$. Thus, by Propositions \ref{trch3}, \ref{mu3}, \ref{omegab3}, \ref{trchb3}, \ref{mub3}, \ref{omega3}, the bootstrap assumption (\ref{BA4}) can be improved by choosing $\epsilon$ sufficiently small depending on $\mathcal O_0,\mathcal R_0$ and $\mathcal R$.
\end{proof}

\section{Estimates for Curvature}\label{seccurv}

In this section, we derive and prove the energy estimates for the curvature components and their first two derivatives and conclude that $\mathcal R$ is controlled by a constant depending only on the size of the initial data. By the propositions in the previous sections, this shows that all the $\mathcal O$ norms can be bounded by a constant depending only on the size of the initial data, thus proving Theorem \ref{aprioriestimates}. In order to derive the energy estimates, we need the following integration by parts formula, which can be proved by a direct computation:
\begin{proposition}\label{intbyparts34}
Suppose $\phi_1$ and $\phi_2$ are $r$ tensorfields, then
$$\int_{D_{u,\ub}} \phi_1 \nabla_4\phi_2+\int_{D_{u,\ub}}\phi_2\nabla_4\phi_1= \int_{\Hb_{\ub}(0,u)} \phi_1\phi_2-\int_{\Hb_0(0,u)} \phi_1\phi_2+\int_{D_{u,\ub}}(2\omega-\trch)\phi_1\phi_2,$$
$$\int_{D_{u,\ub}} \phi_1 \nabla_3\phi_2+\int_{D_{u,\ub}}\phi_2\nabla_3\phi_1= \int_{H_{u}(0,\ub)} \phi_1\phi_2-\int_{H_0(0,\ub)} \phi_1\phi_2+\int_{D_{u,\ub}}(2\omegab-\trchb)\phi_1\phi_2.$$
\end{proposition}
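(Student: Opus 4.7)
The plan is to reduce both identities to the one-dimensional variation formulas derived in the proof of Proposition \ref{transport}, then integrate the remaining null parameter. I will only work through the first identity; the second follows by the obvious symmetry $e_3\leftrightarrow e_4$, $\trch\leftrightarrow\trchb$, $\omega\leftrightarrow\omegab$, $\Hb_{\ub}\leftrightarrow H_u$.

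The first step is to pass from the tensorial identity to a scalar one. Set $f:=\langle\phi_1,\phi_2\rangle_\gamma$. Since $\phi_1,\phi_2$ are tangent to $S_{u,\ub}$, the Leibniz rule for $D_4$ applied to the contraction (and the fact that the normal components of $D_4\phi_i$ are killed against the tangential $\phi_j$) gives
\[
e_4 f = \langle \nabla_4\phi_1,\phi_2\rangle_\gamma + \langle\phi_1,\nabla_4\phi_2\rangle_\gamma.
\]
Once this is established, $\phi_1$ and $\phi_2$ only appear through the scalar $f$.

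Next I apply the variation formula proved during Proposition \ref{transport},
\[
\frac{d}{d\ub}\int_{S_{u,\ub}} h \;=\; \int_{S_{u,\ub}} \Omega\bigl(e_4 h + \trch\, h\bigr),
\]
not to $f$ itself but to $h:=\Omega f$. Using $e_4\Omega=-2\omega\Omega$, which is immediate from the definition $\omega=-\tfrac12 \nabla_4\log\Omega$, the Leibniz rule gives
\[
\frac{d}{d\ub}\int_{S_{u,\ub}} \Omega f \;=\; \int_{S_{u,\ub}} \Omega^2\bigl(e_4 f + (\trch-2\omega) f\bigr).
\]
The $\Omega^2$ on the right is precisely the factor in $\sqrt{-\det g}=2\Omega^2\sqrt{\det\gamma}$, and the $\Omega$ on the left is precisely the factor in the definition of the integral over $\Hb_{\ub}$. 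This is the key bookkeeping point.

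It remains to integrate this ODE in $\ub\in[0,\ub]$ and then in $u\in[0,u]$, multiplying by the factor of $2$ that distinguishes $\int_{D_{u,\ub}}$ (resp.\ $\int_{\Hb_{\ub}(0,u)}$) from the iterated sphere integrals. The left side telescopes to $\tfrac12\bigl(\int_{\Hb_{\ub}(0,u)} f - \int_{\Hb_0(0,u)} f\bigr)$ after the $u$-integration; the $e_4 f$ contribution on the right becomes $\tfrac12\int_{D_{u,\ub}}(\phi_1\cdot\nabla_4\phi_2+\phi_2\cdot\nabla_4\phi_1)$ after substituting the Leibniz identity from the first step; and the remaining piece becomes $\tfrac12\int_{D_{u,\ub}}(\trch-2\omega)f$, which moved to the other side produces the advertised $\int_{D_{u,\ub}}(2\omega-\trch)\phi_1\cdot\phi_2$. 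The factors of $2$ cancel uniformly, giving the claim. There is no serious obstacle; the only place requiring care is the Leibniz step, which must use that the contraction is taken in $\gamma$ and that $\nabla_4$ is the tangential projection of $D_4$.
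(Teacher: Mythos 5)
Your proof is correct. The paper asserts the identity "can be proved by a direct computation" without giving details, and your computation is exactly the natural one: the reduction to the scalar $f=\langle\phi_1,\phi_2\rangle_\gamma$ via compatibility of $\nabla_4$ with $\gamma$, the application of the first-variation-of-area formula from Proposition \ref{transport} to $\Omega f$ (so that the $\Omega$-weights of the $\Hb_{\ub}$ and $D_{u,\ub}$ integrals appear on cue via $e_4\Omega=-2\omega\Omega$), and the telescoping integration in $\ub$ then $u$ — all of which reproduce the stated formula, with the $\nabla_3$ version by the same argument in the $u$-direction.
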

\begin{proposition}\label{intbypartssph}
Suppose we have an $r$ tensorfield $^{(1)}\phi$ and an $r-1$ tensorfield $^{(2)}\phi$.
\begin{equation*}
\begin{split}
&\int_{D_{u,\ub}}{ }^{(1)}\phi^{A_1A_2...A_r}\nabla_{A_r}{ }^{(2)}\phi_{A_1...A_{r-1}}+\int_{D_{u,\ub}}\nabla^{A_r}{ }^{(1)}\phi_{A_1A_2...A_r}{ }^{(2)}\phi^{A_1...A_{r-1}}\\
=& -\int_{D_{u,\ub}}(\eta+\etab){ }^{(1)}\phi{ }^{(2)}\phi.
\end{split}
\end{equation*}
\end{proposition}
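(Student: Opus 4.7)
The plan is to reduce the identity to the divergence theorem on each 2-sphere $S_{u,\ub}$, and then carefully track the $\Omega^2$ factor that appears in the spacetime volume form relative to the induced volume form on $S_{u,\ub}$.

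First, I would introduce the auxiliary vector field $X^A := \gamma^{AB}\,{ }^{(1)}\phi_{A_1\dots A_{r-1}B}\,{ }^{(2)}\phi^{A_1\dots A_{r-1}}$ on each $S_{u,\ub}$. A direct computation using the Leibniz rule and the metric compatibility of $\nabla$ gives
\begin{equation*}
\nabla_A X^A \;=\; { }^{(1)}\phi^{A_1\dots A_r}\nabla_{A_r}{ }^{(2)}\phi_{A_1\dots A_{r-1}} + \nabla^{A_r}{ }^{(1)}\phi_{A_1\dots A_r}\,{ }^{(2)}\phi^{A_1\dots A_{r-1}}.
\end{equation*}
On the closed 2-sphere $S_{u,\ub}$, the divergence theorem yields $\int_{S_{u,\ub}}\nabla_A X^A \, d\mathrm{vol}_\gamma = 0$; if the spacetime volume form were simply $d\mathrm{vol}_\gamma\, d\ub\, du$, the identity in the proposition would trivially hold with right-hand side $0$.

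The subtlety, and main point of the proof, is that the spacetime volume form is $2\Omega^2\sqrt{\det\gamma}\, d\th^1 d\th^2 d\ub\, du$, so the sphere integration picks up an extra factor of $\Omega^2$, and $\Omega$ is in general not constant on the spheres. Integrating by parts on $S_{u,\ub}$ against this weighted measure and discarding the vanishing total divergence $\int_{S_{u,\ub}}\nabla_A(\Omega^2 X^A)\, d\mathrm{vol}_\gamma = 0$, one obtains
\begin{equation*}
\int_{S_{u,\ub}}\Omega^2\,\nabla_A X^A\, d\mathrm{vol}_\gamma = -\int_{S_{u,\ub}} \Omega^2\, X^A\, \nabla_A(\log\Omega^2)\, d\mathrm{vol}_\gamma = -\int_{S_{u,\ub}} 2\Omega^2 \,(\nabla_A\log\Omega)\, X^A\, d\mathrm{vol}_\gamma.
\end{equation*}

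At this point I would invoke the identities $\eta_A = \zeta_A + \nabla_A\log\Omega$ and $\etab_A = -\zeta_A + \nabla_A\log\Omega$ recorded in Section \ref{seceqn}, which together give $\eta_A + \etab_A = 2\nabla_A\log\Omega$. Substituting this into the displayed equation and unfolding $X^A$ back into ${ }^{(1)}\phi$ and ${ }^{(2)}\phi$ produces, schematically, $-(\eta+\etab)\,{ }^{(1)}\phi\,{ }^{(2)}\phi$ inside the sphere integral. Finally, integrating over $0 \le u' \le u$ and $0 \le \ub' \le \ub$ with the spacetime volume form and using the definition of $\int_{D_{u,\ub}}$ from Section \ref{secbasic} converts the sphere identity into the claimed spacetime identity. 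There is no real obstacle here beyond bookkeeping; the only conceptual step is recognising that the weight $\Omega^2$ in the spacetime volume form produces precisely the $\eta+\etab$ error term through the relation $\eta+\etab = 2\nabla\log\Omega$.
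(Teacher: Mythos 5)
Your proof is correct and is exactly the ``direct computation'' the paper alludes to but does not spell out: expanding $\nabla_A X^A$ by Leibniz, applying the divergence theorem on the closed sphere against the $\Omega^2$-weighted measure coming from $\int_{D_{u,\ub}}$, and converting the resulting $2\nabla_A\log\Omega$ factor into $\eta_A+\etab_A$ via the identities recorded in Section~\ref{seceqn}. This is the natural (indeed essentially the only) route, so there is nothing to compare.
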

Using these we derive energy estimates for $\rhoc,\sigmac$ in $L^2(H_u)$ and for $\betab$ in $L^2(\Hb_{\ub})$.
\begin{proposition}\label{ee1}
The following $L^2$ estimates for the curvature components hold:
\begin{equation*}
\begin{split}
&\sum_{i\leq 2}(||\nab^i(\rhoc,\sigmac)||^2_{L^\infty_u L^2_{\ub}L^2(S)}+||\nab^i\betab||^2_{L^\infty_{\ub} L^2_{u}L^2(S)})  \\
\leq &\sum_{i\leq 2}(||\nab^i(\rhoc,\sigmac)||^2_{L^2_{\ub}L^2(S_{0,\ub})}+||\nab^i\betab||^2_{L^2_{u}L^2(S_{u,0})}) \\
&+||(\sum_{i\leq 2}\nabla^i(\rhoc,\sigmac))(\sum_{i_1+i_2+i_3\leq 2}\psi^{i_1}\nabla^{i_2}(\psi+\psi_{\Hb})\nabla^{i_3}(\rhoc,\sigmac))||_{L^1_uL^1_{\ub}L^1(S)}\\
&+||(\sum_{i\leq 2}\nabla^i(\rhoc,\sigmac))(\sum_{i_1+i_2+i_3+i_4\leq 3}\nabla^{i_1}\psi^{i_2}\nabla^{i_3}\psi\nabla^{i_4}(\psi+\psi_{\Hb}))||_{L^1_uL^1_{\ub}L^1(S)}\\
&+||(\sum_{i\leq 2}\nabla^i(\rhoc,\sigmac))(\sum_{i_1+i_2+i_3+i_4\leq 2}\psi^{i_1}\nabla^{i_2}\psi\nabla^{i_3}\chibh\nabla^{i_4}\chibh)||_{L^1_uL^1_{\ub}L^1(S)}\\
&+||(\sum_{i\leq 2}\nabla^i\betab)(\sum_{i_1+i_2+i_3\leq 2}\psi^{i_1}\nabla^{i_2}\psi\nabla^{i_3}(\rhoc,\sigmac))||_{L^1_uL^1_{\ub}L^1(S)}\\
&+||(\sum_{i\leq 2}\nabla^i\betab)(\sum_{i_1+i_2+i_3\leq 1}\psi^{i_1}\nabla^{i_2}\rhoc\nabla^{i_3}(\rhoc,\sigmac))||_{L^1_uL^1_{\ub}L^1(S)}\\
&+||(\sum_{i\leq 2}\nabla^i\betab)(\sum_{i_1+i_2+i_3+i_4\leq 3}\nabla^{i_1}\psi^{i_2}\nabla^{i_3}(\trchb+\psi_{\Hb})\nabla^{i_4}(\trch+\psi_H))||_{L^1_uL^1_{\ub}L^1(S)}.
\end{split}
\end{equation*}
\end{proposition}
\begin{proof}
Consider the following schematic Bianchi equations:
\begin{equation*}
\begin{split}
\nab_3\sigmac+\div ^*\betab=&\psi\sigmac+\sum_{i_1+i_2+i_3\leq 1}\psi^{i_1}\nab^{i_2}\psi\nab^{i_3}\psi_{\Hb}+\psi\chibh\chibh,\\
\nab_3\rhoc+\div\betab=&\psi\rhoc+\sum_{i_1+i_2+i_3\leq 1}\psi^{i_1}\nab^{i_2}\psi\nab^{i_3}\psi_{\Hb}+\psi\chibh\chibh,\\
\nab_4\betab+\nabla\rhoc -^*\nabla\sigmac=& \psi(\rhoc,\sigmac)+\sum_{i_1+i_2+i_3\leq 1}\psi^{i_1}\nab^{i_2}(\psi_H+\trch)\nab^{i_3}(\psi_{\Hb}+\trchb),\\
\end{split}
\end{equation*}
Commuting these equations with angular derivatives for $i\leq 2$, we get the equation for $\nab_3\nab^i\sigmac$,
\begin{equation*}
\begin{split}
&\nab_3\nab^i\sigmac+\div ^*\nab^i\betab=F_1,
\end{split}
\end{equation*}
where $F_1$ denotes the terms
\begin{equation*}
\begin{split}
F_1:=&\sum_{i_1+i_2+i_3\leq 2}\psi^{i_1}\nab^{i_2}(\psi+\psi_{\Hb})\nab^{i_3}(\rhoc,\sigmac)+\sum_{i_1+i_2+i_3+i_4\leq 3}\nab^{i_1}\psi^{i_2}\nab^{i_3}\psi\nab^{i_4}(\psi+\psi_{\Hb})\\
&+\sum_{i_1+i_2+i_3+i_4\leq 2}\psi^{i_1}\nab^{i_2}\psi\nab^{i_3}\chibh\nab^{i_4}\chibh.
\end{split}
\end{equation*}
Notice that in the derivation of the $\nab_3\nab^i\sigmac$ equation, there are terms arising from the commutator $[\nab^i,\div]$. These can be expressed in terms of the Gauss curvature, which can be substituted by $-\rhoc-\frac 12\trch\trchb$ and rewritten as the terms in the above expression. The equation for $\nab_3\nab^i\rhoc$ has a similar structure:
\begin{equation*}
\begin{split}
\nab_3\nab^i\rhoc+\div\nab^i\betab=F_1.
\end{split}
\end{equation*}
We have the following equation for $\nab_4\nab^i\betab$:
\begin{equation*}
\begin{split}
\nab_4\nab^i\betab+\nabla\nab^i\rhoc -^*\nabla\nab^i\sigmac=F_2,
\end{split}
\end{equation*}
where $F_2$ denotes the terms of the form
\begin{equation*}
\begin{split}
F_2:=&\sum_{i_1+i_2+i_3\leq 2}\psi^{i_1}\nab^{i_2}\psi\nab^{i_3}(\rhoc,\sigmac)+\sum_{i_1+i_2+i_3\leq 1}\psi^{i_1}\nab^{i_2}\rhoc\nab^{i_3}(\rhoc,\sigmac)\\
&+\sum_{i_1+i_2+i_3+i_4\leq 3}\nab^{i_1}\psi^{i_2}\nab^{i_3}(\psi_H+\trch)\nab^{i_4}(\psi_{\Hb}+\trchb).
\end{split}
\end{equation*}
Applying Proposition \ref{intbypartssph} yields the following identity on the derivatives of the curvature.
\begin{equation*}
\begin{split}
&\int \langle\nab^i\betab,\nabla_4\nab^i\betab \rangle_\gamma\\ =&\int \langle\nab^i\betab,-\nabla\nab^i\rho+^*\nabla\nab^i\sigma\rangle_\gamma+\langle\nab^i\betab,F_2\rangle_\gamma \\
=&\int \langle\div\nab^i\betab,\nab^i\rhoc\rangle_\gamma+\langle\div ^*\nab^i\betab,\nab^i\sigmac\rangle_\gamma +\langle\nab^i\betab,F_2\rangle_\gamma\\
=&\int -\langle\nabla_3\nab^i\rhoc,\nab^i\rhoc\rangle_\gamma-\langle\nabla_3\nab^i\sigmac,\nab^i\sigmac\rangle_\gamma +\langle\nab^i\betab,F_2\rangle_\gamma+\langle\nab^i(\rhoc,\sigmac),F_1\rangle_\gamma.
\end{split}
\end{equation*}
Using Proposition \ref{intbyparts34}, we have
\begin{equation*}
\begin{split}
\int \langle\nab^i\betab,\nabla_4\nab^i\betab\rangle_\gamma=\frac 12 (\int_{\Hb_{\ub}}|\nab^i\betab|^2 - \int_{\Hb_0}|\nab^i\betab|^2)+||(\omega-\frac 12 \trch)|\nab^i\betab|^2||_{L^1_uL^1_{\ub}L^1(S)}.
\end{split}
\end{equation*}
Substituting the Codazzi equation 
$$\betab=\sum_{i_1+i_2=1}\psi^{i_1}\nab^{i_2}(\psi+\psi_{\Hb})$$
for one of the $\betab$'s, we note that the last term
$$||(\omega-\frac 12 \trch)|\nab^i\betab|^2||_{L^1_uL^1_{\ub}L^1(S)}$$
is of the form of one of the terms stated in the Proposition. We call such terms acceptable. Also by using Proposition \ref{intbyparts34}, we have
\begin{equation*}
\begin{split}
&\int \langle\nab^i(\rhoc,\sigmac),\nabla_3\nab^i(\rhoc,\sigmac)\rangle_\gamma\\
=&\frac 12 (\int_{H_{u}}|\nab^i(\rhoc,\sigmac)|^2 - \int_{H_0}|\nab^i(\rhoc,\sigmac)|^2)+||(\omegab-\frac 12 \trchb)|\nab^i(\rhoc,\sigmac)|^2||_{L^1_uL^1_{\ub}L^1(S)}.
\end{split}
\end{equation*}
The last term
$$||(\omegab-\frac 12 \trchb)|\nab^i(\rhoc,\sigmac)|^2||_{L^1_uL^1_{\ub}L^1(S)}$$
is also acceptable. We thus have
\begin{equation*}
\begin{split}
&\int_{\Hb_{\ub}} |\nabla^i\betab|^2_\gamma+\int_{H_{u}} |\nabla^i(\rhoc,\sigmac)|^2_\gamma  \\
\leq &\int_{\Hb_{\ub'}} |\nabla^i\betab|^2_\gamma+\int_{H_{u'}} |\nabla^i(\rhoc,\sigmac)|^2_\gamma+|<\nab^i\betab,F_2>_\gamma|+|<\nab^i(\rhoc,\sigmac),F_1>_\gamma|\\
&+\mbox{acceptable terms}.
\end{split}
\end{equation*}
We conclude the proposition by noting that the structure for $F_1$ and $F_2$ implies that
$$|\langle\nab^i\betab,F_2\rangle_\gamma|$$
and 
$$|\langle\nab^i(\rhoc,\sigmac),F_1\rangle_\gamma|$$
are also acceptable.
\end{proof}
To close the energy estimates, we also need to control $\beta$ in $L^2(H)$ and $(\rhoc,\sigmac)$ in $L^2(\Hb)$. It is not difficult to see that due to the structure of the Einstein equations, Proposition \ref{ee1} also holds when all the barred and unbarred quantities are exchanged. The proof is exactly analogous to that of Proposition \ref{ee1} and will be omitted.
\begin{proposition}\label{ee2}
The following $L^2$ estimates for the curvature components hold:
\begin{equation*}
\begin{split}
&\sum_{i\leq 2}(||\nab^i(\rhoc,\sigmac)||^2_{L^\infty_{\ub} L^2_{u}L^2(S)}+||\nab^i\beta||^2_{L^\infty_{u} L^2_{\ub}L^2(S)})  \\
\leq &\sum_{i\leq 2}(||\nab^i(\rhoc,\sigmac)||^2_{L^2_{u}L^2(S_{u,0})}+||\nab^i\beta||^2_{L^2_{\ub}L^2(S_{0,\ub})}) \\
&+||(\sum_{i\leq 2}\nabla^i(\rhoc,\sigmac))(\sum_{i_1+i_2+i_3\leq 2}\psi^{i_1}\nabla^{i_2}(\psi+\psi_H)\nabla^{i_3}(\rhoc,\sigmac))||_{L^1_uL^1_{\ub}L^1(S)}\\
&+||(\sum_{i\leq 2}\nabla^i(\rhoc,\sigmac))(\sum_{i_1+i_2+i_3+i_4\leq 3}\nabla^{i_1}\psi^{i_2}\nabla^{i_3}\psi\nabla^{i_4}(\psi+\psi_H))||_{L^1_uL^1_{\ub}L^1(S)}\\
&+||(\sum_{i\leq 2}\nabla^i(\rhoc,\sigmac))(\sum_{i_1+i_2+i_3+i_4\leq 2}\psi^{i_1}\nabla^{i_2}\psi\nabla^{i_3}\chih\nabla^{i_4}\chih)||_{L^1_uL^1_{\ub}L^1(S)}\\
&+||(\sum_{i\leq 2}\nabla^i\beta)(\sum_{i_1+i_2+i_3\leq 2}\psi^{i_1}\nabla^{i_2}\psi\nabla^{i_3}(\rhoc,\sigmac))||_{L^1_uL^1_{\ub}L^1(S)}\\
&+||(\sum_{i\leq 2}\nabla^i\beta)(\sum_{i_1+i_2+i_3\leq 1}\psi^{i_1}\nabla^{i_2}\rhoc\nabla^{i_3}(\rhoc,\sigmac))||_{L^1_uL^1_{\ub}L^1(S)}\\
&+||(\sum_{i\leq 2}\nabla^i\beta)(\sum_{i_1+i_2+i_3+i_4\leq 3}\nabla^{i_1}\psi^{i_2}\nabla^{i_3}(\trchb+\psi_{\Hb})\nabla^{i_4}(\trch+\psi_H))||_{L^1_uL^1_{\ub}L^1(S)}.
\end{split}
\end{equation*}
\end{proposition}

We now control all the error terms in the energy estimates. Introduce the bootstrap assumption:
\begin{equation}\tag{A5}\label{BA5}
\mathcal R\leq \Delta_4,
\end{equation}
where $\Delta_4$ is a positive constant to be chosen later.
First we estimate $\rhoc$ and $\sigmac$ in $L^2(H_u)$ and $\betab$ in $L^2(\Hb_{\ub})$.
\begin{proposition}\label{R1}
There exist $\epsilon_0=\epsilon_0(\mathcal O_0,\mathcal R_0,\Delta_4)$ such that whenever $\epsilon\leq\epsilon_0$,
$$\sum_{i\leq 2}(||\nab^i(\rhoc,\sigmac)||_{L^\infty_u L^2_{\ub}L^2(S)}+||\nab^i\betab||_{L^\infty_{\ub} L^2_{u}L^2(S)}) \leq C(\mathcal O_0,\mathcal R_0).$$
\end{proposition}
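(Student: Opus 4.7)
The plan is to use Proposition \ref{ee1} as the starting point and bound each of its six error terms, closing the bootstrap (A5). I apply the energy identity on the rectangle $D_{u',\ub'}$ for arbitrary $u'\in[0,I]$, $\ub'\in[0,\epsilon]$, then take supremum to recover the norm on the left-hand side. The strategy has two pillars: (a) most error terms carry a smallness factor $\epsilon^{1/2}$, which absorbs the bootstrap constant $\Delta_4$; (b) the few terms lacking such smallness are linear in the square of the unknown with a Gronwall-controllable coefficient that depends only on $\mathcal O_0,\mathcal R_0$. The Ricci coefficient bounds from Propositions \ref{Ricci} and \ref{Ricci32} supply all needed control, with third derivatives growing at most linearly in $\mathcal R\leq\Delta_4$.

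For the error terms, I distinguish the following representative cases. Terms of type $\psi_{\Hb}\cdot\nabla^i(\rhoc,\sigmac)\cdot\nabla^i(\rhoc,\sigmac)$ are bounded by $\int_0^{u'}\|\psi_{\Hb}\|_{L^\infty_\ub L^\infty(S)}(u)\,\|\nabla^i(\rhoc,\sigmac)\|^2_{L^2_\ub L^2(S)}(u)\,du$; since $\|\psi_{\Hb}\|_{L^\infty_\ub L^\infty(S)}\in L^2_u$ with norm $\leq C(\mathcal O_0,\mathcal R_0)$, this is linear in $\|\nabla^i(\rhoc,\sigmac)\|^2_{L^\infty_u L^2_\ub L^2(S)}$ and handled by Gronwall in $u$. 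Terms of type $\psi\,\chibh\,\chibh\cdot\nabla^i(\rhoc,\sigmac)$ are bounded by putting each $\chibh$ in $L^2_u L^\infty_\ub L^\infty(S)$ and using the inclusion $L^1_\ub\hookrightarrow \epsilon^{1/2}L^2_\ub$ on the curvature factor, producing $\epsilon^{1/2}$ smallness. Cross terms $\nabla^i\betab\cdot\psi\cdot\nabla^j(\rhoc,\sigmac)$ are bounded by Cauchy--Schwarz over $D$, using $\|\betab\|_{L^2(D)}\leq\epsilon^{1/2}\|\betab\|_{L^\infty_\ub L^2_u L^2(S)}$. The most delicate term $\nabla^i\betab\cdot\psi_{\Hb}\cdot\psi_H$ is handled by pointwise Hölder on $S$, then $L^2_\ub$-Cauchy--Schwarz pairing $\psi_H\in L^2_\ub L^\infty_u L^\infty(S)$ against $\betab$ and $\psi_H\in L^2_\ub L^\infty(S)$, followed by $L^2_u$-Cauchy--Schwarz pairing $\psi_{\Hb}\in L^2_u L^\infty_\ub L^\infty(S)$ against $\betab$, yielding a factor $\|\betab\|_{L^2_u L^2_\ub L^2(S)}\leq \epsilon^{1/2}\|\betab\|_{L^\infty_\ub L^2_u L^2(S)}$. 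Third-derivative Ricci terms are bounded using $\tilde{\mathcal O}_{3,2}\leq C(\mathcal O_0,\mathcal R_0)(1+\Delta_4)$, with the $\Delta_4$ dependence always paired with an $\epsilon^{1/2}$ factor.

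Writing $\mathcal X(u',\ub')$ for the partial sum norm on the left-hand side, the estimates combine to give
\begin{equation*}
\mathcal X(u',\ub')^2\leq C(\mathcal O_0,\mathcal R_0)^2+C(\mathcal O_0,\mathcal R_0)\epsilon^{1/2}(1+\Delta_4^2)+\int_0^{u'}g(u)\,\mathcal X(u,\ub')^2\,du,
\end{equation*}
where $g\in L^1_u([0,I])$ with $\|g\|_{L^1}\leq C(\mathcal O_0,\mathcal R_0)$ (coming from $L^2_u$ norms of $\psi_{\Hb},\trchb$). Choosing $\epsilon$ small enough (depending on $\mathcal O_0,\mathcal R_0,\Delta_4$) makes the middle term at most $C(\mathcal O_0,\mathcal R_0)^2$. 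Gronwall in $u'$ then gives $\mathcal X(u',\ub')^2\leq C(\mathcal O_0,\mathcal R_0)^2\exp(\|g\|_{L^1})\leq C(\mathcal O_0,\mathcal R_0)^2$, uniformly in $(u',\ub')\in[0,I]\times[0,\epsilon]$, closing the bootstrap.

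The main obstacle is the terms that do not carry an $\epsilon^{1/2}$ smallness factor, namely those of the form $\psi_{\Hb}\cdot X\cdot X$ (and analogous cases with $\trchb$) where the $u$-interval has finite but arbitrary length $I$. The resolution is structural: the Gronwall weight $\|\psi_{\Hb}\|_{L^\infty_\ub L^\infty(S)}(u)$, while not small, belongs to $L^2_u\subset L^1_u$ with norm bounded by $C(\mathcal O_0,\mathcal R_0)$ \emph{independently of $\mathcal R$}. Equivalently, the bounds for $\psi_{\Hb}$ from Proposition \ref{Ricci} are closed without reference to the curvature norm $\mathcal R$, which is precisely what makes the Gronwall-closed estimate independent of $\Delta_4$. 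A secondary subtlety is ensuring that every occurrence of $\tilde{\mathcal O}_{3,2}$ (which depends linearly on $\mathcal R\leq\Delta_4$) is multiplied by $\epsilon^{1/2}$ in the error estimates, so that the bootstrap can be absorbed; this is verified term by term in the enumeration above.
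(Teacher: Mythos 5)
Your proof is correct but takes a somewhat different route than the paper's. The paper avoids Gronwall entirely in Proposition \ref{R1}: every one of the six error terms in Proposition \ref{ee1} carries an $\epsilon^{1/2}$ factor, and the bootstrap closes by a single absorption. For the term you treat by Gronwall---$\psi_{\Hb}\cdot\nabla^i(\rhoc,\sigmac)\cdot\nabla^j(\rhoc,\sigmac)$---the paper instead places one curvature factor in $L^2(D)$, using $\|\nabla^i(\rhoc,\sigmac)\|_{L^2_uL^2_{\ub}L^2(S)}\leq\epsilon^{1/2}\|\nabla^i(\rhoc,\sigmac)\|_{L^\infty_{\ub}L^2_uL^2(S)}\leq\epsilon^{1/2}\Delta_4$, while the remaining factor $\|\psi_{\Hb}\nabla^j(\rhoc,\sigmac)\|_{L^2(D)}$ is bounded by H\"older pairing $\psi_{\Hb}\in L^2_uL^\infty_{\ub}L^\infty(S)$ with $\nabla^j(\rhoc,\sigmac)\in L^\infty_uL^2_{\ub}L^2(S)$; the result is $\epsilon^{1/2}\Delta_4^2$ directly. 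The structural point your plan misses is that for this particular energy estimate---$(\rhoc,\sigmac)$ fluxed through $H$ and $\betab$ through $\Hb$---every curvature multiplier on the left-hand side admits an $L^2_{\ub}$ reading, so $\|\cdot\|_{L^2(D)}\leq\epsilon^{1/2}\|\cdot\|_{L^\infty_{\ub}L^2_uL^2(S)}$ is always available; there are no error terms ``lacking smallness'' in the sense you describe. Gronwall becomes genuinely necessary only in Proposition \ref{R2}, where $\beta$ sits in $L^\infty_uL^2_{\ub}L^2(S)$ and the $\beta\psi_H\psi_{\Hb}$ error term has no spare $L^2_{\ub}$ factor; there the paper controls $\betab$ first and runs a $u$-Gronwall exactly as you propose. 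Your argument is thus valid and more uniform---it imports the mechanism of Proposition \ref{R2} into Proposition \ref{R1}---at the cost of needing the extra observation (which you correctly supply) that the Gronwall weight coming from $\psi_{\Hb},\trchb$ is controlled by $\mathcal O_0$ alone, independently of $\Delta_4$, via Propositions \ref{L4Ricci1} and \ref{L4Ricci3}.
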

\begin{proof}
We control the six terms in Proposition \ref{ee1}. We first estimate the term $\betab\psi_H\psi_{\Hb}$, i.e., the last term in the expression in Proposition \ref{ee1}. As we will see, this is the most difficult term because all three factors can only be controlled after taking the $L^2$ norm along one of the null variables. 
\begin{equation*}
\begin{split}
&||(\sum_{i\leq 2}\nabla^{i}\betab)(\sum_{i_1+i_2+i_3+i_4\leq 3}\nabla^{i_1}\psi^{i_2}\nabla^{i_3}\psi_H\nabla^{i_4}\psi_{\Hb})||_{L^1_uL^1_{\ub}L^1(S)} \\
\leq &\epsilon^{\frac{1}{2}}(\sum_{i\leq 2}||\nabla^{i}\betab||_{L^\infty_{\ub} L^2_{u}L^2(S)})(\sum_{i_1+i_2+i_3+i_4\leq 3}||\nabla^{i_1}\psi^{i_2}\nabla^{i_3}\psi_H\nabla^{i_4}\psi_{\Hb}||_{L^2_{\ub}L^2_{u}L^2(S)}).
\end{split}
\end{equation*}
Since we have a small constant $\epsilon^{\frac 12}$, we only need to bound the remaining contribution by a constant depending on $\mathcal O_0$, $\mathcal R_0$ and $\Delta_4$. The first factor is bounded by $\Delta_4$ by the definition of the norm $\mathcal R$ and the bootstrap assumption (\ref{BA5}). We now look at the second factor.
\begin{equation*}
\begin{split}
&||\sum_{i_1+i_2+i_3+i_4\leq 3}\nabla^{i_1}\psi^{i_2}\nabla^{i_3}\psi_H\nabla^{i_4}\psi_{\Hb}||_{L^2_uL^2_{\ub}L^2(S)} \\
\leq &C(\sum_{i_1\leq 3}||\psi||^{i_1}_{L^\infty_uL^\infty_{\ub}L^\infty(S)})(\sum_{i_2\leq 3}||\nabla^{i_2}\psi_H||_{L^\infty_uL^2_{\ub}L^2(S)}) ||\psi_{\Hb}||_{L^2_uL^\infty_{\ub}L^\infty(S)} \\
&+C(\sum_{i_1\leq 3}||\psi||_{L^\infty_uL^\infty_{\ub}L^\infty(S)}^{i_1})||\psi_H||_{L^2_{\ub}L^\infty_uL^\infty(S)} (\sum_{i_2\leq 3}||\nabla^{i_2}\psi_{\Hb}||_{L^\infty_{\ub}L^2_uL^2(S)}) \\
&+C(\sum_{i_1\leq 3}||\psi||_{L^\infty_uL^\infty_{\ub}L^\infty(S)}^{i_1})(\sum_{i_2\leq 2}||\nabla^{i_2}\psi_H||_{L^\infty_uL^2_{\ub}L^4(S)})( \sum_{i_3\leq 2}||\nabla^{i_3}\psi_{\Hb}||_{L^2_uL^\infty_{\ub}L^4(S)}) \\
&+C(\sum_{i_1\leq 1}||\psi||_{L^\infty_uL^\infty_{\ub}L^\infty(S)}^{i_1})||\psi_H||_{L^\infty_uL^2_{\ub}L^\infty(S)} ||\psi_{\Hb}||_{L^2_uL^\infty_{\ub}L^\infty(S)} (\sum_{i_2\leq 2}||\nabla^{i_2}\psi||_{L^\infty_u L^\infty_{\ub}L^2(S)})\\
&+C(\sum_{i_1\leq 1}||\nabla^{i_1}\psi_H||_{L^\infty_uL^2_{\ub}L^4(S)}) ||\psi_{\Hb}||_{L^2_uL^\infty_{\ub}L^\infty(S)} (\sum_{i_2\leq 1}||\nabla^{i_2}\psi||_{L^\infty_u L^\infty_{\ub}L^4(S)})\\
&+C||\psi_H||_{L^\infty_uL^2_{\ub}L^\infty(S)} (\sum_{i_1\leq 1}||\nabla^{i_4}\psi_{\Hb}||_{L^2_uL^\infty_{\ub}L^4(S)}) (\sum_{i_2\leq 1}||\nabla^{i_2}\psi||_{L^\infty_u L^\infty_{\ub}L^4(S)}).
\end{split}
\end{equation*}
By Propositions \ref{Ricci} and \ref{Ricci32}, we have
$$||\sum_{i_1+i_2+i_3+i_4\leq 3}\nabla^{i_1}\psi^{i_2}\nabla^{i_3}\psi_H\nabla^{i_4}\psi_{\Hb}||_{L^2_uL^2_{\ub}L^2(S)}\leq C(\mathcal O_0,\mathcal R_0,\Delta_4).$$
Thus
$$||(\sum_{i\leq 2}\nabla^{i}\betab)(\sum_{i_1+i_2+i_3+i_4\leq 3}\nabla^{i_1}\psi^{i_2}\nabla^{i_3}\psi_H\nabla^{i_4}\psi_{\Hb})||_{L^1_uL^1_{\ub}L^1(S)}\leq C(\mathcal O_0,\mathcal R_0,\Delta_4)\epsilon^{\frac 12}.$$
We next consider the following four terms from Proposition \ref{ee1}:
$$||(\sum_{i\leq 2}\nabla^i(\rhoc,\sigmac))(\sum_{i_1+i_2+i_3\leq 2}\psi^{i_1}\nabla^{i_2}(\psi+\psi_{\Hb})\nabla^{i_3}(\rhoc,\sigmac))||_{L^1_uL^1_{\ub}L^1(S)},$$
$$||(\sum_{i\leq 2}\nabla^i(\rhoc,\sigmac))(\sum_{i_1+i_2+i_3+i_4\leq 3}\nabla^{i_1}\psi^{i_2}\nabla^{i_3}\psi\nabla^{i_4}(\psi+\psi_{\Hb}))||_{L^1_uL^1_{\ub}L^1(S)},$$
$$||(\sum_{i\leq 2}\nabla^i\betab)(\sum_{i_1+i_2+i_3\leq 2}\psi^{i_1}\nabla^{i_2}\psi\nabla^{i_3}(\rhoc,\sigmac))||_{L^1_uL^1_{\ub}L^1(S)},$$
$$||(\sum_{i\leq 2}\nabla^i\betab)(\sum_{i_1+i_2+i_3\leq 1}\psi^{i_1}\nabla^{i_2}\rhoc\nabla^{i_3}(\rhoc,\sigmac))||_{L^1_uL^1_{\ub}L^1(S)}.$$
Since $\psi$ satisfies stronger estimates than either $\psi_H$ or $\psi_{\Hb}$; and $\rhoc,\sigmac$ satisfy strong estimates than either $\nab\psi_H$ or $\nab\psi_{\Hb}$, we can bound these terms exactly the way as above by $C(\mathcal O_0,\mathcal R_0,\Delta_4)\epsilon^{\frac 12}$.

We are thus left with the term
$$||(\sum_{i\leq 2}\nabla^i(\rhoc,\sigmac))(\sum_{i_1+i_2+i_3+i_4\leq 2}\psi^{i_1}\nabla^{i_2}\psi\nabla^{i_3}\chibh\nabla^{i_4}\chibh)||_{L^1_uL^1_{\ub}L^1(S)}.$$
Since $\chibh$ can only be controlled after taking the $L^2$ norm in $u$, we must bound the curvature term $\nabla^i(\rhoc,\sigmac)$ in $L^2(H)$. Nevertheless, we get a smallness constant in this estimate:
\begin{equation*}
\begin{split}
&||(\sum_{i\leq 2}\nabla^i(\rhoc,\sigmac))(\sum_{i_1+i_2+i_3+i_4\leq 2}\psi^{i_1}\nabla^{i_2}\psi\nabla^{i_3}\chibh\nabla^{i_4}\chibh)||_{L^1_uL^1_{\ub}L^1(S)}\\
\leq &(\sum_{i\leq 2}||\nabla^i(\rhoc,\sigmac)||_{L^\infty_uL^2_{\ub}L^2(S)})(\sum_{i_1+i_2+i_3+i_4\leq 2}||\psi^{i_1}\nabla^{i_2}\psi\nabla^{i_3}\chibh\nabla^{i_4}\chibh||_{L^1_uL^2_{\ub}L^2(S)})\\
\leq &C\epsilon^{\frac 12}\mathcal R(\sum_{i_1\leq 2}\sum_{i_2\leq 3}||\nabla^{i_1}\psi||_{L^\infty_uL^\infty_{\ub}L^2(S)}^{i_2})(\sum_{i_3\leq 2}||\nabla^{i_3}\chibh||_{L^2_uL^\infty_{\ub}L^2(S)}^2)\\
\leq &C(\mathcal O_0,\mathcal R_0)\epsilon^{\frac 12}\Delta_4.
\end{split}
\end{equation*}
Therefore,
$$\sum_{i\leq 2}(||\nab^i(\rhoc,\sigmac)||_{L^\infty_u L^2_{\ub}L^2(S)}^2+||\nab^i\betab||_{L^\infty_{\ub} L^2_{u}L^2(S)}^2)\leq \mathcal R_0^2+\epsilon^{\frac 12}C(\mathcal O_0,\mathcal R_0,\Delta_4).$$
Thus the conclusion follows by choosing $\epsilon$ to be sufficiently small depending on $\mathcal O_0, \mathcal R_0$ and $\Delta_4$.
\end{proof}

We now estimate the remaining components of curvature:
\begin{proposition}\label{R2}
There exist $\epsilon_0=\epsilon_0(\mathcal O_0,\mathcal R_0,\Delta_4)$ such that whenever $\epsilon\leq\epsilon_0$,
$$\sum_{i\leq 2}(||\nab^i\beta||_{L^\infty_u L^2_{\ub}L^2(S)}+||\nab^i(\rhoc,\sigmac)||_{L^\infty_{\ub} L^2_{u}L^2(S)}) \leq C(\mathcal O_0,\mathcal R_0).$$
\end{proposition}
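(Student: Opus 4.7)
\medskip

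\noindent\emph{Proof proposal.} The plan is to close the energy inequality of Proposition \ref{ee2} in the same spirit as the proof of Proposition \ref{R1}, but now using the bounds on $\betab$ in $L^\infty_{\ub}L^2_uL^2(S)$ and $\nab^i(\rhoc,\sigmac)$ in $L^\infty_uL^2_{\ub}L^2(S)$ already established in Proposition \ref{R1}, together with the Ricci bounds from Propositions \ref{Ricci}, \ref{L4Ricciimproved} and \ref{Ricci32}. Under the bootstrap assumption (\ref{BA5}), the first five error terms listed in Proposition \ref{ee2} are direct analogues of the five symmetric terms in Proposition \ref{ee1}: in each of them the curvature factor that is integrated along the short $\ub$-interval supplies an $\epsilon^{1/2}$ gain via Cauchy--Schwarz, so exactly the same counting as in Proposition \ref{R1} yields a total contribution bounded by $\epsilon^{1/2}C(\mathcal O_0,\mathcal R_0,\Delta_4)$. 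The only places where a barred/unbarred swap changes the logic are where a $\chih\chih$ or $\chibh\chibh$ factor appears: the term with $\psi\nab^{i_3}\chih\nab^{i_4}\chih$ is placed opposite the $(\rhoc,\sigmac)$ factor that now lives on $\Hb_{\ub}$, and taking the two $\chih$ factors in $L^2_{\ub}L^\infty_uL^\infty(S)$ and $L^2_{\ub}L^\infty_uL^2(S)$ (using Propositions \ref{Ricci} and \ref{Ricci32}) against $(\rhoc,\sigmac)$ in $L^\infty_{\ub}L^2_uL^2(S)$ yields an $\epsilon^{1/2}$ from the single remaining $\ub$-integration.

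The new and genuinely different difficulty lies in the last term
$$\Big\|\Big(\sum_{i\leq 2}\nab^i\beta\Big)\Big(\sum_{i_1+i_2+i_3+i_4\leq 3}\nab^{i_1}\psi^{i_2}\nab^{i_3}(\trchb+\psi_{\Hb})\nab^{i_4}(\trch+\psi_H)\Big)\Big\|_{L^1_uL^1_{\ub}L^1(S)},$$
where $\beta$ is controlled only in $L^\infty_uL^2_{\ub}L^2(S)$ and the integration over the long $u$-interval prevents a direct $\epsilon^{1/2}$ gain. I will split the inhomogeneity into two kinds of sub-terms, schematically of the form $\chi\,\beta\,\betab$ (arising from $\psi_H\nab\psi_{\Hb}$ after invoking Codazzi) and $\chibh\,\beta\,\beta$ (arising from $\chibh\cdot(\trch+\psi_H)^2$). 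For the first, using the Codazzi equation $\betab=\div\chibh+\psi(\psi+\psi_H)$ (and its angular derivatives) to trade $\nab^{i_3+1}\chibh$ for $\nab^{i_3}\betab$, Cauchy--Schwarz in the form
$$\|\nab^i\beta\cdot \chi\cdot\nab^j\betab\|_{L^1_uL^1_{\ub}L^1(S)}\leq \|\nab^i\beta\|_{L^\infty_uL^2_{\ub}L^2(S)}\|\chi\|_{L^2_{\ub}L^\infty_uL^\infty(S)}\|\nab^j\betab\|_{L^\infty_{\ub}L^2_uL^2(S)}$$
controls this term by $C(\mathcal O_0,\mathcal R_0)\epsilon^{1/2}\Delta_4$, because $\|\nab^j\betab\|_{L^\infty_{\ub}L^2_uL^2(S)}\leq C(\mathcal O_0,\mathcal R_0)$ by Proposition \ref{R1} and is crucially \emph{independent of} $\Delta_4$.

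For the second kind of sub-term $\chibh\,\beta\,\beta$, I exploit the fact that Propositions \ref{L4Ricciimproved}, \ref{Ricci} and \ref{Ricci32} give estimates for $\chibh$ and its angular derivatives that depend only on $\mathcal O_0$ and $\mathcal R_0$ (since the bounds there only use $\mathcal R[\betab]$ which by Proposition \ref{R1} is already under control). Placing one $\beta$ factor in $L^\infty_uL^2_{\ub}L^2(S)$, the $\chibh$ factor (with its derivatives) in the appropriate $L^\infty_{\ub}L^2_u$ or $L^\infty$ norm, and keeping the other $\beta$ factor as $\|\beta\|^2_{L^2_{\ub}L^2(S_{u,\ub'})}$ pointwise in $u$, this contribution takes the form
$$\int_0^u C(\mathcal O_0,\mathcal R_0)\,\|\nab^i\beta\|^2_{L^2_{\ub}L^2(S_{u',\ub})}\,du',$$
which is linear in the energy of $\beta$ on $H_{u'}$ for $u'\leq u$. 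The remaining five terms in Proposition \ref{ee2} being already controlled by $\mathcal R_0^2+\epsilon^{1/2}C(\mathcal O_0,\mathcal R_0,\Delta_4)$, Gronwall's inequality in the variable $u$ then yields
$$\sum_{i\leq 2}\bigl(\|\nab^i\beta\|_{L^\infty_uL^2_{\ub}L^2(S)}+\|\nab^i(\rhoc,\sigmac)\|_{L^\infty_{\ub}L^2_uL^2(S)}\bigr)\leq C(\mathcal O_0,\mathcal R_0)$$
upon choosing $\epsilon$ small in terms of $\mathcal O_0,\mathcal R_0,\Delta_4$. The main obstacle is the bookkeeping of the higher-derivative $\chibh\beta\beta$ term: at the $i=2$ level one must verify that whenever $\nab^2\chibh$ or $\nab^3\chibh$ shows up its bound from Proposition \ref{Ricci32} is paired with a factor that can absorb the linear growth in $\mathcal R[\betab]$ (already controlled by Proposition \ref{R1}), so that the Gronwall factor multiplying $\|\nab^i\beta\|^2$ remains a constant depending only on initial data and not on $\Delta_4$.
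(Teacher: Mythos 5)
Your overall strategy is the right one: close the energy inequality of Proposition \ref{ee2}, use the bounds on $\betab$ and $(\rhoc,\sigmac)$ already secured in Proposition \ref{R1} (crucially independent of $\Delta_4$), and absorb the dangerous contributions in the last error term by Gronwall in the long $u$-direction. However, you repeatedly claim an $\epsilon^{1/2}$ gain where none exists, and without it your final inequality does not close. For the third error term $\sum_i\nab^i(\rhoc,\sigmac)\cdot\psi\nab^{i_5}\psi_H\nab^{i_6}\psi_H$, placing the two $\psi_H$ factors in $L^2_{\ub}$-type norms and $(\rhoc,\sigmac)$ in $L^\infty_{\ub}L^2_uL^2(S)$ gives H\"older exponents $\tfrac12+\tfrac12+0=1$ in $\ub$: the $\ub$-integral is used up exactly and there is \emph{no} ``remaining $\ub$-integration'' to gain from. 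The resulting bound is $C(\mathcal O_0,\mathcal R_0)\sum_i\|\nab^i(\rhoc,\sigmac)\|_{L^\infty_{\ub}L^2_uL^2(S)}$ with an $O(1)$ constant. The same counting in your $\chi\,\beta\,\betab$ estimate (with $\chi=\chih$, whose $L^2_{\ub}L^\infty_uL^\infty(S)$ norm is $O(1)$, not $O(\epsilon^{1/2})$) gives $C(\mathcal O_0,\mathcal R_0)\sum_i\|\nab^i\beta\|_{L^\infty_uL^2_{\ub}L^2(S)}$, not $\epsilon^{1/2}\Delta_4$. Both are genuinely linear terms in the quantities being estimated and must be handled by Young's inequality, subtracting $\tfrac12(\sum_i\|\nab^i\beta\|^2+\sum_i\|\nab^i(\rhoc,\sigmac)\|^2)$ from both sides \emph{before} Gronwall; as written, your Gronwall inequality has error terms that are neither small nor absorbed.

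A secondary issue: the Codazzi substitutions $\nab\chibh\to\betab$ and $\nab\chih\to\beta$ leave untouched the $\omega,\trch$ and $\omegab,\trchb$ components of $\psi_H,\psi_{\Hb}$, which have no Codazzi equation. The paper instead splits the last error term by the number of angular derivatives landing on $(\trch,\psi_H)$ and controls the top-derivative factors directly through Proposition \ref{Ricci32}: $\tilde{\mathcal O}_{3,2}[\trch,\chih]\leq C(\mathcal O_0)(1+\mathcal R[\beta])$ produces precisely the $\int_0^u(1+\sum_i\|\nab^i\beta\|^2_{L^2_{\ub}L^2(S_{u',\cdot})})(\ldots)\,du'$ Gronwall structure you identified, while $\tilde{\mathcal O}_{3,2}[\trchb,\chibh]\leq C(\mathcal O_0,\mathcal R_0)(1+\mathcal R[\betab])$ is $O(1)$ once Proposition \ref{R1} is in hand, uniformly covering all the Ricci-coefficient components.
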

\begin{proof}
In order to prove this estimate, we heavily rely on the bounds that we have already derived in Proposition \ref{R1} for $\nab^i(\rhoc,\sigmac)$ and $\nab^i\betab$. In particular, we need to use the fact that those estimates are independent of $\Delta_4$. In order to effectively distinguish the norms for the different components of curvature, we introduce the following notation:
$$R_u[\beta]:=\sum_{i\leq 2}\sup_{0\leq u'\leq u}||\nab^i\beta||_{L^2_{\ub}L^2(S_{u',\ub})},$$
$$R_{\ub}[\rhoc,\sigmac]:=\sum_{i\leq 2}\sup_{0\leq \ub'\leq \ub}||\nab^i(\rhoc,\sigmac)||_{L^2_{u}L^2(S_{u,\ub'})},$$
$$R_u[\rhoc,\sigmac]:=\sum_{i\leq 2}\sup_{0\leq u'\leq u}||\nab^i(\rhoc,\sigmac)||_{L^2_{\ub}L^2(S_{u',\ub})},$$
$$R_{\ub}[\betab]:=\sum_{i\leq 2}\sup_{0\leq \ub'\leq \ub}||\nab^i\betab||_{L^2_{u}L^2(S_{u,\ub'})}.$$
We now proceed to proving the proposition by controlling the six error terms in Proposition \ref{ee2}. We start with the first, second, fourth and fifth terms:
$$||(\sum_{i\leq 2}\nabla^i(\rhoc,\sigmac))(\sum_{i_1+i_2+i_3\leq 2}\psi^{i_1}\nabla^{i_2}(\psi+\psi_H)\nabla^{i_3}(\rhoc,\sigmac))||_{L^1_uL^1_{\ub}L^1(S)}$$
and
$$||(\sum_{i\leq 2}\nabla^i(\rhoc,\sigmac))(\sum_{i_1+i_2+i_3+i_4\leq 3}\nabla^{i_1}\psi^{i_2}\nabla^{i_3}\psi\nabla^{i_4}(\psi+\psi_H))||_{L^1_uL^1_{\ub}L^1(S)}$$
and
$$||(\sum_{i\leq 2}\nabla^i\beta)(\sum_{i_1+i_2+i_3\leq 2}\psi^{i_1}\nabla^{i_2}\psi\nabla^{i_3}(\rhoc,\sigmac))||_{L^1_uL^1_{\ub}L^1(S)}$$
and
$$||(\sum_{i\leq 2}\nabla^i\beta)(\sum_{i_1+i_2+i_3\leq 1}\psi^{i_1}\nabla^{i_2}\rhoc\nabla^{i_3}(\rhoc,\sigmac))||_{L^1_uL^1_{\ub}L^1(S)}.$$
In these terms $\beta$ or $\psi_H$ appears at most once. Therefore, after applying Cauchy-Schwarz in $\ub$ and putting $\beta$ or $\psi_H$ in $L^2_{\ub}$, there is still an extra smallness constant $\epsilon^{\frac 12}$. These terms can be estimated in a similar fashion as in Proposition \ref{R1} by $C(\mathcal O_0,\mathcal R_0,\Delta_4)\epsilon^{\frac 12}$.
We then look at the last term in Proposition \ref{ee2}:
\begin{equation}\label{betaHH}
||(\sum_{i\leq 2}\nabla^i\beta)(\sum_{i_1+i_2+i_3+i_4\leq 3}\nabla^{i_1}\psi^{i_2}\nabla^{i_3}(\trchb+\psi_{\Hb})\nabla^{i_4}(\trch+\psi_H))||_{L^1_uL^1_{\ub}L^1(S)}.
\end{equation}
Among these terms, there are two possibilities: the case where $(\trch,\psi_H)$ has at least 2 derivatives and the case where $(\trch,\psi_H)$ has at most 1 derivative. For the term where $(\trch,\psi_H)$ has at least 2 derivatives, we have
\begin{equation*}
\begin{split}
&||(\sum_{i\leq 2}\nabla^i\beta)(\sum_{i_1+i_2+i_3+i_4\leq 3,2\leq i_3\leq 3}\nabla^{i_1}\psi^{i_2}\nabla^{i_3}(\trch,\psi_H)\nabla^{i_4}(\trchb,\psi_{\Hb}))||_{L^1_uL^1_{\ub}L^1(S)} \\
\leq &\int_0^u (\sum_{i\leq 2}||\nabla^i\beta||_{L^2_{\ub}L^2(S_{u',\ub})})(\sum_{i_1\leq 1}||\psi||^{i_1}_{L^\infty_uL^\infty_{\ub}L^\infty(S)})\\
&\quad\quad\quad\quad\quad\times(\sum_{2\leq i_2\leq 3}||\nabla^{i_2}(\trch,\psi_H)||_{L^2_{\ub}L^2(S_{u',\ub})})(\sum_{i_3\leq 1}||\nab^{i_3}(\trchb,\psi_{\Hb})||_{L^\infty_{\ub}L^4(S_{u',\ub})}) du'\\
\leq &\int_0^u C(\mathcal O_0,\mathcal R_0)(1+\mathcal R_{u'}[\beta]^2) (\sum_{i\leq 1}||\nab^i(\trchb,\psi_{\Hb})||_{L^\infty_{\ub}L^4(S_{u',\ub})}) du'\\
\end{split}
\end{equation*}
by Propositions \ref{Ricci} and \ref{Ricci32}. Notice that in the first inequality above, we have also used the Sobolev embedding theorems in Propositions \ref{L4} and \ref{Linfty}. For the term where $(\trch,\psi_H)$ has at most one derivative, notice that the estimate for $\nab^2(\trch,\psi_H)$ in $L^2$ in Proposition \ref{Ricci} depends only on initial data and the bound for $\nab^3(\trchb,\psi_{\Hb})$ in Proposition \ref{Ricci32} depends only on initial data and $\mathcal R_{\ub}[\betab]$. Thus,
\begin{equation*}
\begin{split}
&||(\sum_{i\leq 2}\nab^i\beta)(\sum_{i_1+i_2+i_3+i_4\leq 3,i_3\leq 1}\nabla^{i_1}\psi^{i_2}\nabla^{i_3}(\trch,\psi_H)\nabla^{i_4}(\trchb,\psi_{\Hb}))||_{L^1_uL^1_{\ub}L^1(S)} \\
\leq &C(\sum_{i_1\leq 2}||\nab^i\beta||_{L^2_uL^2_{\ub}L^2(S)})(\sum_{i_2\leq 2}\sum_{i_3\leq 3}||\nab^{i_3}\psi||_{L^\infty_uL^\infty_{\ub}L^2(S)}^{i_2})\\
&\quad\times(\sum_{i_4\leq 2}||\nabla^{i_4}(\trch,\psi_H)||_{L^2_{\ub}L^\infty_uL^2(S)})( \sum_{i_5\leq 3}||\nab^{i_5}(\trchb,\psi_{\Hb})||_{L^\infty_{\ub}L^2_uL^2(S)})\\
\leq &C(\mathcal O_0,\mathcal R_0)(1+\mathcal R_u[\beta])(1+\mathcal R_{\ub}[\betab]).
\end{split}
\end{equation*}
Therefore, (\ref{betaHH}) can be estimated by 
\begin{equation}\label{betaHH.est}
\begin{split}
&||(\sum_{i\leq 2}\nabla^i\beta)(\sum_{i_1+i_2+i_3+i_4\leq 3}\nabla^{i_1}\psi^{i_2}\nabla^{i_3}(\trchb+\psi_{\Hb})\nabla^{i_4}(\trch+\psi_H))||_{L^1_uL^1_{\ub}L^1(S)}\\
\leq &\int_0^u C(\mathcal O_0,\mathcal R_0)(1+\mathcal R_{u'}[\beta]^2) (\sum_{i\leq 1}||\nab^i(\trchb,\psi_{\Hb})||_{L^\infty_{\ub}L^4(S_{u',\ub})}) du'\\
&+C(\mathcal O_0,\mathcal R_0)(1+\mathcal R_u[\beta])(1+\mathcal R_{\ub}[\betab]).
\end{split}
\end{equation}
We note explicitly that it is important that we do not allow all terms of the type $\psi_H\psi_H\psi$ but only allow $\psi_H\psi_H\trchb$ since by Proposition \ref{Ricci32}, $\tilde{\mathcal O}_{3,2}[\trchb]$ can be controlled by a constant depending on initial data and $\mathcal R_{\ub}[\betab]$, but the bound for $\tilde{\mathcal O}_{3,2}[\eta,\etab]$ depends on $\mathcal R$. As we will see below, it is important that one of the factors in the last term in the estimate \eqref{betaHH} depends only on $\mathcal R_{\ub}[\betab]$ rather than $\mathcal R$, since $\mathcal R_{\ub}[\betab]$ has already been previously controlled in Proposition \ref{R1} by a constant depending only on the initial data.

Returning to estimating the error terms in Proposition \ref{ee2}, we are thus only left with the term
$$||(\sum_{i_1\leq 2}\nabla^{i_1}(\rhoc,\sigmac))(\sum_{i_2+i_3+i_4+i_5+i_6\leq 2}\nab^{i_2}\psi^{i_3}\nab^{i_4}\psi\nabla^{i_5}\psi_H\nab^{i_6}\psi_H)||_{L^1_uL^1_{\ub}L^1(S)},$$
i.e., the third of the six error terms in Proposition \ref{ee2}. Since $\psi_H$ does not enter with three derivatives, it can be estimated using Proposition \ref{Ricci} by
\begin{equation*}
\begin{split}
&||(\sum_{i_1\leq 2}\nabla^{i_1}(\rhoc,\sigmac))(\sum_{i_2+i_3+i_4+i_5+i_6\leq 2}\nab^{i_2}\psi^{i_3}\nab^{i_4}\psi\nabla^{i_5}\psi_H\nab^{i_6}\psi_H)||_{L^1_uL^1_{\ub}L^1(S)}\\
\leq &(\sum_{i_1\leq 2}||\nabla^{i_1}(\rhoc,\sigmac)||_{L^\infty_{\ub}L^2_uL^2(S)})(\sum_{i_2+i_3+i_4+i_5+i_6\leq 2}||\nab^{i_2}\psi^{i_3}\nab^{i_4}\psi\nabla^{i_5}\psi_H\nab^{i_6}\psi_H||_{L^1_{\ub}L^2_{u}L^2(S)})\\
\leq &(\sum_{i_1\leq 2}||\nabla^{i_1}(\rhoc,\sigmac)||_{L^\infty_{\ub}L^2_uL^2(S)})(\sum_{i_2\leq 2}\sum_{i_3\leq 3}||\nab^{i_2}\psi||_{L^\infty_u L^\infty_{\ub}L^2(S)}^{i_3})(\sum_{i_4\leq 2}||\nab^{i_4}\psi_H||_{L^2_{\ub}L^\infty_{u}L^2(S)}^2)\\
\leq &C(\mathcal O_0,\mathcal R_0)\mathcal R_{\ub}[\rhoc,\sigmac].
\end{split}
\end{equation*}
Therefore, we have
\begin{equation*}
\begin{split}
&\mathcal R_u[\beta]^2+\mathcal R_{\ub}[\rhoc,\sigmac]^2\\
\leq &C(\mathcal O_0,\mathcal R_0)(1+\epsilon^{\frac{1}{2}}C(\mathcal O_0,\mathcal R_0,\Delta_4)+\int_0^u (\mathcal R_{u'}[\beta])^2(\sum_{i\leq 1}||\nab^i\psi_{\Hb}||_{L^\infty_{\ub}L^4(S_{u',\ub})}) du')\\
&+C(\mathcal O_0,\mathcal R_0)(\mathcal R_{\ub}[\rhoc,\sigmac]+(1+\mathcal R_u[\beta])(1+\mathcal R_{\ub}[\betab])).
\end{split}
\end{equation*}
Applying Cauchy-Schwarz on the last two terms and absorbing $\frac 12(\mathcal R_u[\beta]^2+\mathcal R_{\ub}[\rhoc,\sigmac]^2)$ to the left hand side, we have
\begin{equation*}
\begin{split}
&\mathcal R_u[\beta]^2+\mathcal R_{\ub}[\rhoc,\sigmac]^2\\
\leq &C(\mathcal O_0,\mathcal R_0)(1+\epsilon^{\frac{1}{2}}C(\mathcal O_0,\mathcal R_0,\Delta_4)+\int_0^u (\mathcal R_{u'}[\beta])^2(\sum_{i\leq 1}||\nab^i\psi_{\Hb}||_{L^\infty_{\ub}L^4(S_{u',\ub})}) du'+\mathcal R_{\ub}[\betab]^2).
\end{split}
\end{equation*}
Gronwall's inequality implies
\begin{equation*}
\begin{split}
&\mathcal R_u[\beta]^2+\mathcal R_{\ub}[\rhoc,\sigmac]^2\\
\leq &C(\mathcal O_0,\mathcal R_0)(1+\epsilon^{\frac{1}{2}}C(\mathcal O_0,\mathcal R_0,\Delta_4)+\mathcal R_{\ub}[\betab]^2)\exp(\int_0^u(\sum_{i\leq 1}||\nab^i\psi_{\Hb}||_{L^\infty_{\ub}L^4(S_{u',\ub})})du').
\end{split}
\end{equation*}
By Proposition \ref{Ricci},
$$\exp(\int_0^u(\sum_{i\leq 1}||\nab^i\psi_{\Hb}||_{L^\infty_{\ub}L^4(S_{u',\ub})})du')\leq C(\mathcal O_0,\mathcal R_0).$$
By Proposition \ref{R1},
$$\mathcal R_{\ub}[\betab]^2\leq C(\mathcal O_0,\mathcal R_0).$$
Therefore,
\begin{equation*}
\begin{split}
\mathcal R_u[\beta]^2+\mathcal R_{\ub}[\rhoc,\sigmac]^2
\leq C(\mathcal O_0,\mathcal R_0)(1+\epsilon^{\frac{1}{2}}C(\mathcal O_0,\mathcal R_0,\Delta_4)).
\end{split}
\end{equation*}
Taking $\epsilon$ sufficiently small depending on $\mathcal O_0$, $\mathcal R_0$ and $\Delta_4$, we conclude that
$$\mathcal R_u[\beta]^2+\mathcal R_{\ub}[\rhoc,\sigmac]^2\leq C(\mathcal O_0,\mathcal R_0).$$
\end{proof}
Propositions \ref{R1} and \ref{R2} together imply
\begin{proposition}
There exists $\epsilon_0=(\mathcal O_0,\mathcal R_0)$ such that whenever $\epsilon\leq \epsilon_0$,
$$\mathcal R\leq C(\mathcal O_0,\mathcal R_0).$$
\end{proposition}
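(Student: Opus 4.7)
The plan is to close the bootstrap assumption (\ref{BA5}) by combining Propositions \ref{R1} and \ref{R2}. First I would note that by definition
\[
\mathcal R = \sum_{i\leq 2}\Bigl(\sum_{\Psi\in\{\beta,\rhoc,\sigmac\}}\sup_u\|\nabla^i\Psi\|_{L^2(H_u)}+\sum_{\Psi\in\{\rhoc,\sigmac,\betab\}}\sup_{\ub}\|\nabla^i\Psi\|_{L^2(\Hb_{\ub})}\Bigr),
\]
so that $\mathcal R$ decomposes exactly into the norm controlled in Proposition \ref{R1} (namely $\nabla^i(\rhoc,\sigmac)$ on $H_u$ and $\nabla^i\betab$ on $\Hb_{\ub}$) and the norm controlled in Proposition \ref{R2} (namely $\nabla^i\beta$ on $H_u$ and $\nabla^i(\rhoc,\sigmac)$ on $\Hb_{\ub}$). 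Crucially, the constants produced by Propositions \ref{R1} and \ref{R2} are of the form $C(\mathcal O_0,\mathcal R_0)$, \emph{independent} of the bootstrap constant $\Delta_4$; the only dependence on $\Delta_4$ appears through the threshold $\epsilon_0(\mathcal O_0,\mathcal R_0,\Delta_4)$ below which these propositions hold.

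The argument is then a standard bootstrap closure. Fix $\Delta_4$ to be any constant strictly larger than twice the sum of the two constants $C(\mathcal O_0,\mathcal R_0)$ appearing in Propositions \ref{R1} and \ref{R2}; this choice depends only on $\mathcal O_0$ and $\mathcal R_0$. With $\Delta_4$ so fixed, the thresholds $\epsilon_0(\mathcal O_0,\mathcal R_0,\Delta_4)$ from Propositions \ref{R1} and \ref{R2} become functions only of $\mathcal O_0$ and $\mathcal R_0$. Taking $\epsilon \leq \epsilon_0(\mathcal O_0,\mathcal R_0)$ sufficiently small (the minimum of the two thresholds together with the thresholds required for all preceding propositions to hold), Propositions \ref{R1} and \ref{R2} yield
\[
\mathcal R \leq C(\mathcal O_0,\mathcal R_0) < \tfrac{1}{2}\Delta_4,
\]
which strictly improves (\ref{BA5}). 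A continuity/connectedness argument in the existence region (using that $\mathcal R$ is continuous in $u,\ub$ and vanishes at the initial hypersurfaces up to the initial data contribution) then upgrades this to the claimed a priori estimate $\mathcal R \leq C(\mathcal O_0,\mathcal R_0)$ on the full region $0\leq u\leq I$, $0\leq \ub\leq \epsilon$.

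There is no real obstacle left in this step; all the analytic work has been done in Propositions \ref{R1} and \ref{R2}. The one point that must be highlighted is the consistency of smallness requirements: $\epsilon$ must simultaneously be small enough to apply the preliminary metric estimates of Section \ref{secbasic}, the Ricci coefficient estimates of Propositions \ref{Ricci}, \ref{Ricci32}, the curvature-on-spheres estimates of Proposition \ref{RS}, and Propositions \ref{R1}, \ref{R2}. Since each of these thresholds depends only on $\mathcal O_0$, $\mathcal R_0$ once $\Delta_4$ has been fixed as above (and $\Delta_0,\Delta_1,\Delta_2,\Delta_3$ have likewise been chosen in terms of $\mathcal O_0, \mathcal R_0$ as in the proofs of Propositions \ref{L4Ricci}, \ref{L2Ricci}, \ref{RS}, \ref{Ricci32}), the final $\epsilon_0$ depends only on $\mathcal O_0$ and $\mathcal R_0$, as claimed. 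Once $\mathcal R$ is controlled in this way, Propositions \ref{Ricci} and \ref{Ricci32} immediately yield the bounds $\mathcal O, \tilde{\mathcal O}_{3,2} \leq C'$, completing the proof of Theorem \ref{aprioriestimates}.
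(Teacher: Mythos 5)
Your proposal is correct and follows essentially the same route as the paper: fix $\Delta_4$ larger than the $\Delta_4$-independent constants $C(\mathcal O_0,\mathcal R_0)$ from Propositions \ref{R1} and \ref{R2}, note that $\Delta_4$ then depends only on $\mathcal O_0, \mathcal R_0$, and choose $\epsilon$ small enough to close the bootstrap (\ref{BA5}). You spell out the continuity/connectedness step and the consistency of the various smallness thresholds a bit more explicitly than the paper does, but the substance is identical.
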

\begin{proof}
Let 
$$\Delta_4 \gg C(\mathcal O_0,\mathcal R_0),$$
where $C(\mathcal O_0,\mathcal R_0)$ is taken to be the maximum of the bounds in Propositions \ref{R1}, and \ref{R2}. Hence, the choice of $\Delta_4$ depends only on $\mathcal O_0$ and $\mathcal R_0$. Thus, by Propositions \ref{R1}, and \ref{R2}, the bootstrap assumption (\ref{BA5}) can be improved by choosing $\epsilon$ sufficiently small depending on $\mathcal O_0$ and $\mathcal R_0$.
\end{proof}
This concludes the proof of Theorem \ref{aprioriestimates}.

\section{Nonlinear Interaction of Impulsive Gravitational Waves}

In this section, we return to the special case of the nonlinear interaction of impulsive gravitational waves, thus proving Theorem \ref{cgiwthm}. Recall in that setting we prescribe characteristic initial data such that on $H_0(0,\ub_*)$ (resp. $\Hb_0(0,u_*)$), $\chih$ (resp. $\chibh$) is smooth except on a 2-sphere $S_{0,\ub_s}$ (resp. $S_{u_s,0}$) where it has a jump discontinuity. Thus the curvature in the data has delta singularities supported on $S_{0,\ub_s}$ and $S_{u_s,0}$.

Such an initial data set can be constructed by solving a system of ODEs in a way similar to the construction of the initial data with one gravitational impulsive wave in \cite{LR}. Moreover, one can find a sequence of smooth characteristic data that converges to the data for the colliding impulsive gravitational waves. We refer the readers to \cite{LR} for more details.

With the given initial data, Theorem \ref{rdthmv2} implies that a unique spacetime solution $(\mathcal M, g)$ to the Einstein equations exists in $0\leq u\leq u_*$ and $0\leq \ub\leq \ub_*$. Moreover, using the a priori estimates established in Theorem \ref{aprioriestimates}, we can show that the sequence of smooth data described above gives rise to a sequence of smooth spacetimes that converges to $(\mathcal M, g)$.

In this section, we prove that in addition to the a priori estimates proved in Theorem \ref{aprioriestimates}, the colliding impulsive gravitational spacetime $(\mathcal M,g)$ possesses extra regularity properties as described in parts $(b), (c)$ of Theorem \ref{cgiwthm}. We give an outline of the remainder of the section:

{\bf Section \ref{cigw1}}: We show the first part of Theorem \ref{cgiwthm}$(c)$, i.e., that $\beta,\rho,\sigma,\betab$ can be defined in $L^2_u L^2_{\ub}L^2(S)$. This follows directly from the estimates in the proof of Theorem \ref{rdthmv2}. 

{\bf Section \ref{cigw2}}: We prove the second part of Theorem \ref{cgiwthm}$(c)$, showing that the solution is smooth away from $\Hb_{\ub_s}\cup H_{u_s}$.

{\bf Section \ref{cigw3}}: We establish Theorem \ref{cgiwthm}$(b)$. We define $\alpha$ and $\alphab$ in the colliding impulsive gravitational spacetime and show that they are measures with singular atoms supported on $\Hb_{\ub_s}$ and $H_{u_s}$ respectively. This shows that the singularities indeed propagate along the null hypersurfaces $H_{u_s}$ and $\Hb_{\ub_s}$.

\subsection{Control of the Regular Curvature Components}\label{cigw1}

\begin{proposition}
All the curvature components except $\alpha$ and $\alphab$ are in $L^2_u L^2_{\ub} L^2(S)$.
\end{proposition}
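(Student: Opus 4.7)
The plan is to derive the conclusion directly from the a priori estimates in Theorem \ref{aprioriestimates}, using the renormalization relations between the physical and renormalized curvature components. First, I would dispose of $\beta$ and $\betab$: the norm $\mathcal{R}$ controls $\|\beta\|_{L^\infty_u L^2_{\ub} L^2(S)}$ and $\|\betab\|_{L^\infty_{\ub} L^2_u L^2(S)}$, and since both $u \in [0,I]$ and $\ub \in [0,\epsilon]$ range over intervals of finite measure, integrating the remaining variable is essentially free and yields $\beta, \betab \in L^2_u L^2_{\ub} L^2(S)$.

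For $\rho$ and $\sigma$ I would exploit the definitions $\rhoc = \rho - \frac{1}{2}\chih\cdot\chibh$ and $\sigmac = \sigma + \frac{1}{2}\chih\wedge\chibh$ and treat the renormalized and quadratic pieces separately. The $\rhoc$ and $\sigmac$ pieces are already controlled in $L^\infty_u L^2_{\ub} L^2(S) \cap L^\infty_{\ub} L^2_u L^2(S)$ by $\mathcal{R}$, which suffices. The work is then to bound $\chih \cdot \chibh$ and $\chih \wedge \chibh$ in $L^2_u L^2_{\ub} L^2(S)$. I would use that $\mathcal{O}_{0,\infty}$ controls $\chih$ in $L^2_{\ub} L^\infty_u L^\infty(S)$ and $\chibh$ in $L^2_u L^\infty_{\ub} L^\infty(S)$, combine with a pointwise Cauchy–Schwarz in the indices, and convert from $L^\infty(S)$ to $L^2(S)$ using the bounded area of $S_{u,\ub}$ provided by Proposition \ref{area}. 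By Fubini one then obtains the mixed-norm bound
\begin{equation*}
\|\chih \cdot \chibh\|_{L^2_u L^2_{\ub} L^2(S)} \;\leq\; C\, \|\chih\|_{L^2_{\ub} L^\infty_u L^\infty(S)}\, \|\chibh\|_{L^2_u L^\infty_{\ub} L^\infty(S)} \;\leq\; C\, \mathcal{O}_{0,\infty}^{2},
\end{equation*}
with the same estimate valid for $\chih \wedge \chibh$.

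There is no real obstacle here — the proposition is essentially a readout of the norms already controlled by Theorem \ref{aprioriestimates}. The only conceptual point worth flagging is that neither $\chih$ nor $\chibh$ is individually in $L^\infty_u L^\infty_{\ub} L^\infty(S)$ (this is precisely the impulsive character of the waves), so one cannot simply pull one factor out in $L^\infty$; however, the asymmetric mixed norms on $\chih$ and $\chibh$ are designed so that the $L^2_{\ub}$ from one factor pairs with the $L^\infty_{\ub}$ from the other (and symmetrically for $u$), making the product integrable in $L^2_u L^2_{\ub}$. This is the same structural mechanism that drove the renormalized energy estimates throughout the paper, now applied one final time to convert the renormalized bounds back into bounds for the physical curvature components.
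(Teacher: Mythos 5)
Your proposal is correct and follows essentially the same route as the paper: dispose of $\beta$, $\betab$, $\rhoc$, $\sigmac$ directly from the $\mathcal{R}$ bounds, then reduce $\rho$, $\sigma$ to showing $\chih\chibh\in L^2_u L^2_{\ub}L^2(S)$ via the asymmetric mixed-norm pairing. The only cosmetic difference is that the paper uses an $L^4(S)\times L^4(S)\to L^2(S)$ H\"older estimate on the sphere, $\|\chih\chibh\|_{L^2_u L^2_{\ub} L^2(S)}\leq \|\chih\|_{L^\infty_u L^2_{\ub} L^4(S)}\|\chibh\|_{L^2_u L^\infty_{\ub} L^4(S)}$, whereas you take $L^\infty(S)$ bounds and invoke the bounded area of $S_{u,\ub}$; both rest on the same Fubini argument in $(u,\ub)$ and both are covered by the $\mathcal{O}$ norm.
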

\begin{proof}
It follows directly from the proof of Theorem \ref{rdthmv2} that $\beta,\rhoc,\sigmac,\betab\in L^2_u L^2_{\ub} L^2(S)$. It remains to show that $\rho, \sigma$ are in $L^2_u L^2_{\ub} L^2(S)$. Recalling the definition of $\rhoc$ and $\sigmac$, it suffices to show that $\chih\chibh$ is in $L^2_u L^2_{\ub} L^2(S)$. This follows from
$$||\chih\chibh||_{L^2_u L^2_{\ub} L^2(S)}\leq ||\chih||_{L^\infty_u L^2_{\ub} L^4(S)}||\chibh||_{L^2_u L^\infty_{\ub} L^4(S)}.$$
\end{proof}

\subsection{Smoothness of Spacetime away from the Two Singular Hypersurfaces}\label{cigw2}

In this subsection, we prove that in the case of two colliding impulsive gravitational waves, the spacetime is smooth away from the null hypersurfaces $H_{u_s}$ and $\Hb_{\ub_s}$. For $u<u_s$ or $\ub <\ub_s$, this follows from the result of \cite{LR}. We will therefore only prove the statement for $\{u> u_s\}\cap\{\ub >\ub_s\}$. 

\begin{proposition}\label{smoothness}
The unique solution to the vacuum Einstein equations for the initial data of colliding impulsive gravitational is smooth in $\{u> u_s\}\cap\{\ub >\ub_s\}$
\end{proposition}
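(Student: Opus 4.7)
The plan is to propagate smoothness from the smooth portions of the initial data into $\{u > u_s\} \cap \{\ub > \ub_s\}$ by iterating the same structural trick that underlies the a priori estimates of Theorem \ref{aprioriestimates}. Although the ``natural'' transport equation $\nab_4\chih = -\alpha - \trch\,\chih - 2\omega\,\chih$ has the singular source $\alpha$, the alternative equation for $\nab_3\chih$ from \eqref{null.str1},
$$
\nab_3\chih + \tfrac12\trchb\,\chih \;=\; \nab\hot\eta + 2\omegab\,\chih - \tfrac12\trch\,\chibh + \eta\hot\eta,
$$
contains neither $\alpha$ nor $\alphab$, and the symmetric $\nab_4\chibh$ equation is free of $\alphab$. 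Combined with the pointwise control of all Ricci coefficients and of the renormalized curvature components $\beta,\rhoc,\sigmac,\betab$ delivered by Theorem \ref{aprioriestimates}, this will let us show inductively that every $\nab_4^{\,j}\nab_3^{\,k}\nab^{\ell}$ derivative of every Ricci coefficient is smooth in the open set $\{u>u_s\}\cap\{\ub>\ub_s\}$.

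The first key step is to control $\nab_4\chih$ in $\{\ub>\ub_s\}$. First I would commute the displayed $\nab_3\chih$ equation with $\nab_4$, using the commutator identity from Proposition \ref{commuteeqn} and substituting every $\nab_4$ of a Ricci coefficient appearing on the right by its own null-structure equation --- chosen so that the new inhomogeneities only involve $\beta,\betab,\rhoc,\sigmac$ together with products of Ricci coefficients (in particular $\nab_4\chibh$ is to be expanded via its equation from \eqref{null.str1}, which avoids $\alphab$). The result is a transport equation of the schematic form
$$
\nab_3(\nab_4\chih) + \tfrac12\trchb\,\nab_4\chih - 2\omegab\,\nab_4\chih \;=\; F,
$$
where $F$ is uniformly bounded on every slab $\{\ub \geq \ub_s + \delta\}$. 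For fixed $\ub > \ub_s$ I would integrate in $u$ starting from $u=0$: the initial value $\nab_4\chih|_{H_0}$ is a tangential derivative of the smooth piece of $\chih|_{H_0\cap\{\ub>\ub_s\}}$, hence smooth, and Gronwall closes the bound. A symmetric argument, commuting the $\nab_4\chibh$ equation with $\nab_3$ and integrating in $\ub$ from $\ub=0$, controls $\nab_3\chibh$ in $\{u > u_s\}$.

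The next step is to iterate. For any mixed derivative $\nab_4^{\,j}\nab_3^{\,k}\nab^{\ell}$, one commutes the appropriate base transport equation --- the $\nab_3\chih$ equation when $j\geq 1$, the $\nab_4\chibh$ equation when $k\geq 1$, with $\nab$ commutation for $\ell \geq 1$ --- always using the substitutions above to eliminate any occurrence of $\alpha$ or $\alphab$ from the inhomogeneity. By induction on the total order of differentiation, the right-hand side at each step involves only objects already controlled at the previous step, and Gronwall along the chosen null direction closes the estimate. Running this scheme for every null structure equation in \eqref{null.str1}--\eqref{null.str3} yields smoothness of all the Ricci coefficients, and hence of the full metric in the $(u,\ub,\th^1,\th^2)$ coordinates, throughout $\{u>u_s\}\cap\{\ub>\ub_s\}$.

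The hard part will be the structural bookkeeping: at each step one must verify that the substitutions really do eliminate $\alpha$ and $\alphab$ from the source, and that one never inadvertently reintroduces them through a hidden $\nab_4\chih$ or $\nab_3\chibh$ factor whose regularity is not yet established in the current slab. The signature heuristic of Section \ref{secsign} together with the renormalized Bianchi system \eqref{eq:null.Bianchi2} make this essentially automatic; still, each new order of derivatives has to be checked by inspection. Once this is carried out, the conclusion follows.
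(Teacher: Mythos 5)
Your overall strategy is the right one and matches the paper's: use the $\nab_3\chih$ and $\nab_4\chibh$ equations (which are free of $\alpha$ and $\alphab$), commute with $\nab$, $\nab_3$, $\nab_4$ to generate higher derivatives, close with Gronwall, and iterate. However, there is a real gap in the starting point. You assert that Theorem \ref{aprioriestimates} provides ``pointwise control of all Ricci coefficients,'' and you then jump straight to controlling $\nab_4\chih$. But Theorem \ref{aprioriestimates} does \emph{not} give $L^\infty_uL^\infty_{\ub}$ control of $\chih$, $\omega$, $\chibh$, $\omegab$: it only yields mixed-norm bounds $\|\nab^i(\chih,\omega)\|_{L^2_{\ub}L^\infty_u L^p(S)}$ and $\|\nab^i(\chibh,\omegab)\|_{L^2_u L^\infty_{\ub} L^p(S)}$, which do not upgrade to pointwise bounds on a slab $\{\ub\geq\ub_s+\delta\}$ by themselves. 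This matters because the source $F$ in your commuted equation $\nab_3(\nab_4\chih)+\tfrac12\trchb\,\nab_4\chih-2\omegab\,\nab_4\chih=F$ contains products like $(\nab_4\omegab)\chih$, $\trch\,\nab_4\chibh$, etc., whose boundedness requires $\chih$, $\chibh$, $\omega$, $\omegab$ themselves (not just their $\nab_4$ or $\nab_3$ derivatives) to be $L^\infty$ on the slab. Without first upgrading the mixed-norm bounds, Gronwall cannot close uniformly in $\ub$.

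The missing step is exactly the ``base case'' $j=k=0$ of the paper's induction: before taking any $\nab_3$ or $\nab_4$ derivatives, one integrates the undifferentiated $\nab_3\chih$ equation (whose source involves only $\psi$, $\rhoc$, and $\chibh$) from the smooth data on $H_0\cap\{\ub>\ub_s\}$ to obtain $\nab^i\chih\in L^\infty_u L^\infty_{\ub}L^2(S)$ for all $i$ in the region $\ub>\ub_s$; symmetrically one integrates $\nab_4\chibh$ from $\Hb_0\cap\{u>u_s\}$ to control $\chibh$ for $u>u_s$. Only then does one get $\rho,\sigma$ pointwise (from $\rhoc,\sigmac$ and $\chih\cdot\chibh$), and then $\omega,\omegab$ by integrating the $\nab_3\omega$ and $\nab_4\omegab$ equations. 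Once this base case is in place, your induction scheme --- commuting in $\nab_4$ and $\nab_3$ one at a time, substituting the nonsingular null-structure/Bianchi equations, and closing via Gronwall in the appropriate null direction --- does go through, and is precisely the mechanism of the paper's proof.
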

\begin{proof}
We establish estimates for all derivatives of all the Ricci coefficients. We prove by induction on $j$, $k$ that $\nab^i\nab^j_3\nab^k_4(\psi,\psi_H,\psi_{\Hb})$ and $\nab^i\nab^j_3\nab^k_4\rho$ are in $L^\infty_u L^\infty_{\ub}L^2(S)$ for all $i,j,k$. This then implies that all the Ricci coefficients and curvature components\footnote{Notice that all curvature components except for $\rho$ can be expressed as a combination of the Ricci coefficients and their first derivatives by virtue of the null structure equations and elliptic equations \eqref{null.str1}, \eqref{null.str2} and \eqref{null.str3}.} are in $C^\infty$.

\noindent{\bf 1. Base case: $j=k=0$}

\noindent{\bf 1(a). Estimates for $\psi$ and $\rhoc$}

For $i\leq 2$, $\nab^i\psi$ is in $L^\infty_uL^\infty_{\ub}L^2(S)$ by Theorem \ref{aprioriestimates}. Using exactly the same arguments but allowing more angular derivatives in the initial data, it is easy to show that $\nab^i\psi$ is in $L^\infty_uL^\infty_{\ub}L^2(S)$ for all $i$. 

Similarly, an adaptation of the arguments in Theorem \ref{aprioriestimates} imply that $\nab^i\rhoc$ are in $L^\infty_uL^\infty_{\ub}L^2(S)$ for all $i$. 

\noindent{\bf 1(b). Estimates for $\psi_H$, $\psi_{\Hb}$ and $\rho$}

The a priori estimates given by Theorem \ref{aprioriestimates} only imply that for $i\leq 2$,
\begin{equation}\label{psiHspace}
\nab^i\psi_H\in L^2_{\ub}L^\infty_uL^2(S)\mbox{ and }\nab^i\psi_{\Hb}\in L^2_{u}L^\infty_{\ub}L^2(S).
\end{equation}
Applying a simple modification to the proof of Theorem \ref{aprioriestimates} with more angular derivatives in the initial data, it is easy to show that (\ref{psiHspace}) holds for all $i\geq 0$. In order to improve this to a bound in $L^\infty_{\ub}L^\infty_uL^2(S)$, we need to use the fact that we are away from the hypersurfaces $H_{u_s}$ and $\Hb_{\ub_s}$.

We first prove estimates for $\chih$. Consider the equation
\begin{equation}\label{chihtransport}
\nab_3\chih+\frac 1 2 \trchb \chih=\nab\widehat{\otimes} \eta+2\omegab \chih-\frac 12 \trch \chibh +\eta\widehat{\otimes} \eta.
\end{equation}
Since we know that the initial data on $H_0\cap\{\ub>\ub_s\}$ are smooth, $\nab^i\chih$ is in $L^\infty_{\ub}L^2(S_{0,\ub})$. Using the control that has already been obtained and Gronwall's inequality, we integrate (\ref{chihtransport}) to show that $\nab^i\chih$ is in $L^\infty_uL^\infty_{\ub}L^2(S)$ for $\ub>\ub_s$ for all $i$. 

Similarly, using 
$$\nab_4\chibh +\frac 1 2 \trch \chibh=\nab\widehat{\otimes} \etab+2\omega \chibh-\frac 12 \trchb \chih +\etab\widehat{\otimes} \etab,$$
we show that $\nab^i\chibh$ is in $L^\infty_uL^\infty_{\ub}L^2(S)$ for $u>u_s$ for all $i$.

The estimates for $\rhoc$ together with the bounds for $\chih$ and $\chibh$ imply that $\nab^i\rho$ is in $L^\infty_uL^\infty_{\ub}L^2(S)$. Now using
$$\nabla_3\omega =2\omega\omegab-\eta\cdot\etab+\f 12|\etab|^2+\frac 12 \rho,$$
and
$$\nabla_4\omegab=2\omega\omegab- \eta\cdot\etab+\f 12|\eta|^2+\frac 12 \rho,$$
we show that $\nab^i\omega$ and $\nab^i\omegab$ are in $L^\infty_uL^\infty_{\ub}L^2(S)$ for all $i$.

\noindent{\bf 2. Induction Step}

We now proceed to the induction step. Assume $\nab^i\nab^j_3\nab^k_4(\psi,\psi_H,\psi_{\Hb})$ and $\nab^i\nab^j_3\nab^k_4\rho$ are in $L^\infty_u L^\infty_{\ub}L^2(S)$ for all $i$, for all $j\leq J$ and for all $k\leq K$ in the region $\{u>u_s\}\cap\{\ub>\ub_s\}$. We will show below that $\nab^i\nab^{J+1}_3\nab^k_4(\psi,\psi_H,\psi_{\Hb})$ and $\nab^i\nab^{J+1}_3\nab^k_4\rho$ are in $L^\infty_u L^\infty_{\ub}L^2(S)$ for all $i$ and for all $k\leq K$ in $\{u>u_s\}\cap\{\ub>\ub_s\}$. A similar argument then shows that $\nab^i\nab^j_3\nab^{K+1}_4(\psi,\psi_H,\psi_{\Hb})$ and $\nab^i\nab^j_3\nab^{K+1}_4\rho$ are in $L^\infty_u L^\infty_{\ub}L^2(S)$ for all $i$ and for all $j\leq J$ in $\{u>u_s\}\cap\{\ub>\ub_s\}$. This completes the induction step and proves the proposition.

We first estimate $\nab^i\nab^{J+1}_3\nab^k_4\rhoc$. Notice that by signature considerations (see Section \ref{sec.signature}), we have the following schematic expression for the commutator $[\nab_3,\nab_4]$:
\begin{equation}\label{34.commute}
[\nab_3,\nab_4]\phi=(\psi,\psi_{\Hb})\nab_4\phi+(\psi,\psi_H)\nab_3\phi+\psi\nab\phi+(\rho,\sigma)\phi+(\psi,\psi_H)(\psi,\psi_{\Hb})\phi.
\end{equation}
Using the Bianchi equation for $\nab_3\rhoc$, commuting $k\leq K$ times with $\nab_4$ and differentiating $J$ times with $\nab_3$ and $i$ times with $\nab$, we obtain
$$\nab^i\nab^{J+1}_3\nab^k_4\rhoc=...,$$
where $...$ on the right hand side denotes terms that have at most $J$ $\nab_3$ derivatives on $\rhoc$ or $(\psi,\psi_H,\psi_{\Hb})$. They are therefore bounded\footnote{Note that the terms on the right hand side may have more than $i$ angular derivatives. Nevertheless, the induction hypothesis allows us to control an arbitrary number of angular derivatives.} in $L^\infty_uL^\infty_{\ub}L^2(S)$ by the induction hypothesis. Hence we obtain
\begin{equation}\label{HO.rhoc}
||\nab^i\nab^{J+1}_3\nab^k_4\rhoc||_{L^\infty_uL^\infty_{\ub}L^2(S)}\leq C_{i,k}
\end{equation}
for every $i$ and every $k\leq K$ in the region $\{u>u_s\}\cap\{\ub>\ub_s\}$.

To proceed, we will consider separately the cases where $\psi$ satisfies a $\nab_3\psi$ equation and where $\psi$ satisfies a $\nab_4\psi$ equation. We introduce a notation such that we denote the $\psi$'s in the first case by $\psi_3$ and those in the second case by $\psi_4$. More precisely, we use the notation
$$\psi_3\in\{\etab,\trch,\trchb\},\quad \psi_4\in\{\eta,\trch,\trchb\}.$$

For $\psi_H$ and $\psi_3$, we commute the equations $k\leq K$ times with $\nab_4$ and then differentiate the equation $J$ times by $\nab_3$ and $i$ times with $\nab$. 
As a consequence, we obtain
$$\nab^i\nab^{J+1}_3\nab^k_4(\psi_3,\psi_H)=...$$
where $...$ on the right hand side represents terms that have at most $J$ $\nab_3$'s. As in the estimates for $\nab^i\nab^{J+1}_3\nab^k_4\rhoc$, these terms can therefore be controlled in $L^\infty_uL^\infty_{\ub}L^2(S)$ by the induction hypothesis.  Thus we can estimate these terms directly to show that 
\begin{equation}\label{HO.3}
||\nab^i\nab^{J+1}_3\nab^k_4(\psi_3,\psi_H)||_{L^\infty_uL^\infty_{\ub}L^2(S)}\leq C_{i,k}
\end{equation}
for all $i$ and all $k\leq K$ in the region $\{u>u_s\}\cap\{\ub>\ub_s\}$.

For $\psi_{\Hb}$ and $\psi_4$, we commute the equations $i$ times with $\nab$, $J+1$ times with $\nab_3$ and $k\leq K$ times with $\nab_4$. Here, we use both the schematic commutation formula for $[\nab_4,\nab^i]$ in Proposition \ref{commute.prop} and also the schematic expression for $[\nab_3,\nab_4]$. Then we have
\begin{equation*}
\begin{split}
&\nab_4(\nab^i\nab^{J+1}_3\nab^k_4(\psi_4,\psi_{\Hb}))\\
=&\sum_{\substack{i_1+i_2+i_3=i\\k_1+k_2+k_3=k}}\nab^{i_1}\nab_4^{k_1}(\psi,\psi_H,\psi_{\Hb})^{i_2+k_2+1}\nab^{i_3}\nab^{J+1}_3\nab^{k_3}_4(\psi,\psi_{\Hb})\\
&+\sum_{\substack{i_1+i_2+i_3=i\\k_1+k_2+k_3=k}}\nab^{i_1}\nab_4^{k_1}(\psi,\psi_H,\psi_{\Hb})^{i_2+k_2}\nab^{i_3}\nab^{J+1}_3\nab^{k_3}_4\rhoc+...
\end{split}
\end{equation*}
where $...$ are again terms that can be bounded in $L^\infty_uL^\infty_{\ub}L^2(S)$ using the induction hypothesis. Notice that the second term on the right hand side can be estimated by \eqref{HO.rhoc}. Moreover, by assumption, the initial data on $\Hb_0$ for $\nab^i\nab^{J+1}_3\nab^k_4(\psi_4,\psi_{\Hb})$ are bounded in $L^\infty_uL^2(S)$ for $u>u_s$ Thus, by Gronwall's inequality, we obtain
\begin{equation}\label{HO.4}
||\nab^i\nab^{J+1}_3\nab^k_4(\psi_4,\psi_{\Hb})||_{L^\infty_u L^\infty_{\ub}L^2(S)}\leq C_{i,k}
\end{equation} 
for all $i$ and all $k\leq K$ in $\{u>u_s\}\cap\{\ub>\ub_s\}$. Finally, combining \eqref{HO.rhoc}, \eqref{HO.3} and \eqref{HO.4}, and using the formula $\rhoc=\rho-\frac 12 \chih\cdot\chibh$, we obtain
\begin{equation}\label{HO.rho}
||\nab^i\nab^{J+1}_3\nab^k_4\rho||_{L^\infty_uL^\infty_{\ub}L^2(S)}\leq C_{i,k}
\end{equation}
for every $i$ and every $k\leq K$ in $\{u>u_s\}\cap\{\ub>\ub_s\}$. \eqref{HO.3}, \eqref{HO.4} and \eqref{HO.rho} together imply that in the region $\{u>u_s\}\cap\{\ub>\ub_s\}$, $\nab^i\nab^{J+1}_3\nab^k_4(\psi,\psi_H,\psi_{\Hb})$ and $\nab^i\nab^{J+1}_3\nab^k_4\rho$ are in $L^\infty_u L^\infty_{\ub}L^2(S)$ for all $i$ and for all $k\leq K$, as desired.
\end{proof}

\subsection{Propagation of Singularities}\label{cigw3}

We first show that $\alpha$ and $\alphab$ can be defined as measures. We take the null structure equations
\begin{equation}\label{alphadef}
\alpha=-\nab_4\chih-\trch\chibh-2\omega\chih,
\end{equation}
\begin{equation*}
\alphab=-\nab_3\chibh-\trchb\chibh-2\omegab\chibh
\end{equation*}
as the definitions of $\alpha$ and $\alphab$. In view of the fact $\chih$ and $\chibh$ are not differentiable, $\alpha$ and $\alphab$ cannot be defined as functions. Nevertheless, we will show that they can be defined as measures. By (\ref{alphadef}), if $\alpha$ is smooth, for each component of $\alpha$ with respect to the coordinate vector fields, we have
$$\int_0^{\ub} \alpha(u,\ub',\vartheta) d\ub'=\int_0^{\ub} (\Omega^{-1}\frac{\partial}{\partial\ub}\chih+\trch\chih+2\omega\chih)(u,\ub',\vartheta) d\ub'.$$
Integrating by parts and using 
$$\frac{\partial}{\partial\ub}\Omega^{-1}=2\omega,$$
we derive
$$\int_0^{\ub} \alpha(u,\ub',\vartheta) d\ub'=(\Omega^{-1}\chih)(u,\ub,\vartheta)-(\Omega^{-1}\chih)(u,\ub=0,\vartheta)+\int_0^{\ub} (\trch\chih)(u,\ub',\vartheta) d\ub'.$$
Returning to the setting of colliding impulsive gravitational wave, for every $\ub\neq\ub_s$, the right hand side is well-defined. For each $u$, $\vartheta\in \mathbb S^2$, we define $\alpha$ as a measure such that 
$$\alpha([0,\ub))=(\Omega^{-1}\chih)(u,\ub,\vartheta)-(\Omega^{-1}\chih)(u,\ub=0,\vartheta)+\int_0^{\ub} (\trch\chih)(u,\ub',\vartheta) d\ub' \quad\mbox{for }\ub\neq\ub_s.$$
By continuity, we have
$$\alpha([0,\ub_s))=\lim_{\ub\to\ub_s^-}(\Omega^{-1}\chih)(u,\ub,\vartheta)-(\Omega^{-1}\chih)(u,\ub=0,\vartheta)+\int_0^{\ub_s} (\trch\chih)(u,\ub',\vartheta) d\ub' .$$
This defines $\alpha$ as a measure. 
Similarly, for each $u$, $\vartheta\in \mathbb S^2$, we define $\alphab$ to be a measure by
\begin{equation*}
\begin{split}
\alphab([0,u))
=(\Omega^{-1}\chibh)(u,\ub,\vartheta)-(\Omega^{-1}\chibh)(u=0,\ub,\vartheta)+\int_0^{u} (\Omega^{-1}b^A\frac{\partial}{\partial\th^A}\chibh+\trchb\chibh)(u',\ub,\vartheta) du',
\end{split}
\end{equation*}
for $u\neq u_s$. By continuity
\begin{equation*}
\begin{split}
&\alphab([0,u_s))\\
=&\lim_{u\to u_s^-}(\Omega^{-1}\chibh)(u,\ub,\vartheta)-(\Omega^{-1}\chibh)(u=0,\ub,\vartheta)+\int_0^{u_s} (\Omega^{-1}b^A\frac{\partial}{\partial\th^A}\chibh+\trchb\chibh)(u',\ub,\vartheta) du'.
\end{split}
\end{equation*}

\begin{remark}
If we take a sequence of smooth initial data converging to the data for nonlinearly interacting impulsive gravitational waves, it can be shown that in the spacetimes $(\mathcal M_n,g_n)$ arising from these data are smooth and $\alpha_n \to \alpha$, $\alphab_n \to\alphab$ weakly, where $\alpha$ and $\alphab$ are as defined above. We refer the readers to \cite{LR} for details in the case of one impulsive gravitational wave.
\end{remark}

\begin{proposition}\label{chihdiscont}
$\chih$ is discontinuous across $\ub=\ub_s$. Similarly, $\chibh$ is discontinuous across $u=u_s$.
\end{proposition}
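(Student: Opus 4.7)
\medskip

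\noindent\textbf{Proof Proposal.} The strategy is to identify a linear homogeneous transport equation governing the jump of $\chih$ along the generators of $\Hb_{\ub_s}$ (and symmetrically for $\chibh$ along $H_{u_s}$), and then conclude by Gronwall that a nonzero initial jump propagates to a nonzero jump everywhere on the singular hypersurface.

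First, I would record that the one-sided limits of $\chih$ across $\Hb_{\ub_s}$ exist. By Proposition \ref{smoothness} (together with \cite{LR} and standard well-posedness in the regions not touched by either singular hypersurface), the spacetime is smooth on each connected component of the complement of $\Hb_{\ub_s}\cup H_{u_s}$. Fixing any $u\neq u_s$, $\chih$ extends continuously up to $\Hb_{\ub_s}$ from both sides, so the limits $\chih_{\pm}(u,\vartheta):=\lim_{\ub\to\ub_s^{\pm}}\chih(u,\ub,\vartheta)$ and the jump $[\chih]:=\chih_{+}-\chih_{-}$ are well defined on $\Hb_{\ub_s}$.

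Next, consider the null structure equation
$$\nab_3\chih+\tfrac 12 \trchb\,\chih-2\omegab\,\chih=\nab\hot\eta-\tfrac 12\trch\,\chibh+\eta\hot\eta.$$
Crucially, every factor appearing on the right-hand side, as well as every coefficient multiplying $\chih$ on the left, belongs to the schematic classes $\psi$ or $\psi_{\Hb}$ listed in Section \ref{secsche}. These quantities (together with $\nab\hot\eta$) satisfy the stronger estimates proved in Theorem \ref{aprioriestimates} and the regularity improvements of Proposition \ref{smoothness}, and in particular they are continuous across $\Hb_{\ub_s}$. Evaluating the above equation on each side of $\Hb_{\ub_s}$ and subtracting yields the homogeneous transport equation
$$\nab_3[\chih]+\tfrac 12\trchb\,[\chih]-2\omegab\,[\chih]=0$$
along the null generators of $\Hb_{\ub_s}$. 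The hypothesis on the data on $H_0$ gives the initial condition $[\chih](u=0,\vartheta)\not\equiv 0$, and solving the ODE along each generator,
$$[\chih](u,\vartheta)=[\chih](0,\vartheta)\exp\left(\int_0^u \bigl(2\omegab-\tfrac 12\trchb\bigr)(u',\ub_s,\vartheta)\,du'\right),$$
the exponential factor being bounded above and below by the $L^\infty$ control on $\omegab$ and $\trchb$ inherited from Theorem \ref{aprioriestimates}. Hence $[\chih]$ does not vanish on $\Hb_{\ub_s}$, i.e., $\chih$ is discontinuous across $\Hb_{\ub_s}$. The discontinuity of $\chibh$ across $H_{u_s}$ is proved by the exactly symmetric argument starting from the $\nab_4\chibh$ equation; here all right-hand side terms lie in the classes $\psi$ or $\psi_H$, which are continuous across $H_{u_s}$, yielding the analogous linear $\nab_4$-transport equation for $[\chibh]$ along the generators of $H_{u_s}$.

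The main obstacle is justifying that the right-hand side of the $\nab_3\chih$ equation (respectively $\nab_4\chibh$) is genuinely continuous across $\Hb_{\ub_s}$ (respectively $H_{u_s}$) in a sense strong enough to pass to the jump. This is precisely where one must invoke the structural point emphasized in Section \ref{secsche}: no $\psi_H$ (respectively $\psi_{\Hb}$) appears in this equation, a fact which does not follow from signature considerations alone but from the explicit form of the null structure equations. Once this observation is combined with the regularity statements extracted in the proof of Proposition \ref{smoothness}, the jump equation is rigorously obtained and the argument closes.
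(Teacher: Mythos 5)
Your proof follows essentially the same approach as the paper: both exploit the $\nab_3\chih$ null structure equation as a transport equation along the generators of $\Hb_{\ub_s}$, relying on the structural fact that neither $\alpha$ nor any $\psi_H$-type coefficient appears in that equation. The paper phrases this as ``integrate the transport equation in the $\tilde{u}=u-b_A\th^A$ coordinate and use the estimates'' after a coordinate change straightening the characteristics, while you make the underlying mechanism explicit by writing down the homogeneous jump equation. These are the same argument in two presentations: the paper's one-sided ODE representation formula $\chih=e^{-A}\bigl(\chih|_{H_0}+\int e^{A}(\text{RHS})\bigr)$, with RHS and $A$ continuous in $\ub$, yields exactly your exponential propagation of $[\chih]$.

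The one soft spot is the justification you give for the continuity, across $\Hb_{\ub_s}$, of $\trchb$, $\omegab$, $\nab\hot\eta$, $\trch$, $\chibh$, and of the Christoffel-type coefficients hidden inside $\nab_3$. You appeal to Proposition \ref{smoothness}, but that proposition gives smoothness only in the \emph{open} region $\{u>u_s\}\cap\{\ub>\ub_s\}$ and on the other side by previous work; it says nothing about limits agreeing as $\ub\to\ub_s^\pm$. The continuity you need is instead propagated in the $\ub$-direction: each of those quantities (and its angular derivatives) satisfies a $\nab_4$-transport equation whose source involves at worst $\chih$ or $\rhoc$, which are bounded (indeed $L^\infty_{\ub}L^\infty(S)$ or $L^2_{\ub}$) near $\ub_s$, so integration in $\ub$ gives Lipschitz, hence continuous, dependence across $\ub_s$ — not because $\psi_H$ is absent from the $\nab_3\chih$ equation, but because $\psi$, $\psi_{\Hb}$ and $\nab\eta$ have $\nab_4$ derivatives that stay integrable through $\ub_s$. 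This is a short additional argument that closes the gap; the absence of $\psi_H$ on the right of the $\nab_3\chih$ equation is what makes the \emph{jump} of the right-hand side vanish once you have that continuity, so your structural remark is on target, just not a substitute for the $\nab_4$-propagation step.
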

\begin{proof}
We focus on the proof for $\chih$. The proof for $\chibh$ is similar. Consider the equation.
\begin{equation}\label{chihtranscoord}
\nab_3\chih+\frac 1 2 \trchb \chih-2\omegab \chih=\nab\widehat{\otimes} \eta-\frac 12 \trch \chibh +\eta\widehat{\otimes} \eta.
\end{equation}
For the initial data, $\chih(\tilde{u}_0,\ub,\theta)$ is smooth for $\ub\neq\ub_s$ and has a jump discontinuity for $\ub=\ub_s$. On the other hand, the right hand side is continuous by the bounds that we have obtained. Moreover, the vector field $e_3$, the connection $\nab_3$, as well as the connection coefficients $\trchb$ and $\omegab$ are also continuous. The conclusion thus follows from integrating (\ref{chihtranscoord}).
\end{proof}

Finally, we show that $\alpha$ (resp. $\alphab$) has a delta singularity on the incoming null hypersurface $\Hb_{\ub_s}$ (resp. outgoing hypersurface $H_{u_s}$).
\begin{proposition}
$\alpha$ can be decomposed into
$$\alpha=\delta(\ub_s)\alpha_s+\alpha_r,$$
where $\delta(\ub_s)$ is the scalar delta function supported on the null hypersurface $\Hb_{\ub_s}$, $\alpha_s=\alpha_s(u,\vartheta)\neq 0$ belongs to $L^2_uL^2(S)$ and $\alpha_r$ belongs to $L^\infty_uL^\infty_{\ub}L^2(S)$.

Similarly, $\alphab$ can be decomposed into
$$\alphab=\delta(u_s)\alphab_s+\alphab_r,$$
where $\delta(u_s)$ is the scalar delta function supported on the null hypersurface $H_{u_s}$, $\alphab_s=\alphab_s(\ub,\vartheta)\neq 0$ belongs to $L^2_{\ub}L^2(S)$ and $\alphab_r$ belongs to $L^\infty_uL^\infty_{\ub}L^2(S)$.
\end{proposition}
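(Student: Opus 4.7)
The plan is to read off the singular part of $\alpha$ directly from the null structure identity
$$\alpha = -\nab_4\chih - \trch\,\chih - 2\omega\chih$$
(and the analogous identity for $\alphab$) that serves here as the definition of $\alpha$. The two algebraic terms $\trch\,\chih$ and $2\omega\chih$ are bounded multiples of $\chih$ that will be absorbed into the regular remainder, so the whole singular content of $\alpha$ must come from $\nab_4\chih = \Omega^{-1}\partial_{\ub}\chih$, which is singular across $\Hb_{\ub_s}$ precisely because of the jump of $\chih$ established in Proposition \ref{chihdiscont}. I will describe only the decomposition of $\alpha$; that of $\alphab$ is entirely symmetric after interchanging $e_3\leftrightarrow e_4$ and $u\leftrightarrow\ub$.

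First I would split $\chih$ into a regular part and a pure-jump part. Setting
$$J(u,\vartheta) := \lim_{\ub\to\ub_s^+}\chih(u,\ub,\vartheta)-\lim_{\ub\to\ub_s^-}\chih(u,\ub,\vartheta), \qquad \chih^{(r)} := \chih - \mathbbm{1}_{\{\ub\geq\ub_s\}}\,J,$$
$\chih^{(r)}$ is continuous across $\Hb_{\ub_s}$. By Proposition \ref{smoothness} applied separately in $\{\ub>\ub_s\}$ and $\{\ub<\ub_s\}$, together with the one-wave smoothness result of \cite{LR} in the region $\{u<u_s\}$, the restriction of $\chih^{(r)}$ to either side of $\Hb_{\ub_s}$ is smooth up to the boundary, so in particular $\partial_{\ub}\chih^{(r)}$ belongs to $L^\infty_uL^\infty_{\ub}L^2(S)$. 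Taking the distributional derivative then gives
$$\nab_4\chih = \Omega^{-1}\partial_{\ub}\chih^{(r)} + \delta(\ub-\ub_s)\,\Omega^{-1}(u,\ub_s,\vartheta)\,J(u,\vartheta).$$

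Substituting this into the definition of $\alpha$ I would set
$$\alpha_s(u,\vartheta) := -\Omega^{-1}(u,\ub_s,\vartheta)\,J(u,\vartheta),\qquad \alpha_r := -\Omega^{-1}\partial_{\ub}\chih^{(r)} - \trch\,\chih - 2\omega\chih,$$
so that $\alpha = \delta(\ub_s)\alpha_s + \alpha_r$ in the sense of measures. The membership $\alpha_r\in L^2_uL^2_{\ub}L^2(S)$ follows from the $L^\infty_uL^\infty_{\ub}L^2(S)$ bound above on $\partial_{\ub}\chih^{(r)}$ together with the bounds on $\chih$, $\trch$ and $\omega$ supplied by Theorem \ref{aprioriestimates} and Proposition \ref{smoothness}; the membership $\alpha_s\in L^2_uL^2(S)$ follows from the uniform control of $\Omega$ and of the two one-sided traces of $\chih$ on $\Hb_{\ub_s}$.

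The main obstacle, and the only genuinely non-routine step, is to show that $\alpha_s\not\equiv 0$, i.e.\ that the jump $J$ does not vanish identically. For this I would subtract equation \eqref{chihtranscoord} evaluated at $\ub=\ub_s^+$ from the same equation at $\ub=\ub_s^-$. Every quantity appearing either as a coefficient of $\chih$ on the left or as a source on the right--namely $\trchb$, $\omegab$, $\gamma^{-1}\chib$, $\nab\hot\eta$, $\trch\,\chibh$ and $\eta\hot\eta$--is continuous across $\Hb_{\ub_s}$, since none of these has $\alpha$ or $\alphab$ as a source in its own transport equation and each is smooth on either side of $\Hb_{\ub_s}$. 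Subtraction therefore produces a homogeneous linear transport equation of the form
$$\Omega^{-1}\partial_{\tilde u}J = \left(2\omegab-\tfrac12\trchb-\Omega^{-1}\gamma^{-1}\chib\right)J$$
along $\Hb_{\ub_s}$, with bounded coefficients and with initial value $J(\tilde u_0,\vartheta)$ equal to the prescribed jump of $\chih$ at $S_{0,\ub_s}$, which is nonzero by assumption. Gronwall's inequality then forces $J(u,\vartheta)\neq 0$ on the open subset of $\Hb_{\ub_s}$ where the initial jump is nonzero, so $\alpha_s\not\equiv 0$. Applying the same procedure to $\chibh$ across $H_{u_s}$ yields the corresponding decomposition of $\alphab$.
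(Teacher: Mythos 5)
Your proposal is correct and follows essentially the same route as the paper: both read off the singular part of $\alpha$ from the relation $\alpha=-\nab_4\chih-\trch\chih-2\omega\chih$ and identify $\alpha_s$ (up to a sign convention) with the jump of $\Omega^{-1}\chih$ across $\Hb_{\ub_s}$, use Proposition \ref{smoothness} and \cite{LR} to place the regular remainder in $L^2_uL^2_{\ub}L^2(S)$, and trace the non-vanishing of $\alpha_s$ back to the persistence of the jump of $\chih$ under the $\nab_3$ transport equation. The only genuine (but minor) divergence is in the way that persistence is established: the paper invokes its Proposition \ref{chihdiscont}, which integrates the $\nab_3\chih$ equation in the adapted $\tilde u$ coordinate to show $\chih$ remains discontinuous, whereas you subtract the equation evaluated at $\ub_s^{\pm}$ to obtain a homogeneous linear ODE for the jump $J$ itself and then apply Gronwall; these are two phrasings of one and the same argument. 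Your explicit splitting $\chih=\chih^{(r)}+\mathbbm 1_{\{\ub\geq\ub_s\}}J$ makes the decomposition of $\alpha$ slightly more transparent than the paper's route of defining $\alpha_r:=\alpha-\delta(\ub_s)\alpha_s$ and computing its cumulative distribution, but the content is the same. One small caution: the intermediate claim that $\partial_{\ub}\chih^{(r)}\in L^\infty_uL^\infty_{\ub}L^2(S)$ is stronger than you need and requires uniform control up to the singular hypersurfaces (not merely interior smoothness); the $L^2_uL^2_{\ub}L^2(S)$ bound you actually use, which is also what the paper records, does follow from the a priori estimates combined with Proposition \ref{smoothness}.
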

\begin{proof}
We prove the proposition for $\alpha$. The statement for $\alphab$ can be proved in a similar fashion. Define 
$$\alpha_s(u,\vartheta):=\lim_{\ub\to\ub_s^+} \Omega^{-1}\chih (u,\ub,\vartheta)-\lim_{\ub\to\ub_s^-} \Omega^{-1}\chih (u,\ub,\vartheta),$$
and 
$$\alpha_r:=\alpha-\delta(\ub_s)\alpha_s.$$
We now show that $\alpha_s$ and $\alpha_r$ have the desired property. By Theorem \ref{aprioriestimates}, $\alpha_s$ belongs to $L^2_uL^2(S)$. That $\alpha_s\neq 0$ follows from the fact that $\chih$ has a jump discontinuity across $\ub=\ub_s$, which is proved in Proposition \ref{chihdiscont}.

It remains to show that $\alpha_r$ belongs to $L^\infty_u L^\infty_{\ub} L^2(S)$. To show this, we consider the measure of the half open interval $[0,\ub)$ with respect to the measure $\alpha_r(u,\vartheta)$:
\begin{equation*}
\begin{split}
&(\alpha_r(u,\vartheta))([0,\ub))\\
=&(\Omega^{-1}\chih)(u,\ub,\vartheta)-\lim_{\tilde{\ub}\to\ub_s^+}(\Omega^{-1}\chih)(u,\tilde{\ub},\vartheta)+\lim_{\tilde{\ub}\to\ub_s^-}(\Omega^{-1}\chih)(u,\tilde{\ub},\vartheta)-(\Omega^{-1}\chih)(u,\ub=0,\vartheta)\\
&+\int_0^{\ub} (\trch\chih)(u,\tilde{\ub},\vartheta) d\tilde{\ub}\\
=&\lim_{\tilde{\ub}\to\ub_s^-}\int_0^{\tilde{\ub}} \frac{\partial}{\partial\ub}(\Omega^{-1}\chih)(u,\tilde{\ub}',\vartheta)d\tilde{\ub}'+\lim_{\tilde{\ub}\to\ub_s^+}\int_{\tilde{\ub}}^{\ub} \frac{\partial}{\partial\ub}(\Omega^{-1}\chih)(u,\tilde{\ub}',\vartheta)d\tilde{\ub}'\\
&+\int_0^{\ub} (\trch\chih)(u,\tilde{\ub},\vartheta) d\tilde{\ub}.
\end{split}
\end{equation*}
By Proposition \ref{smoothness}, $\frac{\partial}{\partial\ub}(\Omega^{-1}\chih)(u,\ub,\vartheta)$ is in $L^\infty_u L^\infty_{\ub}L^2(S)$ away from the the hypersurface $\Hb_{\ub_s}$. Thus $(\alpha_r(u,\vartheta))([0,\ub))$ can be expressed as an integral over $[0,\ub)$ whose integrand belongs to $L^\infty_u L^\infty_{\ub}L^2(S)$, as desired.
\end{proof}

\section{Formation of Trapped Surfaces}\label{sectrapped}

We also apply the existence and uniqueness result in Theorem \ref{rdthmv2} to the problem the formation of trapped surfaces. In \cite{Chr}, Christodoulou proved that trapped surfaces can form in evolution. This was later simplified and generalized by Klainerman and Rodnianski \cite{KlRo}, \cite{KlRo1}. These are also the first large data results for the long time dynamics of the Einstein equations without symmetry assumptions.

In all the previous works, the setting is a characteristic initial value problem such that the data on the incoming null hypersurface are that of Minkowski spacetime. The data on the outgoing null hypersurface, termed a ``short pulse'' by Christodoulou, are large, but are only prescribed on a region with a short characteristic length. The large data on the outgoing hypersurface and the small data on the incoming hypersurface together give rise to a hierarchy of large and small quantities, which was shown to be propagated by the evolution equations.

In particular, in order to guarantee the formation of a trapped surface, the initial norm of $\chih$ is large on $H_0$, and is of size
$$||\chih||_{L^\infty_{\ub}L^\infty(S)}\sim \epsilon^{-\frac 12},$$
where $\epsilon$ is the short characteristic length in the $\ub$ direction. Moreover, $\alpha$ has initial norm of size
$$||\alpha||_{L^\infty_{\ub}L^\infty(S)}\sim \epsilon^{-\frac 32}.$$
It was precisely to offset the largeness of $\chih$ and $\alpha$ (and their derivatives) that the data on $\Hb_0$ were required to be small.

However, when viewed in the weaker topology $L^2_{\ub}L^\infty(S)$, the initial size for $\chih$ in \cite{Chr} is bounded by a constant independent of $\epsilon$:
$$||\chih||_{L^2_{\ub}L^\infty(S)}\sim 1.$$
Our main existence result applies for initial data such that $\chih$ and its angular derivatives are only in $L^2_{\ub}L^\infty(S)$ without any requiring any smallness for the data on $\Hb_0$. In particular, no assumptions on $\alpha$ and its derivatives are imposed. Using this theorem, we obtain the following extension to the theorem in \cite{Chr}, \cite{KlRo}:
\begin{theorem}\label{trappedsurface}
Suppose the characteristic initial data are smooth on $\Hb_0$ for $0\leq u\leq u_*$ and satisfy the following two inequalities:
\begin{equation}\label{trappedsurfaceineq.0}
\trchb(u_*,\vartheta)<0
\end{equation}
and
\begin{equation}\label{trappedsurfaceineq}
\begin{split}
&\trch(u=0,\ub=0,\vartheta)\\
&\quad+\int_0^{u_*} \exp(\frac 12\int_0^{u'}\trchb(u'',\ub=0,\vartheta) du'') (-2K+2\div\zeta+2|\zeta|^2)(u',\ub=0,\vartheta)du'\\
<& \exp(-\frac 12\int_0^{u_*} \trchb(u',\ub=0,\vartheta) du')\trch(u=0,\ub=0,\vartheta)
\end{split}
\end{equation}
for every $\vartheta\in\mathbb S^2$. Then there exists an open set of smooth initial data on $H_0$ such that the initial data do not contain a trapped surface while a trapped surface is formed in evolution. 

More precisely, for every constant $C$, there exists $\epsilon>0$ sufficiently small such that if the characteristic initial data on $H_0\cap\{0\leq \ub\leq\epsilon\}$ are smooth and satisfy
\begin{equation}\label{trappedreg}
\sum_{i\leq 5}||\nab^i\chih||_{L^2_{\ub}L^2(S)}\leq C
\end{equation}
and the following two inequalities\footnote{Of course, the condition \eqref{trappedsurfaceineq} is necessary precisely so that \eqref{trappedsurfaceineq1} and \eqref{trappedsurfaceineq2} can be verified simultaneously.} are verified for every $\vartheta\in\mathbb S^2$,
\begin{equation}\label{trappedsurfaceineq1}
\begin{split}
&\int_0^{\epsilon} |\chih|^2(u=0,\ub,\vartheta) d\ub\\
> &\exp(\frac 12\int_0^{u_*}\trchb(u',\ub=0,\vartheta) du')\\
&\quad\times\big(\trch(u=0,\ub=0,\vartheta)\\
&\quad\quad\quad+\int_0^{u_*}\exp(\frac 12\int_0^{u'}\trchb(u'',\ub=0,\vartheta) du'')(-2K+2\div\zeta+2|\zeta|^2)(u',\ub=0,\vartheta)du'\big),
\end{split}
\end{equation}
and
\begin{equation}\label{trappedsurfaceineq2}
\int_0^{\epsilon} |\chih|^2(u=0,\ub,\vartheta) d\ub < \trch(u=0,\ub=0,\vartheta),
\end{equation}
then there exists a unique spacetime $(\mathcal M,g)$ that solves the characteristic initial value problem for the vacuum Einstein equations in the region $0\leq u\leq u_*$, $0\leq \ub\leq\epsilon$. Moreover, $H_0\cap\{0\leq \ub\leq\epsilon\}$ does not contain a trapped surface and $S_{u_*,\epsilon}$ is a trapped surface.
\end{theorem}

\begin{remark}
\eqref{trappedsurfaceineq.0} and \eqref{trappedsurfaceineq} hold in particular on a regular null cone with smooth Ricci coefficients such that 
$$||\trch-\frac 2r, \trchb+\frac 2r, \zeta, \nab\zeta, K||_{L^\infty_u L^\infty(S_{u,0})}\leq C,$$
where $r$ is a positive smooth function, $C^{-1}\leq |\frac{dr}{du}|\leq C$ and $r\to 0$ as $u\to u_0$. We will call $r=0$ the vertex of the cone. It is easy to see that \eqref{trappedsurfaceineq.0} and \eqref{trappedsurfaceineq} hold sufficiently close to the vertex, i.e., when $u_*$ is chosen to be sufficiently close to $u_0$. Notice in particular that we have
$$\trchb(u_*,\vartheta)\to -\infty$$
and
$$\int_0^{u_*}\trchb(u',\ub=0,\vartheta) du'\to -\infty$$
as $u_*\to u_0$.
\end{remark}

In particular, this implies the celebrated theorem of Christodoulou\footnote{The original theorem of Christodoulou in \cite{Chr} constructs a spacetime from past null infinity. Here, we retrieve only the theorem in a finite region. Nevertheless, the infinite problem can be treated as in \cite{Chr} once the finite problem is understood.}:
\begin{corollary}[Christodoulou]\label{Chrthm}
If the characteristic initial data on $\Hb_0$ is that of the truncated backward light cone\footnote{Here, we adapt the notation that $u=t-r-2$, $\ub=t+r$. Therefore, $0\leq u\leq 1$ corresponds to the $t$-range $-2\leq t\leq -1$.} 
$$\{\ub=t+r=0,\mbox{ }0\leq u\leq 1\}$$
in Minkowski space, then for $\epsilon$ sufficiently small, if the data on $H_0$ satisfy (\ref{trappedreg}), (\ref{trappedsurfaceineq1}) and (\ref{trappedsurfaceineq2}), then there exists a unique spacetime $(\mathcal M,g)$ endowed with a double null foliation $u$, $\ub$ and solves the characteristic initial value problem for the vacuum Einstein equations in the region $0\leq u\leq 1$, $0\leq \ub\leq\epsilon$. Moreover, $H_0\cap\{0\leq \ub\leq\epsilon\}$ does not contain a trapped surface and $S_{1,\epsilon}$ is a trapped surface.
\end{corollary}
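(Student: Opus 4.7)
The strategy is to derive Corollary~\ref{Chrthm} as an immediate consequence of Theorem~\ref{trappedsurface} with $u_*=1$. Since the hypotheses on $H_0$, namely the regularity bound \eqref{trappedreg} and the pointwise inequalities \eqref{trappedsurfaceineq1}, \eqref{trappedsurfaceineq2}, and the conclusion are carried over verbatim from the main theorem, the entire content of the proof is to verify that the Minkowski data on the truncated backward light cone $\Hb_0=\{t=-r,\ 1/2\le r\le 1\}$ satisfy the structural inequality \eqref{trappedsurfaceineq} pointwise on $\mathbb S^2$.

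First I would record the Ricci coefficients and curvature components on $\Hb_0$ in the paper's gauge, in which $\Omega=1$ and $b=0$ on $\Hb_0$ so that $e_3=\partial_u$. By the spherical symmetry and flatness of Minkowski one has $\chih\equiv\chibh\equiv\zeta\equiv 0$, $\omega=\omegab=0$, and $\rhoc=\sigmac=0$. The remaining scalars are then determined by the ODE $\nab_3\trchb=-\tfrac12\trchb^2$ with initial value $\trchb(0,\vartheta)=-1/r_0=-1$: integrating yields $\trchb(u)=-1/(1-u/2)$, $r(u)=1-u/2$, $\trch(u)=1/r$, and $K(u)=1/r^2$, consistent with the range $u\in[0,1]$ along which the cone is truncated at $r=1/2$.

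Substituting these values into \eqref{trappedsurfaceineq}, the integrating factor simplifies to
\begin{equation*}
\exp\Bigl(\tfrac12\int_0^{u'}\trchb(u'')\,du''\Bigr)=\exp\bigl(\ln(1-u'/2)\bigr)=1-u'/2,
\end{equation*}
so the left hand side of \eqref{trappedsurfaceineq} evaluates to
\begin{equation*}
\trch(0,\vartheta)+\int_0^{1}(1-u'/2)\cdot\frac{-2}{(1-u'/2)^2}\,du'=1-4\ln 2,
\end{equation*}
while the right hand side equals $\exp(\ln 2)\cdot\trch(0,\vartheta)=2$. Since $1-4\ln 2\approx-1.77<2$ uniformly in $\vartheta\in\mathbb S^2$, the hypothesis \eqref{trappedsurfaceineq} is verified, and Corollary~\ref{Chrthm} follows directly from Theorem~\ref{trappedsurface}. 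The only mild subtlety, which is certainly not an essential obstacle, is tracking the normalization that relates the paper's affine parameter $u$ on $\Hb_0$ (forced by $\Omega=1$, $e_3=\partial_u$) to the ambient Minkowski null coordinate $t-r$; once this is set, the verification itself is a two-line ODE calculation that holds with room to spare.
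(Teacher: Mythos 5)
Your approach is the right one and matches the paper's: Corollary~\ref{Chrthm} is indeed deduced by verifying that the Minkowski data on the truncated backward light cone satisfy the structural inequality \eqref{trappedsurfaceineq} and then invoking Theorem~\ref{trappedsurface} (the paper does this via the remark on regular null cones rather than an explicit Minkowski calculation). However, your verification contains a normalization error that needs to be fixed. In the paper's conventions, a round sphere of area radius $r$ has $\trch=2/r$ and $\trchb=-2/r$, not $\pm 1/r$. Your stated values $\trch=1/r$, $\trchb=-1/r$ together with $K=1/r^2$ are in fact inconsistent with the Gauss equation listed in~\eqref{null.str3}, which on $\Hb_0$ (where $\rhoc=\chih=\chibh=0$) reduces to $K=-\frac14\trch\,\trchb$; your values give $K=1/(4r^2)$, a contradiction. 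Moreover, with $\Omega=1$ and $e_3=\partial_u$ the transport equation for the area radius forces $\partial_u r=-1$, so $r(u)=r_0-u$, not $1-u/2$; the parametrization you wrote down comes from the footnote's $u=t-r+2$, $\ub=t+r$, for which one computes $\Omega=1/2$, incompatible with the paper's gauge.

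With the corrected data $\trch=2/r$, $\trchb=-2/r$, $K=1/r^2$, $\zeta=\chih=\chibh=0$, $r(u)=r_0-u$ (with $r_0>u_*$ so the vertex is not reached), the integrating factor in \eqref{trappedsurfaceineq} is $\exp\bigl(\tfrac12\int_0^{u'}\trchb\,du''\bigr)=r(u')/r_0$, the left side becomes
\begin{equation*}
\frac{2}{r_0}\Bigl(1-\ln\frac{r_0}{r_0-u_*}\Bigr),
\end{equation*}
and the right side is $2/(r_0-u_*)$. Writing $x=r_0/(r_0-u_*)>1$, the inequality reduces to $1-\ln x< x$, equivalently $x+\ln x>1$, which holds strictly for every $x>1$. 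So the conclusion you drew is correct, and \eqref{trappedsurfaceineq} is verified with room to spare, but the specific numbers you quoted ($1-4\ln 2$ versus $2$) come from the incorrect normalization and should be replaced; for instance with $r_0=2$, $u_*=1$ the left side is $1-\ln 2\approx 0.31$ and the right side is $2$.
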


We now begin the proof of Theorem \ref{trappedsurface}. We need the following series of propositions. First, it is easy to see using the null structure equations and Bianchi equations on $H_0$ that the assumptions for Theorem \ref{rdthmv2} are satisfied.
\begin{proposition}
Given the assumptions for Theorem \ref{trappedsurface}, the initial data satisfy the assumptions of Theorem \ref{rdthmv2}. Therefore, using the conclusion of Theorem \ref{rdthmv2}, there exists a unique spacetime $(\mathcal M,g)$that solves the characteristic initial value problem for the vacuum Einstein equations in the region $0\leq u\leq u_*$, $0\leq \ub\leq\epsilon$. Moreover, all the estimates in Theorem \ref{aprioriestimates} hold.
\end{proposition}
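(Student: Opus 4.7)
The claim is a hypothesis-verification: once we check that the data posited in Theorem \ref{trappedsurface} fall within the class covered by Theorem \ref{rdthmv2}, both existence/uniqueness of the spacetime and the a priori bounds of Theorem \ref{aprioriestimates} are immediate. The plan is therefore to bound $\mathcal{O}_0$ and $\mathcal{R}_0$ by constants that depend only on the prescribed smooth data on $\Hb_0$, on the constant $C$ in \eqref{trappedreg}, and on $u_*$.

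On $\Hb_0$ the data are smooth on the compact range $0\le u\le u_*$, and all Ricci coefficients and curvature components along $\Hb_0$ are obtained from the smooth free data (and its values on $S_{0,0}$) by the null structure/Bianchi equations in the $\nab_3$-direction, which are ODEs along the generators of $\Hb_0$. These therefore remain smooth, and the required bounds $\sum_{i\le 3}\|\nab^i\psi\|_{L^\infty_u L^2(S_{u,0})}+\sum_{i\le 3}\|\nab^i \psi_{\Hb}\|_{L^2(\Hb_0)}+\sum_{i\le 2}\|\nab^i\betab\|_{L^2(\Hb_0)}+\sum_{i\le 2}\|\nab^i(\rhoc,\sigmac)\|_{L^\infty_u L^2(S_{u,0})}\le C$ are just finite norms of smooth, compactly supported objects.

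On $H_0$ we first note that the gauge choice $\Omega\equiv 1$ on $H_0$ gives $\omega=-\tfrac12\nab_4\log\Omega\equiv 0$ along $H_0$, so the full $L^2(H_0)$ estimates for $\omega$ and its angular derivatives are trivial. The free datum $\chih$ is smooth and satisfies \eqref{trappedreg}, so $\sum_{i\le 3}\|\nab^i\chih\|_{L^2(H_0)}\le C$ directly. All remaining quantities on $H_0$ are obtained by integrating the $\nab_4$ structure equations against smooth initial values on $S_{0,0}$ inherited from $\Hb_0$. Concretely, I would propagate in the following order: (i) $\trch$ from the Riccati equation $\nab_4\trch=-\tfrac12(\trch)^2-|\chih|^2$, whose source $|\chih|^2$ lies in $L^1_{\ub} H^3(S)$ by Cauchy--Schwarz plus the Sobolev embedding $H^2(S)\hookrightarrow L^\infty(S)$ applied to \eqref{trappedreg}; (ii) $\beta$ from the Codazzi relation $\beta=\tfrac12\nab\trch-\div\chih+\psi(\psi+\chih)$; (iii) $\etab$, then $\trchb$, $\chibh$ from their $\nab_4$ null structure equations with sources already controlled; (iv) $\eta$ from $\nab_4\eta=-\chi\cdot(\eta-\etab)-\beta$; (v) $(\rhoc,\sigmac)$ from the renormalized $\nab_4$ Bianchi equations. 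Higher-order angular bounds come from commuting each transport equation with $\nab$, using Proposition~\ref{commuteeqn} and that the commutator terms are lower-order in the hierarchy.

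The only quantitative subtlety is that \eqref{trappedreg} controls $\chih$ in $L^2_{\ub}H^3(S)$ but not in $L^\infty_{\ub}$; the hardest terms to handle are the quadratic $|\chih|^2$ contributions in the $\nab_4\trch$ and $\nab_4\trchb$ equations (together with their commuted analogues). These are exactly the structures already confronted in Section~\ref{secRicciL4}: after commuting, the worst products are $\nab^{i_1}\chih\cdot\nab^{i_2}\chih$ with $i_1+i_2\le 3$, each of which is bounded in $L^1_{\ub}L^2(S)$ by the $L^2_{\ub}H^3$ assumption and Sobolev embedding on $S$. This, combined with smoothness of the initial values on $S_{0,0}$, yields $\mathcal{O}_0,\mathcal{R}_0\le C$. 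The hypotheses of Theorem~\ref{rdthmv2} (and hence Theorem~\ref{aprioriestimates}) are thus satisfied, and the proposition follows.
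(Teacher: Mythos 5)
Your proposal follows the same route as the paper: verify that the smoothness on $\Hb_0$ bounds the $\Hb_0$-side norms, and then use \eqref{trappedreg} together with the transport/constraint equations along $H_0$ to bound the $H_0$-side norms. The paper's proof is extremely terse (it simply cites ``the null structure equations and the Bianchi equations''), so you are supplying the detail, and your key observations --- $\omega\equiv 0$ along $H_0$ because $\Omega\equiv 1$ there, Codazzi for $\beta$ rather than the singular $\nabla_4\beta$ Bianchi equation, Cauchy--Schwarz in $\ub$ plus Sobolev on $S$ to absorb the quadratic $|\chih|^2$ sources given only $L^2_\ub H^4(S)$ control of $\chih$ --- are all the right ones.

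One small organizational issue: your list (i)--(v) is not actually a triangular ordering. Codazzi for $\beta$ (step ii) needs $\eta$ and $\etab$, which you only address in steps (iii)--(iv); and there is no $\nabla_4\etab$ structure equation listed in the paper to invoke for step (iii). Both are easily repaired by the same mechanism: since $\Omega\equiv 1$ on $H_0$, $\nab\log\Omega=0$ there, so $\eta=\zeta=-\etab$ along $H_0$. Substituting Codazzi for $\beta$ and $\etab=-\eta$ into the remaining $\nabla_4$ equations yields a genuinely closed ODE system in $(\gamma,\trch,\eta,\trchb,\chibh,\rhoc,\sigmac)$ with $\chih$ as the free datum, and one proves the requisite bounds by Gronwall at each order of angular differentiation exactly as you describe. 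With that correction, the proposal is sound and is essentially the paper's argument written out in full.
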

\begin{proof}
Since the initial data on $\Hb_0$ is smooth, there exists $c$ and $C$ such that 
$$c\leq |\det\gamma \restriction_{S_{u,0}} |\leq C,\quad \sum_{i\leq 3}|(\frac{\partial}{\partial\th})^i\gamma \restriction_{S_{u,0}}|\leq C,$$
\begin{equation*}
\begin{split}
& \sum_{i\leq 3} \left(||\nab^i\psi||_{L^\infty_uL^2(S_{u,0})}+||\nabla^i\psi_{\Hb}||_{L^2(\Hb_{0})}\right)\leq C,\\
&\sum_{i\leq 2}\left(||\nab^i\betab||_{L^2(\Hb_0)}+\sum_{\Psi\in\{\rhoc,\sigmac\}}||\nab^i\Psi||_{L^\infty_uL^2(S_{u,0})}\right)\leq C.
\end{split}
\end{equation*}
By (\ref{trappedreg}), $\chih$ satisfies the bounds in the assumptions of Theorem \ref{rdthmv2}. By the null structure equations and the Bianchi equations, for $\epsilon$ sufficiently, all the norms for the initial data on $H_0$ in the assumptions of Theorem \ref{rdthmv2} are controlled by a constant independent of $\epsilon$.
\end{proof}
We now use the a priori estimates derived in Theorem \ref{aprioriestimates} together with  (\ref{trappedsurfaceineq1}) and (\ref{trappedsurfaceineq2}) to show that the initial data do not contain a trapped surface and that a trapped surface is formed dynamically. We first show that there are no trapped surfaces on $H_0$:
\begin{proposition}
There exists $\epsilon$ sufficiently small such that for all $\vartheta$, 
$$\trch(u=0,\ub,\vartheta)>0\mbox{ for all $\ub\in[0,\epsilon]$}.$$
\end{proposition}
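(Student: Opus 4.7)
The plan is to work directly on $H_0$, where the gauge choice $\Omega = 1$ forces $\omega = -\frac{1}{2}\nab_4\log\Omega = 0$, so the Raychaudhuri equation reduces to the scalar ODE
\begin{equation*}
\frac{\partial}{\partial\ub}\trch(u=0,\ub,\vartheta) = -\frac{1}{2}(\trch)^2(u=0,\ub,\vartheta) - |\chih|^2(u=0,\ub,\vartheta).
\end{equation*}
Integrating in $\ub$ from $0$ to any $\ub \in [0,\epsilon]$ yields
\begin{equation*}
\trch(u=0,\ub,\vartheta) = \trch(u=0,0,\vartheta) - \int_0^\ub |\chih|^2(u=0,\ub',\vartheta)\,d\ub' - \frac{1}{2}\int_0^\ub (\trch)^2(u=0,\ub',\vartheta)\,d\ub',
\end{equation*}
so the task reduces to bounding the two integrals on the right.

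First I would use the hypothesis \eqref{trappedsurfaceineq2}, which asserts strict inequality pointwise on $\mathbb{S}^2$. Since $\trch(u=0,0,\cdot) - \int_0^\epsilon |\chih|^2(\cdot)\,d\ub$ is a continuous function on the compact sphere, the strict pointwise inequality upgrades to a uniform gap: there exists $\delta > 0$ (depending on the initial data but not on $\epsilon$) such that
\begin{equation*}
\trch(u=0,0,\vartheta) - \int_0^\epsilon |\chih|^2(u=0,\ub',\vartheta)\,d\ub' \geq \delta \quad \text{for all } \vartheta \in \mathbb{S}^2.
\end{equation*}
In particular, the same inequality holds with $\epsilon$ replaced by any $\ub \in [0,\epsilon]$ by monotonicity of the integral.

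Next I would absorb the quadratic term $\frac{1}{2}\int_0^\ub (\trch)^2\,d\ub'$. The a priori estimates from Theorem \ref{aprioriestimates} applied to the solution on $\{u=0\} \cap \{0\leq\ub\leq\epsilon\}$ give $\|\trch\|_{L^\infty_u L^\infty_{\ub} L^\infty(S)} \leq C'$ with $C'$ depending only on the initial data. Consequently
\begin{equation*}
\frac{1}{2}\int_0^\ub (\trch)^2(u=0,\ub',\vartheta)\,d\ub' \leq \frac{1}{2}(C')^2 \epsilon.
\end{equation*}
Combining, for every $\vartheta \in \mathbb{S}^2$ and every $\ub \in [0,\epsilon]$,
\begin{equation*}
\trch(u=0,\ub,\vartheta) \geq \delta - \frac{1}{2}(C')^2\epsilon,
\end{equation*}
which is $\geq \delta/2 > 0$ once $\epsilon \leq \delta/(C')^2$.

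The only mild obstacle is arranging that the uniform gap $\delta$ survives the $L^\infty$ bound on $\trch$ from Theorem \ref{aprioriestimates}; but since the constant $C'$ in the a priori estimates is independent of $\epsilon$, one simply chooses $\epsilon$ small last, after $\delta$ and $C'$ have been fixed. No structural difficulty arises; the argument is essentially the Riccati-type ODE on a single null generator together with a compactness argument on $\mathbb{S}^2$.
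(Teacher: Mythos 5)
Your proposal is correct and takes essentially the same approach as the paper: integrate the Raychaudhuri equation on $H_0$ (where $\Omega=1$), use the a priori $L^\infty$ bound on $\trch$ to estimate the Riccati term by $C\epsilon$, and then invoke the strict inequality (\ref{trappedsurfaceineq2}) to absorb this error for $\epsilon$ small. The extra compactness step you add to extract a uniform gap $\delta$ is a harmless elaboration the paper leaves implicit.
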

\begin{proof}
On $H_0$, since $\Omega=1$, $\trch$ satisfies the equation
$$\frac{\partial}{\partial\ub}\trch=-\frac 12(\trch)^2-|\chih|^2.$$
Integrating the equation for $\trch$, we have
$$\trch(u=0,\ub,\vartheta)=\trch(u=0,\ub=0,\vartheta)-\int_0^{\ub} (\frac 12 (\trch)^2+|\chih|^2)(\ub',\vartheta) d\ub'.$$
Hence
$$\trch(u=0,\ub,\vartheta)\geq \trch(u=0,\ub=0,\vartheta)-\int_0^{\epsilon}|\chih|^2(\ub',\vartheta) d\ub'-C\epsilon.$$
(\ref{trappedsurfaceineq2}) implies that for every $\vartheta$,
$$\trch(u=0,\ub=0,\vartheta)>\int_0^{\epsilon}|\chih|^2(\ub',\vartheta) d\ub'.$$
Therefore, for $\epsilon$ sufficiently small, 
$$\trch(u=0,\ub,\vartheta)>0,$$
for all $\ub\in[0,\epsilon]$
\end{proof}

We now prove that $S_{u_*,\epsilon}$ is a trapped surface. First, we show that $\trchb<0$ everywhere on $S_{u_*,\epsilon}$.
\begin{proposition}\label{trchb.est.trapped}
For $\epsilon$ sufficiently small, we have
$$\trchb(u=u_*,\ub=\epsilon,\vartheta)<0$$
for every $\vartheta$.
\end{proposition}
\begin{proof}
Consider the equation 
$$\nab_4\trchb=-\frac 12\trch\trchb+2\omega\trchb+2\rhoc+2\div\etab+2|\etab|^2.$$
Writing $\nab_4=\Omega^{-1}\frac{\partial}{\partial\ub}$ and integrating, it is easy to see that by the estimates in Theorem \ref{aprioriestimates}, we have
\begin{equation}\label{trchb.est.trapped.1}
|\trchb(u,\ub,\vartheta) du'- \trchb(u,\ub=0,\vartheta)|\leq C\epsilon^{\frac 12}\quad\mbox{for all $u$ for all $\vartheta\in\mathbb S^2$}.
\end{equation}
The conclusion of the proposition thus follows from \eqref{trappedsurfaceineq.0}.
\end{proof}
We then prove in the following sequence of propositions that we moreover have $\trch<0$ everywhere on $S_{u_*,\epsilon}$. As a first step, we solve for $\trch$ on $S_{u,0}$ on the initial hypersurface $\Hb_0$.
\begin{proposition}\label{trchampprop}
On the initial hypersurface $\Hb_0$, $\trch(u,\ub=0,\vartheta)$ is given by
\begin{equation*}
\begin{split}
&\trch(u,\ub=0,\vartheta)\\
=&\exp(-\frac 12\int_0^u\trchb du')\big(\trch(u=0,\ub=0,\vartheta)\\
&\qquad\qquad\qquad+\int_0^u\exp(\frac 12\int_0^{u'}\trchb du'')(-2K+2\div\zeta+2|\zeta|^2)du'\big).
\end{split}
\end{equation*}
\end{proposition}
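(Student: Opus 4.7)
This is a one-dimensional ODE identity along each null generator of $\Hb_0$, so the plan is to restrict the null structure equation for $\nab_3\trch$ to $\Hb_0$, simplify using the gauge conditions there, eliminate $\rho$ via the Gauss equation, and integrate the resulting linear ODE by an integrating factor. No bootstrap argument is needed; this is a computation entirely on the initial hypersurface.

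First, I would extract the consequences of the gauge condition $\Omega = 1$ on $\Hb_0$: from $\omegab = -\frac{1}{2}\nab_3\log\Omega$ one obtains $\omegab = 0$, and from $\eta_A = \zeta_A + \nab_A\log\Omega$ one obtains $\eta = \zeta$ (the angular derivative of $\log\Omega$ vanishes because $\Omega \equiv 1$ on $\Hb_0$). Moreover $b^A|_{\Hb_0} = 0$ by the construction of coordinates in Section~\ref{coordinates}, so $\nab_3$ acting on scalars reduces to $\partial_u$ on $\Hb_0$.

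Second, I would take the null structure equation
$$\nab_3\trch + \frac{1}{2}\trchb\,\trch = 2\omegab\,\trch + 2\rho - \chih\cdot\chibh + 2\div\eta + 2|\eta|^2$$
from \eqref{null.str1} and restrict it to $\Hb_0$, where it becomes the scalar linear ODE
$$\partial_u\trch + \frac{1}{2}\trchb\,\trch = 2\rho - \chih\cdot\chibh + 2\div\zeta + 2|\zeta|^2$$
along each null generator, with $\vartheta\in S_{0,0}$ acting as a fixed parameter. The Gauss equation $K = -\rho + \frac{1}{2}\chih\cdot\chibh - \frac{1}{4}\trch\,\trchb$ from \eqref{null.str3} then allows me to replace $2\rho - \chih\cdot\chibh$ by $-2K$ (up to a $\trch\,\trchb$ contribution which is regrouped with the left-hand side), producing an equation of the form
$$\partial_u\trch + \frac{1}{2}\trchb\,\trch = -2K + 2\div\zeta + 2|\zeta|^2.$$

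Third, I would integrate this ODE by multiplying by the integrating factor $\exp\!\left(\frac{1}{2}\int_0^u\trchb(u'',\vartheta)\,du''\right)$, which turns the left-hand side into $\partial_u\!\left[\exp\!\left(\frac{1}{2}\int_0^u\trchb\,du''\right)\trch\right]$. Integrating from $0$ to $u$ and solving for $\trch(u,0,\vartheta)$ yields the stated formula by variation of parameters. There is no real obstacle here: once the gauge simplifications on $\Hb_0$ and the Gauss-equation substitution are in place, the remainder is a direct first-order linear ODE computation.
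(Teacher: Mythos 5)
Your overall strategy matches the paper's own proof: restrict the $\nab_3\trch$ null structure equation to $\Hb_0$, use $\Omega=1$ there to conclude $\omegab=0$, $\eta=\zeta$, $b^A=0$ and $\nab_3=\partial_u$, eliminate $\rho$ via the Gauss equation, and integrate the resulting first-order linear ODE. However, there is an arithmetic inconsistency in your Gauss-equation step, and it is worth flagging because the paper's own proof contains the identical slip.

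You correctly observe that the Gauss equation leaves a $\trch\trchb$ contribution "which is regrouped with the left-hand side." But then you display the equation with the coefficient $\frac 12$ unchanged, which contradicts the regrouping. Carrying it out explicitly: from $K = -\rhoc - \frac 14\trch\trchb$ one gets $2\rhoc = -2K - \frac 12\trch\trchb$, so the equation $\partial_u\trch + \frac 12\trchb\trch = 2\rhoc + 2\div\zeta + 2|\zeta|^2$ becomes
\[
\partial_u\trch + \frac 12\trchb\trch = -2K - \frac 12\trch\trchb + 2\div\zeta + 2|\zeta|^2,
\]
and moving the extra term to the left yields
\[
\partial_u\trch + \trchb\,\trch = -2K + 2\div\zeta + 2|\zeta|^2,
\]
with coefficient $1$, not $\frac 12$. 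The integrating factor is therefore $\exp\bigl(\int_0^u\trchb\,du'\bigr)$ rather than $\exp\bigl(\frac 12\int_0^u\trchb\,du'\bigr)$, so the step as you have written it does not produce the stated formula. As a sanity check, differentiating the displayed formula in the Proposition gives back an ODE with coefficient $\frac 12$, which is not the ODE one obtains after the Gauss substitution. The paper's proof writes the same line ("Substituting the Gauss equation \ldots we have $\partial_u\trch + \frac 12\trchb\trch = -2K + \cdots$"), so the $\frac 12$ there is likewise inconsistent with the substitution being performed; it appears to be carried over from the heuristic $\nab_3\trch + \frac 12\trchb\trch \approx 0$ in the Introduction, where the right-hand side (including $\rho$) is treated as an error and no Gauss substitution is made. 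You should either keep $\rhoc$ in the source term (in which case the $\frac 12$ is correct but the source is $2\rhoc+2\div\zeta+2|\zeta|^2$, not $-2K+\cdots$), or substitute the Gauss equation and track the coefficient change to $1$; as written, your proof mixes the two.
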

\begin{proof}
On $\Hb_0$, since $\Omega=1$, we have 
$$\frac{\partial}{\partial u} \trch +\frac{1}{2}\trchb\trch=2\rhoc+2\div\zeta+2|\zeta|^2.$$
Substituting the Gauss equation
$$K=-\rhoc-\frac 14 \trch\trchb,$$
we have
$$\frac{\partial}{\partial u} \trch +\frac{1}{2}\trchb\trch=-2K+2\div\zeta+2|\zeta|^2.$$
The conclusion follows easily.
\end{proof}
We compare $\int_0^{u_*} \trchb(u',\ub,\vartheta) du'$ and $\int_0^{u_*} \trchb(u',\ub=0,\vartheta) du'$ in the following proposition:
\begin{proposition}\label{trchint}
For every $\ub\in[0,\epsilon]$, we have
$$|\int_0^{u_*} \trchb(u',\ub,\vartheta) du'-\int_0^{u_*} \trchb(u',\ub=0,\vartheta) du'|\leq C\epsilon^{\frac 12}. $$
\end{proposition}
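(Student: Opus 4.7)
The plan is to reduce the integrated difference to a size-$\epsilon^{1/2}$ quantity by applying the fundamental theorem of calculus in the $\ub$ direction and then using the $\nab_4 \trchb$ null structure equation together with the a priori estimates of Theorem \ref{aprioriestimates} to control $\partial_{\ub}\trchb$ in an appropriate $L^2$ norm.

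First, since $e_4 = \Omega^{-1}\partial_{\ub}$, writing $\partial_{\ub}\trchb = \Omega\,\nab_4\trchb$ and applying the fundamental theorem of calculus pointwise in $(u,\vartheta)$ gives
\[
|\trchb(u,\ub,\vartheta)-\trchb(u,0,\vartheta)| \leq \int_0^{\ub} \Omega\,|\nab_4\trchb|(u,\ub',\vartheta)\,d\ub'\leq C\epsilon^{1/2}\Bigl(\int_0^{\epsilon}|\nab_4\trchb|^2(u,\ub',\vartheta)\,d\ub'\Bigr)^{1/2},
\]
by Cauchy--Schwarz and Proposition \ref{Omega}. Integrating in $u$ over $[0,u_*]$ and again applying Cauchy--Schwarz yields
\[
\int_0^{u_*}|\trchb(u,\ub,\vartheta)-\trchb(u,0,\vartheta)|\,du \leq C\epsilon^{1/2} u_*^{1/2}\,\|\nab_4\trchb\|_{L^2_uL^2_{\ub}L^\infty(S)}.
\]
Since $u_* \leq I$ is a fixed constant, it suffices to bound $\|\nab_4\trchb\|_{L^2_uL^2_{\ub}L^\infty(S)}$ by a constant depending only on the initial data.

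For this, I would substitute the null structure equation
\[
\nab_4\trchb = -\tfrac12\trch\trchb + 2\omega\trchb + 2\rho - \chih\cdot\chibh + 2\div\etab + 2|\etab|^2,
\]
and estimate each term in $L^2_uL^2_{\ub}L^\infty(S)$ via Theorem \ref{aprioriestimates} and Sobolev embedding (Proposition \ref{Linfty}). The terms $\trch\trchb$, $\div\etab$ and $|\etab|^2$ are bounded in $L^\infty_{u,\ub}L^\infty(S)$ using the $\mathcal{O}$ estimates (Proposition \ref{Ricci}); the term $\omega\trchb$ is in $L^2_{\ub}L^\infty_uL^\infty(S)$ since $\omega$ lies in that mixed norm; writing $\rho = \rhoc + \tfrac12\chih\cdot\chibh$ gives $\rhoc \in L^\infty_uL^2_{\ub}L^2(S)\cap L^\infty_{\ub}L^2_uL^2(S)$ which embeds into $L^2_uL^2_{\ub}L^\infty(S)$ once three angular derivatives are included, while the remaining product
\[
\|\chih\cdot\chibh\|_{L^2_uL^2_{\ub}L^\infty(S)} \leq \|\chih\|_{L^2_{\ub}L^\infty_uL^\infty(S)}\,\|\chibh\|_{L^2_uL^\infty_{\ub}L^\infty(S)}
\]
is bounded by Proposition \ref{Ricci32} combined with Sobolev embedding. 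Putting these ingredients together, the desired inequality $\bigl|\int_0^{u_*}(\trchb(u',\ub,\vartheta)-\trchb(u',0,\vartheta))\,du'\bigr| \leq C\epsilon^{1/2}$ follows immediately.

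The only mild subtlety is the pointwise-in-$\vartheta$ nature of the conclusion: all the estimates above must survive taking $L^\infty(S)$ on the sphere, which is precisely why Theorem \ref{aprioriestimates} was formulated so as to provide control not only in $L^2(S)$ but in $L^\infty(S)$ via Sobolev embedding up to the right number of angular derivatives. Once this is observed, no step is genuinely difficult; the key structural point is simply that every source term on the right-hand side of the $\nab_4\trchb$ equation has already been shown to be at least $L^2$-integrable in the null direction.
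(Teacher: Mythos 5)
Your proposal is correct and follows essentially the same route as the paper: integrate the $\nab_4\trchb$ null structure equation (which, written in terms of $\rhoc$, is $\nab_4\trchb=-\frac12\trch\trchb+2\omega\trchb+2\rhoc+2\div\etab+2|\etab|^2$), use Cauchy--Schwarz in $\ub$ to extract $\epsilon^{1/2}$, and close with the a priori estimates of Theorem \ref{aprioriestimates} together with Sobolev embedding. The paper actually proves the stronger pointwise-in-$u$ bound $|\trchb(u,\ub,\vartheta)-\trchb(u,0,\vartheta)|\leq C\epsilon^{1/2}$ and then integrates trivially in $u$, avoiding your second application of Cauchy--Schwarz; this is a minor stylistic difference with the same content. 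One small imprecision worth flagging: $\div\etab$ is not in $L^\infty_u L^\infty_{\ub}L^\infty(S)$ as you assert --- since $\nab^3\etab$ is controlled only in $L^\infty_uL^2_{\ub}L^2(S)$ (and $L^\infty_{\ub}L^2_uL^2(S)$) via $\tilde{\mathcal O}_{3,2}$, Sobolev embedding on the sphere yields $\nab\etab\in L^\infty_uL^2_{\ub}L^\infty(S)$, not a pointwise-in-$\ub$ bound; but this weaker control is still ample for the $L^2_uL^2_{\ub}L^\infty(S)$ norm you need.
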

\begin{proof}
The proposition follows directly from integrating in $u$ the equation \eqref{trchb.est.trapped.1} in the proof of Proposition \ref{trchb.est.trapped}.
\end{proof}
Using Proposition \ref{trchint}, we compute $\int_0^\epsilon |\chih|_{\gamma}^2 d\ub$ for every $u$ and every $\vartheta\in\mathbb S^2$:
\begin{proposition}\label{chihampprop}
For every $\vartheta$ in $S_{u,\epsilon}$, the integral of $|\chih|_{\gamma}^2$ along the integral curve of $L$ through $(u,\vartheta)$ satisfies
\begin{equation*}
\begin{split}
\int_0^{\epsilon}|\chih|^2_\gamma(u,\ub,\vartheta)d\ub \geq \exp(-\int_0^u \trchb(u',\ub=0,\vartheta) du')\int_0^{\epsilon}|\chih|^2_\gamma(u=0,\ub,\vartheta)d\ub-C\epsilon^{\frac 12}.
\end{split}
\end{equation*}
\end{proposition}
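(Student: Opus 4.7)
The strategy is to treat the null structure equation for $\chih$ as an ODE along integral curves of $e_3$ and then integrate in $\ub$. Starting from
$$\nab_3\chih+\tfrac12\trchb\,\chih=\nab\hot\eta+2\omegab\chih-\tfrac12\trch\,\chibh+\eta\hot\eta,$$
contraction with $\chih$ (using that the induced connection on $S_{u,\ub}$ is metric-compatible with $\gamma$) gives the scalar transport equation
$$\nab_3|\chih|^2+\trchb|\chih|^2=4\omegab|\chih|^2+E,\qquad E:=2\langle\chih,\nab\hot\eta\rangle-\trch\langle\chih,\chibh\rangle+2\langle\chih,\eta\hot\eta\rangle.$$
In the coordinates $(u,\ub,\th^A)$, $\nab_3=\Omega^{-1}(\partial_u+b^A\partial_{\th^A})$ with $|b|=O(\epsilon)$ by Proposition \ref{b}, so modulo angular drifts of size $\epsilon$ we may integrate as an ODE in $u$ at fixed $\ub,\vartheta$, yielding the representation
$$|\chih|^2(u,\ub,\vartheta)=e^{\int_0^u\Omega(4\omegab-\trchb)(u',\ub,\vartheta)du'}|\chih|^2(0,\ub,\vartheta)+\int_0^u e^{\int_{u'}^u\Omega(4\omegab-\trchb)du''}\Omega\,E(u',\ub,\vartheta)\,du'.$$

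The plan is then to integrate this in $\ub$ from $0$ to $\epsilon$ and establish the claimed lower bound. For the main term, the key task is to show the exponent $\int_0^u\Omega(4\omegab-\trchb)(u',\ub,\vartheta)du'$ differs from $-\int_0^u\trchb(u',0,\vartheta)du'$ by at most $C\epsilon^{1/2}$. The $\Omega-1$ contribution is $O(\epsilon^{1/2})$ by Proposition \ref{Omega}, and the difference $\int_0^u(\trchb(u',\ub,\vartheta)-\trchb(u',0,\vartheta))du'$ is $O(\epsilon^{1/2})$ by Proposition \ref{trchint}. Once these are established, a Taylor expansion of the exponential produces a factor $(1\pm C\epsilon^{1/2})$; multiplying the main term by this factor and using $\int_0^\epsilon|\chih|^2(0,\ub,\vartheta)d\ub\le C$ from the initial data yields an additive error of size $O(\epsilon^{1/2})$, as desired. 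The error term $\int_0^\epsilon\int_0^u\Omega E\,du'd\ub$ is controlled by Theorem \ref{aprioriestimates}: the pieces $\langle\chih,\nab\hot\eta\rangle$ and $\langle\chih,\eta\hot\eta\rangle$ are handled by combining the pointwise bound on $\nab\eta,\eta$ (from $\mathcal O_{2,2}$ and Proposition \ref{Linfty}) with $\chih\in L^2_\ub L^\infty_u L^\infty(S)$, gaining a factor $\epsilon^{1/2}$ from the Cauchy--Schwarz inequality in $\ub$; the piece $\trch\langle\chih,\chibh\rangle$ is bounded by Cauchy--Schwarz in the $du'd\ub$ measure, placing $\chih$ in $L^2_\ub L^\infty_u$ and $\chibh$ in $L^2_u L^\infty_\ub$, which produces $(u_* \cdot \epsilon)^{1/2}$.

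The main obstacle is handling the $4\omegab|\chih|^2$ contribution to the exponent: pointwise in $\ub$, one only has $\omegab\in L^2_uL^\infty_\ub L^\infty(S)$, which is insufficient to make $\int_0^u\omegab(u',\ub,\vartheta)du'$ small in absolute terms. The crucial observation is that $\omegab=-\tfrac12\nab_3\log\Omega$ vanishes identically on $\Hb_0$ (since $\Omega\equiv 1$ there and $e_3$ is tangent to $\Hb_0$), so that
$$\int_0^u\omegab(u',\ub,\vartheta)du'=\int_0^u\int_0^{\ub}\Omega\,\nab_4\omegab(u',\ub',\vartheta)\,d\ub'\,du'.$$
Substituting $\nab_4\omegab=2\omega\omegab+O(\psi\psi)+\tfrac12\rhoc+\tfrac14\chih\cdot\chibh$ and applying Cauchy--Schwarz with the mixed-norm bounds $\omega\in L^2_\ub L^\infty_u L^\infty(S)$, $\omegab\in L^2_uL^\infty_\ub L^\infty(S)$, $\chih\in L^2_\ub L^\infty_u L^\infty(S)$, $\chibh\in L^2_u L^\infty_\ub L^\infty(S)$, and the pointwise control of $\rhoc$ via $\mathcal R(S)$ yields the required bound $|\int_0^u\omegab(u',\ub,\vartheta)du'|\le C\epsilon^{1/2}$. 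This is the point where the structure of the data (smoothness of $\Omega$ on $\Hb_0$) combines with the shortness of the $\ub$-interval to salvage what would otherwise be an unacceptable $O(1)$ term.
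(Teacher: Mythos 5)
Your contraction and ODE formulation are the same as the paper's, but you handle the $\omegab$ term by a genuinely different route. The paper uses the identity $\omegab=-\frac 12\nab_3\log\Omega$ to absorb the $4\omegab|\chih|^2$ term exactly into the transport derivative of $\Omega^{\pm 2}|\chih|^2$ (an integrating-factor argument), leaving only the $\trchb$ exponent; the bound $|\Omega-1|\leq C\epsilon^{1/2}$ then enters only at the very end. You instead keep $\omegab$ in the exponent and show $|\int_0^u 4\Omega\omegab\,du'|\leq C\epsilon^{1/2}$ by noting $\omegab|_{\Hb_0}=0$ and integrating the $\nab_4\omegab$ equation. Both are valid; in fact your estimate is, modulo the $b^A$ drift, equivalent to the paper's, since $\int_0^u 4\Omega\omegab\,du'=-2\log\Omega(u,\ub,\vartheta)+2\log\Omega(0,\ub,\vartheta)-2\int_0^u b^A\partial_{\th^A}\log\Omega\,du'$ and $\Omega(0,\ub,\vartheta)=1$. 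The paper's exact absorption is shorter and avoids having to estimate the bulk integral of $\nab_4\omegab$ at all; yours is more elementary in spirit but requires invoking the mixed-norm curvature and Ricci coefficient bounds one more time.

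Two small imprecisions you should repair. First, $\mathcal R(S)$ controls only $\rhoc,\nab\rhoc$ in $L^2(S)$, which by Sobolev (Proposition \ref{L4}) gives $L^4(S)$ but \emph{not} $L^\infty(S)$; to bound $\int_0^u\int_0^{\ub}|\rhoc|$ pointwise in $\vartheta$ you should instead use the $\mathcal R$ norm, which gives $\sum_{i\leq 2}\|\nab^i\rhoc\|_{L^\infty_uL^2_{\ub}L^2(S)}$, hence via Proposition \ref{Linfty} the bound $\|\rhoc\|_{L^\infty_uL^2_{\ub}L^\infty(S)}\leq C$, and then Cauchy--Schwarz in $\ub$ produces the $\epsilon^{1/2}$. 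Second, the phrase ``modulo angular drifts of size $\epsilon$'' should be made precise: $\nab_3$ in coordinates carries the term $\Omega^{-1}b^A\partial_{\th^A}$, and you should move $b^A\partial_{\th^A}|\chih|^2$ (and the corresponding $b^A\partial_{\th^A}\log\Omega$ term if you use the exact-absorption identity) explicitly into the error, bounding it using $|b^A|\leq C\epsilon$ and $|\chih||\nab\chih|\in L^1_{\ub}$, rather than simply discarding the drift.
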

\begin{proof}
Fix $\vartheta$. Consider the null structure equation
$$\nabla_3\chih+\frac{1}{2}\trchb\chih=\nabla\hat{\otimes}\eta+2\omegab\chih-\frac{1}{2}\trch\chibh+\eta\hat{\otimes}\eta.$$
Contracting this two tensor with $\chih$ using the metric, we have
$$\frac{1}{2}\nabla_3|\chih|^2_\gamma+\frac{1}{2}\trchb|\chih|_\gamma^2-2\omegab|\chih|^2_\gamma= \chih(\nabla\hat{\otimes}\eta-\frac{1}{2}\trch\chibh+\eta\hat{\otimes}\eta).$$
In coordinates, we have
$$\frac{1}{2\Omega}(\frac{\partial}{\partial u}+b^A\frac{\partial}{\partial\th^A})|\chih|^2_\gamma+\frac{1}{2}\trchb|\chih|_\gamma^2-2\omegab|\chih|^2_\gamma= \chih(\nabla\hat{\otimes}\eta-\frac{1}{2}\trch\chibh+\eta\hat{\otimes}\eta).$$
Using
$$\omegab=-\frac 12\nab_3(\log\Omega),$$
we get
\begin{equation}\label{chihamp}
\begin{split}
&\Omega^2\exp(-\int_0^u \Omega\trchb du')\frac{\partial}{\partial u}\left(\exp(\int_0^u \Omega\trchb du')\Omega^{-2}|\chih|^2_\gamma\right)\\
= & -b^A\frac{\partial}{\partial\th^A}|\chih|^2_\gamma-b^A\frac{\partial\Omega}{\partial\th^A}+2\Omega\chih\cdot(\nabla\hat{\otimes}\eta-2\Omega\frac{1}{2}\trch\chibh+2\Omega\eta\hat{\otimes}\eta).
\end{split}
\end{equation}
Let
$$F=\Omega^{-2}\exp(\int_0^u \Omega\trchb du')(-b^A\frac{\partial}{\partial\th^A}|\chih|^2_\gamma-b^A\frac{\partial\Omega}{\partial\th^A}+2\Omega\chih\cdot(\nabla\hat{\otimes}\eta-2\Omega\frac{1}{2}\trch\chibh+2\Omega\eta\hat{\otimes}\eta)).$$
By (\ref{chihamp}), we have
\begin{equation*}
\begin{split}
&\exp(\int_0^u \Omega(u',\ub)\trchb(u',\ub) du')\Omega^{-2}(u,\ub)|\chih|^2_\gamma(u,\ub)\\
\geq &|\chih|^2_\gamma(u=0,\ub)-C||F(\ub)||_{L^1_uL^\infty(S)}.
\end{split}
\end{equation*}
Using the equation 
$$\frac{\partial b^A}{\partial \ub}= -4\Omega^2\zeta^A,$$
the estimates for $\Omega$ and $\zeta$ and the fact that $b^A=0$ on $\Hb_0$, we have a uniform upper bound for $b$:
$$\|b\|_{L^\infty_uL^\infty_{\ub}L^\infty(S)}\leq C\epsilon.$$
Thus, together with the estimates derived in Theorem \ref{aprioriestimates}, we have
\begin{equation}\label{trapped.error.bound.1}
||F||_{L^2_{\ub}L^2_u L^\infty(S)}\leq C\epsilon^{\frac 12}.
\end{equation}
On the other hand, the proof of Proposition \ref{Omega} implies that
$$||\Omega-1||_{L^\infty_uL^\infty_{\ub}L^\infty(S)}\leq C\epsilon^{\frac 12}.$$
This, together with Proposition \ref{trchint}, gives
$$|\frac{\Omega^{-2}(u,\ub)\exp(\int_0^u \Omega(u',\ub)\trchb(u',\ub) du')}{\exp(\int_0^u \Omega(u',\ub=0)\trchb(u',\ub=0) du')}-1|\leq C\epsilon^{\frac 12}.$$
Therefore,
\begin{equation*}
\begin{split}
&\exp(\int_0^u \trchb(u',\ub=0) du')|\chih|^2_\gamma(u,\ub)\\
\geq &|\chih|^2_\gamma(u=0,\ub)-C\epsilon^{\frac 12}|\chih|^2_\gamma(u,\ub)-C||F(\ub)||_{L^1_uL^\infty(S)}.
\end{split}
\end{equation*}
Taking the $L^2_{\ub}$ norm, we get
\begin{equation*}
\begin{split}
&\exp(\int_0^u \trchb(u',\ub=0) du')\int_0^{\epsilon}|\chih|^2_\gamma(u,\ub)d\ub\\
\geq &\int_0^{\epsilon}|\chih|^2_\gamma(u=0,\ub)d\ub-C\epsilon^{\frac 12}\int_0^{\epsilon}|\chih|^2_\gamma(u,\ub)du-C||F(\ub)||_{L^2_{\ub}L^1_uL^\infty(S)}\\
\geq &\int_0^{\epsilon}|\chih|^2_\gamma(u=0,\ub)d\ub-C\epsilon^{\frac 12},
\end{split}
\end{equation*}
where in the last step we have used \eqref{trapped.error.bound.1} and the bound for $\|\chih\|_{L^\infty_uL^2_{\ub}L^\infty(S)}$ derived in the proof of Theorem \ref{aprioriestimates}.
\end{proof}
This allows us to conclude the formation of trapped surfaces:
\begin{proposition}
Given the assumptions of Theorem \ref{trappedsurface}, for $\epsilon$ sufficiently small, $\trch<0$ pointwise on $S_{u_*,\epsilon}$. Together with Proposition \ref{trchb.est.trapped}, this implies that $S_{u_*,\epsilon}$ is a trapped surface.
\end{proposition}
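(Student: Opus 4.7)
The plan is to integrate the null structure equation
\[
\nab_4\trch+\tfrac12(\trch)^2 = -|\chih|^2-2\omega\trch
\]
along the outgoing hypersurface $H_{u_*}$ from $\ub=0$ to $\ub=\epsilon$, and show that the dominant contribution comes from $-\int_0^\epsilon|\chih|^2\,d\ub$, which by hypothesis \eqref{trappedsurfaceineq1} is large enough to drive $\trch$ negative. Writing $\nab_4=\Omega^{-1}\partial_{\ub}$ and using the a priori bounds from Theorem \ref{aprioriestimates} together with Proposition \ref{Omega} (so $\Omega=1+O(\epsilon^{1/2})$, $|\trch|\leq C$, $\|\omega\|_{L^2_{\ub}L^\infty(S)}\leq C$), the terms $\int_0^\epsilon\Omega(\trch)^2 d\ub$, $\int_0^\epsilon\Omega\omega\trch\,d\ub$, and $\int_0^\epsilon(\Omega-1)|\chih|^2 d\ub$ are all bounded by $C\epsilon^{1/2}$. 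Hence
\[
\trch(u_*,\epsilon,\vartheta) \leq \trch(u_*,0,\vartheta) - \int_0^\epsilon |\chih|^2(u_*,\ub,\vartheta)\,d\ub + C\epsilon^{1/2}.
\]

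The next step is to re-express both terms on the right in terms of initial data on $H_0$ and $\Hb_0$. By Proposition \ref{trchampprop},
\[
\trch(u_*,0,\vartheta) = \exp\bigl(-\tfrac12\textstyle\int_0^{u_*}\trchb(u',0,\vartheta)du'\bigr)\cdot \Phi(\vartheta),
\]
where $\Phi(\vartheta)$ denotes the bracketed expression appearing on the right of \eqref{trappedsurfaceineq}/\eqref{trappedsurfaceineq1}. The hypothesis \eqref{trappedsurfaceineq1} then reads $\int_0^\epsilon|\chih|^2(0,\ub,\vartheta)d\ub > \exp\bigl(\tfrac12\int_0^{u_*}\trchb(u',0,\vartheta)du'\bigr)\Phi(\vartheta)$, which is equivalent to
\[
\exp\bigl(-\textstyle\int_0^{u_*}\trchb(u',0,\vartheta)du'\bigr)\int_0^\epsilon|\chih|^2(0,\ub,\vartheta)\,d\ub \;>\; \trch(u_*,0,\vartheta).
\]
Applying Proposition \ref{chihampprop} to the integral of $|\chih|^2$ along $H_{u_*}$ and inserting into the displayed bound above yields
\[
\trch(u_*,\epsilon,\vartheta) \leq \trch(u_*,0,\vartheta) - \exp\bigl(-\textstyle\int_0^{u_*}\trchb(u',0,\vartheta)du'\bigr)\int_0^\epsilon|\chih|^2(0,\ub,\vartheta)\,d\ub + C\epsilon^{1/2}.
\]
The first two terms on the right combine to a quantity that is strictly negative for every $\vartheta\in\mathbb S^2$ by the hypothesis, and by continuity and compactness of $\mathbb S^2$ they are uniformly bounded above by some $-c_0<0$ (noting that $c_0$ depends only on the initial data, not on $\epsilon$, since the relevant quantities are all evaluated on $H_0\cup\Hb_0$).

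Choosing $\epsilon$ small enough that $C\epsilon^{1/2}<c_0/2$ then gives $\trch(u_*,\epsilon,\vartheta)<-c_0/2<0$ uniformly in $\vartheta$. Finally, for $S_{u_*,\epsilon}$ to be trapped we also need $\trchb<0$ on it; but the hypothesis \eqref{trappedsurfaceineq} implicitly forces $\trchb(u,0,\vartheta)<0$ on $\Hb_0$ (since otherwise the exponential weights would make the inequality impossible to satisfy with $\chih\in L^2_\ub$ small in $\epsilon^{1/2}$ terms removed), and Proposition \ref{trchint} then gives $\trchb(u_*,\epsilon,\vartheta)=\trchb(u_*,0,\vartheta)+O(\epsilon^{1/2})<0$ as well. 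I expect the main technical point to be a careful bookkeeping of the error terms, particularly verifying that the constant $c_0>0$ arising from \eqref{trappedsurfaceineq1} is genuinely independent of $\epsilon$ so that the choice of $\epsilon$ can be made uniform across $\vartheta\in\mathbb S^2$; this is exactly why the condition \eqref{trappedsurfaceineq1} was formulated as a strict pointwise inequality over the compact sphere.
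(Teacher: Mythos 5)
Your main argument follows the paper's proof essentially line by line: integrate the $\nab_4\trch$ equation along $H_{u_*}$ to get $\trch(u_*,\epsilon,\vartheta)\leq\trch(u_*,0,\vartheta)-\int_0^\epsilon|\chih|^2\,d\ub+C\epsilon^{1/2}$, substitute the amplification formula of Proposition \ref{trchampprop} for the first term and the lower bound of Proposition \ref{chihampprop} for the second, and then invoke (\ref{trappedsurfaceineq1}) to conclude strict negativity once $\epsilon$ is small. That all matches the paper. Your concern about whether the gap $c_0$ coming from (\ref{trappedsurfaceineq1}) is genuinely uniform in $\epsilon$ is a subtlety the paper also leaves implicit; it ultimately rests on (\ref{trappedsurfaceineq}) being a strict inequality on the compact sphere, which provides $\epsilon$-independent room for the data on $H_0$ to exploit.

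The last paragraph, however, has a real flaw. You are correct that a trapped surface requires $\trchb<0$ as well as $\trch<0$, and that the paper's proposition only establishes the latter and then simply asserts "trapped surface." But your claimed deduction that (\ref{trappedsurfaceineq}) "implicitly forces $\trchb(u,0,\vartheta)<0$ on $\Hb_0$" does not hold up. First, (\ref{trappedsurfaceineq}) is an inequality among $\Hb_0$-quantities only and never involves $\chih$, so the parenthetical reasoning about $\chih\in L^2_{\ub}$ is not applicable to it. Second, even if one accepts that the inequality is easiest to satisfy when the exponential weights $\exp\bigl(-\frac12\int_0^{u_*}\trchb\,du'\bigr)$ exceed one, this controls the sign of the \emph{integral} $\int_0^{u_*}\trchb\,du'$, not the pointwise value $\trchb(u_*,0,\vartheta)$; one can easily make up profiles for $\trchb$ with negative integral but a positive endpoint value. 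Similarly, neither (\ref{trappedsurfaceineq1}) nor (\ref{trappedsurfaceineq2}) forces $\trchb<0$ pointwise. The honest way to handle this is to observe (as the Remark after Theorem \ref{trappedsurface} indicates) that $\Hb_0$ is meant to be a regular past null cone, on which $\trchb\approx -2/r<0$; combined with the $\nab_3\trchb$ Raychaudhuri equation (which preserves negativity of $\trchb$ along $\Hb_0$) and a transport estimate of Proposition \ref{trchint}'s flavor in the $\ub$ direction, this gives $\trchb<0$ on $S_{u_*,\epsilon}$. In other words, $\trchb|_{\Hb_0}<0$ should be treated as an explicit (though natural) hypothesis rather than deduced from (\ref{trappedsurfaceineq}).
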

\begin{proof}
By Proposition \ref{trchampprop}, we have
\begin{equation}\label{trchampeqn}
\begin{split}
&\trch(u_*,\ub=0,\vartheta)\\
=&\exp(-\frac 12\int_0^{u_*}\trchb du')\big(\trch(u=0,\ub=0,\vartheta)\\
&\qquad\qquad\qquad\qquad+\int_0^{u_*}\exp(\frac 12\int_0^{u'}\trchb du'')(-2K+2\div\zeta+2|\zeta|^2)du'\big).
\end{split}
\end{equation}
By Proposition \ref{chihampprop},
\begin{equation}\label{chihampeqn}
\begin{split}
\int_0^{\epsilon}|\chih|^2_\gamma(u_*,\ub,\vartheta)d\ub \geq \exp(-\int_0^{u_*} \trchb(u',\ub=0,\vartheta) du')\int_0^{\epsilon}|\chih|^2_\gamma(u=0,\ub,\vartheta)d\ub-C\epsilon^{\frac 12}.
\end{split}
\end{equation}
Using the equation
$$\nab_4\trch=-\frac 12(\trch)^2-|\chih|^2-2\omega\trch,$$
which can be written in coordinates as
$$\Omega^{-1}\frac{\partial}{\partial\ub}\trch=-\frac 12(\trch)^2-|\chih|^2-2\omega\trch,$$
we have
$$\trch(u_*,\ub=\epsilon,\vartheta)\leq \trch(u_*,\ub=0,\vartheta)-\int_0^{\epsilon} |\chih|^2(u_*,\ub,\vartheta)d\ub+C\epsilon^{\frac 12}.$$
Therefore, using (\ref{trchampeqn}) and (\ref{chihampeqn}), we have
\begin{equation*}
\begin{split}
&\trch(u_*,\ub=\epsilon,\vartheta)\\
\leq& \exp(-\frac 12\int_0^{u_*}\trchb(u',\ub=0,\vartheta) du')\\
&\quad\times\left(\trch(u=0,\ub=0,\vartheta)+\int_0^{u_*}\exp(\frac 12\int_0^{u'}\trchb du'')(-2K+2\div\zeta+2|\zeta|^2)du'\right)\\
&-\exp(-\int_0^{u_*} \trchb(u',\ub=0,\vartheta) du')\int_0^{\epsilon}|\chih|^2_\gamma(u=0,\ub,\vartheta)d\ub+C\epsilon^{\frac 12}.
\end{split}
\end{equation*}
Since by (\ref{trappedsurfaceineq1}), for all $\vartheta$,
\begin{equation*}
\begin{split}
&\left(\trch(u=0,\ub=0,\vartheta)+\int_0^{u_*}\exp(\frac 12\int_0^{u'}\trchb du'')(-2K+2\div\zeta+2|\zeta|^2)du'\right)
\\<& \exp(-\frac 12\int_0^{u_*}\trchb(u',\ub=0,\vartheta) du')\int_0^{\epsilon}|\chih|^2_\gamma(u=0,\ub,\vartheta)d\ub,
\end{split}
\end{equation*}
$\epsilon$ can be chosen sufficiently small so that 
$$\trch(u_*,\ub=\epsilon,\vartheta)<0\mbox{ for every $\vartheta$}.$$
\end{proof}

\bibliographystyle{hplain}
\bibliography{L2e14}

\end{document}